\def\finfty{f_{\infty}}
\def\AsHone{$\mathrm{H}_1$}
\def\AsHtwo{$\mathrm{H}_2$}
\def\AsHonep{$\mathrm{H}_1^{\prime}$}
\def\normal{\mathrm{n}}
\def\q{\zeta}
\def\inhomo{\mathcal{F}}
\def\inhomop{\widehat{\mathcal{F}}}
\def\Qf{{\mathcal Q}}
\def\Ia{I_a}
\def\Du{\widehat{D}}
\def\Jumpg{l_0}
\def\hone{l_1}
\def\htwo{l_2}
\def\czero{d_0}
\def\cone{d_1}
\def\vzero{V_0}
\def\vinfty{V_\infty}
\def\aone{a_1}
\def\atwo{a_2}
\def\valA{a}
\def\Af{{\mathcal A}}
\def\Bf{{\mathcal B}}
\def\S{S}
\def\Czo{C_0^{(1)}}
\def\L{\Theta}
\def\cA{A_0}
\def\baseback{{$\mathrm{B1}$}}
\def\baseKper{{$\mathrm{B2}$}}
\def\baseKperper{{$\mathrm{B3}$}}
\def\basematch{{$\mathrm{B4}$}}
\def\basebaro{{$\mathrm{B5}$}}
\def\chiefes{\kappa}
\def\Y{\mathcal Y}
\def\killback{\zeta}
\def\gaugefinal{\varphi}
\def\gauge{\Psi}
\def\gaugeCABYa{{\gauge(C;A,B,\Y,\alpha){}}}
\def\gaugeABYa{{\gauge(A,B,\Y,\alpha){}}}
\def\gaugeCABY{{\gauge(C;A,B,\Y){}}}
\def\gaugeABY{{\gauge(A,B,\Y){}}}
\def\gaugeAY{{\gauge(A,\Y){}}}
\def\newvhat{\delta}
\def\newsigma{\varsigma}
\def\newsigmaz{\varsigma^0_P}
\def\keygamma{\Gamma}
\def\rad{a_0}
\def\volform{\bm{\eta}_{\mathbb{S}^2}}
\def\trho{\eta}
\def\Phiq1{\Phi^{*1}}
\def\r{r}
\def\P{\mathfrak{P}}
\def\ff{F^{(2)}_P}
\def\gg{F^{(2)}_E}
\def\hatQtwo{\widehat{Q}_2}
\def\ftwo{f^{(2)}}
\def\qtwo{q^{(2)}}
\def\A{\Lambda_2}
\def\B{\Lambda_1}
\def\fin{\hfill \rule{2.5mm}{2.5mm}\\ \vspace{0mm}} 
\def\finn{\hfill \rule{2.5mm}{2.5mm}} 
\def\Eq{\mbox{Eq}}
\def \ge {g_\pertp}
\def \gp {K_1}
\def \gpp {K_2}
\def \hp {h^{(1)}}
\def \hpp {h^{(2)}}
\def \kp {\kappa^{(1)}}
\def \kpp {\kappa^{(2)}}
\def\mmm{M}
\def\lie{\mathcal{L}}
\def\riccifunc{{\mathfrak{R}}}
\def\riccixi{\lambda_{\xi}}
\def\riccieta{\lambda_{\eta}}
\def\pertp{\varepsilon}
\def\gfam{\hat g}
\def\gfamp{g}
\def\Supfamp{\Sigma}
\def\hfamp{h}
\def\kfamp{\kappa}
\def\gback{g}
\def\fpt{\gp}
\def\spt{\gpp}
\def\axial{\eta}
\def\stat{\xi}
\def\axialfSph{\bar{\bm\eta}}
\newcommand{\ot}[1]{\mathfrak{#1}}
 \def \tp {t_+}
\def \phip {\phi_+}
\def \rrp {r_+}
\def \thetap {\theta_+}
 \def \tm {t_-}
 \def \phim {\phi_-}
 \def \rrm {r_-}
 \def \thetam {\theta_-}
\def\ro{a} 
\def\opertbase{\varpi}
\def\opert{\omega}
\def \Wtwo {{\mathcal{W}}}
\def\defi{:=}
\def\Qtwo{Q_2}
\def\Qt2{\hat{Q}_2}
\def\defor{\Xi}
\def\energy{E}
\def\pressure{P}
 \def\Eb{\energy}
 \def\Ep{\energy^{(1)}}
 \def\Ebc{\Eb_c}
 \def\Pbc{\Pb_c}
\def\Epp{\energy^{(2)}}
\def\Pb{\pressure}
\def\Pp{\pressure^{(1)}}
\def\Ppp{\pressure^{(2)}}
\def \fintegral{\mathcal{I}}
\def \fintegralz{\mathcal{I}_0}
\def\hhat{\widehat{h}}
\def\vhat{\widehat{v}}
\def\uhat{\widehat{u}}
\def\qhat{\widehat{q}}
\def\qper{{\hfamp^{(1)}}}
\def\qperper{{\hfamp^{(2)}}}
\def\Tone{{T_1}}
\def\vecTone{{T_1}}
\def\Vtwo{\mathcal{J}_2}
\def\vecVtwo{{\Vtwo}}
\def\Ttwo{{T_2}}
\def\vecTtwo{{T_2}}
\def\Qone{Q_1}
\def\Qtwo{Q_2}
\def\lieQone{\,\zeta(\Qone)}
\def\Kper{\fpt}
\def\Kperper{\spt}
\def\kappaper{{\kfamp^{(1)}}}
\def\kappaperper{{\kfamp^{(2)}}}
\def\Kpernornor{{\fpt^{\perp}}}
\def\Kperpernornor{\spt^{\perp}}
\def\Kpernortan{{\tau^{(1)}}}
\def\Kperpernortan{{\tau^{(2)}}}
\def\Sper{S^{(1)}{}}
\def\Sperper{S^{(2)}{}}
\def\grad{\mathrm{\scriptscriptstyle grad}}
\def\BB{B}
\def\BBp{B^{+}}
\def\Bp{\BBp}
\def\domain{U}
\def\D{\overline{D}{}}
\def\gsph{{g_{\mathbb{S}^2}}}
\def\centresph{\mathcal{C}}
\def\rsdomain{U^3}
\def\axis{\mathcal{A}}
\def\Gg{\mathcal{G}}
\def\grad{\mathrm{\scriptscriptstyle grad}}
\def\foh{h^{(1)}}
\def\fom{m^{(1)}}
\def\fok{k^{(1)}}
\def\fof{f^{(1)}}
\def\sowf{\mathcal{W}}
\def\sper{V_1}
\def\sperper{V_2}
\def\lam{\lambda}
\def\la{\langle}
\def\ra{\rangle}
\def\Ric{\mbox{Ric}}
\def\Ricepsilon{\mbox{Ric}_{\pertp}}
\def\Repsilon{R_{\pertp}{}}
\def\F{f_\omega}
\def\b{b}
\def\c{c}
\newcommand{\trace}[1]{{}^{tr}#1}
\def\Q{\mathcal{R}}
\def\VR{\upsilon}
\def\UR{\chi}
\def\y{w}
\def\Sep{S_{\pertp}{}}
\def\Vep{H_{\pertp}{}}
\def\Sig{\mathfrak{S}}
\def\PP{{\mathcal P}}
\def\Mext{M_{{}_\mathrm{T}}}
\def\den{\energy}%
\def\pre{\pressure}%
\def\denepsilon{\den_{\pertp}}
\def\preepsilon{\pre_{\pertp}}
\def\uepsilon{u_{\pertp}}
\def\udepsilon{\bm{U}_{\pertp}}
\def\denper{\Ep}%
\def\preper{\Pp}%
\def\denperper{\Epp}%
\def\preperper{\Ppp}%
\def\uper{u^{(1)}{}}
\def\uperper{u^{(2)}{}}
\def\gepsilon{\ge{}}%
\def\kapepsilon{\Omega_{\pertp}}
\def\Nepsilon{N_{\pertp}}
\def\Omegaper{\Omega^{(1)}}
\def\Omegaperbase{\Pi^{(1)}}
\def\Omegaperbaseint{\Pi^{(1)}_+}
\def\Omegaperbaseext{\Pi^{(1)}_-}
\def\Omegaperper{\Omega^{(2)}}
\def\Omegaperperbase{\Pi^{(2)}}
\def\Omegaperperbaseint{\Pi^{(2)}_+}
\def\Omegaperperbaseext{\Pi^{(2)}_-}
\def\Tepsilon{T_{\pertp}}
\def\guepsilon{g^{\sharp}_{\pertp}{}}
\newtheorem{theorem}{Theorem}[section]
\newtheorem{proposition}[theorem]{Proposition}
\newtheorem{lemma}[theorem]{Lemma}
\newtheorem{corollary}[theorem]{Corollary}
\theoremstyle{definition}
\newtheorem{definition}[theorem]{Definition}
\newtheorem{notation}[theorem]{Notation}
\newtheorem{remark}[theorem]{Remark}
\numberwithin{equation}{section}
\begin{document}
\title{Existence and uniqueness of compact rotating configurations in GR in second order
perturbation theory}
\author[1]{Marc Mars}
\author[2]{Borja Reina}
\author[2]{Ra\"ul Vera}
\affil[1]{Department of Fundamental Physics and\protect\\
Institute of Fundamental Physics and Mathematics, University of Salamanca}
\affil[2]{Department of Theoretical Physics and History of Science,\protect\\ University of the Basque Country UPV/EHU}
\date{}
\maketitle

\begin{abstract}
  Existence and uniqueness of rotating fluid bodies in equilibrium is
  still poorly understood in General Relativity (GR). Apart from the limiting case of infinitely thin disks, the only known global results  in the stationary rotating case (Heilig \cite{heilig1995} and Makino \cite{makino2018})
  show existence in GR nearby a Newtonian configuration (under suitable additional restrictions).
  In this work we prove existence and uniqueness of rigidly (slowly) rotating
  fluid bodies in equilibrium
  to second order in perturbation theory in GR.
  The most widely used
  perturbation framework to describe
  slowly rigidly rotating stars in the strong field regime
  is the Hartle-Thorne model.
  The model involves a
  number of hypotheses, some explicit, like equatorial symmetry or that
  the perturbation parameter is proportional to the rotation, but some implicit,
  particularly on the structure and regularity of the perturbation tensors
  and the conditions of their matching at the surface.
  In this work, with basis on the  gauge  results obtained in \cite{paper1},
  the Hartle-Thorne model is fully derived from first principles
  and only assuming that the perturbations describe
  a rigidly rotating finite perfect fluid ball (with no layer at the surface)
  with the same barotropic equation of state
  as the static ball.
  Rigidly rotating fluid balls are analyzed consistently in second order perturbation
  theory by imposing only basic differentiability requirements and boundedness.
  Our results prove in particular that, at this level of approximation, the spacetime must be
  indeed equatorially symmetric and is fully determined by two parameters, namely the central
  pressure and the uniform angular velocity of the fluid.
\end{abstract}

\tableofcontents

\section{Introduction}

  Equilibrium configurations of self-gravitating rotating fluid bodies is an important and difficult subject in Einstein's theory of general relativity. From a physical perspective, they model astrophysical objects of finite size with strong gravitational fields, such as compact stars. One aspect of the problem is to construct and study physically realistic examples. Here the main tools are numerical methods and perturbation approaches. Another important aspect is to understand  structural issues such existence and uniqueness properties of the model.
Several approaches have considered the exterior and
interior problems separately.
Existence results for the  Dirichlet problem for both the interior and
the exterior problems on fixed boundaries have been estalished in
\cite{Schaudt, SchaudtPfister}, while the geometric uniqueness of
the exterior problem given an interior metric has been proved in
\cite{mars_seno_uniqueness,raul:convective}
(see \cite{vera_uniqueness} for the Einstein-Maxwell case).
In the perturbative setting (to second order)
the constraints on the Cauchy data coming from the interior problem that need to be imposed to guarantee existence and uniqueness of the exterior are known 
\cite{MaMaVe2007}.

Concernig the global problem,
static (non rotating) and spherically symmetric perfect fluid bodies in General Relativity  (GR) are known to
exist and be unique given an equation of state satisfying some mild conditions and the value
of the central pressure \cite{ADRendall1991} (see also \cite{Pfister2011}).
The solution is either of infinite extent (and then the energy density vanishes at infinity),
or of finite extent so that it can be matched to Schwarzschild.
Much less is known in the rotating case, for which we do
not even have a single explicit solution describing a rotating finite object with its corresponding
asymptotically flat exterior, except in the limiting case of infinitely thin disks \cite{NeugebauerMeinel, KleinRichter}.
In the rotating global problem we only have results on existence of solutions
sufficiently close to Newtonian configurations \cite{heilig1995,makino2018}.
In fact, even in the simpler Newtonian context the problem is highly non-trivial and a subject of active current research \cite{JangMakino17,JangMakino19,StraussWu17,StraussWu19}. Existence results for rotating configurations of other matter models
  have  been established for Vlasov matter \cite{AndreassonKunzeRein}
  and elastic bodies \cite{AnderssonBeigSchmidt}.

As a step forward towards establishing an existence and uniqueness proof of rotating configurations in GR far away from Newtonian regimes we analyze the  problem for ``slowly'' rotating perfect fluid bodies in the context of second order perturbation theory in GR. In informal terms
our main result is (see Theorem \ref{res:second_order} for a precise version)

\begin{theorem}
    Given a static and spherically symmetric perfect fluid body of finite extent in General Relativity with central pressuce $p_c$, 
    there exists a  solution of the second order perturbed field equations in General Relatity satisfying:
    \begin{itemize}
    \item[(a)] The perturbation is stationary and axially symmetric.
    \item[(b)] The interior is a rigidly rotating perfect fluid with central pressure $p_c$ and with the same barotropic equation of state as the spherical body. 
    \item[(c)] The exterior is vacuum (without cosmological constant) and bounded at infinity.
    \item[(d)] The matching conditions (with absence of surface layers)
      are fulfilled at the boundary of the body.
    \end{itemize}
    Moreover, the solution is uniquely determined by the angular velocity
    of the fluid, the configuration is equatorially symmetric and the boundary of the body is stationary and axially symmetric.
    In addition, if the first order perturbation is non zero,
    then the perturbation parameter can be taken to be proportional to the angular velocity.
  \end{theorem}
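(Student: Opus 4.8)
\medskip
\noindent\textbf{Overall strategy.} The plan is to reduce the second order perturbed Einstein--Euler system to a collection of linear ordinary differential equations in the radial variable by expanding every tensorial perturbation in multipole harmonics on the orbit spheres, order by order, and then to solve these ODEs separately in the interior and in the exterior, matching them across the perturbed surface of the body. The structural input at every step is the gauge analysis of \cite{paper1}: under the sole hypotheses of stationarity, axial symmetry and the prescribed differentiability and boundedness, that analysis determines the general admissible form of the first and second order metric and matter perturbations, the residual gauge freedom remaining at each order, and the precise way in which the matching with no surface layer is to be imposed on a hypersurface that is itself deformed at each order. The background will be the static, spherically symmetric ball of \cite{ADRendall1991}: its metric functions are as regular as the barotropic equation of state allows, the pressure is positive in the interior and vanishes at the surface radius $R$ with non-zero radial derivative, and the exterior is Schwarzschild; all of these facts are used repeatedly.

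\medskip
\noindent\textbf{First order.} At first order I would argue, combining the rigid-rotation equilibrium hypothesis with the gauge results of \cite{paper1}, that the perturbation reduces to a pure odd-parity $\ell=1$ (``frame dragging'') deformation carried by a single radial function $\omega$, with the fluid moving as a rigid rotation of constant angular velocity $\Omega$; this already yields item (a) at this order. The $(t\phi)$ component of the Einstein equations in the interior then becomes the classical second order linear ODE for $\varpi:=\Omega-\omega$, which has a regular singular point at $r=0$; a Frobenius analysis shows that boundedness selects a one-dimensional family of interior solutions, so the interior is determined up to a single multiplicative constant. In the exterior $\omega$ is a combination of the constant and $r^{-3}$ modes, and boundedness at infinity leaves only $\omega=2J/r^{3}$. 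Imposing $C^{1}$ matching at $r=R$ (the first order surface being undeformed in the adapted gauge) fixes $J$, and therefore the interior normalisation, as a linear functional of $\Omega$. Hence the first order solution exists and is unique once $\Omega$ is prescribed, and it scales linearly with $\Omega$; in particular, whenever it does not vanish identically the formal perturbation parameter may be normalised to be proportional to $\Omega$.

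\medskip
\noindent\textbf{Second order.} Since the second order field equations are sourced by terms quadratic in the first order (odd-parity $\ell=1$) solution, these sources are even-parity and involve only the $\ell=0$ and $\ell=2$ harmonics; by the gauge results of \cite{paper1} the second order perturbation therefore splits into an $\ell=0$ sector (a mass-type radial function and a potential, together with second order pressure and energy density perturbations linked through the given equation of state) and an $\ell=2$ sector (the analogous set of radial functions), any residual odd-parity $\ell=1$ piece being pure gauge. The strategy in each sector is the same: write the reduced interior ODE system; perform the Frobenius analysis at $r=0$ to identify the finite-dimensional family of bounded solutions; solve the exterior system explicitly --- a Schwarzschild mass correction plus an $r^{-1}$ term in the $\ell=0$ sector, Legendre functions of the second kind in the $\ell=2$ sector --- and retain only what is bounded at infinity; locate the perturbed surface as a level set of the total pressure, controlling its first and second order displacements with the help of $p'(R)\neq 0$; and impose the no-surface-layer matching there in the form dictated by \cite{paper1} (consistently with the interior Cauchy-data constraints of \cite{MaMaVe2007}). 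A dimension count then shows that these matching relations determine all the remaining integration constants uniquely. Finally, the solution built in this way has definite parity under $\theta\mapsto\pi-\theta$, and the uniqueness just established forces any admissible solution to coincide with it; thus the configuration is equatorially symmetric and the surface, being a level set of a stationary, axisymmetric, equatorially symmetric function, inherits those symmetries. Collecting everything gives the precise statement of Theorem \ref{res:second_order}.

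\medskip
\noindent\textbf{Main obstacle.} The hard part will be the second order matching at the surface. One must formulate the no-surface-layer conditions on a hypersurface that is perturbed at both orders and whose location is itself among the unknowns, while staying within the residual gauge freedom of \cite{paper1}; and one must cope with the fact that the background energy density generically does \emph{not} vanish at $r=R$, which renders several perturbation functions merely continuous (rather than smooth) across the surface and is exactly the place where the naive formulation of the model needs correcting. Tightly linked to this is the bookkeeping of free constants against matching equations in the coupled $\ell=2$ second order system: the Frobenius analysis must establish that imposing boundedness cuts the interior solution space down to precisely the dimension for which the matching conditions close to a unique solution. A subsidiary point is to ensure that the background metric functions are regular enough for all these manipulations, which is guaranteed by the static existence theorem of \cite{ADRendall1991} together with $p'(R)\neq0$.
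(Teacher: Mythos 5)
The central gap is that you assume, rather than derive, the angular structure of the perturbations. You take it ``from the gauge results of \cite{paper1}'' that the first order perturbation reduces to a single radial frame--dragging function $\omega(r)$ and that the second order perturbation splits into $\ell=0$ and $\ell=2$ sectors with any $\ell=1$ piece pure gauge. But the canonical form of Theorem~\ref{theo:paper1} only delivers tensors built from functions of $(r,\theta)$ with \emph{no} restriction on their harmonic content; cutting this down to $\ell\le 2$, and showing that the frame--dragging functions depend on $r$ alone, is precisely the implicit Hartle--Thorne assumption the theorem is meant to justify. In the paper this is done by a global elliptic argument: the relevant components satisfy PDEs on the punctured interior and the exterior, and Proposition~\ref{PDEresult} --- resting on the strong maximum principle, the boundary point lemma and the asymptotic ODE analysis of the appendices --- shows that any solution bounded at the centre and at infinity and $C^1$ across the surface can contain only finitely many (here $\ell\le 1$ or $\ell\le 2$) harmonics; for the $t\phi$ equation one must first perform a Hodge decomposition on the orbit spheres before this applies. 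Without that step your Frobenius--plus--matching scheme proves existence and uniqueness only \emph{within} the Hartle ansatz, not for the general stationary and axisymmetric perturbation; and your deduction of equatorial symmetry (``the solution has definite parity and uniqueness forces any admissible solution to coincide with it'') is then circular, since uniqueness was established only in the already--symmetric class.

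Two further points. First, the $\ell=0$ sector is not settled by a dimension count: after gauge fixing there remains a free integration function $\sigma(r)$ arising from integrating the $r\theta$ field equation, and the barotropic equation of state enters as a first integral determining $\hat q_0$ algebraically; existence and uniqueness then require the change of unknowns $\{\hat v_0,\sigma\}\to\{\delta,\varsigma\}$ of Section~\ref{sec:barotropic}, under which $\varsigma$ decouples and obeys an ODE with the same homogeneous part as the $\ell=0$ equation, while $\delta$ and $f$ are shown to be pure gauge (Proposition~\ref{res:hkmf_uniqueness}). Your sketch does not account for this free function at all. Second, you rightly flag the jump of $m$ when the surface energy density is nonzero, but the matching must also be performed without assuming the surface deformation is axisymmetric or time independent: the deformation functions live on the full $(\tau,\vartheta,\varphi)$ boundary and their stationarity and axial symmetry are \emph{conclusions} of the matching analysis, not hypotheses; parametrising the perturbed surface as a level set of the total pressure presupposes part of what has to be proved.
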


    This result is interesting in two respects. Firstly, we hope it can pave the way for applying an implicit function method to show existence of rotating configurations near any static spherical model in the fully
    non-linear theory and in the strong field regime. Secondly, as already mentioned, perturbation methods are widely used to study slowly rotating fluids. The literature in the subject is vast and, to a large extent, is based on the perturbation framework put forward by Hartle \cite{Hartle1967} and Hartle and Thorne \cite{HartleThorne1968} in the 60's. Under suitable extra assumptions (some explicit and some implicit) these works provided plausibility arguments towards the validity of the theorem above and, in fact, this validity has been taken for granted in the literature since then (not only in the Hartle-Thorne approach,
    but also in related or other  perturbation methods, e.g. \cite{Bradley_etal2007,CMMR}).
    Given the importance of the Hartle-Thorne approach we prove our theorem in their setup  and therefore
    provide a rigorous proof for its validity, once the relevant correction
    found in \cite{ReinaVera2015}
    is incorporated.
    Thus, our theorem  provides a rigorous and firm basis for all the results based on perturbations \`a la Hartle-Thorne where either the correction
    in \cite{ReinaVera2015} is irrelevant, or has already been taken into account. This applies in particular to the well known scalability property
      of the perturbative models widely used in astrophysics
      (see e.g. \cite{Berti2005}).

We have just mentioned  explicit and implicit extra assumptions, as well as
  plausibility arguments, in the Hartle-Thorne approach. Let us be more specific on this. By extra explicit assumptions we mean equatorial symmetry 
and that the perturbation parameter is proportional to the angular velocity. To discuss the extra implicit assumptions, let us review some basic facts about perturbation theory in GR.  Perturbations to second order around a background spacetime $(M,g)$
are described by two (symmetric and 2-covariant)
tensors on $M$, $\Kper$ and $\Kperper$, for the first and second order respectively.
The perturbed metric corresponds to the one-parameter family $g_\pertp$
given by
\[
g_\pertp=g+\pertp \Kper+ \frac{1}{2} \pertp^2 \Kperper + O(\pertp^3),
\]
where $\epsilon$ is a small parameter, called ``perturbation parameter''.
Given a static and spherically symmetric background configuration
\[
g= -e^{\nu(r)} dt^2+e^{\lambda(r)} dr^2 + r^2 (d\theta^2 + \sin^2 \theta d\phi^2)
\]
in static and spherical coordinates,
the first part of the classical studies consisted in restricting a
  priori the form of the perturbation tensors. Besides on the condition
  of stationarity and axial symmetry, this step was based primarily on  physical arguments relying on how different metric components should be excited
  at different orders in perturbation theory.
  This, combined with a convenient  gauge choice (based on
    a suitable form of stationary and axisymmetric metrics \cite{HartleSharp1967}),
  was used
  to write
  perturbation tensors
(see e.g. \cite{Hartle1967}, \cite{Bradley_etal2007}) as\footnote{The function $m$ here corresponds to $e^\lambda m/r$ in \cite{Hartle1967,HartleThorne1968}.}
\begin{align}
\Kper^{H}&= 2 \omega(r,\theta) r^2\sin^2\theta dt d\phi,\nonumber\\
\Kperper^{H}&=\left ( -4 e^{\nu(r)} h(r,\theta) + 2r^2 \sin^2 \theta \omega^2(r,\theta) \right ) dt^2
+ 4 e^{\lambda(r)} m(r,\theta) dr^2
+ 4 k(r,\theta) r^2\left ( d \theta^2 + \sin^2 \theta
d \phi^2 \right ) \nonumber
\end{align}
in terms of four functions $\{\omega,h,m,k\}$. 
Moreover, these four ``perturbation functions'' were assumed
to cover
both the interior and the exterior of the fluid ball  and to be
\emph{continuous (and $\omega$ also with continuous first derivatives)
}
across the boundary of the fluid,
located at $r=\ro, \ro\in \mathbb{R}$.
This implicit
assumption was combined with some plausibility arguments based on the field equations for a rigidly rotating perfect fluid and vacuum in order to achieve an important simplification of the \textit{angular structure} of the functions, namely, that $\omega=\omega(r)$
and that the expansion of the other functions in terms of Legendre polynomials
contains only the $\ell = 0,1, 2$ components.
The $\ell=1$ components were made to vanish by assuming equatorial symmetry.

Given this setting, the (perturbed) field equations for a rigidly rotating
perfect fluid with the barotropic equation of state of the background
in the interior
and vacuum in the exterior
were studied assuming (again implicitly) that
(i) the functions in $\Kper^H$ and $\Kperper^H$
are bounded at the origin $r=0$,
(ii) satisfy the aforementioned continuity conditions at $r=\ro$, and 
(iii) are zero at infinity. Under these assumptions, plus a fixed value of the
central pressure, it is argued that the field equations yield unique solutions
depending on a single scaling parameter that can thus be absorved
in the perturbation parameter $\pertp$.

While  point (iii) is justified quite directly
by demanding an asymptotically flat metric $g_\pertp$, the rest of implicit assumptions
were not rigorously established as necessary consequences of the problem under
consideration. In addition, the physical and plausibility arguments must be replaced by rigorous arguments. These were the tasks we set up ourselves to do. Actually, our results hold under weaker requirements, since
\emph{the only global assumption we shall need is that the perturbation tensors stay bounded}.
Asymptotic flatness turns out to be a consequence of boundedness and the field equations.

In a first approach to this problem \cite{ReinaVera2015} two of us
dropped assumption (ii)  regarding
the ``matching'' of the functions at $r=\ro$, by resorting to the general perturbed matching theory to second order developed in \cite{Mars2005}.
It was found that the point (ii) is inconsistent with the rest of the setting (this is the correction alluded to above).
More precisely, there is a gauge in which  $\omega$ is indeed $C^1$
and $h,k$ are $C^0$
at the surface, but  the function $m$
presents a jump proportional to the value of the energy density at the surface. This fact has
consequences in the computation of the mass in terms of the radius
(see \cite{Borja_constant_rho,reina-sanchis-vera-font2017}).\footnote{It is worth mentioning that
this correction is present, although it was somehow forgotten, in the original Newtonian approach
by Chandrasekhar \cite{Chandra_Poly_Newton_1933}, see \cite{ReinaVeranote}.}

Point (i) above or, more specifically, the issue of existence of a suitable gauge that transforms
a general stationary, axially symmetric and orthogonally transtive (see below for definitions) first and second order perturbation tensor into a suitable canonical form, while keeping under control their differentiability and boundedness properties, turned out to be a much harder task than originally expected. This problem has been solved
in \cite{paper1}, where we prove that the canonical form can be achieved with the loss of only one derivative and keeping all the
relevant quantities bounded near the origin. This is the content of Theorem 6.3 in \cite{paper1} and its Corollary 6.4, which here we collect together as Theorem \ref{theo:paper1}.
This result is of a purely geometric nature (i.e. independent of any field equations) and yields a  ``canonical form'' that
is still more general than the form of $\Kper^H$ and $\Kperper^H$ above.

This ``canonical form''  carries an associated gauge freedom,
  which is identified in Proposition
6.9 in \cite{paper1} and recovered here as Proposition \ref{prop:full_class_gauges}

The results in \cite{paper1} are the starting point of the present paper, where we
  derive rigorously the Hartle-Thorne model 
  without not only any ad-hoc or implicit assumption, but also
  without assuming \emph{equatorial symmetry} nor
  any \emph{a priori} relationship whatsoever between
  the perturbation parameter and the angular velocity.

The proof starts with three basic steps, 
  at each first and second order:
  (1) obtain the field equations in terms of a set of convenient functions
that encode  all the necessary information to solve the interior and exterior problems,
(2) solve the perturbed matching conditions for the perturbation tensors
to first and second order 
in terms of those functions together with the functions that
    describe the deformation of the surface of the star,
and
(3) join the interior and exterior problems at 
  the common boundary $\Sigma$.   Then, using elliptic methods that exploit
  the regularity and boundedness properties of the ``canonical form'', the analysis of
  the interior and exterior problems at each order
  follows by (a) proving that a number of relevant homogeneous problems only accept the trivial solution, (b) using the remaining gauge freedom left (at each stage) to get rid of spurious solutions, and finally, (c) proving
  existence and uniqueness of the remaining problems.
We stress that the perturbed matching problem in step (b) is solved
    without imposing any a priori condition.
    In particular, we allow the deformation of the surface to be non-axially symmetric
    and time-dependent. It is the global problem itself that, a posteriori, forces the
    deformation of the body to be stationary and axially symmetric.

Although this procedure needs to be applied firstly to the first order problem  and then to the second order, we follow a strategy that allows to treat both cases at once.
This strategy is based on a bootstrap-type argument based on the fact that
a second order perturbation  problem with identically vanishing
first order perturbation tensor
is formally  equivalent to a first order problem. We thus set up without a priori justification a very specific form for the first order perturbation tensor
  (which in fact corresponds to $\Kper^H$ with $\omega(r)$)
  and
  solve the second order problem under this assumption. We call this the
\emph{base global perturbation scheme}.
The bootstrap argument closes by showing that this problem, when restricted to an identically
vanishing $\Kper^H$ imples that the second order perturbation tensor must necessarily take the
form assumed in the base perturbation scheme. In other words, the
first order global problem is a particular case of the bootstrap argument,
applied with a vanishing first order tensor, while the second order global problem
becomes then the bootstrap argument itself.

\subsection{Plan of the paper}
The paper is structured as follows.

In Section \ref{sec:stat_axis_pert_scheme} we set up the stage by
recalling the definition of static and spherically symmetric spacetime and
establishing our basic set of 
global and differentiability assumptions. 
Next, we state the two main results of our previous paper \cite{paper1}
concerning the structure of 
stationary and axially symmetric perturbations
\cite{paper1}.
Theorem \ref{theo:paper1}, establishes the regularity and differentiability
properties of the functions 
when the perturbation tensors are cast in ``canonical form'',
while Proposition \ref{prop:full_class_gauges} provides
the full class of
gauge transformations that preserve the form of the perturbation tensors
in the later \emph{base perturbation scheme}. 

In Section \ref{sec:background_configuration} we establish the background
spacetime;
a static and spherically symmetric spacetime containing two regions matched across a
hypersurface that preserves the symmetries. One of the regions solves the field equations for a perfect fluid with a barotropic equation of state
and non-negative energy-density and pressure,
and the other one is just Schwarzschild. Such a background is called
\textit{perfect fluid ball configuration} (Definition \ref{SpherRegion}).

Section \ref{PertEFEs} is devoted to writing down the second order
perturbed field
equations, derived from the Einstein field equations, first for a general fluid and then particularizing to the rigidly rotating case. We also recall a well-known result on the relationship between rigid rotation and
orthogonal transtivity of the group action, whis is needed to make contact with the geometric results in \cite{paper1}. Finally we find
the consecuences of the imposition of a barotropic equation of state (that of
the background) at the level of the perturbed field equations.

In Section \ref{sec:Base_model} we set up and ellaborate the
global \textit{base perturbation scheme}, which lies at the basis of
the boostrap-type argument described above.  The section starts with a detailed description of the a priori assumptions that define the base scheme. All these assumptions are justified later as part of the bootstrap argument. We split the assumptions into five blocks, B1 to B5, because several intermediate
results only require a subset thereof. The next step, developed in Subsection  \ref{sec:2nd_order_EFEs}, is to write down the explicit form of the field equations, both in the interior and in the exterior domains, for the perturbation tensors of the
base scheme. Part of the computation, which may be of independent interest,
is postponed to Appendix \ref{tfi_sector} where a fully covariant expression for the first order perturbations of the Ricci tensor is obtained (actually for more general background spacetimes). A key step in this subsection is the introduction
of functions $\hat{h}$, $\hat{q}$ and $\hat{v}$ which are nearly
gauge invariant (Lemma \ref{res:gauges_hat}) and in terms of which the field equations simplify. In particular, this allows us to prove
that part of the gauge freedom
can be used to eliminate the $\ell=1$ Legendre sector of the functions
(Proposition \ref{prop:field_equations}). 
At this point, we have isolated
a set of functions and the corresponding equations that fully
characterise the base scheme
in the interior and the exterior regions:
$\{\Wtwo(r,\theta), \qhat_0(r), \vhat_0(r), \sigma(r), \vhat_2(r),
\vhat_\perp(r,\theta), f(r,\theta) \}$, where $ \qhat_0(r)$ and $\sigma(r)$
are free.
The equation of state of the background
is imposed in Subsection \ref{sec:barotropic}
to provide an algebraic expression for $\qhat_0$ in terms of the rest.
So far, no connection between the interior and exterior problems has been made. Subsection  \ref{sec:base_matching} is devoted to do this. The geometric  matching problem, which is technically rather involved, is left to Appendix \ref{app:staxi_perturbed_matching}.
The results of this Appendix are independent of any
  field equations and hence may find applications in other situations.
In particular no symmetry assumptions are made on how the matching surface gets perturbed, so they generalise the matching conditions
obtained in \cite{ReinaVera2015}, where axial symmetry was imposed. The
geometric matching results of the appendix are specialized to our specific fluid problem in Proposition \ref{Prop:matching}. 

In Section \ref{sec:e_u_global} we tackle the global problem of
existence of uniqueness 
of the base scheme.
The section starts with a core result (Proposition \ref{PDEresult})
that provides existence of a decomposition in terms of Legendre
polynomials of functions satisfying a sufficiently general elliptic global problem. 
This,
together with the existence and uniqueness results 
shown in Appendix \ref{App:bocher-like}, are the basic
ingredients for this section.
Subsection \ref{subsection_Wtwo} is
devoted to showing existence and uniqueness of the angular component $\Wtwo$.
In Subsection \ref{sec:vhat_l_2} we prove that $\vhat_\perp(r,\theta)$ must vanish everywhere
and that $\vhat_2(r)$ is unique and vanishes if the first order perturbation of the
base scheme is zero. The existence of a barotropic equation of state together
with the use of (most of) the remaining gauge freedom
is finally used in Subsection \ref{sec:vhat_l_0} to settle the $\ell=0$ sector
and find the exitence and uniqueness
result of the base global scheme (Proposition \ref{res:hkmf_uniqueness}).
This result is the basis of the bootstrap argument.

In Section \ref{sec:existence_and_uniqueness} we use 
the bootstrap argument in terms of the base global scheme
to obtain the main result of the paper.
After discussing the gauge behaviour and physical meaning of the integration parameter
($\Ppp_c$) introduced in the previous sections,
a first use of the bootstrap argument provides Proposition \ref{res:first_order},
which states the result for the first order problem for the first order perturbation
in ``canonical form''.
A second use of the bootstrap argument for the second order problem
provides the final and main result of the paper,
 Theorem \ref{res:second_order}.
In the accompanying Remark \ref{remark:global_solution}
we provide the explicit procedure for the calculation of the global
unique solution in a fully fixed gauge.
We stress \cite{ReinaVera2015} that when
the energy density
of the star does not vanish at the boundary, these expressions correct
    the standard formulae used in the literature. In addition, our gauge fixing respects the condition that the perturbation tensors stay bounded at infinity, something which has been often overlooked in applications of the Hartle-Thorne model.
Finally we exploit the freedom of re-defining the perturbation parameter in order to write down the one parameter family of metrics in the familiar form used in the literature, and discuss the physical interpretation of the only free parameter in the model.

\subsection{Notation}

Most of the notation used in this paper will be specified along the way. Here we only fix the basic objects.

A $C^{n+1}$ spacetime $(\mmm, g)$ is a four-dimensional (we never consider other dimensions in this paper)
   orientable $C^{n+2}$ manifold $\mmm$ endowed with a time-oriented Lorentzian metric $g$ of class $C^{n+1}$ and
    signature $+2$. Our sign conventions for the Riemann and Ricci tensor follow e.g \cite{Wald}. Scalar products of two vector
  fields $X$, $Y$ with the metric $g$ will be denoted by
    $\la X, Y\ra$. The covector metrically related to a vector $X$ is denoted with boldface,  $\bm{X} := g(X,\cdot)$.
Throughout the paper, for functions of one argument, a prime means derivative with respect to the argument.

\section{Stationary and axisymmetric perturbation scheme}
\label{sec:stat_axis_pert_scheme}

In this Section we summarize the results in Paper 1 needed below. Specifically we quote a theorem on the existence of a canonical form for the perturbation metric tensors to first and second order and the regularity of the corresponding coefficient functions, as well as the most general gauge transformation that respects this form (for a special form first order tensor, since this is all we shall need).

The background is spherically symmetric and satisfying appropriate global conditions. The definitions are as in \cite{paper1}

\begin{definition}
  A spacetime $(\mmm,g)$ is static and spherically symmetric
  if it admits an $SO(3)$ group of isometries
  acting transitively on spacelike surfaces (which may degenerate to points), and 
  a Killing vector $\stat$ which is timelike everywhere, commutes with the
  generators of $SO(3)$ and is orthogonal to the $SO(3)$ orbits.
\end{definition}

Our global and differentiability requirements on the spacetime are as follows:

\vspace{2mm}

\noindent {\bf Assumption {\AsHone}:}
$\mmm \simeq \rsdomain \times I$
where $I\subset\mathbb{R}$ is an open interval and
$\rsdomain$ is a radially symmetric domain of $\mathbb{R}^3$ 
with the orbits of the Killing $\xi$ along the 
$I$ factor and $SO(3)$ acting in the standard way on $\rsdomain$.
Moreover, in the cartesian coordinates $\{x,y,z,t\}$ of
  $\rsdomain \times I$, the metric $g$ is
    \begin{align*}
    g= - e^{\nu} dt^2 + \VR (x_i dx^i)^2 + \UR \delta_{ij} dx^i dx^j
\end{align*}
with $\nu,  \VR, \UR$ being $C^{n+1}$ radially symmetric functions
of $x,y,z$.

\vspace{2mm}

We note that the Killing vector $\xi = \partial_t$ is hypersurface orthogonal,
hence the name ``static and spherically symmetric''. The centre
of symmetry $\centresph_0 \subset M$ is by definition the set of points invariant under $SO(3)$. By the global diffeomorphism $\simeq$ in assumption {\AsHone} we have
$\centresph_0 \simeq  \{ 0_3 \cap \rsdomain\}  \times I  $, so $\centresph_0$ is
non-empty if and only if $\rsdomain$ is a ball. 

All geometric objects in $M$ will be identified with their image
by $\simeq$  and viceversa. This applies for instance to $\centresph_0$, or to the function $|x| := \sqrt{x^2 +y^2 +z^2}$ on $\rsdomain$,  which
also defines a function on $M$. The orbits of the $SO(3)$ action are the spheres
$S_r := \{ |x| = r\}$, which we view again  as subsets of $\rsdomain \times I$ or of $M$ depending on the context.

Define two functions $\lambda, \Q: M \rightarrow \mathbb{R}$ by
\begin{align*}
  e^{\lambda} := \UR + \VR |x|^2, \quad \quad \Q^2 := \UR |x|^2, \quad \quad  \Q \geq 0.
\end{align*}
Both are well defined because $\UR$ and $\UR+ \VR |x|^2$ are positive everywhere
(otherwise $g$ is not a Lorentzian metric). It is clear that
$\lambda \in C^{n+1}(M)$ and $\Q \in C^{n+1} (M \setminus \centresph_0) \cap
C^0(M)$, and that both are radially symmetric when expressed in
$\{x,y,z,t\}$ coordinates.

We shall mostly work in spherical coordinates $\{r,\theta,\phi,t\}$ defined
from $\{x,y,z,t\}$ in the standard way. This coordinate system covers
$M \setminus \axis$, where $\axis = \{x =0, y=0\}$ is the axis of the Killing vector $\eta= \partial_{\phi}$. On this domain the metric $g$ takes the form
\begin{align}
  g= - e^{\nu(r)} dt^2 + e^{\lambda(r)} dr^2 + \Q^2(r) \left (d \theta^2
  + \sin^2 \theta d \phi^2 \right ), \quad \quad \stat = \partial_t.
  \label{metric}
\end{align}
We make the usual abuse of notation of writing functions in different coordinate
systems with the same symbol (the meaning should  be clear from the context).
Nevertheless we write explicitly the arguments when we want to make clear which
representation is being used (we have already followed this convention in
\eqref{metric} when writing $\nu(r)$ etc.)

We can now quote the main theorem in \cite{paper1}.
To fix the basic notation, we recall that perturbation tensors are
defined  through a family of $C^{n+1}$ spacetimes
$(\mmm_\pertp,\gfam_\pertp)$, that includes the background $(\mmm,g)$ for $\pertp=0$,
diffeomorphically identified through some gauge $\psi_\pertp$ ($C^{n+2}$ for each $\pertp$). To first and to second order, the respective perturbation tensors
$\Kper$ and $\Kperper$ are defined as
\begin{equation}
  \Kper=\left.\frac{d g_\pertp}{d\pertp}\right|_{\pertp=0},\quad
  \Kperper=\left.\frac{d^2 g_\pertp}{d\pertp^2}\right|_{\pertp=0},
  \label{def:metricperturbations}
\end{equation}
where $g_\pertp\defi \psi^*_\pertp(\gfam_\pertp)$ on $(\mmm,g)$. For the precise notion of
``perturbation scheme'', ``inheritance of an orthogonally transitive
isometry group action'', as well as ``gauge transformations'' and our notation for gauge vectors we refer to \cite{paper1}.
For completeness, though, we recall here
  that a perturbation scheme
    is said to be of class $C^{n+1}$ when the family $\hat{g}_{\pertp}$  is $C^{n+1}$ and  the perturbation tensors $\Kper$, $\Kperper$ are,
    respectively, $C^n$ and $C^{n-1}$.
        We also recall
  that a two-dimensional group of isometries,
generated by say $\{\stat,\axial\}$,
acts orthogonally transitivelly when the $2$-planes orthogonal to the group
orbits generate surfaces. In four dimensions, this happens if and only if the two scalars
$\star(\bm{\stat}\wedge \bm{\axial}\wedge d \bm{\stat})$
and  $\star(\bm{\stat}\wedge \bm{\axial}\wedge d \bm{\axial})$
where we use $\star$ for the Hodge dual operation, vanish identically.

\begin{theorem}[\textbf{Canonical form
    \cite{paper1}}]
  \label{theo:paper1}
  Let $(\mmm,g)$ be a static and spherically symmetric background
  satisfying assumption {\AsHone},
  with $g$ of class $C^{n+1}$ with $n\geq 2$, given in spherical coordinates by \eqref{metric}.
  Let us be given a $C^{n+1}$ maximal
  perturbation scheme $(\mmm_\pertp, \gfam_\pertp,\{\psi_\pertp\})$
  inheriting the orthogonal transitive stationary and axisymmetric action
  generated by $\{ \xi=\partial_t, \eta=\partial_\phi\}$.
  Then, there exists gauge vectors $\sper$ and $\sperper$, that
    commute with $\axial$, are tangent to $S_r$ as well as orthogonal to $\axial$,
    and extend continuously to zero
    at $\centresph_0$, such that the gauge transformed tensors
  $\fpt^\Psi$ and $\spt^\Psi$  are of class
  $C^{n-1}(M \setminus \centresph_0)$ and $C^{n-2}(\mmm\setminus \centresph_0)$
  respectively,
       and such that the functions defined on $\mmm \setminus \centresph_0$ by
  \begin{align}
    &\foh\defi -\frac{1}{4} e^{-\nu} \fpt^\Psi(\partial_t, \partial_t)\label{def:foh}\\
    &\fok\defi \frac{1}{4\trho^2}\Kper^\Psi(\axial,\axial)\label{def:fok}\\
    & -  x \UR\opert=\fpt^\Psi(\partial_t, \partial_{y} ) , \qquad
      y \UR\opert=\fpt^\Psi(\partial_t, \partial_{x} ),\label{def:opert}\\
    &\fom \defi\frac{1}{4}\left\{\fpt^\Psi{}^\alpha_\alpha+e^{-\nu}\fpt^\Psi(\partial_t, \partial_t) - 8\fok\right\}=\frac{1}{4}\left\{\fpt^\Psi{}^\alpha_\alpha-4\foh - 8\fok\right\}\label{def:fom_0}\\
    &h\defi -\frac{1}{4} e^{-\nu} \left(\spt^\Psi(\partial_t, \partial_t)
      -2\trho^2\opert^2\right)\label{def:h}\\
    & k\defi \frac{1}{4\trho^2}\spt^\Psi(\axial,\axial),\label{def:k}\\
    & -  x \UR \sowf =\spt^\Psi(\partial_t,\partial_y), \qquad
      y \UR \sowf =\spt^\Psi(\partial_t, \partial_{x} ),\label{def:sow}\\
    &m \defi\frac{1}{4}\left\{\spt^\Psi{}^\alpha_\alpha+ e^{-\nu}\spt^\Psi(\partial_t,\partial_t) - 8 k\right\}\label{def:m}
  \end{align}
  have the following properties:
  \begin{itemize}
  \item[(a.1)]  $\foh$ extends to a $C^{n}(\mmm)$ function.
  \item[(a.2)] $\opert$ extends to a  $C^{n-1}(\mmm)$ function.
  \item[(a.3)] The vector field $\opert \eta$ is $C^n(\mmm\setminus\centresph_0)$.     \item[(a.4)] $\fom$ and  $\fok$ are
    $C^{n}(\mmm\setminus\centresph_0)$ and bounded near $\centresph_0$.
  \item[(b.1)] $h$ is $C^{n-1}(\mmm\setminus\centresph_0)$ and bounded near $\centresph_0$.
  \item[(b.2)] $\sowf$ is $C^{n-2}(\mmm\setminus\centresph_0)$ 
    and bounded near $\centresph_0$.
  \item[(b.3)] The vector field $\sowf \eta$ is $C^{n-1}(\mmm\setminus\centresph_0)$. 
  \item[(b.4)] $m$ and $k$ are $C^{n-1}(\mmm\setminus\centresph_0)$
    and bounded near $\centresph_0$.
  \end{itemize}
  Moreover, there exist two functions $\fof$ and $f$ defined
  on $\mmm\setminus\centresph_0$, invariant under
  $\stat$ and $\eta$ 
  and satisfying
  \begin{itemize}
  \item[(a.5)] $\fof$  is $C^{n-1}(\mmm\setminus\centresph_0)$,
    bounded near $\centresph_0$, $C^{n}(S_r)$ on all spheres $S_r$,
    $\partial_\theta\fof$ is $C^{n-1}$ outside the axis and extends continuously
    to $\axis\setminus\centresph_0$, where it vanishes, and both
    $\partial_r\fof$ and $\partial_t\fof$ are $C^{n-1}(S_r)$ on all spheres $S_r$,
  \item[(b.5)] $f$  is $C^{n-2}(\mmm\setminus\centresph_0)$,
    bounded near the origin, $C^{n-1}(S_r)$ on all spheres $S_r$,
    $\partial_\theta f$ is $C^{n-2}$ outside the axis and extends continuously
    to $\axis\setminus\centresph_0$, where it vanishes, and both
    $\partial_rf$ and $\partial_tf$ are $C^{n-2}(S_r)$ on all spheres $S_r$,
  \end{itemize}
  so that $\fpt^\Psi$ and $\spt^\Psi$ take the following form on $M\setminus \axis$ 
  \begin{align}
    \fpt^\Psi =& -4 e^{\nu(r)} \foh(r, \theta) dt^2
                 -2  \opert(r,\theta) \Q^2(r)\sin^2\theta dtd\phi
                 + 4 e^{\lambda(r)} \fom(r, \theta) dr^2 \nonumber\\
               &+4   \fok(r, \theta)\Q^2(r)(d\theta^2+ \sin ^2 \theta  d\phi^2)
                 + 4e^{\lambda(r)}\partial_\theta \fof(r,\theta)\Q(r) dr d\theta ,
                 \label{res:fopert_tensor}
    \\
    \spt^\Psi =& \left(-4 e^{\nu(r)} h(r, \theta) + 2{\opert}^2(r, \theta)\Q^2(r) \sin ^2 \theta \right)dt^2+ 4 e^{\lambda(r)} m(r, \theta) dr^2 \nonumber\\
               &+4  k(r, \theta) \Q^2(r) (d\theta^2+\sin ^2 \theta  d\phi^2)
                 + 4e^{\lambda(r)}\partial_\theta f(r,\theta)\Q(r)dr d\theta\nonumber\\
               &-2 \sowf(r,\theta) \Q^2(r) \sin^2\theta dt d\phi.\label{res:sopert_tensor}
  \end{align}\fin
\end{theorem}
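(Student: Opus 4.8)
Since this is the main geometric result of the companion paper \cite{paper1}, the plan is to reconstruct that argument; it uses in turn the three structures available — inheritance of the isometry group, orthogonal transitivity, and spherical symmetry of the background — each removing or normalising a block of $\fpt$ and $\spt$, followed by a delicate analysis at the centre $\centresph_0$. \emph{Step 1 (symmetry reduction).} Inheritance gives $\lie_{\stat}\fpt=\lie_{\axial}\fpt=\lie_{\stat}\spt=\lie_{\axial}\spt=0$, so in the coordinates $\{t,\phi,r,\theta\}$ of \eqref{metric} the components of $\fpt$ and $\spt$ are functions of $(r,\theta)$ alone; the problem thus reduces to analysis on the half-plane $(0,\infty)\times(0,\pi)$ with coordinates $(r,\theta)$, fibred over the $(t,\phi)$ orbits, together with the round spheres $S_r$ of the background — provisionally away from the axis $\axis$ and from $\centresph_0$.

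\emph{Step 2 (perturbative circularity: the block structure).} Expand the two orthogonal-transitivity scalars $\star(\bm{\stat}\wedge\bm{\axial}\wedge d\bm{\stat})$ and $\star(\bm{\stat}\wedge\bm{\axial}\wedge d\bm{\axial})$ of $g_\pertp$ in powers of $\pertp$. Using Step 1, the $O(\pertp)$ part becomes the statement that two explicit $1$-forms on the $(r,\theta)$ half-plane, assembled algebraically and through one derivative from the mixed components $\fpt(\stat,\partial_r),\fpt(\stat,\partial_\theta),\fpt(\axial,\partial_r),\fpt(\axial,\partial_\theta)$, are closed; simple connectedness makes them exact, and the potentials determine a gauge transformation that removes those four components. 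The $O(\pertp^2)$ part repeats this for $\spt$, now with sources quadratic in $\fpt$ (which simplify once $\fpt$ has been normalised). This yields the vanishing of the $dt\,dr$, $dt\,d\theta$, $d\phi\,dr$ and $d\phi\,d\theta$ terms in \eqref{res:fopert_tensor}--\eqref{res:sopert_tensor}, leaving only the $\{t,t\}$, $\{t,\phi\}$ and $\{r,\theta\}$ sectors.

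\emph{Step 3 (spherical symmetry: normalising the angular sector).} With the mixed sector gone, a further gauge transformation — amounting, on each sphere $S_r$, to a first-order linear adjustment in $\theta$ — is used to make the angular part of the $\{r,\theta\}$ block conformal to the round metric, i.e. to impose that the coefficient of $d\phi^2$ equal $\sin^2\theta$ times that of $d\theta^2$, as in \eqref{res:fopert_tensor}. On each $S_r$ this is a single first-order ODE in $\theta$ with integrating factor $1/\sin\theta$, solvable and regular up to the poles because, for a tensor regular on the axis, $\fpt(\axial,\axial)/\sin^2\theta$ extends across $\theta=0,\pi$. The surviving $dr\,d\theta$ component is not removed but merely named through a $\theta$-potential $\fof$ (resp. $f$); this extra $\theta$-integration is what makes $\partial_\theta\fof$ vanish on the axis and $\fof$ one order more regular along each $S_r$ than on $\mmm$ globally. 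Defining $\foh,\fok,\opert,\fom$ and their second-order counterparts through the contractions \eqref{def:foh}--\eqref{def:m} is then purely algebraic.

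\emph{Step 4 (the main obstacle: the centre).} Everything above is carried out naturally only on $\mmm\setminus\centresph_0$, and the hard part — where essentially all the work of \cite{paper1} lies — is to push the construction across $\centresph_0$ with the stated control. One must (i) show that the potentials and the resulting gauge vectors, built sphere-by-sphere, glue into genuine fields on a punctured neighbourhood of $\centresph_0$ that extend continuously to zero there (and that the composite gauge vectors have the claimed symmetry, tangency and vanishing properties), which forces one to translate the merely $C^n$ (resp. $C^{n-1}$) regularity of $\fpt,\spt$ at the centre into the precise order-and-parity behaviour in $r$ of their spherical components, and to check the potentials extend with the correct behaviour at $r=0$; and (ii) verify that after the gauge change $\fpt^\Psi,\spt^\Psi$ are $C^{n-1}$ (resp. $C^{n-2}$) on $\mmm\setminus\centresph_0$ and bounded near $\centresph_0$, and that $\foh,\dots,f$ inherit exactly the regularity in items (a.1)--(b.5). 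This is also where a single derivative is lost: the gauge vectors are reconstructed from the first-order orthogonal-transitivity identities, which already involve one derivative of $\fpt$, and must then be adjusted to vanish suitably at $\centresph_0$, so they come out only $C^n$ (resp. $C^{n-1}$); hence $\lie_{\sper}g$, $\lie_{\sperper}g$ and with them the gauged tensors drop one order, the extra drop at second order tracing back to the quadratic-in-$\fpt$ sources. Proving \emph{boundedness} near $\centresph_0$ — rather than mere finiteness of each coefficient — and that exactly one derivative, no more, is lost, is the technical heart of the proof.
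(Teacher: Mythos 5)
This theorem is not proved in the present paper at all: it is imported verbatim from the companion work, being the combination of Theorem 6.3 and Corollary 6.4 of \cite{paper1} (the text explicitly says ``We can now quote the main theorem in \cite{paper1}'' and the statement is closed with a QED box without any argument). There is therefore no in-paper proof against which your reconstruction can be checked, and any detailed verification would have to be made against \cite{paper1} itself.

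That said, your outline is consistent with everything the present paper reveals about the result and its use: the reduction by the inherited symmetries, the role of the two orthogonal-transitivity scalars $\star(\bm{\stat}\wedge\bm{\axial}\wedge d\bm{\stat})$ and $\star(\bm{\stat}\wedge\bm{\axial}\wedge d\bm{\axial})$ (which the paper does invoke, in Proposition \ref{prop:rigidrot_OT}, precisely through closedness, simple connectedness and the presence of an axis), the normalisation of the angular block with the $dr\,d\theta$ component encoded in a $\theta$-potential whose $\theta$-derivative vanishes on the axis, the gauge vectors being tangent to $S_r$, orthogonal to $\axial$ and extending to zero at $\centresph_0$, the loss of exactly one derivative, and the identification of the behaviour at the centre as the technical core. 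Two cautions: first, your sketch is a programme rather than a proof — Step 4, which you yourself flag as containing ``essentially all the work'', is described but not carried out, so as a standalone argument it is incomplete; second, you do not address the uniqueness/normalisation issues recorded in Remark \ref{remark:main} (that the gauge vectors are fixed only up to background Killing vectors commuting with $\axial$, under an extra condition on the inherited axial Killing field), which in \cite{paper1} is part of the same circle of results. If you intend this as a genuine re-proof rather than a citation, you would need to supply the quantitative centre analysis (order-and-parity expansions of the spherical components in $r$, in the spirit of Lemma \ref{res:radial_functions_origin}) that you only gesture at.
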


\begin{remark}[\cite{paper1}]
  \label{remark:main}
  In the setup of Theorem \ref{theo:paper1} let
  $\Kper$ and $\Kperper$
  be perturbation tensors defined
  by the perturbation scheme
  $(\mmm_\pertp, \gfam_\pertp,\{\psi_\pertp\})$ and
  $\Kper^{\Psi}$, $\Kperper^{\Psi}$ be the corresponding tensors in canonical form. If the
  background admits no further local isometries and the
  perturbation scheme
  is restricted so that
  the inherited axial Killing vector  $\hat \axial_\pertp=d\psi_\pertp(\axial)$
  is independent of the choice $\psi_{\pertp} \in
  \{ \psi_{\pertp} \}$, then
  the gauge vectors $\sper$ and $\sperper$ transforming
  $\Kper$ and $\Kperper$ into \emph{fixed} $\Kper^{\Psi}$ and
  $\Kperper^{\Psi}$  are unique up to
  the addition of a Killing vector of the background that commutes with $\axial$.
  We emphasize that the condition on $\hat \axial_\pertp$ is no restriction at all if $\gfam_\pertp$, $\pertp \neq 0$, admits only one axial symmetry.
  \end{remark}

Since the form of the perturbation tensors in Theorem \ref{theo:paper1} is used
repeatedly in the paper, we put
forward the
following definition:
\begin{definition}
  \label{def:PFSTAXpert}
  First and second order perturbation tensors on
    a static and spherically symmetric background
  that have the structure and regularity properties given in Theorem \ref{theo:paper1}
  are said to be in \emph{canonical 
      form}.
  \end{definition}

As we shall see, the field equations for perturbed fluid balls restrict
strongly the first order metric perturbation tensor. It is an essential ingredient of this paper to understand the full gauge freedom  that respects this restricted form.  The following result, proved in \cite{paper1},
achieves this.

\begin{proposition}[\textbf{Gauge freedom \cite{paper1}}]
  \label{prop:full_class_gauges}
  Let $(M,g)$ be a static and spherically symmetric spacetime
  as in Theorem \ref{theo:paper1}. 
  Assume that $\Q'(r)$ and $\nu'(r)$
  do not vanish identically on open sets and consider the following first
  and second order perturbation tensors 
\begin{align}
\Kper = &  -2 \opert(r,\theta) \Q^2(r)  \sin^2 \theta dt d \phi,
\label{Kper_gauge} \\
\Kperper = &\left ( -4 e^{\nu(r)} h(r,\theta) +
2 \opert^2(r,\theta)  \Q^2(r) \sin^2 \theta \right ) dt^2
+ 4 e^{\lam(r)} m(r,\theta) dr^2 \nonumber \\
&+ 4  k(r,\theta) \Q^2(r)\left ( d \theta^2 + \sin^2 \theta
d \phi^2 \right )
+ 4  e^{\lam(r)} \partial_{\theta} f (r,\theta) \Q(r) dr d \theta\nonumber\\
&-2  \Wtwo (r,\theta)\Q^2(r)\sin^2\theta  dt d\phi. \label{Kperper_gauge}
\end{align}
Then a first order gauge vector $\sper$ 
preserves the form of $\Kper$ (i.e. there is $\opert^{g}$
such  that $\Kper^{g} := \Kper + \lie_{\sper} g$ is given by \eqref{Kper_gauge}
with $\opert \longrightarrow \opert^{g}$) if and only if, up to the
addition of a Killing vector
of the background,
\begin{align}
  \sper = C t \partial_{\phi}, \quad \quad C \in \mathbb{R},
  \quad \quad  \mbox{and then} \quad \quad
\opert^{g} = \opert - C.
\label{gauge_first}
\end{align}
For $\sper$ as in \eqref{gauge_first}, the second order
gauge vector $\sperper$ preserves the  form of $\Kperper$
if and only if 
\begin{align}
\sperper = A t \partial_t + B t \partial_{\phi}
+ 2 \Y(r,\theta) \partial_r + 2 \alpha(r) \sin \theta \partial_{\theta}
+ \killback,
\quad \quad A,B \in \mathbb{R}, \quad 
\killback \mbox{ Killing vector of } g,
\label{gauge_second}
\end{align}
and $\Kperper^{g}:= {\spt} + \lie_{\sperper} 
g + 2\lie_{\sper}\fpt^g
- \lie_{\sper} \lie_{\sper} g$  takes the form 
(\ref{Kperper_gauge}) with the coefficients $h,m,k,f$ transformed to
\begin{align}
h^{g} &= h + \frac{1}{2} A  + \frac{1}{2} \Y  \nu', \label{hg} \\
k^{g} &= k + \Y \frac{\Q'}{\Q} + \alpha(r) \cos \theta, \label{kg} \\
m^{g} &= m + \Y_{,r} + \frac{1}{2} \Y \lam', \label{mg} \\
f^{g} &= f + \frac{\Y}{\Q} - \Q e^{-\lam} \alpha' \cos \theta + \beta(r), \label{fg}\\
\Wtwo {}^{g} &= \Wtwo -B, \label{Wtwog}
\end{align}
where the arbitrary function $\beta(r)$ arises because
$\Kperper^{g}$ only involves $\partial_{\theta} f^{g}$.
\end{proposition}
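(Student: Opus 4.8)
The plan is to exploit that $g$ and the tensors in \eqref{Kper_gauge}--\eqref{Kperper_gauge} are all invariant under the two commuting Killing vectors $\stat=\partial_t$ and $\axial=\partial_\phi$ of the background, and then to compute the relevant Lie derivatives explicitly in the chart of \eqref{metric}, using the regularity encoded in Theorem \ref{theo:paper1} to discard spurious solutions.

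\textbf{First order.} By definition $\Kper^g:=\Kper+\lie_{\sper}g$ has the form \eqref{Kper_gauge} if and only if, in coordinates $\{t,r,\theta,\phi\}$, every component of $\lie_{\sper}g$ vanishes except the $(t,\phi)$ one, which must equal $(\opert-\opert^g)\Q^2\sin^2\theta$ for some $\opert^g=\opert^g(r,\theta)$; in particular $\Kper^g$ is independent of $t$ and $\phi$, so $\lie_{\stat}\Kper^g=\lie_{\axial}\Kper^g=0$. Since $\lie_{\stat}g=\lie_{\axial}g=\lie_{\stat}\Kper=\lie_{\axial}\Kper=0$, this means that $[\stat,\sper]=\partial_t\sper$ and $[\axial,\sper]=\partial_\phi\sper$ are Killing vectors of $g$. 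Killing vectors of a static spherically symmetric metric have $t$-independent components and span the algebra $\la\stat\ra\oplus\mathfrak{so}(3)$ (accidental isometries of exceptional backgrounds being excluded or, eventually, neutralized by the hypotheses that $\Q',\nu'$ do not vanish on open sets together with the $\phi$-periodicity and axis regularity demanded by the canonical form); integrating $\partial_t\sper=\,$(a fixed Killing vector) and recalling that within this algebra only $\stat$ and $\axial$ commute with $\axial$, one gets $\sper=t(a\,\stat+C\,\axial)+\chi$ with $\chi$ independent of $t$. Substituting this: the purely spatial block of $\lie_{\sper}g$ equals the Lie derivative, along the spatial part of $\chi$, of the time-independent spatial metric $h=e^{\lambda}dr^2+\Q^2(d\theta^2+\sin^2\theta\,d\phi^2)$, so that spatial part is a Killing vector of $h$, hence a combination of rotations, hence has no $\partial_r$-component; the $(t,t)$ equation then reads $\chi^r\nu'+2a=0$, whence $a=0$; the $(t,r)$ and $(t,\theta)$ equations force $\chi^t$ to depend only on $\phi$; and the $(t,\phi)$ equation together with $2\pi$-periodicity in $\phi$ and boundedness of $\opert^g$ on the axis forces $\chi^t$ to be constant, hence Killing. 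Absorbing all Killing contributions leaves $\sper=Ct\,\axial$, and reading off the $(t,\phi)$ component gives $\opert^g=\opert-C$, i.e.\ \eqref{gauge_first}.

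\textbf{Second order.} Now fix $\sper=Ct\,\axial$. A direct computation in the chart of \eqref{metric} shows that $\lie_{\sper}\Kper$ and $\lie_{\sper}\lie_{\sper}g$ have nonzero components only in the $(t,t)$ slot, and that $\spt+2\lie_{\sper}\Kper^g-\lie_{\sper}\lie_{\sper}g$ is again of the form \eqref{Kperper_gauge} but with $\opert$ replaced by $\opert^g=\opert-C$ and $h,m,k,f,\Wtwo$ unchanged (the $(t,t)$-correction completes $2\opert^2\Q^2\sin^2\theta$ to $2(\opert^g)^2\Q^2\sin^2\theta$). Hence $\Kperper^g$ has the form \eqref{Kperper_gauge} if and only if $\lie_{\sperper}g$ has nonzero components only in the slots $(t,t),(r,r),(\theta,\theta),(\phi,\phi),(r,\theta),(t,\phi)$ and obeys the ``locking'' relation $(\lie_{\sperper}g)_{\theta\theta}/\Q^2=(\lie_{\sperper}g)_{\phi\phi}/(\Q^2\sin^2\theta)$. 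Running the $\lie_{\stat},\lie_{\axial}$ argument exactly as at first order yields, modulo a Killing vector, $\sperper=t(A\,\stat+B\,\axial)+W$ with $W$ independent of $t$ and $\phi$; the $(t,r),(t,\theta),(r,\phi),(\theta,\phi)$ equations force $W^t$ and $W^\phi$ to be constant, hence Killing and absorbable, so $W=2\Y(r,\theta)\partial_r+W^\theta(r,\theta)\partial_\theta$. The decisive step is the locking relation, which a short calculation reduces to $\partial_\theta(W^\theta/\sin\theta)=0$, i.e.\ $W^\theta=2\alpha(r)\sin\theta$; this establishes \eqref{gauge_second}. Finally, \eqref{hg}, \eqref{kg}, \eqref{mg}, \eqref{fg} and \eqref{Wtwog} follow by reading off, respectively, the $(t,t)$, the common $(\theta,\theta)/(\phi,\phi)$, the $(r,r)$, the $(r,\theta)$ and the $(t,\phi)$ components of $\lie_{\sperper}g$ --- the parameters $A,B$ entering through the $\stat$- and $\axial$-terms, and the arbitrary $\beta(r)$ in \eqref{fg} reflecting that \eqref{Kperper_gauge} only sees $\partial_\theta f$.

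\textbf{Main obstacle.} The Lie-derivative arithmetic is routine; the genuine difficulty is the bookkeeping of the Killing freedom --- correctly isolating which pieces of $\sper$ and $\sperper$ are honest background Killing vectors, so that ``up to the addition of a Killing vector'' is invoked consistently and exactly once. This is precisely where the non-degeneracy of $\Q',\nu'$ and the axis/centre regularity and $\phi$-periodicity carried by the canonical form of Theorem \ref{theo:paper1} do real work: they exclude spurious residual gauge directions --- for instance a nonzero $\partial_r$-component in $\sper$, or a non-constant $\chi^t$ --- that the bare component equations would otherwise permit on exceptional backgrounds. I expect disentangling the residual-gauge from the Killing directions to be the bulk of the argument; the remainder is bookkeeping and reading off coefficients.
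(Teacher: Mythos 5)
This proposition is not proved in the present paper at all: it is quoted verbatim from the companion paper \cite{paper1} (it is Proposition 6.9 there, as the introduction states), so there is no in-text proof to compare your argument against. That said, your strategy is the natural one and, as far as I can check, it works: the observation that preservation of the $(t,\phi)$-only (resp.\ canonical) form forces $[\stat,\sper]$ and $[\axial,\sper]$ (resp.\ the same for $\sperper$) to be background Killing vectors, the integration $\sper=t(a\stat+C\axial)+\chi$, the component-by-component elimination, the verification that $2\lie_{\sper}\Kper^{g}-\lie_{\sper}\lie_{\sper}g$ contributes only to the $(t,t)$ slot and exactly converts $2\opert^2$ into $2(\opert^g)^2$, the ``locking'' relation $\partial_\theta(W^\theta/\sin\theta)=0$ giving $W^\theta=2\alpha(r)\sin\theta$, and the final read-off of \eqref{hg}--\eqref{Wtwog} (including the factor bookkeeping in the $(r,\theta)$ slot that produces $f^g=f+\Y/\Q-\Q e^{-\lambda}\alpha'\cos\theta+\beta$) are all correct.

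One step is stated more strongly than your hypotheses justify: from $\lie_{\chi^{\mathrm{sp}}}h=0$ you conclude that $\chi^{\mathrm{sp}}$ is ``a combination of rotations, hence has no $\partial_r$-component''. The condition that $\Q'$ does not vanish on open sets does not give this by itself --- the flat slice $dr^2+r^2 d\Omega^2$ has $\Q'=1$ and admits translations with $\chi^r=\cos\theta\neq 0$. The conclusion is nevertheless recoverable from equations you already write down: inserting a candidate $\chi^r(r,\theta)$ into the $(t,t)$ equation $\chi^r\nu'+2a=0$ and using that $\nu'$ is nonzero on a dense set forces $a=0$ and $\chi^r=0$ (the $\theta$-dependence of a translation's radial component is incompatible with $\nu'=\nu'(r)\not\equiv 0$). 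So the gap is one of logical ordering rather than substance, but in a polished write-up you should derive the vanishing of $\chi^r$ from the joint system rather than from the spatial Killing equation alone; the same care is needed at second order and, as you correctly anticipate, in ruling out residual $t$-linear terms built on accidental isometries of special backgrounds.
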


\begin{remark}
  It is important to stress that this proposition includes in particular the {\it full} gauge freedom that preserves the first order metric perturbation tensor in canonical 
    form.
  Indeed, by setting $\Kper=0$ and $\sper=0$, second order metric perturbation tensors
  transform under a gauge change in exactly the same way as the first order
  perturbation tensors do. Since the tensor
  $\Kperper$ in \eqref{Kperper_gauge} is fully general (in the canonical
   form) it follows that the most general transformation vector that respects a general $\Kper$ in canonical 
  form
  is
  given by $\sper = \sperper$, with $\sperper$ as given in \eqref{gauge_second} and
  $\{\foh,\fom,\fok,\fof, \opert\}$ transform exactly as
  the corresponding \eqref{hg}-\eqref{Wtwog}.
\end{remark}

Exploiting the gauge freedom to first and second order in
Proposition \ref{prop:full_class_gauges} will be
an important tool to prove the results of this paper. We will use the following notation for it.

\begin{notation}
\label{def:gauge_classes}
We will denote by $\{\gaugeCABYa\}$ the family of gauges
described to second order by the gauge vectors \eqref{gauge_first} and
  \eqref{gauge_second} and such that the gauged functions satisfy the regularity properties
of the corresponding functions in Theorem \ref{theo:paper1}..
When e.g. $\alpha(r)$ has already been fixed, so that the gauge vectors are restricted to the form \eqref{gauge_first}-\eqref{gauge_second} with
$\alpha(r)=0$, the corresponding family will be denoted by
$\{\gaugeCABY\}\subset\{\gaugeCABYa\} $. This notation extends naturally
to any subset of gauge parameters in the family.
\end{notation}

\section{Background spherically symmetric global model}
\label{sec:background_configuration}
In this section we recall the basic construction of a spherically
symmetric spacetime consisting of two regions matched across a
hypersurface that preserves the symmetries.
We distinguish the two regions as ``interior'' and ``exterior'', but at
this point this is merely a convention.
We use $(+)$ to label objects in the interior, and a $(-)$
for the exterior. 
We denote by $(\mmm,\gback)$ the static and spherically symmetric
spacetime resulting from the matching 
$\mmm=\mmm^+\cup \mmm^-$ of two 
$C^{n+1}$  ($n \geq 4$)
static and spherically symmetric spacetimes 
$(M^{\pm}, g^\pm)$ with boundaries $\Supfamp^\pm$.
The matching hypersurface is $\mmm^-\cap \mmm^+ \simeq
\Supfamp^+  \simeq \Supfamp^-$.
We will use coordinates $\{ t_{\pm}, r_{\pm}, \theta_{\pm},
\phi_{\pm}\}$ on $(M^{\pm},g^{\pm})$ covering a neighbourhood of 
  the boundaries $\Supfamp^{\pm}$, such that the metrics read 
\begin{align*}
  g^{\pm} = -e^{\nu_{\pm}(r_{\pm})} dt_{\pm}^2 + e^{\lambda_{\pm}(r_{\pm})} dr_{\pm}^2
  +  \Q_{\pm}{}^2(r_\pm) \left (d \theta_{\pm}^2 + \sin^2 \theta_{\pm} d \phi_{\pm}^2 
  \right ).
\end{align*}
By spherically symmetry and staticity,
the hypersurfaces $\Supfamp^{\pm}$ can be described
by embeddings from an abstract manifold $\Supfamp$ (called 
\emph{the boundary}), coordinated by
$\{\tau,\vartheta,\varphi\}$, by means of
\begin{eqnarray}
\Supfamp^+ &=& \{\tp= \tau,\; \rrp=\ro_+,\; \thetap = \vartheta,\; \phip = \varphi\},\label{sigma0+}\\
\Supfamp^- &=& \{\tm= \tau,\; \rrm=\ro_-,\; \thetam = \vartheta,\; \phim = \varphi\},\label{sigma0-}
\end{eqnarray}
where $\ro_{\pm}$ are constants. We may choose $r_{\pm}$  so that
$r_{+}$ takes values to the left of $\ro_{+}$ in the real line
and $r_{-}$ to the right of $\ro_{-}$. Clearly,
  $\Q_{\pm} (\ro_{\pm}) > 0$ (the boundary is a hypersurface). We fix uniquely the unit normals $ \normal^{\pm}$ so that
$ \normal^+$ points $\mmm^+$ inwards
and $ \normal^-$ points $\mmm^-$ outwards.  Thus
\begin{equation}
\normal^+= -e^{-\frac{\lambda_+(\ro_+)}{2}}\partial_{\rrp}|_{\Supfamp^+},\qquad
\normal^-= -e^{-\frac{\lambda_-(\ro_-)}{2}}\partial_{\rrm}|_{\Supfamp^-}.\label{normal_vector}
\end{equation}
$\Supfamp^\pm$ are obviousy timelike everywhere and their
first and second fundamental forms read
\begin{align}
\hfamp^\pm_{ij}dx^idx^j=&-e^{\nu_\pm(\ro_\pm)}d\tau^2 +\Q_\pm^2(\ro_\pm)(d\vartheta^2+\sin^2\vartheta d\varphi^2),\label{h0ij}\\
\kfamp^\pm_{ij}dx^idx^j=&e^{-\frac{\lambda_\pm(\ro_\pm)}{2}}\left(\frac{1}{2}e^{\nu_\pm(\ro_\pm)}\nu'_\pm(\ro_\pm)d\tau^2-\Q_\pm(\ro_\pm)\Q'(\ro_\pm) (d\vartheta^2+\sin^2\vartheta d\varphi^2)\right).\label{k0ij}
\end{align}
The matching conditions across $\Supfamp$ require that the first and
second fundamental forms on both sides agree i.e.
$[\hfamp]=[\kfamp]=0$, where for any object $[f] := f^+ - f^-$. When  a
quantity $f$ satisfies $[f]=0$ we write $f^{+} = f^{-} := f$ on
$\Supfamp$. From \eqref{h0ij}-\eqref{k0ij} the matching conditions are equivalent to
\begin{equation}
[\Q]=0,\qquad [\nu]=0,\qquad [e^{-\lambda/2} \nu']=0,\qquad [e^{-\lambda/2} \Q']=0. \label{background_matching_R}
\end{equation}
The last two can also be written as 
$[ \normal(\nu)] = [  \normal(\Q)] =0$. So far, no field equations have been imposed. We summarize
the construction  with the following definition.
\begin{definition}
\label{SpherRegion}
A spacetime $(M,g)$ si called {\bf static and spherically symmetric
with two regions} if it is composed by $(M^{\pm}, g^{\pm}, \Supfamp^{\pm})$
as described in this section and satisfies the matching conditions
\eqref{background_matching_R}.
\end{definition}

\subsection{Background field equations}
\label{Sect:BackFE}

Our background spacetime 
describes a non-rotating self-gravitating fluid
of finite extent. Thus, it consists of two
regions,  one solving the gravitational field equations
for a perfect fluid and the other for vacuum. In the context 
of General Relativity without cosmological constant, which we assume from now on,
the field equations are  $\mbox{Ein}_{g}{}_{\alpha\beta}=\chiefes
T_{\alpha\beta}$,  where $\mbox{Ein}_g$ is the Einstein tensor of $g$, $\chiefes$ is the gravitational coupling constant and $T_{\alpha\beta}$ is the
energy-momentum tensor of the matter. For a perfect fluid
\begin{align*}
  T_{\mu\nu} = (\den + \pre) u_{\mu} u_{\nu} + \pre g_{\mu\nu},
\end{align*}
where $\pre$ is the pressure, $\den$ the density and
$u$ is the (unit timelike) four-velocity of the fluid.
For the metric 
(\ref{metric})
the perfect-fluid Einstein field equations hold if and only
if, in addition to  $u = e^{-\frac{\nu}{2}} \xi$ and
 \begin{align}
  \chiefes \pre & = e^{-\lam} \frac{\Q^{\prime}}{\Q}
              \left ( \frac{\Q^{\prime}}{\Q} + \nu^{\prime} \right ) -
              \frac{1}{\Q^2},
              \label{exprepre} \\
   \chiefes \den
                &= e^{-\lam} \left ( - 2 \frac{\Q^{\prime\prime}}{\Q} - \frac{\Q^{\prime}{}^2}{\Q^2}
                  + \frac{\Q^{\prime}}{\Q} \lam^{\prime} \right ) + \frac{1}{\Q^2},
              \label{expreden}
\end{align}
the following ODE is satisfied
\begin{align}
\nu^{\prime\prime} & =  - 2 \frac{\Q^{\prime\prime}}{\Q} + \frac{\Q^{\prime}}{\Q}
\left (   2 \frac{\Q^{\prime}}{\Q} + \lam^{\prime} + \nu^{\prime} \right )
+ \frac{1}{2} \nu^{\prime} \left ( \lam^{\prime} - \nu^{\prime} \right )
- \frac{2 e^{\lam}}{\Q^2} \quad \Longrightarrow  \quad 
\pre^{\prime} + \frac{\nu^{\prime}}{2} \left ( \den + \pre
    \right ) =0.
    \label{PDE_pf}
    \end{align}
The implication is in fact an equivalence wherever
$\Q' \neq 0$. From (\ref{exprepre}),
any critical value $r_{crit}$ of
$\Q(r)$ outside
the centre(s) of symmetry (i.e. satisfying 
$\Q'(r_{crit}) =0$, $\Q(r_{crit})\neq 0$) must have $\pre |_{r_{crit}} < 0$.
The boundary of the fluid ball (with vacuum exterior)
is located at $\pre=0$ (the fact that
$\pre |_{\Supfamp} =0$ is a general
consequence of the Israel  conditions and in our setup it follows
immediately from \eqref{exprepre} and \eqref{background_matching_R}). Thus, 
either $\pre >0$ or $\pre <0$ in the interior of the body, and the physical case
is $\pre>0$. Also on physical grounds it must be that the
energy density of the fluid is non-negative and positive somewhere.
We make this assumption explicit:

\vspace{3mm}

\noindent {\bf Assumption {\AsHtwo}:} The background spacetime has two 
non-empty regions, one vacuum and one covered
by a self-gravitating fluid satisfying
$\pre \geq 0$ and $\den \geq 0$.  Moreover, there is at least one point
in the fluid where $\den > 0$.

\vspace{3mm}

The condition $\pre \geq 0$ implies that $\Q(r)$ is
strictly monotonic and we can set $\Q(r)=r$, which we assume from now on.
The field equations (\ref{exprepre})-(\ref{PDE_pf}) become
\begin{align}
  &\lambda' = \frac{1}{r}(1-e^\lambda)+re^\lambda \chiefes \Eb,
    \label{eq:lambdaprime} \\
  &\nu' =\frac{1}{r}(e^\lambda-1)+re^\lambda \chiefes \Pb,\label{eq:nuprime} \\
  & \nu^{\prime\prime}  =  \frac{1}{r}
    \left (   \frac{2}{r} + \lam^{\prime} + \nu^{\prime} \right )
    + \frac{1}{2} \nu^{\prime} \left ( \lam^{\prime} - \nu^{\prime} \right )
    - \frac{2 e^{\lam}}{r^2} \quad \quad \Longleftrightarrow 
    \quad \quad
    \Pb' = -\frac{\nu'}{2}(\Eb+\Pb).\label{eq:pprimenuprime}
\end{align}
Consider the convenient and standard background quantities
\begin{eqnarray}
j(r) &\defi& e^{-(\lambda+\nu)/2},\label{defi:j} \\
1-\frac{ \chiefes M(r)}{ 4 \pi r}&\defi& e^{-\lambda}.\label{eq:Mass}
\end{eqnarray}
The former satisfies
\begin{equation}
\frac{j'}{j}= - \frac{1}{2} \left ( \lambda' + \nu' \right ) =
-\frac{1}{2}re^{\lambda}\chiefes (\Eb+\Pb),
\label{eq:jpoj}
\end{equation}
while the latter allows one to replace the variables
$\{\lambda,\nu\}$ by  $\{M,\Pb\}$ as follows:
(\ref{eq:Mass}) and (\ref{eq:nuprime}) give
\begin{align}
\nu'=\frac{\chiefes}{r(4\pi r-\chiefes M)}(M+4\pi r^3 \Pb),
\label{eqnup}
\end{align}
and the system (\ref{eq:lambdaprime})-(\ref{eq:pprimenuprime}) takes the
standard form (see (9) and (10) in \cite{OpenVolkoff1939})
\begin{align}
\quad 
M' =& 4 \pi r^2 \Eb , \label{eq:MTOV}\\
\pre'
=& - \frac{ \chiefes (\den+\pre)( M 
+ 4 \pi r^3 \pre)}{ 8 \pi r^2 (  1- \frac{\chiefes M}{4\pi r} )}
.\label{eq:PTOV}
\end{align}
These are the well-known TOV equations \cite{OpenVolkoff1939}.
These equations are usually suplemented with a barotropic equation
of state (EOS) $\den(\pre)$ which closes the system. 
A substantial portion of the paper does not rely on
the existence of a barotropic EOS.
We will 
make the assumption explicit when needed (in Section \ref{sec:barotropic}).

The vacuum case is obviously Schwarzschild, for which 
\begin{equation*}
M = \Mext \,\,\,\in \mathbb{R}
\quad \quad 
e^{-\lambda(r)} =  e^{\nu(r)} =1-\frac{\chiefes \Mext}{4 \pi r}.
\end{equation*}

The matching conditions (\ref{background_matching_R}) read,
after setting $\Q_\pm(r_\pm)=r_\pm$,
\begin{equation}
\ro_+=\ro_-(=\ro),\quad  [\nu]=0,\quad [\lambda]=0, \quad [\nu']=0, \label{background_matching}
\end{equation}
and are interpreted as follows: $[\lambda]=0$ is equivalent 
to the continuity of the mass $\Mext=M(\ro)$, $[\nu]=0$ fixes uniquely  
the additive integration constant that arises when solving
(\ref{eqnup}) and
$[\nu']=0$  corresponds to $[P]=0$, which, in principle, determines $\ro$.
Note that $[\nu']=0$ also provides
\begin{equation}
  \nu_{\pm}' (\ro)
    = \frac{1}{\ro}(e^{\lambda(\ro)}-1),
  \label{nup_value}
\end{equation}
where the equality follows directly from \eqref{eq:nuprime}.

Finally, the field equations combined with the matching conditions (\ref{background_matching}) allow us to express the jumps of higher order
derivatives  in terms of the fluid variables ($\cA$ is a constant whose
explicit form is not needed)
\begin{align}
  \left[\lambda'\right]  =&  \ro e^{\lambda(\ro)}\chiefes [\Eb],
\label{Bp} \\
  \left[\nu''\right] =&\frac{1}{\ro} \left( 1+\frac{\ro\nu'(\ro)}{2}\right)[\lambda'],
\label{nupp}\\
\left[\lambda''\right] =&  \ro e^{\lambda(\ro)}\chiefes [\Eb'] + [\lambda'{}^2],
\label{Bpp}\\
\left[ \nu''' \right] = &\frac{1}{\ro}\left(1+\frac{\ro \nu'(\ro)}{2}\right)[\lambda'']+ \cA[\lambda'].\label{nuppp}
\end{align}
Note  that the jumps of  $[\nu'']$ and  $[\lambda']$ are proportional.
All these expressions
are valid also when two perfect fluids are matched.

Everything we have said so far in this section holds locally near the boundaries.
We now make a global assumption similar in spirit to assumption {\AsHone}.
Since the spacetime $(\mmm,g)$ is now composed of two regions $(\mmm^{\pm}, g^{\pm})$
with boundaries $\Supfamp^{\pm}$,  we modify the assumption as follows

\vspace{2mm}

\noindent {\bf Assumption {\AsHonep}:} The interiors
$(\mbox{Int}(\mmm^{\pm}), g^{\pm})$ satisfy
assumption {\AsHone} with corresponding diffeomorphisms
\begin{align*}
  \mbox{Int} (\mmm^+)\simeq \BBp \times I, \quad \quad
  \mbox{Int} (\mmm^-) \simeq ( \mathbb{R}^3 \setminus \overline{\BB^+}) \times I,
\end{align*} where $\BBp$ is an open ball centered at the origin. Moreover
$\Supfamp^{\pm} \simeq (\partial \BBp) \times I$.

\vspace{2mm}

Under  assumptions {\AsHonep} and {\AsHtwo}, it must be the case
that the fluid lies in the interior $\mmm^+$. 
Indeed, if $\mmm^+$ were vacuum  then $\Mext=0$ and one easily concludes
from (\ref{eq:MTOV})-(\ref{eq:PTOV})  together with $\den \geq 0$ 
and  $\pre(\ro)=0$ that $\pre \leq 0$ in the fluid region, which is a
contradiction. Consequently
the coordinate $r$ takes values in $r \in (0,\ro]$ in the 
interior (fluid) region and $r \in [\ro,\infty)$ in the exterior (vacuum) domain.
The spacetime is $C^{n+1}$ (with $n\geq 4$) everywhere
except at $\Sigma$,
in particular in a neighbourhood of the centre  $r=0$.
Since the vacuum region is the exterior $\mmm^-$ can now write
\begin{equation}
e^{-\lambda_-(r)} =  e^{\nu_-(r)} =1-\frac{\chiefes \Mext}{4 \pi r} \qquad \Longrightarrow \qquad 
j_-(r)= 1,
\label{eqs_back_vacuum}
\end{equation}
where  an additive integration constant in $\nu$ has been adjusted
to zero. This choice fixes the (otherwise arbitrary) freedom in scaling
the static Killing $\xi$ by a positive constant.
Moreover,  one has, in addition,
$[E] = \Eb_+(\ro)$ and $[E'] = E'_+(\ro)$.

We now make use of the following result on
the differentiability of radially symmetric
functions (see e.g. \cite{paper1} or Lemma 3.1 in \cite{Andreasson})
\begin{lemma}
  \label{res:radial_functions_origin}
Let $q : \Bp \rightarrow \mathbb{R}$ be radially symmetric,
i.e. such that there exists $\trace{q}:  [0,\rad) \rightarrow \mathbb{R}$
(the trace of $q$) with $q(x)= \trace{q}(|x|) $. 
Then $q\in C^n(\Bp)$ ($n\geq 0$)
if and only if $\trace{q}$ is $C^n([0,\rad))$ (i.e.  up to the inner
boundary) and all its odd derivatives up to order $n$ vanish at zero. Equivalently, if and only if 
\begin{align}
  \trace{q}(r) = \mathcal{P}_n(r^2)  + \Phi^{(n)}(r), \label{polinom}
\end{align}
where $\mathcal{P}_n$ is a polynomial
of degree  $[\frac{n}{2}]$ and $\Phi^{(n)}$ is $C^n([0,\rad))$ and satisfies
$\Phi^{(n)}(r) = o (r^n)$.
\label{origin}
\end{lemma}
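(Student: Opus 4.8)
The statement is local at the centre: away from it one has $q\in C^n(\Bp\setminus\{0\})$ if and only if $\trace{q}\in C^n((0,\rad))$, because there $|x|=\sqrt{x^2+y^2+z^2}$ is a nowhere vanishing $C^\infty$ function of the Cartesian coordinates and, conversely, $\trace{q}$ is the restriction of $q$ to a ray. So the plan is to establish the cycle of implications ``$q\in C^n(\Bp)$'' $\Rightarrow$ ``$\trace{q}\in C^n([0,\rad))$ with all its odd derivatives up to order $n$ vanishing at $0$'' $\Rightarrow$ ``the decomposition \eqref{polinom} holds'' $\Rightarrow$ ``$q\in C^n(\Bp)$''.

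For the first implication I would restrict $q$ to a coordinate axis: the map $x\mapsto q(x,0,0)$ is a $C^n$ function on $(-\rad,\rad)$ (a $C^n$ function composed with an affine embedding), it equals $\trace{q}(|x|)$, hence it is an \emph{even} $C^n$ function whose restriction to $[0,\rad)$ is $\trace{q}$; in particular $\trace{q}$ is $C^n$ up to the inner boundary. Differentiating the identity $g(x)=g(-x)$ $j$ times and setting $x=0$ gives $g^{(j)}(0)=0$ for odd $j$, so $\trace{q}^{(j)}(0)=0$ for every odd $j\le n$. The equivalence of this with the decomposition \eqref{polinom} is then Taylor's theorem with Peano remainder at $r=0$: the order-$n$ Taylor polynomial of $\trace{q}$ reduces to a polynomial $\mathcal{P}_n$ in $r^2$ of degree $[n/2]$ exactly because the odd Taylor coefficients vanish, and the remainder $\Phi^{(n)}:=\trace{q}-\mathcal{P}_n(r^2)$ is automatically $C^n$ and $o(r^n)$; conversely, a $C^n$ function that is $o(r^n)$ has all its derivatives up to order $n$ vanishing at $0$ by uniqueness of Taylor coefficients, so \eqref{polinom} forces the vanishing of the odd derivatives of $\trace{q}$.

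The substantial step is \eqref{polinom} $\Rightarrow q\in C^n(\Bp)$. Since $|x|^2$ is a polynomial in the Cartesian coordinates, $\mathcal{P}_n(|x|^2)$ is a genuine polynomial, hence $C^\infty(\Bp)$, and it remains to prove that $\psi(|x|)$, with $\psi:=\Phi^{(n)}$, lies in $C^n(\Bp)$. Here $\psi\in C^n([0,\rad))$ with $\psi^{(j)}(0)=0$ for $0\le j\le n$, whence $\psi^{(k)}(r)=o(r^{\,n-k})$ as $r\to 0^+$ for every $k\le n$. An induction on the length of the multi-index shows that on $\Bp\setminus\{0\}$
\[
\partial^\alpha\big[\psi(|x|)\big]=\sum_{k=1}^{|\alpha|}\psi^{(k)}(|x|)\,h_{\alpha,k}(x),
\]
where each $h_{\alpha,k}$ is $C^\infty$ on $\Bp\setminus\{0\}$ and positively homogeneous of degree $k-|\alpha|$, so that $|h_{\alpha,k}(x)|\le C_{\alpha,k}|x|^{\,k-|\alpha|}$ near the origin. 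Consequently, for $|\alpha|\le n$ one gets $\big|\partial^\alpha[\psi(|x|)]\big|\le C\sum_{k=1}^{|\alpha|}o(|x|^{\,n-k})\,|x|^{\,k-|\alpha|}=o(|x|^{\,n-|\alpha|})\to 0$ as $x\to0$, so every partial derivative of $\psi(|x|)$ of order $\le n$ extends continuously (by $0$) to the origin. I would then close the argument with the standard removable-singularity lemma: a function $F$ that is continuous on $\Bp$, of class $C^n$ on $\Bp\setminus\{0\}$, and all of whose partial derivatives up to order $n$ on $\Bp\setminus\{0\}$ admit continuous extensions to $\Bp$, is of class $C^n(\Bp)$; this is proved by induction on $n$, using the fundamental theorem of calculus along the axes to show that $\partial_iF(0)$ exists and equals the continuous extension of $\partial_iF$, and then applying the inductive hypothesis to $\partial_iF$. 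Applying this with $F=\psi(|x|)$ finishes the proof.

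The main obstacle is precisely this last implication, and within it the bookkeeping of the homogeneity degrees of the $h_{\alpha,k}$ together with the removable-singularity lemma; the other parts (restriction to an axis, and the Taylor reformulation) are routine.
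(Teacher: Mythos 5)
Your proof is correct. Note that the paper does not actually prove this lemma: it is quoted as a known result (``see e.g.\ \cite{paper1} or Lemma 3.1 in \cite{Andreasson}''), so there is no in-paper argument to compare against. Your three-step cycle is the standard route and all the steps check out: the restriction to a coordinate axis gives an even $C^n$ function of one variable, whence the vanishing of the odd derivatives at $0$ and, via Taylor's theorem with Peano remainder, the equivalence with the decomposition \eqref{polinom}; the converse reduces to showing $\Phi^{(n)}(|x|)\in C^n(\Bp)$, and your induction giving $\partial^\alpha[\psi(|x|)]=\sum_{k=1}^{|\alpha|}\psi^{(k)}(|x|)h_{\alpha,k}(x)$ with $h_{\alpha,k}$ positively homogeneous of degree $k-|\alpha|$ is right (each differentiation either raises $k$ by one while multiplying by the degree-zero factor $x_j/|x|$, or lowers the homogeneity degree by one), so the bound $o(|x|^{n-k})\cdot O(|x|^{k-|\alpha|})=o(|x|^{n-|\alpha|})$ follows, and the removable-singularity lemma you invoke (one-dimensional mean value theorem on segments whose open interior avoids the origin, plus induction on the order) closes the argument correctly. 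The only cosmetic remark is that the ``only if'' direction also needs the trivial observation that $q$ is $C^n$ away from the origin iff $\trace{q}$ is $C^n$ on $(0,\rad)$, which you state at the outset; with that, the proof is complete.
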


This Lemma implies that $\lambda(r)$ and $\nu(r)$ in the $+$ region
(as functions of one variable $r$)
are  $C^{n+1}$ up to boundary, and admit an expansion
\begin{align}
  \lambda(r) = \lambda_0 + \lambda_2 r^2 + \lambda_4 r^4 + \Phi^{(5)}_\lambda(r),
  \quad \quad
  \nu(r) = \nu_0 + \nu_2 r^2 + \nu_4 r^4+ \Phi^{(5)}_\nu (r),
  \label{background_expansion_origin}
\end{align}
with $\lambda_0,\lambda_2,\lambda_4, \nu_0, \nu_2,\nu_4 \in \mathbb {R}$,
and $\Phi^{(5)}_{\lambda}(r)$, $\Phi^{(5)}_{\nu}(r)$ are $C^{n+1}([0,\ro])$
and vanish, together with their derivatives up to order five, at $r=0$.
Combining this with the
field equations (\ref{eq:lambdaprime})-(\ref{eq:nuprime}) near $r=0$ one finds,
in particular,
\begin{equation}
\lambda_0=0,\quad \lambda_2=\frac{\chiefes}{3} \Ebc,\quad
\nu_2=\frac{\chiefes}{6}\left(\Ebc+3\Pbc\right),
\label{eq:l2nu2}
\end{equation}
where $\Ebc=\den(0)$ and $\Pbc=\pre(0)$ are the values of the energy density
and pressure at the origin,
while $\nu_0$ will be determined by the matching condition $[\nu]=0$.
Expressions \eqref{background_expansion_origin}-\eqref{eq:l2nu2} give
\begin{align}
e^{\lambda(r)} &= 1 + \frac{\chiefes}{3} \Ebc  r^2 + O(r^4), \label{expansionlam}\\
e^{\nu(r)} & = e^{\nu_0} \left (
1  +  \frac{\chiefes}{6}  (\Ebc+ 3 \Pbc)  r^2 + O(r^4) \right ).\nonumber
\end{align}
These expansions together with (\ref{eq:Mass}) imply that $M(r) \in O(r^3)$.

Another consequence of assumptions {\AsHonep} and {\AsHtwo} is that $\nu(r)$
is free of critical values outside the origin. First of all,
equation (\ref{eq:MTOV})  together with $M\in  O(r^3)$ and $\den(r) \geq 0$
implies $M(r) \geq 0$. Furthermore, the quantity
$4 \pi r - \chiefes M(r)$ is positive for $r$ sufficiently close to zero, so
regularity of the spacetime imposes (by \eqref{eq:PTOV})
that  $r > \chiefes M(r)/4\pi$ for all $r  \leq \ro $
and $r > \chiefes \Mext/4 \pi$ for $r \geq \ro$ (in fact this property holds in much more general circumstances \cite{marcsenor2m}). 
With these properties it is clear from  (\ref{eqnup}) that
$\nu' > 0$ away from the origin.

The setup described in this section is summarized in the following definition.
\begin{definition}
  \label{def:background}
A $C^{n+1}$ \textbf{perfect fluid ball configuration} is a static and 
spherically symmetric spacetime with two regions, c.f. Definition \ref{SpherRegion},
satisfying assumptions {\AsHonep} and {\AsHtwo}. 
\end{definition}

Whenever this definition is invoked, all the results and notation
introduced in this section will be understood.

\section{Perturbed Einstein's Field equations to second order}
\label{PertEFEs}
We review in this section the perturbations of the Ricci tensor in terms of
$\fpt$ and $\spt$, and the perturbations of the perfect fluid under the
assumption of rigid
rotation.  Recall that this means that the fluid $3$-velocity, as observed by the stationary observer, is uniform in both space and time and only has one component along the axial direction. This precludes, in particular, the presence
  of convective motions inside the fluid.
We then write down the first and second order
perturbed Einstein field equations under these conditions. This part,
just like the previous one,
is a reminder of known things and it is included to make the paper
as self-contained as possible and to fix some notation.

\subsection{First and second order perturbations of the Ricci tensor}
Given two metrics $g$ and $\gepsilon$, the respective Riemann tensors,
denoted by $R^{\mu}_{\phantom{\mu}\alpha\nu\beta}$
and $\Repsilon^{\mu}_{\phantom{\mu}\alpha\nu\beta}$ are related by
(e.g. \cite{Wald})
\begin{align}
\Repsilon^{\mu}_{\phantom{\mu}\alpha\nu\beta} 
= R^{\mu}_{\phantom{\mu}\alpha\nu\beta}  + 
\nabla_\nu \Sep^{\mu}_{\phantom{\mu}\alpha\beta} - \nabla_{\beta} 
\Sep^{\mu}_{\phantom{\mu}\alpha\nu}
+ \Sep^{\mu}_{\phantom{\mu}\nu\rho} \Sep^{\rho}_{\phantom{\rho}\alpha\beta}
- \Sep^{\mu}_{\phantom{\mu}\beta\rho} \Sep^{\rho}_{\phantom{\rho}\alpha\nu}
\label{Riemep}
\end{align}
where $\nabla$ is the Levi-Civita derivative of $g$ and
the tensor $\Sep$ is the difference of the respective
connections of $\gepsilon$ and $g$, explicitly
\begin{align*}
\Sep^{\mu}_{\phantom{\mu}\alpha\beta}=
\frac{1}{2} \gepsilon^{\sharp}{}^{\mu\nu} 
\left ( 
\nabla_{\alpha} \gepsilon_{\nu\beta}+
\nabla_{\beta} \gepsilon_{\nu\alpha}-
\nabla_{\nu} \gepsilon_{\alpha\beta}
\right )
:= \guepsilon
^{\mu\nu}  \Vep_{\nu\alpha\beta}
\end{align*}
where the last equality defines $\Vep$ and the tensor $\guepsilon^{\mu\nu}$
is the contravariant metric associated to  $\gepsilon$.
Recalling that $\gepsilon$
depends differentiably on $\pertp$, that $g_{\pertp=0}=g$, and the definitions
\eqref{def:metricperturbations}, it follows directly from
$\frac{ d \guepsilon ^{\alpha\beta}}{d\pertp}  = - 
\guepsilon^{\alpha\mu}  \guepsilon^{\beta\nu} 
\frac{ d \ge{}_{\mu\nu}}{d\pertp}$ that
\begin{align*}
\left . \frac{d \guepsilon^{\alpha\beta}}{d \pertp} \right |_{\pertp=0}
= - \Kper^{\alpha\beta}, \quad \quad
\left . \frac{d^2 \guepsilon^{\alpha\beta}}{d \pertp^2} \right |_{\pertp=0}
= - \Kperper^{\alpha\beta} + \Kper^{\alpha\mu} \Kper{}_{\mu}^{\phantom{\mu}\beta}.
\end{align*}
We emphasize that all objects are defined in $(\mmm,g)$ and that 
all indices are raised and lowered with the
background metric $g$. Define also
\begin{align}
\Sper_{\mu\alpha\beta} :=
\left . \frac{d \Vep_{\mu\alpha\beta}}{d \pertp} 
\right |_{\pertp=0} =
\frac{1}{2} \left ( \nabla_{\alpha} \Kper{}_{\mu\beta} +
\nabla_{\beta} \Kper{}_{\mu\alpha} - 
\nabla_{\mu} \Kper{}_{\alpha\beta}
\right ), \label{calSper} \\
\Sperper_{\mu\alpha\beta} :=
\left . \frac{d^2 \Vep_{\mu\alpha\beta}}{d \pertp^2} 
\right |_{\pertp=0} =
\frac{1}{2} \left ( \nabla_{\alpha} \Kperper{}_{\mu\beta} +
\nabla_{\beta} \Kperper{}_{\mu\alpha} - 
\nabla_{\mu} \Kperper{}_{\alpha\beta}
\right ), \label{calSperper}
\end{align}
from which it  follows directly
\begin{align*}
\left . \frac{d \Sep^{\mu}_{\phantom{\mu} \alpha\beta}}{d \pertp} 
\right |_{\pertp=0} =
\Sper^{\mu}_{\phantom{\mu}\alpha\beta}, 
\quad \quad
\left . \frac{d^2 \Sep^{\mu}_{\phantom{\mu} \alpha\beta}}{d \pertp^2} 
\right |_{\pertp=0} =
\Sperper^{\mu}_{\phantom{\mu}\alpha\beta} 
- 2 \Kper^{\mu\nu} \Sper_{\nu\alpha\beta}.
\end{align*}
Taking the first and second derivative of (\ref{Riemep}) with 
respect to $\pertp$ at $\pertp =0$, and using that 
$\Sep |_{\pertp=0}=0$, the following expressions are directly
obtained
\begin{align}
\left . \frac{ d \Repsilon^{\mu}_{\phantom{\mu}\alpha\nu\beta} }{d \pertp}
\right |_{\pertp=0} = & 
\nabla_\nu \Sper^{\mu}_{\phantom{\mu}\alpha\beta} 
- \nabla_{\beta}  \Sper^{\mu}_{\phantom{\mu}\alpha\nu} ,
\label{Riemann_first} \\
\left . \frac{ d^2 \Repsilon^{\mu}_{\phantom{\mu}\alpha\nu\beta} }{d \pertp^2}
\right |_{\pertp=0} = & 
\nabla_\nu \left ( \Sperper^{\mu}_{\phantom{\mu}\alpha\beta} 
- 2 \Kper^{\mu\rho} \Sper_{\rho\alpha\beta} \right )
-
\nabla_\beta \left ( \Sperper^{\mu}_{\phantom{\mu}\alpha\nu} 
- 2 \Kper^{\mu\rho} \Sper_{\rho\nu\alpha} \right ) \nonumber \\
& + 2 \Sper^{\mu}_{\phantom{\mu}\nu\rho} \Sper^{\rho}_{\phantom{\rho}\alpha\beta}
- 2 \Sper^{\mu}_{\phantom{\mu}\beta\rho} \Sper^{\rho}_{\phantom{\rho}\alpha\nu}.
\label{Riemann_interme}
\end{align}
We can elaborate (\ref{Riemann_interme}) by expanding the second terms
in the parentheses and inserting 
$\nabla_{\mu} \Kper{}_{\alpha\beta} = \Sper_{\alpha\beta\mu}
+ \Sper_{\beta\alpha\mu}$. The result is
\begin{align}
\left . \frac{ d^2 \Repsilon^{\mu}_{\phantom{\mu}\alpha\nu\beta} }{d \pertp^2}
\right |_{\pertp=0} = &
\nabla_\nu \Sperper^{\mu}_{\phantom{\mu}\alpha\beta} 
- \nabla_\beta \Sperper^{\mu}_{\phantom{\mu}\alpha\nu} 
+ 2 \Kper^{\mu\rho} \left ( \nabla_{\beta} \Sper_{\rho\nu\alpha} - 
\nabla_{\nu} \Sper_{\rho\alpha\beta} \right ) \nonumber \\
& + 2 \Sper^{\rho}_{\phantom{\rho}\nu\alpha} \Sper_{\rho}{}^{\mu}{}_{\nu} 
- 2 \Sper^{\rho}_{\phantom{\rho}\beta\alpha} \Sper_{\rho}{}^{\mu}{}_{\nu}.
\label{Riemann_second}
\end{align}
From (\ref{Riemann_first}) and (\ref{Riemann_second}), the first and second order
perturbations of the Ricci tensor are obtained by simply
contracting the $\mu$ and $\nu$ indices, namely
\begin{align}
  R^{(1)}_{\alpha\beta}   := 
  & \left . \frac{ d \Repsilon_{\alpha\beta} }{d \pertp}
    \right |_{\pertp=0}  = \nabla_\mu \Sper^{\mu}_{\phantom{\mu}\alpha\beta} 
    - \nabla_\beta \Sper^{\mu}_{\phantom{\mu}\alpha\mu} \label{Ricpera} \\ 
  & = \frac{1}{2}
    \left ( 
    \nabla_{\mu} \nabla_{\alpha} {\Kper}^{\mu}_{\phantom{\mu}\beta}
    + \nabla_{\mu} \nabla_{\beta} {\Kper}^{\mu}_{\phantom{\mu}\alpha}
    - \nabla_{\mu} \nabla^{\mu} {\Kper}_{\alpha\beta}
    - \nabla_{\alpha} \nabla_{\beta} {\Kper}^{\mu}_{\phantom{\mu}\mu}
    \right ), \label{Ricper} \\
  R^{(2)}_{\alpha\beta}  := 
  & \left . \frac{ d^2 \Repsilon_{\alpha\beta} }{d \pertp^2}
    \right |_{\pertp=0}  \nonumber \\
  = & \frac{1}{2}
    \left ( 
    \nabla_{\mu} \nabla_{\alpha} {\Kperper}^{\mu}_{\phantom{\mu}\beta}
    + \nabla_{\mu} \nabla_{\beta} {\Kperper}^{\mu}_{\phantom{\mu}\alpha}
    - \nabla_{\mu} \nabla^{\mu} {\Kperper}_{\alpha\beta}
    - \nabla_{\beta} \nabla_{\alpha} {\Kperper}^{\mu}_{\phantom{\mu}\mu}
    \right ) \nonumber  \\
  & +\frac{1}{2} \nabla_{\beta} \nabla_{\alpha} \left ( 
  {\Kper}^{\mu\rho} {\Kper}_{\mu\rho} \right )
  -  \left ( \nabla_{\beta} {\Kper}^{\mu\rho} \right )
  \left ( \nabla_{\alpha} {\Kper}_{\mu\rho} \right )\nonumber \\
  & + {\Kper}^{\mu\rho}
    \left (
    \nabla_{\mu} \nabla_{\rho} {\Kper}_{\alpha\beta}
    - \nabla_{\mu} \nabla_{\alpha} {\Kper}_{\rho\beta}
    - \nabla_{\mu} \nabla_{\beta} {\Kper}_{\rho\alpha} \right ) \nonumber \\
  & + 2 \Sper^{\rho}_{\phantom{\rho}\mu\alpha} \Sper_{\rho}{}^{\mu}{}_{\beta} 
    - 2 \Sper^{\rho}_{\phantom{\rho}\beta\alpha} \Sper_{\rho}{}^{\mu}{}_{\mu},
    \label{Ricperper}
\end{align}
where we have inserted (\ref{calSper})-(\ref{calSperper})
and in  the second  expression we have also used

\begin{align*}
2{\Kper}^{\mu\rho} \nabla_{\beta} \Sper_{\rho\mu\alpha} 
= &{\Kper}^{\mu\rho} 
\nabla_{\beta} \left ( \Sper_{\rho\mu\alpha} 
+ \Sper_{\mu\rho\alpha} \right )\\
  = &{\Kper}^{\mu\rho} \nabla_{\beta} \nabla_{\alpha} {\Kper}_{\mu\rho}
  = \frac{1}{2} \nabla_{\beta} \nabla_{\alpha} \left ( 
  {\Kper}^{\mu\rho} {\Kper}_{\mu\rho} \right )
  -\left ( \nabla_{\beta} {\Kper}^{\mu\rho} \right )
  \left ( \nabla_{\alpha} {\Kper}_{\mu\rho} \right ).
\end{align*}
Expression (\ref{Ricperper}) is advantageous over 
 alternative forms because it is manifestly symmetric in
 $\alpha, \beta$.

\subsection{Perfect fluid source}
Let us now assume that the matter content
of the perturbed scheme is a 
perfect fluid, that is,
the energy momentum tensor $ \widehat T_{\pertp}$ at each
$(\mmm_\pertp,\gfam_\pertp)$ has  the form 
\begin{equation}
  \widehat T_{\pertp} - \frac{1}{2} (\mbox{tr}_{\gfam_\pertp}  \widehat T_{\pertp}) \gfam_\pertp=
  ( \hat \energy_\pertp + \hat \pressure_\pertp) \bm{\hat u}_\pertp \otimes\bm{\hat u}_\pertp
  + \frac{1}{2}(\hat\energy_\pertp -\hat\pressure_\pertp) \gfam_{\pertp{}},
  \label{eq:hat_Tepsilon}
\end{equation}
where $\bm{\hat u}_\pertp$ is the ($\gfam_{\pertp{}}$-unit) one-form fluid flow,
and $\hat \energy_\pertp$ and $\hat \pressure_\pertp$
the mass-energy density and pressure. 
These expressions are pullbacked onto $(\mmm,g)$ as
\begin{align}
  \Tepsilon - \frac{1}{2} (\mbox{tr}_{\gepsilon} \Tepsilon) \gepsilon
  = ( \denepsilon + \preepsilon ) \udepsilon \otimes \udepsilon
  + \frac{1}{2} (\denepsilon - \preepsilon)  \gepsilon,
  \label{eq:Tepsilon}
\end{align}
where, in particular, $\udepsilon:=\psi^*_\pertp(\bm{\hat u}_\pertp)$.
The vectors (in contravariant form) $\hat u_\pertp$ are pushforwarded  through $\psi_\pertp^{-1}$
to a family of fluid vectors $u_\pertp:=d\psi_{\pertp}^{-1}(\hat u_\pertp)$.
It is immediate that $\udepsilon=\gepsilon(\uepsilon,\cdot)$,
$\udepsilon(\uepsilon)=-1$ hold.

The field equations of the perturbed scheme are 
$\widehat{\mbox{Ein}}_{\hat g_\pertp}{}_{\alpha\beta}=\chiefes \widehat T_\pertp{}_{\alpha\beta}$, 
and are pullbacked onto $(\mmm,g)$, and rearranged,  as
\begin{align}
\Ricepsilon = \chiefes (\Tepsilon
- \frac{1}{2} (\mbox{tr}_{\gepsilon} \Tepsilon) \gepsilon).
\label{EFEsepsilon}
\end{align}
Define
\begin{align*}
\denper := \left . \frac{d \denepsilon}{d \pertp}  \right |_{\pertp=0},
\quad
\preper := \left . \frac{d \preepsilon}{d \pertp}  \right |_{\pertp=0},
\quad
\uper :=
\left . \frac{d \uepsilon}{d \pertp}  \right |_{\pertp=0},
\quad \\
\denperper := \left . \frac{d^2 \denepsilon}{d \pertp^2}  \right |_{\pertp=0},
\quad
\preperper := \left . \frac{d^2 \preepsilon}{d \pertp^2}  \right |_{\pertp=0},
\quad
\uperper :=
\left . \frac{d^2 \uepsilon}{d \pertp^2}  \right |_{\pertp=0}.
\end{align*}
From $\udepsilon= \gepsilon (\uepsilon,\cdot)$ the
perturbations of the fluid velocity one-forms are
\begin{align}
\left . \frac{d \udepsilon}{d \pertp} \right |_{\pertp =0}
& = \Kper(u, \cdot) + \bm{\uper},  \label{oneformfir} \\
\left . \frac{d^2 \udepsilon}{d \pertp^2} \right |_{\pertp =0}
& = \Kperper(u, \cdot) + 2 \Kper(\uper, \cdot) + \bm{\uperper},
\label{oneformsec}
\end{align}
where $u := \uepsilon|_{\pertp=0}$ is the background fluid velocity vector.
The normalisation condition $\udepsilon (\uepsilon) =-1$ implies,
upon taking successive $\pertp$ derivatives at $\pertp=0$,
the two algebraic constraints
\begin{align}
& 2 \bm{u} (\uper)  + \Kper(u,u) =0, \label{algeb1} \\
& \Kperper(u, u) + 4 \Kper(\uper, u) 
+ 2 \bm{\uper}(\uper) + 2 \bm{u} (\uperper)
=0, \label{algeb2}
\end{align}
which determine the components of $\uper$ and $\uperper$ along $u$.
The perturbed Einstein field equations arise from the
$\pertp$ derivatives of (\ref{EFEsepsilon}) with (\ref{eq:Tepsilon}),
and yield
\begin{align}
R^{(1)}_{\alpha\beta} = &
\chiefes\left ( \denper + \preper \right ) u_{\alpha} u_{\beta}
+ 
\chiefes (\den + \pre) \left (  
\left ( {\Kper}_{\alpha\mu} u^{\mu} + \uper_{\alpha} \right )  u_{\beta}
+ \left ( {\Kper}_{\beta\mu} u^{\mu} + \uper_{\beta} \right )  u_{\alpha}
\right )  \nonumber \\
& + \frac{1}{2} \chiefes\left (  \denper- \preper \right ) g_{\alpha\beta} 
+ \frac{1}{2} \chiefes\left (\den- \pre  \right ) {\Kper}_{\alpha\beta},
\label{firstRic}
\end{align}
after using (\ref{oneformfir}) and (\ref{algeb1}).
The second order equations are similarly obtained from the second  derivative
of (\ref{EFEsepsilon}) and
using
(\ref{oneformfir})-(\ref{algeb2}),
\begin{align}
R&^{(2)}_{\alpha\beta} = 
\chiefes \left ( \denperper + \preperper \right ) u_{\alpha} u_{\beta}
+ 2 \chiefes(\denper + \preper) \left (  
\left ( {\Kper}_{\alpha\mu} u^{\mu} + \uper_{\alpha} \right )  u_{\beta}
+ \left ( {\Kper}_{\beta\mu} u^{\mu} + \uper_{\beta} \right )  u_{\alpha}
\right )  \nonumber \\
& + \chiefes (\den + \pre ) \left (
 \left ( {\Kperper}_{\alpha\mu} u^{\mu} + 2 {\Kper}_{\alpha\mu} \uper{}^{\mu}
+ \uperper_{\alpha} \right ) u_{\beta} + 
\left ( {\Kperper}_{\beta\mu} u^{\mu} + 2 {\Kper}_{\beta\mu} \uper{}^{\mu}
+ \uperper_{\beta} \right ) u_{\alpha}  \right .   \nonumber \\
& \hspace{2cm}  \left . 
+ 2 \left ( \uper_{\alpha} + {\Kper}_{\alpha\mu} u^{\mu} \right )
\left ( \uper_{\beta} + {\Kper}_{\beta\mu} u^{\mu} \right ) \right ) \nonumber \\
& + \frac{1}{2} \chiefes\left ( \denperper - \preperper \right ) g_{\alpha\beta} 
+\chiefes \left ( \denper - \preper \right ) {\Kper}_{\alpha\beta} 
+ \frac{1}{2} \chiefes\left ( \den - \pre \right ) {\Kperper}_{\alpha\beta}.
\label{SecOrderfield}
\end{align}
Let us now assume that the spacetime $(\mmm,g)$ admits
a hypersurface orthogonal timelike Killing vector $\stat$ and an axial
Killing vector $\axial$. We assume further
that the perturbation scheme inherits the local
symmetry generated by $\stat$ and $\axial$
and that for each $\pertp$,
the spacetime
$(\mmm_\pertp,\gfam_\pertp)$ is a solution of the
Einstein's field equations for a rigidly rotating perfect fluid, i.e.
that there exists a constant (on each $\pertp$)
$\kapepsilon$ and a positive
function $\hat N_\pertp\in C^{n+1}(\mmm_\pertp)$ such that
\begin{equation}
\hat u_\pertp=\hat N_\pertp(\hat\stat_\pertp + \kapepsilon \hat\axial_\pertp),
\label{eq:hat_rigid_rotation_epsilon}
\end{equation}
where $\hat\stat_\pertp\defi d\psi_\pertp(\stat)$ and
$\hat\axial_\pertp\defi d\psi_\pertp(\axial)$.
The pullback of the equations and this relation is equivalent to assume that
the spacetime
$(\mmm,\gepsilon)$ is a solution of the
Einstein's field equations \eqref{EFEsepsilon} with \eqref{eq:Tepsilon} and
\begin{align*}
\uepsilon = \Nepsilon \left ( \xi + \kapepsilon \eta \right )
\end{align*}
for some positive function $\Nepsilon \in C^{n+1}(\mmm)$.

Staticity of the background imposes that $u$ is parallel to $\xi$ and then
\eqref{metric} implies
\begin{align}
u = e^{-\frac{\nu}{2}}  \xi, \quad \quad \quad
\Longleftrightarrow \quad \quad \quad
\Nepsilon |_{\pertp=0} = e^{-\frac{\nu}{2}}, \quad \quad 
\kapepsilon|_{\pertp=0} =0.
\label{back}
\end{align}
In terms of the following quantities
\begin{align*}
  \Omegaper := \left . \frac{d \kapepsilon}{d \pertp} \right |_{\pertp=0},
\quad \quad
\Omegaperper := \left . \frac{d^2 \kapepsilon}{d \pertp^2} \right |_{\pertp=0},
\quad \quad
\uper^0 := \left  .\frac{d \Nepsilon}{d \pertp} \right |_{\pertp=0},
\quad \quad
\uperper^0 := \left . \frac{d^2 \Nepsilon}{d \pertp^2} \right |_{\pertp=0},
\end{align*}
the first and second order pertubation fluid velocity vectors are, 
using (\ref{back}),
\begin{align*}
\uper & = \uper^0 \xi + e^{-\frac{\nu}{2}}  \Omegaper \eta, \\
\uperper & = \uperper^0 \xi + \left (
2 \uper^0 \Omegaper + e^{-\frac{\nu}{2}}  \Omegaperper \right ) \eta.
\end{align*}
As already mentioned, the components $\uper^0$ and $\uperper^0$ are determined by
the algebraic constraints (\ref{algeb1})-(\ref{algeb2}) and 
$\uper$ and $\uperper$ are written in terms of
the metric perturbation
tensors and the constants $\Omegaper$ and $\Omegaperper$ as
\begin{align}
\uper  = & \frac{1}{2} e^{-\frac{3 \nu}{2}} \Kper(\xi,\xi) \xi 
+  e^{-\frac{\nu}{2}} \Omegaper \eta, \label{uper} \\
\uperper = &  e^{-\frac{3\nu}{2}} 
\left ( \frac{1}{2} \Kperper(\xi,\xi) + \frac{3}{4}
e^{-\nu} \Kper(\xi,\xi)^2
+ 2 \Omegaper  \Kper(\xi,\eta)
+ \Omegaper{}^2 \la \eta,\eta \ra \right ) \xi \nonumber \\
& + e^{- \frac{\nu}{2}}
\left ( e^{-\nu} \Kper(\xi,\xi) \Omegaper + \Omegaperper \right ) \eta.
\label{uperper}
\end{align}

We now exploit a well-known relation between orthogonal
transitivity of the Abelian group action and rigid rotation of the
self-gravitating fluid, forced upon by the Einstein field equations
(see e.g. \cite[Chapter 19.2]{Exact_solutions}). In
our present set up, the specific result we need is as follows.
\begin{proposition}[\textbf{Rigid rotation and orthogonal transitivity}]
  \label{prop:rigidrot_OT}
  Let $(\mmm,g)$ be a spacetime with $C^{n+1}$ ($n\geq 3$) metric that admits an Abelian
  $G_2$ group of isometries 
  generated by $\{\stat,\axial\}$. Assume also that $\mmm$ is simply connected
  and that $\axial$ is an axial symmetry with a non-empty
  set of fixed points.
  Let $(\mmm_\pertp, \gfam_\pertp,\{\psi_\pertp\})$ be 
  a $C^{n+1}$ maximal  perturbation scheme 
  inheriting this group (c.f.  Definition 2.1 in \cite{paper1}).
  If the matter content of the perturbation scheme is that of a
  rigidly rotating perfect fluid (or vacuum),
  then the background is orthogonally transitive
  and the perturbation scheme inherits 
  this property.
\end{proposition}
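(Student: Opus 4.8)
The plan is to establish the conclusion at the fully nonlinear level for \emph{every} member $(\mmm_\pertp,\gfam_\pertp)$ of the family and then read off both assertions: the background is the case $\pertp=0$, and validity for all $\pertp$ is exactly what it means, in the sense of \cite{paper1}, for the scheme to inherit orthogonal transitivity (it is also what will let us invoke Theorem \ref{theo:paper1}). So first I would fix $\pertp$ and work on $(\mmm_\pertp,\gfam_\pertp)$ with the inherited Abelian $G_2$ generated by $\hat\stat_\pertp=d\psi_\pertp(\stat)$ and $\hat\axial_\pertp=d\psi_\pertp(\axial)$, observing that $\mmm_\pertp$ is diffeomorphic to $\mmm$ (hence connected and simply connected) and that $\hat\axial_\pertp$ still has a non-empty set of fixed points, namely the $\psi_\pertp$-image of the axis of $\axial$. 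By the characterisation recalled just before Theorem \ref{theo:paper1}, orthogonal transitivity of this action with respect to $\gfam_\pertp$ is equivalent to the identical vanishing of the two scalars $\star(\bm{\hat\stat}_\pertp\wedge\bm{\hat\axial}_\pertp\wedge d\bm{\hat\stat}_\pertp)$ and $\star(\bm{\hat\stat}_\pertp\wedge\bm{\hat\axial}_\pertp\wedge d\bm{\hat\axial}_\pertp)$, so that is the target.

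The engine is the classical circularity argument (Kundt--Tr\"umper; see \cite[Chapter 19.2]{Exact_solutions}): for an Abelian $G_2$, these two scalars are necessarily constant on a connected manifold whenever the one-forms $\mathbf R(\hat\stat_\pertp)$ and $\mathbf R(\hat\axial_\pertp)$ --- where $\mathbf R(X)$ has components $R_{\mu\nu}X^\nu$ built from the Ricci tensor of $\gfam_\pertp$ --- lie, at every point, in the plane spanned by $\bm{\hat\stat}_\pertp$ and $\bm{\hat\axial}_\pertp$. This last condition is where the matter hypothesis enters: in the vacuum region it is trivial because the Ricci tensor vanishes, while in the fluid region \eqref{eq:hat_Tepsilon} and the Einstein equations make the Ricci tensor of $\gfam_\pertp$ a linear combination of $\bm{\hat u}_\pertp\otimes\bm{\hat u}_\pertp$ and $\gfam_\pertp$, and the rigid-rotation relation \eqref{eq:hat_rigid_rotation_epsilon} forces $\hat u_\pertp$ --- hence $\bm{\hat u}_\pertp$ --- to be a combination of $\hat\stat_\pertp$ and $\hat\axial_\pertp$; therefore $\mathbf R(\hat\stat_\pertp)$ and $\mathbf R(\hat\axial_\pertp)$ do lie in that plane. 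Consequently the two scalars are constant on $\mmm_\pertp$; evaluating them on the fixed-point set of $\hat\axial_\pertp$, where $\bm{\hat\axial}_\pertp=0$, shows both constants are zero, so the scalars vanish identically and the $G_2$ action on $(\mmm_\pertp,\gfam_\pertp)$ is orthogonally transitive (simple connectedness of $\mmm_\pertp$ being used to promote the resulting involutive orthogonal $2$-plane distribution to a genuine global foliation). Taking $\pertp=0$ gives the background claim, and the statement for all $\pertp$ gives the inheritance claim.

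I expect the genuinely delicate points to be, first, setting up and using the circularity identity cleanly enough that it applies uniformly in $\pertp$ --- in particular, checking that the twist scalars are $C^1$ up to and across the axis of $\hat\axial_\pertp$, which follows from $\gfam_\pertp$ being $C^{n+1}$ with $n\ge 3$ --- and second, making sure that ``inheriting the $G_2$'' in the sense of \cite{paper1} really does make each $\gfam_\pertp$ invariant under the full two-parameter group, so that the nonlinear argument is legitimate at every $\pertp$ (this is what allows the whole proof to stay at the nonlinear level rather than expanding order by order). Granting those, the rigid-rotation computation and the fixed-point argument are routine. A more laborious alternative would be to verify the vanishing of the perturbed twist scalars order by order in $\pertp$ directly from the perturbed field equations \eqref{firstRic}--\eqref{SecOrderfield}, but the nonlinear route above is cleaner.
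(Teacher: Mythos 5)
Your proposal is correct and follows essentially the same route as the paper: for each fixed $\pertp$ the rigid-rotation form of the source forces $\widehat{\mbox{Ric}}(\hat\stat_\pertp,\cdot)$ and $\widehat{\mbox{Ric}}(\hat\axial_\pertp,\cdot)$ into $\mbox{span}\{\hat{\bm\stat}_\pertp,\hat{\bm\axial}_\pertp\}$, the Kundt--Tr\"umper identities then make the two twist scalars closed hence constant, and the axis forces both constants to vanish, with $\pertp=0$ giving the background claim. The only cosmetic difference is that constancy of the twist scalars needs only connectedness (closed functions are locally constant), a point you state more precisely than the paper does.
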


\begin{proof}
By assumption, for each $\pertp$ the spacetime
$(\mmm_\pertp,\gfam_\pertp)$ is a solution of the
Einstein's field equations with \eqref{eq:hat_Tepsilon} and
\eqref{eq:hat_rigid_rotation_epsilon}. Therefore it is straigforward to check that
the one-forms
$\widehat\Ricepsilon(\hat\stat_\pertp,\cdot)$ and
$\widehat\Ricepsilon(\hat\axial_\pertp,\cdot)$ lie in
$span\{\hat{\bm\stat}_\pertp,\hat{\bm\axial}_\pertp\}$,
where  $\hat{\bm{\axial}}_\pertp=\gfam_\pertp(\hat\axial_\pertp,\cdot)$
and  $\hat{\bm{\stat}}_\pertp=\gfam_\pertp(\hat\stat_\pertp,\cdot)$.
Standard curvature identities 
(see e.g. \cite[Chapter 19.2]{Exact_solutions}) imply that this fact,
in addition to the commutation of $\stat$ and $\axial$, implies that
the functions $O_1{}_\pertp\defi\star (\hat{\bm{\axial}}_\pertp\wedge \hat{\bm{\stat}}_\pertp\wedge d \hat{\bm{\stat}}_\pertp)$ and
$O_2{}_\pertp\defi\star (\hat{\bm{\axial}}_\pertp\wedge \hat{\bm{\stat}}_\pertp\wedge d \hat{\bm{\axial}}_\pertp)$ satisfy \cite{Kundt-Truper}
\[
d O_1{}_\pertp=0,\qquad
d O_2{}_\pertp=0.
\]
Simply connectedness implies that $O_1{}_\pertp$ and $O_2{}_\pertp$ are constant (for each $\pertp$),
while the existence of points where $\axial$ vanishes (the axis) implies they must, in fact,
be zero. Therefore the group generated by $\{\hat\stat_\pertp,\hat\axial_\pertp\}$
on $(\mmm_\pertp, \gfam_\pertp)$ for each $\pertp$ is orthogonal transitive. The result follows, in particular for $\pertp=0$.\fin
\end{proof}

Observe that the fluid quantities are gauge dependent in general.
  In particular, the perturbed pressure at first and second order
  transforms as  (see e.g. \cite{Bruni_et_al_1997})
  \begin{equation}
    \label{eq:gauge_pressure}
          \Pp{}^g=\Pp+\sper(\Pb),\qquad
      \Ppp{}^g=\Ppp+\sperper(\Pb)+ \sper(\Pp + \Pp{}^{g} ).
        \end{equation}
  Eventually, uniqueness of the solutions will rely on one free integration constant associated to the value of the perturbed pressure at the origin. In order
  to assign a clear physical meaning to this parameter it is necesary to show that the perturbed central pressure is a gauge invariant quantity.  
    In the following result
  we prove this fact, to second order, within the  
  perturbation scheme of Theorem \ref{theo:paper1} (which is
    differentiable everywhere in $\mmm$) provided the configuration is 
    equatorially symmetric. Later on we will show that this symmetry
    is a necessary consequence of the field equations.
    \begin{lemma}
  \label{lemma:central_P}
  Within the $C^{n+1}$ 
    ($n\geq 3$) perturbation scheme introduced
      in Theorem \ref{theo:paper1} and with the definitions above,  
  (i) the value $\Pp(0)=\Pp|_{r=0}$ is gauge  invariant and
  (ii) if  the configuration has equatorial symmetry
  then $\Ppp(0)=\Ppp|_{r=0}$ is also gauge invariant.
  \end{lemma}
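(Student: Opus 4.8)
The plan is to work directly from the gauge transformation laws \eqref{eq:gauge_pressure}, evaluated on the centre of symmetry $\centresph_0$, and to show that every term carrying a gauge vector vanishes there. Two structural inputs are used throughout. First, the background pressure $\Pb$ is a radially symmetric function, so by Lemma \ref{res:radial_functions_origin} it is smooth up to $r=0$ with $\Pb'(0)=0$; hence $\Pb=\Pbc+O(r^2)$ near $\centresph_0$ and its Cartesian gradient vanishes at $\centresph_0$. Second, within the scheme of Theorem \ref{theo:paper1} the admissible gauge vectors $\sper$ and $\sperper$ are bounded near $\centresph_0$, the part along $\axial$ vanishes there (since $\axial$ does), and the radial component vanishes at $\centresph_0$ like $O(r)$ — this last fact being forced by the boundedness requirements that cut out the class (e.g. the $\Upsilon\Q'/\Q$ term in \eqref{kg} requires $\Upsilon=O(r)$, and similarly for the residual freedom \eqref{gauge_first}--\eqref{gauge_second}) — so the spatial part of $\sper$ and $\sperper$ vanishes at $\centresph_0$ and only a multiple of $\stat=\partial_t$ survives.

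For (i): by the first relation in \eqref{eq:gauge_pressure}, $\Pp^{g}-\Pp=\sper(\Pb)$, and since $\Pb$ depends on $r$ alone only the radial component of $\sper$ contributes, $\sper(\Pb)=\sper^{\,r}\Pb'$, which is $O(r^2)$ because $\sper^{\,r}=O(r)$ and $\Pb'=O(r)$. Hence $\sper(\Pb)$ vanishes at $\centresph_0$ and $\Pp^{g}(0)=\Pp(0)$; note also that then $\Pp^{g}$ and $\Pp$ agree up to order $r^2$ at $\centresph_0$, which is used below.

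For (ii): inserting $\Pp+\Pp^{g}=2\Pp+\sper(\Pb)$ into the second relation in \eqref{eq:gauge_pressure},
\[
\Ppp^{g}-\Ppp=\sperper(\Pb)+2\,\sper(\Pp)+\sper\!\bigl(\sper(\Pb)\bigr).
\]
The first term vanishes at $\centresph_0$ exactly as in (i). The third term is $\sper$ acting on $\sper(\Pb)=O(r^2)$, which is again $O(r)$ near $\centresph_0$ and hence vanishes there. For $\sper(\Pp)$, stationarity and axial symmetry of the scheme make $\Pp$ invariant under $\stat$ and $\axial$, so the $\partial_t$- and $\axial$-parts of $\sper$ annihilate it and only $\sper^{\,r}\partial_r\Pp+\sper^{\,\theta}\partial_\theta\Pp$ remains. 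This is the step that uses equatorial symmetry: it forbids the $\ell=1$ ($\propto\cos\theta$) component of $\Pp$, so, together with axial symmetry and the regularity of $\Pp$, one gets $\Pp=\Pp(0)+O(r^2)$ near $\centresph_0$, whence $\partial_r\Pp=O(r)$ and $\partial_\theta\Pp=O(r^2)$; combined with $\sper^{\,r}=O(r)$ and $\sper^{\,\theta}$ bounded this gives $\sper(\Pp)=O(r^2)$, which vanishes at $\centresph_0$. Therefore $\Ppp^{g}(0)=\Ppp(0)$.

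The main obstacle is the analysis at the centre $\centresph_0$: there the spherical coordinates degenerate and, in the canonical form of Theorem \ref{theo:paper1}, several of the metric-perturbation functions — and hence, a priori, the fluid perturbations and the components of the gauge vectors — are only guaranteed bounded rather than differentiable. The delicate part is therefore establishing the precise $O(r^{k})$ behaviour near $\centresph_0$ of $\Pb$, of the first-order perturbed pressure $\Pp$ (where one must combine stationarity, axial symmetry and equatorial symmetry to remove the $\ell=1$ mode), and of the components of $\sper$, $\sperper$; once this is in hand the two statements follow by bookkeeping in \eqref{eq:gauge_pressure}. The reason equatorial symmetry enters in (ii) but not in (i) is exactly that the quantity transported in (i) is the purely radial $\Pb$, so no angular subtlety arises, whereas in (ii) the transformation involves $\sper(\Pp)$ with $\Pp$ genuinely dependent on $\theta$.
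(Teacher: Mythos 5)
Part (i) of your argument is essentially correct, though the intermediate claim $\sper^{\,r}=O(r)$ is both unnecessary and unjustified there: since $\Pb$ is radially symmetric and $C^1$ up to the centre, Lemma \ref{res:radial_functions_origin} gives $d\Pb=0$ at $\centresph_0$, and mere continuity of $\sper$ already yields $\sper(\Pb)\to 0$. The genuine gap is in part (ii), and it is exactly that unjustified claim. Lemma \ref{lemma:central_P} concerns the gauge vectors of the $C^{n+1}$ perturbation scheme of Theorem \ref{theo:paper1}, which are arbitrary $C^{n+1}(\mmm)$ and $C^{n}(\mmm)$ vector fields; for these the spatial part at $\centresph_0$ need not vanish (think of $\partial_x$ or $\partial_z$, whose radial component $\sper(|x|)$ is bounded but not $O(r)$). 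The justification you offer — that the $\Y\,\Q'/\Q$ term in \eqref{kg} forces $\Y=O(r)$ — pertains to the \emph{restricted} class of Proposition \ref{prop:full_class_gauges} that preserves the canonical form; the paper deals with that class, and with the canonical gauge transformation, separately in Lemma \ref{res:Ppp_gauge_inv}. Without $\sper^{\,r}=O(r)$ your chain breaks at the term $\sper(\sper(\Pb))$: writing $\Pb=\Pbc+p_2|x|^2+o(|x|^2)$ in Cartesian coordinates one finds
\begin{equation*}
\sper\bigl(\sper(\Pb)\bigr)\big|_{\centresph_0}=2p_2\sum_{i=x,y,z}\bigl(\sper^i(0)\bigr)^2,
\end{equation*}
which is nonzero whenever the spatial part of $\sper$ at the centre is nonzero (and $p_2\neq 0$, which holds generically by the TOV equation). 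Your proof supplies no reason for this to vanish. (A secondary slip: your bound on $\sper(\Pp)$ uses ``$\sper^{\theta}$ bounded'', but the $\partial_\theta$-component of a regular vector field behaves like $1/r$ near the centre; that step is nevertheless salvageable, since your symmetry argument does show $d\Pp=0$ at $\centresph_0$ and continuity of $\sper$ then suffices.)

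The paper avoids the problem by never splitting $\Pp+\Pp^{g}$: it treats $f:=\Pp+\Pp^{g}$ as a single function which, by hypothesis on both the original and the gauged configurations, is stationary, axially symmetric and equatorially symmetric. Then $\partial_z f=0$ on $\{z=0\}$ by evenness, the restriction of $f$ to $\{z=0\}$ is radially symmetric in $(x,y)$ so its differential vanishes at the origin by Lemma \ref{res:radial_functions_origin}, and hence $\sper(f)\to 0$ at $\centresph_0$ using only continuity of $\sper$. If you wish to keep your decomposition, you must first extract from the assumed axial and equatorial symmetry of $\Pp^{g}-\Pp=\sper(\Pb)=2p_2\bigl(x\,\sper^x+y\,\sper^y+z\,\sper^z\bigr)+o(r)$ that the linear term vanishes, i.e.\ $p_2\,\sper^i(0)=0$; only then do $\sper(\Pb)=o(r)$ and $\sper(\sper(\Pb))\to 0$ follow. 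As written, the proof of (ii) is incomplete.
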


\def\equator{\mathcal{E}}

\begin{proof}
  The first order $\sper$ and second order $\sperper$
  gauge vectors within the
  perturbation scheme are $C^{n+1}(\mmm)$ and $C^{n}(\mmm)$ respectively
  (see \cite{paper1}).
  On the other hand, $\Pb$ is  $C^{n-1}(\mmm)$.
  Note $\Pp$ and $\Ppp$ are $C^{n-2}(\mmm)$ and $C^{n-3}(\mmm)$ respectively
  and therefore both functions are continuous, in particular, at the origin.
  First, Lemma \ref{res:radial_functions_origin} implies
  $d\Pb$ vanishes at $r=0$.
  Therefore, the term $\sper(\Pb)\equiv d\Pb(\sper)$ vanishes at $r=0$, and
    the claim (i) follows from \eqref{eq:gauge_pressure}.

  The same argument applies to $\sperper(P)$, so in order to show
    gauge invariance of $\Ppp$  \eqref{eq:gauge_pressure} it suffices to prove that
    $\sper(\Pp + \Pp{}^{g} )=0$ at the origin.
        Under the stationary and axisymmetric perturbation scheme
    of Theorem    \ref{theo:paper1}, the functions
    $\Pp$ and $\Pp{}^{g}$ are time independent and axially symmetric. Moreover, by assumption they are equatorially symmetric, i.e. invariant under $z \rightarrow
    -z$. Let $f\in C^{1}(\mmm)$ be any function with these properties
    and consider the equatorially invariant hypersurface
    $\equator := \{ z=0\}$. The restriction $f|_{\equator}$ is independent of $t$ and radially symmetric in $\{x,y\}$. Hence Lemma \ref{res:radial_functions_origin}
    implies that $d(f|_{\equator})$ vanishes at the origin. We can decompose
    uniquely    $\sper=\sper^z+\sper^{\equator}$, with 
  $\sper^z$ along $\partial_z$ and $\sper^\equator$
  tangent to $\equator$.  By equatorial symmetry
  $\sper^z(f) |_{\equator}=0$. Therefore
  $\sper(f)|_{\equator}=\sper^\equator(f)|_{\equator}=\sper^\equator(f|_{\equator})$,
  and thus 
  $\sper(f)|_{\equator}=d (f|_{\equator}) (\sper^\equator)$ vanishes at the origin.
  Applying this fact to $f= \Pp + \Pp{}^{g}$, the claim (ii) follows.\fin
   \end{proof}

So far no equation of state for the perfect fluid has been imposed. 
We shall say that the perturbation scheme satisfies a barotropic
equation of state if there exists a $C^{2}$ function of one variable
$P (E)$ such that, for each value of $\pertp$, the 
pressure and density of the fluid are related by
$\Pb_{\pertp} = P(\Eb_{\pertp})$. Note that we assume that
the equations of state do not depend on $\pertp$, and thus
the barotropic EOS is that of the background. Taking
$\pertp$-derivatives at $\varepsilon = 0$, the perturbed pressures
are written in terms of the perturbed densitities as
\begin{align}
&\Pp - \frac{d P }{d E} \Ep = 0,\nonumber\\
&\Ppp - \frac{d P }{d E} \Epp - \frac{d^2 P}{d E^2}{\Ep}^2  = 0\label{eq:2ndbarotropic}.
\end{align}
where  the derivatives $\frac{dP}{dE}$ etc. are evaluated at the background
density.

\section{``Base'' global perturbation scheme} 
\label{sec:Base_model}

In order to tackle the first and second order problems we have to deal with the first and second order perturbation tensors  as given in (2.11) and (2.12). It is obvious that the problem involves two steps, namely addressing the first order problem first and dealing with the second order one afterwards. However, as mentioned in the Introduction, there is a strategy that allows one to treat both cases at the same time, and this entails a considerable simplification of the proof.

The underlying idea behind our strategy is that a second order perturbation 
problem under the assumption that the first order perturbation tensor vanishes identically is
completely equivalent to a first order problem. This fact is both physically
and geometrically clear, and can be checked explicitly: all
the equations, matching conditions, etc. are identical for
the first order perturbation tensor and for the second order
perturbation tensor after setting the first order tensor to zero
(the only difference  is in the differentiability class, which to second
order is one order lower than to first order, but this poses no problem
as the differentiability assumed at second order suffices
for the argument).

The idea is then to use a bootstrap type of argument. We assume a specific
form for the first order perturbation tensor which includes zero as a 
particular case, and leave the second order tensor completely free
(within our perturbation scheme, naturally). We then analyze the second 
order problem in full detail. After this has been done, we set
the first order perturbation tensor to zero and all the conclusions that 
we find apply immediately to the first order problem. This will allow
us to show that our assumption on the first order perturbation tensor
is in fact a consequence of the first order problem. This will close
the bootstrap and hence all the results obtained under the scheme will be fully
general.
We call the restricted second order problem
the ``base'' perturbation scheme and is built as follows:

\vspace{3mm}

\noindent
\emph{The base perturbation scheme:}
The global manifold consists of an interior region ($M^+$) and an exterior
region ($M^-$) separated by a hypersurface $\Supfamp$. 
We construct a  second order perturbation 
for each region
around a background configuration (we drop $\pm$ indices) satisfying
items {\baseback}-{\baseKperper} below.
We then construct the global model
 by solving the most general perturbed matching problem (item {\basematch}).
We finally consider a \emph{barotropic base scheme}
(and will explicitly specify \emph{barotropic}) when
the barotropic EOS of the background is assumed in the interior,
i.e. items {\baseback}-{\basebaro} below hold.

\begin{itemize}
\item[\baseback:] \emph{The background corresponds to a
    finite perfect fluid ball configuration
    according to Definition \ref{def:background}
    with $\Ebc+\Pbc\neq 0$}.
  
  \emph{Remark:} In particular,
  the metric, which we take to be $C^{n+1}$ on each region with $n\geq 4$, 
  is given by (\ref{metric}) with $\Q(r)=r$, and
  $\stat := \partial_t$, $\axial:= \partial_{\phi}$. Define
  $\Sig_{\alpha\beta}\defi \stat_\alpha \axial_\beta+\stat_\beta\axial_\alpha $,
  so that
  $\Sig= -2e^{\nu} r^2\sin^2\theta dtd\phi$
  in those coordinates.
\item[\baseKper:] \emph{The first order metric perturbation tensors $\Kper^\pm$ are bounded and}
  \begin{itemize}
  \item[\baseKper.1:] \emph{read (dropping $\pm$ indices)
      \begin{equation}
        \Kper =  \opertbase e^{-\nu}\Sig, 
        \label{Kper_cov}
      \end{equation}
      for given functions
      $\opertbase_+\in C^{n+2}(\mmm^+\setminus \centresph_0)\cap C^{2}(\mmm ^+)$ and
      $\opertbase_-\in C^{\infty}(\mmm^-)$,
      both radially symmetric and bounded.} 
    
    \emph{Remark:} In spherical coordinates, this assumption translates onto
    \begin{equation}
      \Kper = -2 \opertbase(r) r^2  \sin^2 \theta dt d \phi,
      \label{Kper}
    \end{equation}
    where $\opertbase_-(r)$ is $C^{\infty}([a,\infty))$,
    and, by virtue of  Lemma \ref{origin},
    $\opertbase_+(r)$ is $C^{2}([0,a]) \cap C^{n+2}((0,a])$, and admits
    the decomposition
    \begin{equation}
      \opertbase_+(r)=\opertbase_0+\opertbase_2r^2+\Phi_\opertbase^{(2)}(r),
      \label{omega_origin}
    \end{equation}
    where $\opertbase_0,\opertbase_2\in\mathbb{R}$ and $\Phi_\opertbase^{(2)}(r)\in C^2([0,\ro])$
    and $o(r^2)$.
    
  \item[\baseKper.2:] \emph{The functions $\opertbase_\pm(r)$
      satisfy the equation (we drop the $\pm$ signs)
      \begin{align}
        & \frac{1}{r^3}\frac{d}{dr}\left( r^4 j \frac{d\opertbase}{dr}\right) + 4 j' (\opertbase - \Omegaperbase) = 0 \quad  \Longleftrightarrow \nonumber \\
        &
          \opertbase'' = ( \lambda' + \nu') \left ( \frac{1}{2} \opertbase'
          + \frac{2}{r} (\opertbase - \Omegaperbase) \right ) - \frac{4}{r} \opertbase'
          \label{eqfo}
      \end{align}
      where $\Omegaperbaseint\in\mathbb{R}$ in $M^+$ and $\Omegaperbaseext=0$ in $M^-$.}

    \emph{Remark:} In $\mmm^-$ we have $j'=0$ and thus the value of $\Omegaperbaseext$
    is irrelevant, so we fix it to zero for definiteness. The general solution on $\mmm^-$ is given by
    \begin{equation*}
      \opertbase_-(r)=\frac{2J_\opertbase}{r^3} +\opertbase_\infty,
      \quad \mbox{ with } \quad  J_\opertbase ,\opertbase_\infty
      \in \mathbb{R}.
          \end{equation*}
    The vector fields $r^{-1} \eta$ and $\xi$ are
        smooth and bounded in $\mmm^-$. Thus, boundedness of $\Kper$  demands that
      $\Kper(\stat, r^{-1}\axial)=-\opertbase(r)e^{-\nu}r \sin^2\theta$ is  also bounded. This condition clearly requires $\opertbase_\infty=0$.
      and the function $\opertbase$ is
      \begin{equation}
\opertbase_-(r)=\frac{2J_\opertbase}{r^3}, \qquad J_{\opertbase} \in \mathbb{R}.        \label{eq:opert_ext_sol}
\end{equation}
       
    By Proposition \ref{prop:full_class_gauges}, a first order gauge
    transformation with $\sper^- = C^- t \partial_{\phi}$ changes
    $\opertbase_-\to \opertbase_- -C^-$. Only $C^-=0$ respects the condition  $\opertbase_\infty=0$ and we conclude that boundedness
      of $\Kper$ fixes the first order gauge freedom in the exterior completely.
  
    In the interior region $\mmm^+$, equation \eqref{eqfo} combined with
      \eqref{background_expansion_origin} determines
    \begin{equation}
      \opertbase_2=\frac{2}{5}(\opertbase_0-\Omegaperbaseint)(\lambda_2+\nu_2).
      \label{eq:sol_for_w2}
    \end{equation}
    Clearly, if $\opertbase_+=0$ then  \eqref{eqfo} implies $\Omegaperbaseint=0$.
    
    \end{itemize}
  We introduce an auxiliary number\footnote{This parameter is not to
be confused with the function $m$. The context will clarify the intended meaning.} $m$, restricted to $n\geq m \geq 2$,
    that prescribes the differentiability of $\Kperper$, and assume the outcome of Theorem \ref{theo:paper1}.

\item[\baseKperper:] \emph{The second order metric perturbation tensors $\Kperper^\pm$ satisfy:}
  \begin{itemize}
  \item[\baseKperper.1:]  $\Kperper^\pm$ \emph{are $C^{m}$  (in $\mmm^+\setminus\centresph_0$ and $\mmm^-$, respectively) and bounded everywhere,
      and in spherical coordinates (dropping $\pm$ indices) have the form
      \begin{align}
        \Kperper = &\left ( -4 e^{\nu(r)} h(r,\theta) +
                     2 \opertbase^2(r) r^2 \sin^2 \theta  \right ) dt^2
                     + 4 e^{\lam(r)} m(r,\theta) dr^2 \nonumber \\
                   &+ 4  k(r,\theta) r^2 \left ( d \theta^2 + \sin^2 \theta
                     d \phi^2 \right )
                     + 4  e^{\lam(r)} \partial_{\theta} f (r,\theta) r dr d \theta\nonumber\\
                   &-2 \Wtwo (r,\theta) r^2 \sin^2\theta  dt d\phi\label{Kperper}\\
        =:&\spt^H+ \Wtwo e^{-\nu}\Sig,\label{KperperH}
      \end{align}
      where the functions in \eqref{Kperper} correspond to the traces (in $\{x,y\}$) of 
      the axially symmetric functions with same name, which satisfy}
    \textit{
      \begin{itemize}
      \item $h_+,m_+,k_+\in C^{m+1}(\mmm^+\setminus\centresph_0)$ and bounded near $\centresph_0$,\quad $h_-,m_-,k_-\in C^{m+1}(\mmm^-)$
      \item $\Wtwo_+ \in C^{m}(\mmm^+\setminus\centresph_0)$ and bounded near $\centresph_0$,
        $\Wtwo_-\in  C^{m}(\mmm^-)$,
      \item the vectors $\Wtwo \axial_\pm $ are $C^{m+1}(\mmm^+\setminus\centresph_0)$ and $C^{m+1}(\mmm^-)$ respectively.
      \item $f_+\in  C^{m}(\mmm^+\setminus\centresph_0)$ and bounded near $\centresph_0$,
        $f_-\in  C^{m}(\mmm^-)$, both $f_\pm$ are $C^{m+1}(S_r)$ on all spheres $S_r$,
        all $\partial_ rf_\pm$ and $\partial_t f_\pm$ are $C^{m}(S_r)$ on all spheres $S_r$,
        and finally $\partial_\theta f_\pm$ are $C^{m}$ outside the axis $\axis$ and
        extend continuously to  $\axis\setminus\centresph_0$ where they vanish.
      \end{itemize}
    }
  \item[\baseKperper.2:] \emph{$\Kperper^-$ solves the second order field equations
      \eqref{SecOrderfield} for vacuum and
      $\Kperper^+$ solves the second order field equations
      \eqref{SecOrderfield}
      for a 
      rigidly rotating perfect fluid
      \eqref{uperper} with $\Omegaper=\Omegaperbaseint$ and $\Omegaperper=\Omegaperperbaseint \in \mathbb{R}$}.
  \end{itemize}
\item[\basematch:] \emph{The first and second order perturbed matching conditions (c.f. Appendix \ref{app:staxi_perturbed_matching}) 
    hold on $\Supfamp$.
  Moreover we assume $[\opertbase]=0$.}

\emph{Remark:} The first order perturbed matching conditions demand $[\opertbase]=b_1 \in\mathbb{R}$ (see Proposition \ref{teo:first_order_matching}).
  By Proposition \ref{prop:full_class_gauges} a first order gauge
  transformation with $\sper^+ = C^+ t \partial_{\phi}$ changes
  $\opertbase_+\to \opertbase_+ -C^+$. Given that $C^-$ has already been fixed to zero, the condition $[\opertbase]=0$ is fulfilled if and only if
  $C^+$ is chosen to be $C^+=b_1$. We conclude that (i) the assumption $[\opertbase]=0$ entails no loss of generality, and (ii) that this condition together with boundedness at infinity fixes completely the gauge freedom at first order, c.f. Proposition \ref{prop:full_class_gauges}.

\item[\basebaro:] \emph{The perfect fluid satisfies the background
      barotropic equation of state.}
\end{itemize}

Some additional remarks are in order.
We first stress that we \emph{ do not assume equatorial symmetry}.
The assumption of boundedness on $\Kperper$
will not play any role until we tackle the global problems in Section \ref{sec:e_u_global}.
Finally,
proving that \baseKper.1{} and \baseKper.2{} hold necessarily,
and that there exist indeed solutions of \eqref{eqfo} in
$C^2([0,\ro))\cap C^{n+2}((0,\ro))$,
will be part of the bootstrap argument.

Observe that the full set of gauge transformations compatible with
        the base scheme is given by Proposition \ref{prop:full_class_gauges}, restricted to $C^+=C^-=0$, or, in the Notation \ref{def:gauge_classes}, the class $\{\gaugeABYa\}$.

We introduce at this point some relevant
definitions and notation that refer to the existence of two problems,
the interior $(+)$ and the exterior $(-)$.
Let $D = \mathbb{R}^3\setminus \{0\}$, $D^+ = \overline{B_a} \setminus \{0\}$ and
$D^- = D \setminus B_a$, where $B_a$ is the ball of radius $a>0$ centered
at the origin. Since we deal with interior and exterior functions that take different values at the boundary we also introduce the
  disjoint union $\Du := D^+ \sqcup D^-$ endowed with the disjoint union topology. Let $\{r,\theta,\phi\}$ be standard spherical coordinates on $\Du$.
For each $r>0$, we let $S_r := \{ |x| = r \}$.
Observe both $D^\pm$ contain $S_a$ and  that $\Du$ constains two
  copies thereof. We shall use the following notation for the geometry
  of $S_r$.

\begin{notation}[Notation in $S_r$]
  \label{Notation:Sr}
  We endow $S_r$ with the standard metric of radius one
  $\gsph = d \theta^2 + \sin^2 \theta d \phi^2$. We fix the orientation of $S_r$ so that 
$\{ \partial_{\theta}, \partial_{\phi} \}$ is
positively oriented and denote by 
$\volform$ the corresponding volume form. The Hodge dual
on $p$-forms of $(S_r,\gsph)$ is denoted by 
$\star_{\mathbb{S}^2}$ and we define
$\axialfSph := \gsph(\eta, \cdot)$ 
i.e. the metrically related one-form of 
$\eta= \partial_{\phi}$. The covariant derivative associated to
$\gsph$ is  $\D$, the corresponding Laplacian
is $\Delta_{\mathbb{S}^2}$ and tensors on $(S_r, \gsph)$ carry capital Latin indices $A,B, \cdots$.
\end{notation}

We define $P_{\ell}$ to be the Legendre polynomial of order $\ell$ on
$(S_r, \gsph)$. More precisely, $P_{\ell}$ is the only solution invariant under $\eta = \partial_{\phi} $
of the eigenvalue problem
$(\Delta_{\mathbb{S}^2}
+  \ell( \ell +1)) P_{\ell} =0$, $\ell \in \mathbb{N}$, 
with the
normalization choice
$\int_{\mathbb{S}^2} P_{\ell} P_{\ell'} \bm{\eta}_{\mathbb{S}^2} =
\frac{4\pi}{2 \ell+ 1} \delta_{\ell \ell'}$
and satisfying $P_{\ell} >0$ on the
north pole (defined by $\theta=0$). 
The first three Legendre polynomials are
$P_0=1$, $P_1 = \cos \theta$, $P_2 = \frac{1}{2} ( 3 \cos^2 \theta -1 )$.
Given any function $f : \Du \rightarrow \mathbb{R}$
we define the following `components'
\begin{align}
f_{\ell} (r)\defi \frac{2\ell+1}{4\pi}  \int_{S_r} f P_{\ell} \volform.
\label{fell}
\end{align}
We emphasize that the integration is on $S_r$ with the volume form of
the standard round metric of radius one (in particular $\int_{S_r} \volform
= 4 \pi$). For functions $f$ independent of $\phi$, 
an alternative equivalent definition is
$f_{\ell}(r) := \frac{2\ell+1}{2}\int_0^\pi f(r,\theta) P_\ell(\cos\theta)\sin\theta d\theta$.
Whenever the $\ell$ subindices can lead to confusion, we will also use
$f_{(0)}$, $f_{(1)}, f_{(2)}$ etc.

We will use the same name for objects $F_\pm$ defined respectively on 
$\mmm^{\pm}$ invariant under $\xi$  and the corresponding objects $F_\pm$ 
defined on $D^+$ and $D^-$. Any object
  $F$ defined on $\Du$ is said to be  composed by $\{F_+,F_-\}$ when $F|_{D^{\pm}} = F_\pm$. Viceversa, given $\{ F_{+}, F_{-}\}$ defined on $D^{\pm}$ we will use $F$ to refer to the object defined on $\Du$ whose restriction to $D^{\pm}$ is $F_{\pm}$. For scalar functions, whenever $F_+\in C^{m_1}(D^+)$ and $F_-\in C^{m_2}(D^-)$ we will equivalently write $F\in C^{m_1}(D^+)\cap C^{m_2}(D^-)$. Moreover, if $[F]=0$ then $F$ is  also well defined on $D$,
  and if $F\in C^0(D^+)\cap C^0(D^-)$ then $F\in C^0(D)$. Observe this extends
  to any function and any of its (partial) derivatives iterativelly.
This notation will also translate to the corresponding intervals on the real line
for the coordinate $r$.

\subsection{Field equations for the base perturbation scheme}
\label{sec:2nd_order_EFEs}
In this section we write
down explicitly the field equations for the second order of the
base perturbation scheme.
More specifically,
we develop the point \baseKperper.2 of the base perturbation scheme  under assumptions
\baseback{} and \baseKper{}. 
We use the results introduced in Section \ref{PertEFEs}
combined with Appendix \ref{tfi_sector},
where we derive with covariant methods
the first order perturbed Ricci tensor
for a first order perturbation of the form $\Kper = \y \Sig$, with $\y$
depending on $r,\theta$.

We start with the first order problem. Expression
(\ref{uper}) and $\Kper(\xi,\xi)=0$ impose
$\uper = \Omegaperbase e^{-\frac{\nu}{2}} \eta$, where 
the redefinition of constants 
$\Omegaper \rightarrow \Omegaperbase$ has been made.
 Thus
\begin{equation}
{\Kper}_{\alpha\mu} u^{\mu} + \uper_{\alpha} = e^{-\nu/2} ( \Omegaperbase - \opertbase ) \eta_{\alpha}.
\label{Comb1}
\end{equation}
By Proposition \ref{prop:R1} and the notation introduced in Remark
  \ref{rem:prop:R1}, the first order perturbed Ricci tensor
  of \eqref{Kper}
(defined in \eqref{Ricper})  has the form
$R^{(1)}_{\alpha\beta}=  \riccifunc(\opertbase e^{-\nu})\Sig_{\alpha\beta}$. Inserting into
\eqref{firstRic} shows that the first order Einstein field equations
require 
  \begin{equation}
    \denper=\preper=0
    \label{eq:EpPp_zero}.
  \end{equation}

For the second order problem, it is advantageous to split the
second order Ricci tensor  $ R^{(2)}_{\alpha\beta}$ 
of (\ref{Kper}) and (\ref{Kperper}) into two terms. Define
$R^{(2)H}_{\alpha\beta}$
 as $ R^{(2)}_{\alpha\beta}$ computed with 
$\spt^H$ in (\ref{KperperH}), i.e. 
$R^{(2)H}_{\alpha\beta}:=R^{(2)}_{\alpha\beta}(\Wtwo =0)$.
Then, by virtue of (\ref{Ricperper}) and (\ref{Ricper}) together with Proposition \ref{prop:R1}
we have
\begin{align}
&R^{(2)}_{\alpha\beta}= R^{(2)H}_{\alpha\beta}+\riccifunc(\Wtwo  e^{-\nu})\Sig_{\alpha\beta}.
\label{eq:R2_common}
\end{align}
As for the right hand side of (\ref{SecOrderfield}), we start by computing
the second order perturbation vector  $\uperper$. We use
$\Kper(\xi,\eta)= - \opertbase
\la \eta, \eta \ra$ and $\Kperper(\xi,\xi) =
- 4 e^{\nu} h + 2\opertbase^2 r^2 \sin^2 \theta  $, so that 
(\ref{uperper}) yields, after the redefinition of constants
$\Omegaperper \rightarrow \Omegaperperbase$,
\begin{align}
\uperper = e^{-\frac{3\nu}{2}} 
\left ( - 2 e^{\nu} h + ( \opertbase - \Omegaperbase)^2 r^2 \sin^2 \theta  \right )
\xi + e^{- \frac{\nu}{2}} \Omegaperperbase \eta.
\label{eq:u2}
\end{align}
One immediately finds
\begin{align}
\Kperper^H{}_{\alpha\mu} u^{\mu} + 2 {\Kper}_{\alpha\mu} \uper^{\mu}
+ \uperper_{\alpha} = \left ( 2 e^{\frac{-\nu}{2}} h - e^{- \frac{3 \nu}{2}} 
\left (\opertbase^2 - \Omegaperbase{}^2 \right ) r^2 \sin^2 \theta   
\right ) \xi_{\alpha}
+ e^{-\frac{\nu}{2}} \Omegaperperbase \eta_{\alpha}.
\label{Comb2}
\end{align}
Inserting (\ref{Comb1}), (\ref{Comb2}) and (\ref{eq:R2_common})
 into (\ref{SecOrderfield}) and using 
$\Sig_{\alpha\mu} u^\mu =-\Wtwo  e^{-\nu/2}\axial_\alpha$ the 
second order field equations 
take the form 
\begin{align}
&R^{(2)H}_{\alpha\beta}+ \riccifunc(\Wtwo  e^{-\nu}) \Sig_{\alpha\beta}=\nonumber\\
&e^{-\nu} \chiefes\left ( \denperper + \preperper \right ) 
\xi_{\alpha} \xi_{\beta}
+ \chiefes (\den + \pre ) \left \{
\left ( 4 e^{-\nu} h - 2 e^{-2 \nu} \left ( \opertbase^2 - \Omegaperbase{}^2 \right )
r^2 \sin^2 \theta \right ) \xi_{\alpha}\xi_{\beta}
\right . \nonumber\\
& \left . +  e^{-\nu} \Omegaperperbase \Sig_{\alpha\beta}
+ 2 e^{-\nu} ( \opertbase - \Omegaperbase )^2 \eta_{\alpha}\eta_{\beta}
\right \}
+ \frac{1}{2} \chiefes\left ( \denperper - \preperper \right ) g_{\alpha\beta} \nonumber\\&
+ \frac{1}{2} \chiefes \left ( \den - \pre \right ) \Kperper^{H}{}_{\alpha\beta}
- \frac{1}{2} \chiefes (\den+3\pre) \Wtwo  e^{-\nu} \Sig_{\alpha\beta}.
\label{eq:EFEs_common}
\end{align}
Now, an explicit computation shows that
$R^{(2)H}_{\alpha\beta}$ has vanishing $\stat_{(\alpha}\axial_{\beta)}$ component
so equation (\ref{eq:EFEs_common}) splits into
two, namely the component along $\Sig$, which is
\begin{align}
&- \riccifunc(\Wtwo  e^{-\nu}) + \chiefes (\den + \pre ) e^{-\nu} \Omegaperperbase 
- \frac{1}{2} \chiefes (\den+3\pre) \Wtwo  e^{-\nu}=0,
\label{eq:EFEs_second_w}
\end{align}
and the rest
\begin{align}
0 = (\Eq)_{\alpha\beta} := & -R^{(2)H}_{\alpha\beta}+ e^{-\nu} \chiefes \left ( \denperper + \preperper \right ) 
\xi_{\alpha} \xi_{\beta}\nonumber\\
&+ \chiefes (\den + \pre ) \left \{
\left ( 4 e^{-\nu} h - 2 e^{-2 \nu} \left ( \opertbase^2 - \Omegaperbase{}^2 \right )
r^2 \sin^2 \theta \right ) \xi_{\alpha}\xi_{\beta}
\right . \nonumber\\
& \left . + 2 e^{-\nu} ( \opertbase - \Omegaperbase )^2 \eta_{\alpha}\eta_{\beta}
\right \}
+ \frac{1}{2} \chiefes\left ( \denperper - \preperper \right ) g_{\alpha\beta}
+ \frac{1}{2} \chiefes \left ( \den - \pre \right ) \Kperper^{H}{}_{\alpha\beta}.
\label{eq:EFEs_common_H}
\end{align}
In principle, these are nine equations (note $(\Eq)_{t\phi}\equiv0$ holds
identically, by construction).
Two of them, $(\Eq_{tt})$ and $(\Eq_{rr})$, determine
$\denperper$ and $\preperper$ algebraically. 
For the moment we are interested in studying a subset of seven independent linear combinations which do not involve   $\denperper$ nor $\preperper$. 
Introducing the notation $\ot{a},\ot{b}, \cdots := \{ r, \theta\}$ and
$\ot{i},\ot{j}, \cdots := \{ t, \phi \}$, one convenient such subset is
\begin{align}
 (\Eq)_{\ot{a}\, \ot{i}}  = 0, \quad  
(\Eq)_{r \theta} =0, \quad 
 \frac{(\Eq)_{\phi\phi}}{g_{\phi\phi}} - 
\frac{(\Eq)_{\theta\theta}}{g_{\theta\theta}} =0, \quad 
\frac{(\Eq)_{\theta\theta}}{g_{\theta\theta}} - 
\frac{(\Eq)_{rr}}{g_{rr}} =0.
\label{Eqs}
\end{align}
Before writing them down explicitly, let us introduce 
the following scalar functions and discuss their properties,
\begin{align}
  \hhat & := h - \frac{1}{2}r \nu^{\prime} f, \nonumber\\
\vhat & := k+\hhat-f =k + h - f \left ( \frac{1}{2} r \nu^{\prime} +1 \right ), \label{def:hat_functions}\\
\qhat & := m+\hhat-e^{-\lambda/2}\left(e^{\lambda/2}r f\right){}_{,r} =m + h - \frac{1}{2} r f \left ( \lam' + \nu' \right )
- ( r f)_{,r}.\nonumber
\end{align}
Given the differentiability and boundedness properties
of the original set $\{h,m,k,f\}$ (point \baseKperper), and that $n\geq m$,
we have $\hhat,\vhat\in C^{m}(D^+)\cap C^{m}(D^-)$,
$C^{m+1}(S_r)$ on all spheres $S_r$, and bounded near $\centresph_0$, and 
$\qhat\in C^{m-1}(D^+)\cap C^{m-1}(D^-)$ is also $C^{m}(S_r)$
on all spheres $S_r$.

The motivation behind these definitions is their very special gauge behaviour,
as described in the following lemma. Its proof is by explicit
calculation using the results of Proposition \ref{prop:full_class_gauges}.
\begin{lemma}
  \label{res:gauges_hat}
Under the gauge vector $\sperper$ given by
(\ref{gauge_second}) with $\Q=r$ the functions
$\hhat$, $\vhat$, $\qhat$ transform as
\begin{align}
\hhat^{g} &= \hhat + \frac{1}{2} A + \frac{1}{2} r \nu' \left (
 r e^{-\lam} \alpha' \cos \theta - \beta \right ), 
            \label{hhatg} \\
\vhat^{g} &= \vhat + \frac{1}{2}A  + \alpha 
\cos \theta + \left (  r e^{-\lam} \alpha'\cos \theta- \beta \right ) 
\left ( 1+ \frac{1}{2} r \nu' \right ), \label{uhatg} \\
\qhat^g & = \qhat + \frac{1}{2} A 
+ \cos \theta \left [ 
\frac{1}{2} r^2 e^{-\lam} \alpha' \left ( \lam' + \nu' \right )
+ (r^2 e^{-\lam} \alpha')' \right ]
- \left [  \frac{1}{2} r \beta ( \lam' + \nu'  )+ (r \beta)_{,r}
\right ], \label{qhatg}
\end{align}
where $\beta(r)$ is the arbitrary function that enters $f$ in (\ref{fg}).\finn
\end{lemma}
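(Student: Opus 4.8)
The proof is by direct substitution. First I would specialize the transformation laws \eqref{hg}--\eqref{fg} of Proposition~\ref{prop:full_class_gauges} to the present background, where $\Q(r)=r$ and hence $\Q'/\Q=1/r$, obtaining
\[
h^{g}=h+\tfrac12 A+\tfrac12\Y\nu',\qquad
k^{g}=k+\tfrac{\Y}{r}+\alpha\cos\theta,\qquad
m^{g}=m+\Y_{,r}+\tfrac12\Y\lam',\qquad
f^{g}=f+\tfrac{\Y}{r}-re^{-\lam}\alpha'\cos\theta+\beta.
\]
(These are independent of the first order gauge parameter $C$, which in any case is fixed to $0$ in the base scheme, and the background Killing vector $\killback$ appearing in $\sperper$ contributes nothing.) I would then insert these four expressions into the three defining combinations in \eqref{def:hat_functions} and expand; everything is a pointwise identity in the coordinates $(r,\theta)$, valid wherever the functions are defined.

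The structural point of the computation --- which is precisely the reason these combinations were singled out --- is that every $\Y$-dependent term cancels, so $\hhat,\vhat,\qhat$ are invariant under the pure radial-reparametrization part of the gauge. Concretely: in $\hhat=h-\tfrac12 r\nu' f$ the $\tfrac12\Y\nu'$ coming from $h^{g}$ cancels the $-\tfrac12 r\nu'(\Y/r)$ coming from $-\tfrac12 r\nu' f^{g}$; in $\vhat=k+h-f(\tfrac12 r\nu'+1)$ one has $(\Y/r)+\tfrac12\Y\nu'-(\tfrac12 r\nu'+1)(\Y/r)=0$; and in $\qhat=m+h-\tfrac12 rf(\lam'+\nu')-(rf)_{,r}$, using $(r\cdot\Y/r)_{,r}=\Y_{,r}$, the $\Y$-terms are $\Y_{,r}+\tfrac12\Y\lam'+\tfrac12\Y\nu'-\tfrac12\Y(\lam'+\nu')-\Y_{,r}=0$. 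With the $\Y$-contributions gone, collecting the surviving $A$-, $\alpha$- and $\beta$-terms reproduces exactly \eqref{hhatg}, \eqref{uhatg} and \eqref{qhatg} respectively.

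There is no genuine obstacle here: the lemma is established once the algebra is carried out. The only place needing a little care is the $\qhat$ identity, where the $\alpha$- and $\beta$-parts of $f^{g}$ are acted on by the derivative $(rf)_{,r}$, so they must be kept grouped as $\tfrac12 r^{2}e^{-\lam}\alpha'(\lam'+\nu')+(r^{2}e^{-\lam}\alpha')'$ and as $\tfrac12 r\beta(\lam'+\nu')+(r\beta)_{,r}$ in order to match the stated form; the corresponding simplifications in $\hhat^{g}$ and $\vhat^{g}$ are entirely algebraic.
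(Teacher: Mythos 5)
Your proposal is correct and is exactly the argument the paper intends: Lemma \ref{res:gauges_hat} is proved there "by explicit calculation using the results of Proposition \ref{prop:full_class_gauges}", i.e. substituting \eqref{hg}--\eqref{fg} with $\Q=r$ into the definitions \eqref{def:hat_functions} and observing that all $\Y$-dependent terms cancel. Your verification of the three cancellations and the grouping of the $\alpha$- and $\beta$-terms under $(rf)_{,r}$ reproduces \eqref{hhatg}--\eqref{qhatg} exactly as stated.
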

Note that the gauge function $\Y$ has disappeared from these transformations
so that $\hhat$, $\vhat$ and $\qhat$ go a long way towards being gauge
invariant. The remaining gauge transformation is fully explicit
in the variable $\theta$. This will be important in the following.

We can now write down 
explicitly equations (\ref{eq:EFEs_second_w}) and (\ref{Eqs}).
The first set involves a long computation  which has been carried out with
the aid of computer algebra systems. As for the second,
its explicit form can be obtained
directly from (\ref{eq:ricci_functional}) after taking into
account that $\riccixi$ and $\riccieta$, as defined in Proposition \ref{prop:R1}
take the form
\begin{align*}
 R_{\alpha\beta} \xi^{\beta}  = - \frac{\chiefes}{2}  \left ( \den + 3 \pre \right )
\xi_{\alpha}, \quad R_{\alpha\beta} \eta^{\beta} =  \frac{\chiefes}{2}  \left ( \den - \pre \right )
\eta_{\alpha} & \Longrightarrow
\riccixi = -\frac{\chiefes}{2} \left ( \den + 3 \pre \right ),
\riccieta = \frac{\chiefes}{2} \left ( \den - \pre \right ) \\
&  \Longrightarrow   \quad
\riccixi +\riccieta = - 2 \chiefes \pre.
\end{align*}
The result is
\begin{lemma}
	\label{lemma:second_order_efes}
In the setup described above, the field equations (\ref{eq:EFEs_second_w})
and (\ref{Eqs}) take, respectively, the following explicit form
\begin{align}
\frac{\partial}{\partial r}\left( r^4 j \frac{\partial\Wtwo }{\partial r}\right)+ \frac{r^2 j e ^\lambda}{\sin ^3 \theta}\frac{\partial}{\partial \theta}\left( \sin^ 3 \theta \frac{\partial \Wtwo  }{\partial \theta}\right)+4r^3j'(\Wtwo  - \Omegaperperbase) =0,\label{Eqtphi_full}
\end{align}
and
\begin{align}
0  = (\Eq)_{\ot{a}\ot{i}} \equiv & 0,  \label{Eqtphi} \\
0  = (\Eq)_{r \theta}  \equiv &
\frac{\partial}{\partial \theta} \left (
\frac{2}{r} \qhat  - 2 \vhat_{,r} 
 + (\qhat-2 \hhat) \nu' \right ), \label{Eqrth} \\
0  =  \frac{(\Eq)_{\phi\phi}}{g_{\phi\phi}} - 
\frac{(\Eq)_{\theta\theta}}{g_{\theta\theta}}  \equiv &
-e^{-(\lam+\nu)} r \sin^2\theta \left ( 
r \opertbase'{}^2 +
2 \left ( \lam^{\prime} + \nu^{\prime} \right ) (\opertbase-\Omegaperbase)^2
\right ) \nonumber \\
& + \frac{2}{r^2} \left ( \qhat_{,\theta\theta}
- \frac{\cos \theta}{\sin \theta} \qhat_{,\theta} \right ),  \label{Eqfifithth} \\
0 = \frac{(\Eq)_{\theta\theta}}{g_{\theta\theta}} -   \frac{(\Eq)_{rr}}{g_{rr}} 
\equiv &
2 e^{-\lam} \vhat_{,rr}
- \frac{2}{r^2} \Delta_{\mathbb{S}^2}\vhat
-e^{-\lam} (\nu'+\lam') \vhat_{,r}
- \frac{4}{r^2}  \vhat \nonumber \\
& +4 e^{-\lam} \nu^{\prime} \, \hhat_{,r} 
- e^{-\lam} \left (\nu'+\frac{2}{r} \right ) \qhat_{,r} 
+ \frac{2}{r^2} \left (\frac{\cos \theta}{\sin \theta} \qhat_{,\theta}
+ 2 \qhat \right ) \nonumber \\
& - e^{-(\lam+\nu)} r^2 \sin^2 \theta \,
\opertbase'{}^2. \label{Eqththrr}
\end{align}\finn
\end{lemma}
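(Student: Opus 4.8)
The plan is to obtain both identities by direct computation from the field equations assembled in the previous subsection, the only genuinely structural content being the reduction of the relevant components to the nearly gauge--invariant variables $\hhat,\vhat,\qhat$. I would split the work into two independent parts, corresponding to the $\Sig$--component equation (\ref{eq:EFEs_second_w}) and to the seven combinations collected in (\ref{Eqs}).

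For (\ref{Eqtphi_full}): the left--hand side of (\ref{eq:EFEs_second_w}) involves $\riccifunc(\Wtwo e^{-\nu})$, the $\Sig$--part of the first--order Ricci tensor of a perturbation of the form $\y\Sig$, whose explicit expression is supplied by Proposition \ref{prop:R1}. Inserting it into (\ref{eq:EFEs_second_w}) and eliminating $\den+\pre$ and $\den+3\pre$ through the background field equations --- in particular the identity (\ref{eq:jpoj}), $j'/j=-\tfrac{1}{2} r e^{\lambda}\chiefes(\den+\pre)$, which turns the $\chiefes(\den+\pre)e^{-\nu}\Omegaperperbase$ term into the $4r^{3}j'\Omegaperperbase$ piece --- and using $\Q=r$, the remaining terms reorganise into the divergence form $\partial_r(r^{4}j\,\partial_r\Wtwo)+r^{2}j\,e^{\lambda}\sin^{-3}\theta\,\partial_\theta(\sin^{3}\theta\,\partial_\theta\Wtwo)+4r^{3}j'(\Wtwo-\Omegaperperbase)$. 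This is a finite, if lengthy, computation whose only inputs beyond Proposition \ref{prop:R1} are the background equations, and it is most safely carried out with computer algebra.

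For (\ref{Eqtphi})--(\ref{Eqththrr}): I would first split $R^{(2)H}_{\alpha\beta}$ using (\ref{Ricperper}) into the part linear in $\spt^H$, which coincides with $R^{(1)}_{\alpha\beta}$ evaluated on the diagonal perturbation $\spt^H$ and hence is delivered by the covariant first--order Ricci formula (\ref{eq:ricci_functional}) of Appendix \ref{tfi_sector} (evaluated on the background via $\riccixi+\riccieta=-2\chiefes\pre$), plus the remaining terms of (\ref{Ricperper}), which are quadratic in $\Kper^H$ and hence depend on the single function $\opertbase$; together with the $\opertbase$--dependent source terms on the right--hand side of (\ref{eq:EFEs_common_H}) these assemble into the $(\opertbase')^{2}$ and $(\opertbase-\Omegaperbase)^{2}$ contributions displayed in (\ref{Eqfifithth}) and (\ref{Eqththrr}). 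For (\ref{Eqtphi}) I would note that none of the source tensors on the right--hand side of (\ref{eq:EFEs_common_H}) has a mixed $\{r,\theta\}\times\{t,\phi\}$ component, so $(\Eq)_{\ot{a}\ot{i}}=-R^{(2)H}_{\ot{a}\ot{i}}$; since the background admits the reflection isometries $t\mapsto -t$ and $\phi\mapsto -\phi$, under each of which $\Kper^H$ is odd and $\spt^H$ is even, while $R^{(2)H}$ depends on $\Kper^H$ quadratically and on $\spt^H$ linearly, $R^{(2)H}$ is even under both reflections, which forces all its $\ot{a}\ot{i}$ components (and likewise $R^{(2)H}_{t\phi}$, whose vanishing is what legitimises the split of (\ref{eq:EFEs_common}) into (\ref{eq:EFEs_second_w}) and (\ref{eq:EFEs_common_H})) to vanish identically. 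Finally, for the three remaining combinations I would substitute the computed $R^{(2)H}_{r\theta}$, $R^{(2)H}_{\phi\phi}/g_{\phi\phi}-R^{(2)H}_{\theta\theta}/g_{\theta\theta}$, $R^{(2)H}_{\theta\theta}/g_{\theta\theta}-R^{(2)H}_{rr}/g_{rr}$ together with the corresponding source terms, use the background equations (\ref{eq:lambdaprime})--(\ref{eq:pprimenuprime}) to trade $\den,\pre$ for $\{\nu,\lambda\}$ (the combinations in (\ref{Eqs}) being, by their very choice, free of $\denperper$ and $\preperper$), and then rewrite everything in terms of $\hhat,\vhat,\qhat$ via the definitions (\ref{def:hat_functions}); the function $f$ then enters only through $\partial_\theta$, and after collecting terms one lands on (\ref{Eqrth}), (\ref{Eqfifithth}) and (\ref{Eqththrr}).

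The principal obstacle is computational rather than conceptual: the fully explicit $R^{(2)H}_{\alpha\beta}$ for the metric perturbations (\ref{Kper})--(\ref{Kperper}) is a long expression, and collapsing it to the compact form above relies on disciplined use of the background relations and, crucially, on the precise change of variables to $\hhat,\vhat,\qhat$ --- which is why the verification should be delegated to a computer algebra system. The only genuinely structural checks to be done by hand are (i) that the chosen linear combinations in (\ref{Eqs}) really drop $\denperper$ and $\preperper$, and (ii) the parity argument annihilating $R^{(2)H}_{\ot{a}\ot{i}}$ and $R^{(2)H}_{t\phi}$.
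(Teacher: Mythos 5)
Your treatment of (\ref{eq:EFEs_second_w}) coincides with the paper's own route: the explicit form (\ref{Eqtphi_full}) is obtained by expanding $\riccifunc(\Wtwo e^{-\nu})$ from (\ref{eq:ricci_functional}), inserting the perfect-fluid value $\riccixi+\riccieta=-2\chiefes\pre$ and trading $\den+\pre$ for $j'/j$ via (\ref{eq:jpoj}). Your two ``structural checks'' are also sound: the combinations (\ref{Eqs}) drop $\denperper$ and $\preperper$ by construction, and your parity argument (oddness of $\Kper$ in (\ref{Kper}) and evenness of $\spt^H$ under $t\mapsto -t$ and $\phi\mapsto-\phi$, together with the quadratic/linear dependence of $R^{(2)}$ on $\Kper$/$\Kperper$ visible in (\ref{Ricperper})) is a clean alternative to the paper's bare statement that ``an explicit computation shows'' that $R^{(2)H}$ has no $\stat_{(\alpha}\axial_{\beta)}$ component; it delivers (\ref{Eqtphi}) at the same time. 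The remainder is, as you say, a long computation that the paper itself delegates to computer algebra, followed by the change of variables (\ref{def:hat_functions}).

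There is, however, one concrete error in your described route for (\ref{Eqrth})--(\ref{Eqththrr}). The part of $R^{(2)H}_{\alpha\beta}$ linear in $\spt^H$ is indeed formally $R^{(1)}_{\alpha\beta}$ evaluated on $\spt^H$, but it is \emph{not} delivered by (\ref{eq:ricci_functional}): Proposition \ref{prop:R1} and the whole of Appendix \ref{tfi_sector} apply only to perturbation tensors of the form $\y\,\Sig$, i.e.\ with a single $dt\,d\phi$ component, and their conclusion is precisely that the perturbed Ricci tensor is again proportional to $\Sig$. The tensor $\spt^H$ is diagonal plus a $dr\,d\theta$ term, so it violates that hypothesis, and applying (\ref{eq:ricci_functional}) to it would wrongly produce a tensor with only a $t\phi$ component --- contradicting the very equations (\ref{Eqrth})--(\ref{Eqththrr}) you are deriving. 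That piece must instead be computed from the general first-order formula (\ref{Ricper}) (equivalently, the first line of (\ref{Ricperper})); the identity $\riccixi+\riccieta=-2\chiefes\pre$ plays no role there. Since you ultimately hand the full verification to a computer algebra system the mistake is repairable, but as stated the intermediate decomposition would fail.
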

\begin{remark}
Equation (\ref{Eqtphi_full}) 
includes vacuum as a particular case. As in the remark after
{\baseKper.2},
the constant $\Omegaperperbase$ is irrelevant in the vacuum region $\mmm^-$ (where
$j'=0$). Without loss of generality,
we shall set $\Omegaperperbaseext=0$ in $\mmm^-$.
\end{remark}
\begin{remark}
The fundamental underlying reason that will allow us to 
close the bootstrap argument below is the decoupling of equation
(\ref{Eqtphi_full}), which only involves $\Wtwo$, and
the system (\ref{Eqtphi})-(\ref{Eqththrr}),
which only involves the rest of terms in $\Kperper$, that is $\Kperper^H$.
\end{remark}
\begin{remark}
  \label{rem:first_FE}
  Since the second order Einstein field equations reduce to the first order equations when
  $\Kper =0$, it follows from this lemma that the field equations (\ref{firstRic}) for \eqref{Kper} are equivalent to \eqref{eqfo} plus \eqref{eq:EpPp_zero}.
\end{remark}
\begin{remark}
  The requirement $m\geq 2$ in the bootstrap hypothesis
  {\baseKperper.1}
  allows us to work with classical solutions of
  (\ref{Eqtphi})-(\ref{Eqththrr}) because, as discussed after \eqref{def:hat_functions}, 
  $\hhat_{\pm},\vhat_{\pm}$ are $C^{2}(D^{\pm})$,
    $\qhat_\pm$ is $C^{1}(D^{\pm})$ and all of them
  are $C^{2}(S_r)$ on each sphere $S_r$.
\end{remark}
\begin{remark}
\label{remark:fw_invariant}
For later use we introduce the notation
\begin{align}
\F(r)  := \frac{1}{6} e^{-(\lam+\nu)} r^3 \left ( 
r \opertbase'{}^2 +
2 \left ( \lam^{\prime} + \nu^{\prime} \right ) (\opertbase-\Omegaperbase)^2
\right ) , \label{eq:f_omega}
\end{align}
which arises in the inhomogeneous term in (\ref{Eqfifithth}).
\end{remark}

Equation (\ref{Eqfifithth}) is a second order linear ODE in $\theta$
for the function $\qhat$ and can be explicitly integrated. Its general solution is 
\begin{align}
  \qhat(r,\theta) & = \qhat_0(r) +\qhat_1(r) P_1 (\cos \theta)
                    + \F(r) P_2 (\cos \theta),\nonumber
                    \end{align}
where $\qhat_0(r)$ and $\qhat_1(r)$ are free functions of $r$.
We show next that $\qhat_1(r)$ is pure gauge, i.e.
that it can be set to zero by a suitable choice of 
gauge transformation (\ref{qhatg}).
In terms of the functions
\begin{align}
\b(r):= r^2 e^{-\lam} \alpha', \quad \quad \c(r):=r \beta,
\label{bc}
\end{align}
the gauge transformation law of $\qhat$ (\ref{qhatg}) takes the form
\[
\qhat^g  = 
\qhat + \frac{1}{2} A 
+ P_1 (\cos \theta) \left  ( \frac{\b}{2}  
\left ( \lam' + \nu' \right ) + \b' \right )
- \left [ \frac{\c}{2}  ( \lam' + \nu'  ) + \c'
\right ].
\]
We impose that $b(r)$ solves the ODE 
\begin{equation}
\frac{\b}{2} (\lam' + \nu' ) + b' =-\qhat_1,
\label{Eqb}
\end{equation}
so that $\qhat^{g}$ becomes
$\qhat^{g} = \qhat_0 + \F P_2 (\cos \theta)$. 
Dropping the superindex ${}^g$ one has
\begin{equation}
\qhat = \qhat_0(r) + \F(r) P_2 (\cos \theta),
\label{expressionqhat}
\end{equation}
and we have proved that $\qhat_1(r)$ can be gauged away, as claimed.
The remaining 
gauge freedom is given by
the general solution of the homogeneous part of (\ref{Eqb}), i.e.
\[
\b = \b_0 e^{-\frac{1}{2} (\lam+\nu)}, \quad \quad
b_0 \in \mathbb{R}.
\]
which, in terms of $\alpha(r)$ is (by the definition of $b(r)$ in (\ref{bc}))
\begin{equation}
\alpha' = \frac{\b_0}{r^2} e^{\frac{1}{2}(\lam-\nu)}.
\label{eq:for_alpha}
\end{equation}
This residual gauge will be used later to simplify $\vhat$.

We  now insert $\qhat$ from \eqref{expressionqhat} into equations (\ref{Eqrth}) and 
(\ref{Eqththrr}) and perform a trivial integration in $\theta$ in the first one, which 
introduces an arbitrary function $\sigma(r)$,
\begin{align}
0 = & \left ( \frac{2}{r} + \nu' \right )
\left ( \qhat_0 + \F P_2 (\cos \theta) \right )- 2 \vhat_{,r} - 2 \hhat \nu' 
+ \sigma(r) \nu', \label{equation1} \\
0 = & 2 e^{-\lam} \vhat_{,rr}
- \frac{2}{r^2} \Delta_{\mathbb{S}^2}\vhat
-e^{-\lam} (\nu'+\lam') \vhat_{,r}
- \frac{4}{r^2}  \vhat  
+4 e^{-\lam} \nu^{\prime} \, \hhat_{,r}\nonumber \\
& - e^{-\lam} \left (\nu'+\frac{2}{r} \right ) 
\left (\qhat_0{}^{\prime} + \F{}^{\prime} P_2 (\cos \theta) \right )
+ \frac{2}{r^2} \left ( 2 \qhat_0 - \F \right )
- \frac{2}{3} e^{-(\lam+\nu)} r^2 \opertbase'{}^2 
( 1 - P_2(\cos\theta) ). \label{equation2}
\end{align}
Equation (\ref{equation1}) determines $\hhat$ algebraically (recall that
$\nu'$ is nowhere zero outside the centre). Inserting the result into
(\ref{equation2}) yields
\begin{align}
  r^2 \vhat_{,rr}
  + \frac{1}{2} r^2  \left ( \lambda'+\nu'  - 4\frac{\nu''}{\nu'} \right )
  \vhat_{,r}  +  e^{\lambda} \Delta_{\mathbb{S}^2}\vhat
  + 2 e^{\lambda} \vhat  
  = \inhomo_0(r) + \inhomo_2(r) P_2(\cos \theta),
  \label{Field}
\end{align}
where we have introduced
\begin{align}
  \inhomo_0(r):=
  & - 2 \qhat_0\left(\frac{r \nu''}{\nu'}+1- e^{\lambda}\right)
    + \qhat_0'\left ( r + \frac{1}{2} r^2 \nu' \right )
    -\frac{1}{3}e^{-\nu}r^4\opertbase'{}^2
    - e^{\lambda} \F
    + r^2 \sigma'\nu',\label{eq:mu0}\\
  \inhomo_2(r):=
  & \F{}'\left ( r + \frac{1}{2} r^2 \nu' \right )
    - 2\F\frac{r \nu''}{\nu'}
    +\frac{1}{3}e^{-\nu}r^4\opertbase'{}^2- 2 \F,
    \label{eq:mu2}
\end{align}
and recall $\F$ is defined in \eqref{eq:f_omega}.

For later use, let us find the most general solution of the homogenous part of
(\ref{Field}) (i.e. with $\inhomo_0 = \inhomo_2=0$)
with  the form $\vhat = W(r) P_1(\cos \theta)$. It is immediate that 
this will be a solution iff
\begin{equation}
2 W'' + \left (\lambda'+ \nu' - \frac{4 \nu''}{\nu'} \right ) W' =0.
\label{l=1Eq}
\end{equation}
This is a second order ODE that can be trivially integrated once.
However, finding the general
solution is a harder problem, which we address by exploiting the
gauge behaviour of $\vhat$ described in (\ref{uhatg}).
Consistency of the whole construction requires that the gauge transformation
(\ref{uhatg}) restricted to $A=\beta(r)=0$
and $\alpha(r)$
satisfying (\ref{eq:for_alpha}) must transform solutions of (\ref{Field})  
into solutions. The $\ell=1$ component of any such solution must solve the homogeneous PDE.
The gauge transformation above preserves the $\ell=1$ character of the function,
so it produces another solution of the same
homogeneous PDE. In other words, a solution
$W(r)$ of (\ref{l=1Eq}) transforms under this gauge into another
solution $W^g(r)$. This is true in particular for $W(r)=0$, which is an obvious solution.
Summing up, for any function $\gamma(r)$ satisfying
\begin{align}
\gamma'  = \frac{\b_0}{r^2} e^{\frac{1}{2}(\lam-\nu)}
\label{Eqgamma}
\end{align}
it must be the case that the function
\begin{align}
W(r) =
\left ( \gamma(r) + \frac{\b_0}{r} e^{-\frac{1}{2} (\lam+\nu)}
\left ( 1+ \frac{1}{2} r \nu' \right ) \right ) 
\label{sol:uhat_h}
\end{align}
solves (\ref{l=1Eq}). It is a matter of direct computation to confirm
that this is indeed the case. We still need to show that (\ref{sol:uhat_h})
is the general solution. Given that the expression
involves two arbitrary constants, namely an additive intregration constant $\gamma_0$
in \eqref{Eqgamma},
and $\b_0$, we need to make sure that
it contains two linearly independent solutions. One solution is 
$\gamma(r) = \gamma_0 \in \mathbb{R}$, $\b_0=0$, so it suffices
to check  that the solution with $\b_0 \neq 0$ is not constant.
Computing the derivative of  \eqref{sol:uhat_h} and using
the background field
equation \eqref{eq:pprimenuprime} yields 
\begin{align*}
W' = - \frac{\b_0}{2} e^{-\frac{1}{2} (\lam + \nu)} (\nu')^{2},
\end{align*}
which is not identically zero when $\b_0 \neq 0$. We conclude that
indeed (\ref{sol:uhat_h}) is 
the general solution of (\ref{l=1Eq}), and moreover, that it is regular everywhere.

Take now an arbitrary solution $\vhat$ of  (\ref{Field}).
We define  $\vhat_{\perp}$ by means of
\begin{align*}
  \vhat(r,\theta):=\vhat_0(r)+\vhat_1(r)P_1(\cos \theta)+\vhat_2(r)P_2(\cos\theta)+\vhat_\perp(r,\theta), 
\end{align*}
where $\vhat_0(r), \vhat_1(r), \vhat_2(r)$ are the components
defined as in (\ref{fell}).
As mentioned,  $\vhat_1(r) P_1(\cos\theta)$ necessarily satisfies the homogeneous part of  equation (\ref{Field}), so there must exist
$\gamma(r)$ solving (\ref{Eqgamma}) such that 
$\vhat_1(r) = W(r)$ as given in
(\ref{sol:uhat_h}). Apply now the gauge transformation (\ref{uhatg})  
with $\alpha (r)= - \gamma(r)$. The transformed function $\vhat^g$ reads
\begin{align}
\vhat^g(r,\theta)= &\vhat_0(r)+\vhat_1(r)\cos\theta+\vhat_\perp(r,\theta)+ \frac{1}{2}A -W(r)-\beta\left ( 1+ \frac{1}{2} r \nu' \right )\nonumber\\
=&\vhat_0(r)+ \frac{1}{2}A-\beta\left ( 1+ \frac{1}{2} r \nu' \right ) +\vhat_2(r)P_2(\cos\theta)+\vhat_\perp(r,\theta).
\end{align}

To sum up, we have found a gauge transformation of the form (\ref{gauge_second})
that gets rid of the $\ell=1$ terms of both $\qhat(r,\theta)$ and $\vhat(r,\theta)$.
This choice fixes 
completely the  function $\alpha(r)$, so the remaining gauge
freedom is encoded in the constants $A,B$ and the functions $\beta(r), \Y(r)$.  

So far, we have ignored the perturbed field equations involving $\Ppp$ and
$\Epp$.  The following proposition summarizes the previous results and
incorporates the information on $\Ppp$ and $\Epp$ needed later.
\begin{proposition}
  \label{prop:field_equations}
  Assume  {\baseback}-{\baseKperper} in the base perturbation scheme.
  Then, $\uperper$ is given by (\ref{eq:u2}) and equation
  $(\Eq)_{t\phi}=0$ is equivalent to (\ref{Eqtphi_full}). In addition,
  there is a class of gauges $\{\gaugeABY\}$, c.f. Notation \ref{def:gauge_classes},
  for which the remaining Einstein field equations for a perfect fluid (including vacuum)
  to second order
  are satisfied if only if, in terms of the functions defined in (\ref{def:hat_functions}),
\begin{align}
  \qhat(r,\theta)=&\qhat_0(r)+\F(r) P_2(\cos\theta),\label{eq:qhat}\\
  \vhat(r,\theta)=&\vhat_0(r)+\vhat_2(r) P_2(\cos\theta)+\vhat_\perp(r,\theta),\label{eq:uhat} \\
  \hhat(r,\theta)=& \frac{1}{2} \sigma(r)  - \frac{\vhat_{,r}}{\nu'}
+ \left ( \frac{1}{r \nu'} + \frac{1}{2} \right ) 
\left ( \qhat_0 + \F P_2 (\cos \theta) \right ), \label{eq:hhat} 
\end{align}
with $\F(r)$ given explicitly by (\ref{eq:f_omega}),
$\qhat_0(r)$, $\sigma(r)$ are free functions
and $\vhat_0$, $\vhat_2$, $\vhat_\perp$ satisfy
\begin{align}
  & r^2 \vhat_0{}'' +  \frac{1}{2} r^2 \left (\lambda'+ \nu' - 4\frac{\nu''}{\nu'} \right ) \vhat_0{}'+ 2 e^{\lambda}  \vhat_0  = \inhomo_0(r), \label{eq:uhat0}\\
  & r^2 \vhat_2{}'' + \frac{1}{2} r^2 \left (\lambda'+ \nu'- 4\frac{\nu''}{\nu'} \right ) \vhat_2{}'- 4 e^{\lambda}   \vhat_2  = \inhomo_2(r), \label{eq:uhat2}\\
  & r^2 \vhat_{\perp,rr}
    + \frac{1}{2} r^2 \left ( \lambda'+\nu' - 4\frac{\nu''}{\nu'} \right )
    \vhat_{\perp,r}+ e^{\lambda} \Delta_{\mathbb{S}^2}\vhat_\perp+ 2 e^{\lambda} \vhat_\perp = 0,\label{eq:uhatperp}
\end{align}
with $\inhomo_0$ and $\inhomo_2$ explicitly given by (\ref{eq:mu0})-(\ref{eq:mu2}).

Moreover, the second order perturbed 
pressure $\Ppp$ and energy-density $\Epp$ are determined algebraically 
from the previous functions and have the explicit forms
\begin{align}
  & \Ppp= \ff(r) + 
\frac{4}{\nu'} \Pb' \left(\frac{1}{2}fr\nu'+\hhat+\frac{1}{3}e^{-\nu}r^2(\opertbase-\Omegaperbase)^2P_2(\cos\theta)\right), \label{eq:Ppp_in} \\
  & \Epp= \gg(r) + \frac{4}{\nu'} \Eb' \left(\frac{1}{2}fr\nu'+\hhat+\frac{1}{3}e^{-\nu}r^2(\opertbase-\Omegaperbase)^2P_2(\cos\theta)\right), 
    \label{eq:Epp_in} 
\end{align}
where
\begin{align}
  \ff  & \defi       2(\Eb+\Pb)\left(\fintegral + \frac{1}{3}e^{-\nu} r^2(\opertbase-\Omegaperbase)^2\right), \label{eq:ff} \\
  \gg   & \defi -\frac{4}{\nu'} \left\{
          (\Eb+\Pb)\fintegral'+\Eb'\left(\fintegral + \frac{1}{3}e^{-\nu} r^2(\opertbase-\Omegaperbase)^2\right)
          \right\},  \label{eq:gg} \\
  2\chiefes(\Eb+\Pb)  \fintegral
       & \defi \frac{2}{r^2}\qhat_0+\chiefes(\Eb+3\Pb)\sigma+e^{-\lambda}\sigma'\nu' -\frac{4}{r^2} \F. \label{def:fintegral}
\end{align}
\end{proposition}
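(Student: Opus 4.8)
The plan is to assemble this proposition from the computations already performed in this subsection, filling in two gaps: establishing the logical equivalence (``if and only if'') and deriving the explicit expressions for $\Ppp$ and $\Epp$. The formula (\ref{eq:u2}) for $\uperper$ is immediate: it is just (\ref{uperper}) evaluated on the base--scheme tensors, using $\Kper(\stat,\stat)=0$, $\Kper(\stat,\axial)=-\opertbase\langle\axial,\axial\rangle$ and $\Kperper^H(\stat,\stat)=-4e^{\nu}h+2\opertbase^2r^2\sin^2\theta$ together with the relabelling $\Omegaper\to\Omegaperbase$, $\Omegaperper\to\Omegaperperbase$. The claim that the $t\phi$ component of the second order field equations is equivalent to (\ref{Eqtphi_full}) is the content of the $\Sig$--component equation (\ref{eq:EFEs_second_w}) rewritten through Lemma \ref{lemma:second_order_efes}, since $(\Eq)_{t\phi}\equiv 0$ by construction and hence the full $t\phi$ equation reduces to its $\Sig$ part.

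For the main equivalence I would run through the chain already set up, checking that each link is an equivalence on $\mmm\setminus\centresph_0$ (where $\nu'\neq 0$). Using Proposition \ref{prop:R1} for $R^{(2)H}$, the seven equations (\ref{Eqs}) become (\ref{Eqtphi})--(\ref{Eqththrr}) of Lemma \ref{lemma:second_order_efes}, with $(\Eq)_{\ot{a}\ot{i}}\equiv 0$ empty. Integrating the $\theta$--ODE (\ref{Eqfifithth}) gives $\qhat=\qhat_0(r)+\qhat_1(r)P_1(\cos\theta)+\F(r)P_2(\cos\theta)$ with $\F$ as in (\ref{eq:f_omega}); the gauge law (\ref{qhatg}), with $b(r)$ chosen to solve (\ref{Eqb}), removes $\qhat_1$ and leaves precisely the residual freedom (\ref{eq:for_alpha}) on $\alpha$, together with $A$, $B$, $\beta(r)$, $\Y(r)$, yielding (\ref{eq:qhat}). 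Equation (\ref{Eqrth}) integrates once in $\theta$ to (\ref{equation1}) (introducing the free function $\sigma(r)$), which, since $\nu'$ never vanishes outside the centre, is equivalent to the algebraic relation (\ref{eq:hhat}) for $\hhat$; substituting it into (\ref{Eqththrr}) produces the PDE (\ref{Field}). Decomposing $\vhat$ into Legendre modes, its $\ell=1$ part must solve the homogeneous equation (\ref{l=1Eq}), whose general solution is (\ref{sol:uhat_h}) (two independent solutions, as verified using (\ref{eq:pprimenuprime})); a further gauge transformation (\ref{uhatg}) with $\alpha=-\gamma$, admissible within the residual freedom, removes $\vhat_1$ and fixes $\alpha$ completely, giving (\ref{eq:uhat}), while projecting (\ref{Field}) onto $P_0$, $P_2$ and their orthogonal complement yields (\ref{eq:uhat0})--(\ref{eq:uhatperp}) with $\inhomo_0$, $\inhomo_2$ as in (\ref{eq:mu0})--(\ref{eq:mu2}). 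Since no equation of state has been used, $\qhat_0$ and $\sigma$ remain free. Every step being reversible, conversely any $\{\qhat_0,\sigma,\vhat_0,\vhat_2,\vhat_\perp\}$ with the stated regularity and satisfying these equations reconstructs, through (\ref{def:hat_functions}), a solution of (\ref{Eqs}) in the selected gauge, hence---by gauge covariance---of the full system; this identifies the gauge class as $\{\gaugeABY\}$.

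Finally, for $\Ppp$ and $\Epp$: the two components $(\Eq)_{tt}=0$ and $(\Eq)_{rr}=0$ of (\ref{eq:EFEs_common_H}) determine $\Ppp$ and $\Epp$ algebraically in terms of $R^{(2)H}_{tt}$, $R^{(2)H}_{rr}$ and the functions already found. Computing $R^{(2)H}_{tt}$, $R^{(2)H}_{rr}$ with computer algebra, inserting (\ref{eq:qhat})--(\ref{eq:hhat}) and simplifying with the background system (\ref{eq:lambdaprime})--(\ref{eq:pprimenuprime}) and (\ref{eq:jpoj}), one notices that $\frac{1}{2}fr\nu'+\hhat+\frac{1}{3}e^{-\nu}r^2(\opertbase-\Omegaperbase)^2P_2(\cos\theta)=h+\frac{1}{3}e^{-\nu}r^2(\opertbase-\Omegaperbase)^2P_2(\cos\theta)$ and, after introducing $\fintegral$ via (\ref{def:fintegral}), the expressions collapse to the compact forms (\ref{eq:Ppp_in})--(\ref{eq:Epp_in}) with $\ff$, $\gg$ as in (\ref{eq:ff})--(\ref{eq:gg}). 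I expect this last bulk computation to be the main obstacle: the second order Ricci components are long and their reduction to the stated form requires careful and repeated use of the TOV equations; by contrast the structural steps---gauge fixing and Legendre decomposition---are routine once the gauge laws of Lemma \ref{res:gauges_hat} are available.
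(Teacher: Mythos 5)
Your proposal is correct and follows essentially the same route as the paper: the structural claims are assembled from the preceding computations in the subsection (integration of the $\theta$-ODE for $\qhat$, gauge removal of the $\ell=1$ sectors, reduction to the PDE (\ref{Field}) and its Legendre splitting), and $\Ppp$, $\Epp$ are read off algebraically from $(\Eq)_{tt}$ and $(\Eq)_{rr}$ and recast via $\fintegral$. The paper's own proof is just a terser version of this same assembly, so no further comment is needed.
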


\begin{remark}
  \label{vacuum:l=0}
  Expressions \eqref{eq:Ppp_in}-\eqref{eq:gg} also hold in vacuum (i.e.
  when $\Eb = \Pb = \Epp = \Ppp=0$). In this case, 
  \eqref{eq:Ppp_in} with \eqref{eq:ff} and \eqref{def:fintegral} imply
  \begin{align}
    \frac{2}{r^2}\qhat_0+e^{-\lambda}\sigma'\nu'
    -\frac{4}{r^2} \F=0.  \label{integral:vacuum}
  \end{align}
  Conversely, one easily checks that if $\Pb=\Eb=0$ and
  \eqref{integral:vacuum} holds then $\Ppp = \Epp =0$.
\end{remark}

\begin{proof}
  All the statements not involving $\Ppp$ or $\Epp$ have  been
  already established except for (\ref{eq:hhat}), which is is a direct consequence of 
  (\ref{equation1}), and the equivalence of the PDE
  (\ref{Field})  with 
  (\ref{eq:uhat0})-(\ref{eq:uhatperp}), which  is a direct consequence of the
  splitting (\ref{eq:uhat}).
  
  The two remaining field equations in (\ref{eq:EFEs_common_H}),
  namely $(\Eq)_{rr}$ and $(\Eq)_{tt}$
  provide explicit algebraic expressions for $\Ppp$ and $\Epp$.
  The resulting expressions can be rewritten in the form given
  in \eqref{eq:Ppp_in}-\eqref{eq:Epp_in} with the definitions below them.
  \fin
\end{proof}

\begin{remark}\label{remark:barotropic}
    For this proposition the perturbation of the fluid has not been assumed to satisfy
any barotropic equation of state. 
\end{remark}

\subsubsection{Barotropic equation of state}
\label{sec:barotropic}
In this subsection we analyse assumption {\basebaro},
namely that the perfect fluid satisfies the equation of state of the background.
This assumption yields an additional constraint affecting only the $\ell=0$ sector
in the interior region.

The existence of a barotropic equation of state
is equivalent to demand (\ref{eq:2ndbarotropic}).
Since $\Pp=\Ep=0$ in the ``base'' scheme, this is simply
\begin{equation}
\Ppp\Eb'-\Epp\Pb'=0. \label{EOS_second_order}
\end{equation}
Given the expressions (\ref{eq:Ppp_in})-(\ref{eq:Epp_in}), this,
in turn, is equivalent to $\ff\Eb'-\gg\Pb'=0$. From the explicit forms
\eqref{eq:ff}-\eqref{eq:gg} and recalling that $\nu' = -2 \Pb' /(\Eb+\Pb)$, see \eqref{eq:pprimenuprime}, it follows that
\[
\ff \Eb'-\gg\Pb'= -2(\Eb+\Pb)^2\fintegral'.
\]
Thus, the barotropic equation of state yields a first integral
$\fintegral = \fintegralz \in \mathbb{R}$, or explicitly
\begin{align}
\frac{2}{r^2}\qhat_0+\chiefes(\Eb+3\Pb)\sigma+e^{-\lambda}\sigma'\nu'
-\frac{4}{r^2} \F = 2\fintegralz \chiefes(\Eb+\Pb),
\label{eq:fintegral}
\end{align}  
which provides an algebraic equation for $\qhat_0$ in terms of background and first order quantities as well as the free function $\sigma(r)$.
The derivation has been done in the interior domain $D^+$. However,
by Remark \ref{vacuum:l=0} this equation also holds in $D^-$ for any constant
$\fintegralz^-$. Following our convention, we set $\fintegralz := \{ \fintegralz^+,
\fintegralz^-\}$ and work with both domains at the same time.

In terms of $\fintegralz$, expressions \eqref{eq:ff}-\eqref{eq:gg} simplify to
\begin{align}
  \ff & =2(\Eb+\Pb)\left(\fintegralz+\frac{1}{3}e^{-\nu}r^2(\opertbase-\Omegaperbase)^2\right),\label{eq:Ppp}\\
  \gg & =-\frac{4}{\nu'}\Eb'\left(\fintegralz+\frac{1}{3}e^{-\nu}r^2(\opertbase-\Omegaperbase)^2\right)
\label{eq:Epp}.
\end{align}
Observe that these expressions only involve background and first order quantities. Replacing back into  (\ref{eq:ff}) and (\ref{eq:gg}), we can also simplify
$\Ppp$ and $\Epp$. It is convenient to write them
in terms of the original (non-hatted) function $h$ (see \eqref{def:hat_functions}). The result is
\begin{align}
  \Ppp
  &= - 2(\Eb+\Pb)\left(h - \fintegralz + \frac{1}{3}e^{-\nu}r^2(\opertbase-\Omegaperbase)^2 \left ( P_2(\cos\theta) - 1 \right )  \right), 
  \label{eq:Ppp_h}\\
  \Epp & =  \frac{4}{\nu'}\Eb'
         \left(h  - \fintegralz + \frac{1}{3}e^{-\nu}r^2(\opertbase-\Omegaperbase)^2 \left ( P_2(\cos\theta) - 1 \right ) \right). \label{eq:Epp_h}
\end{align}

  Under a change of gauge in $\{\gaugeCABYa\}$, i.e. \eqref{gauge_first} and \eqref{gauge_second},
  $h$ changes as \eqref{hg}, while $\Ppp$, taking into account that
  $\Pp=0$, does as $\Ppp{}^g=\Ppp-\Y\nu'(\Eb+\Pb)$, c.f. \eqref{eq:gauge_pressure}.
  Substracting equation \eqref{eq:Ppp_h} and its gauged counterpart we thus have
  \[
    \Y\nu'(\Eb+\Pb)=\Ppp-\Ppp{}^g=2(\Eb+\Pb)\left(h^g-h-\fintegralz^g+\fintegralz\right)
    =2(\Eb+\Pb)\left(\frac{1}{2}A+\frac{1}{2}\Y\nu'-\fintegralz^g+\fintegralz\right).
  \]
  Therefore, because $\Ebc+\Pbc\neq 0$ by assumption,
    the first integral in the interior $\fintegralz^+$ is gauged transformed by
  \begin{equation}
    \fintegralz^+{}^g=\fintegralz^++\frac{1}{2}A^+.
    \label{eq:gauge_fintegral}
  \end{equation}

Since $\opertbase_+(r)\in C^{2}([0,\ro])$ and  $h_+ \in C^{m+1}(\mmm^+\setminus\centresph_0)$
and bounded near $\centresph_0$, it follows that $\Ppp$ is also bounded near the centre and,
\emph{as long as the limit of $h$ at $r=0$ exists} it holds
\begin{equation}
  \lim_{r\to 0}\Ppp=2(\Ebc+\Pbc)\left(\fintegralz^+- \lim_{r\to 0}h\right)
                      \quad \quad \mbox{ if }  \lim_{r\to 0}h \mbox{ exists}. \label{eq:Ppp_c}
\end{equation}
Let us advance here the limit of $h$ will exist as a consequence of the field equations. This will be discussed in Section \ref{sec:vhat_l_0}.

Tackling the problem for the $\ell=0$ sector means taking care of
the functions $\vhat_0(r)$, $\qhat_0(r)$ and  $\sigma(r)$.
From Proposition \ref{prop:field_equations}, $\vhat_0, \sigma$ must satisfy
(\ref{eq:uhat0}) with (\ref{eq:mu0}) and
the barotropic EOS forces $\qhat_0$ to satisfy (\ref{eq:fintegral}) in
both domains $D^{\pm}$.
The key  
is to introduce a change of unknowns and
replace the pair $\{\vhat_0(r),\sigma(r)\}$ in terms of two functions
$\{ \newvhat(r), \newsigma(r)\}$ by means of
\begin{align}
  &\vhat_0=\frac{1}{2}\newvhat\left(2+r\nu'\right)+\fintegralz,\label{eq:newvhat}\\
  &\nu'r\sigma
    + 2e^{\lambda}\left(\newvhat+ \newsigma\frac{1}{2+r\nu'}\right)=\fintegralz(r\nu' -2).\label{eq:newsigma}
\end{align}
This change of functions is invertible  because 
$2+r \nu'=1+e^\lambda(1+r^2\chiefes\Pb)$ by \eqref{eq:nuprime}
and the right-hand side is 
everywhere positive.

The function $\qhat_0$ is obtained from  the barotropic EOS condition
(\ref{eq:fintegral}). In terms  of the new variables and replacing also $\Eb, \Pb$ from the background equations \eqref{eq:lambdaprime}-\eqref{eq:nuprime} this gives
\begin{equation}
  \qhat_0=\frac{1}{2}r\newvhat(\lambda'+\nu')+(r\newvhat)'+\fintegralz
  +\newsigma\frac{\nu'^2 r^2+2 e^{\lambda}}{(2+r\nu')^2}+\newsigma'\frac{r}{2+r\nu'}+2\F.\label{eq:qhat_zeta}
\end{equation}
We now insert this in (\ref{eq:uhat0}) and apply the change
\eqref{eq:newvhat} and \eqref{eq:newsigma}. A long but straightforward   calculation
that uses \eqref{eq:pprimenuprime} and \eqref{eq:f_omega}
gives
\begin{align}
  &r^2\newsigma{}''+ \frac{1}{2} r^2 \left ( \nu' + \lam' -4 \frac{ \nu''}{\nu'} \right )\newsigma{}'
    + 2e^{\lambda}  \newsigma  =
    - r^3e^{-\nu}\left(2(\lambda'+\nu')(\opertbase-\Omegaperbase)^2-\opertbase'{}^2r\right)
    -4\inhomo_2,
    \label{eq:for_zeta}
\end{align}
where $\inhomo_2$ was given in \eqref{eq:mu2}.

Equation \eqref{eq:for_zeta}
is remarkable for several reasons. First of all it
involves only the unknown function $\newsigma(r)$ (this means that the function  $\newvhat(r)$ is completely unrestricted).
Moreover, the homogeneous part of the equation is identical to the one in (\ref{eq:uhat0})
in terms of $\vhat_0$. This is quite unexpected, given the rather involved change of functions
\eqref{eq:newvhat}-\eqref{eq:newsigma}. Moreover, the inhomogeneous term
in \eqref{eq:for_zeta} involves {\it only} background and first order quantities, unlike (\ref{eq:uhat0}) which also involves unknowns.
It is also interesting that this inhomogeneous term is directly related to $\inhomo_2(r)$, which appeared in the  $\ell=2$ sector of the equations.

This equation for $\newsigma$ is the key object that will allow us in
Section \ref{sec:vhat_l_0}
to obtain existence and uniqueness of the barotropic base scheme.

\subsection{Matching conditions for the base perturbation scheme}
\label{sec:base_matching}
In this section we find the necessary and sufficient conditions
that the base perturbation scheme must satisfy so that 
the second order
perturbed matching at the boundary of the fluid ball are satisfied.
The perturbed matching conditions derived in 
\cite{Battye01,Mukohyama00} (first order)
and \cite{Mars2005}  (second order) are summarized and explained in Appendix 
\ref{app:staxi_perturbed_matching}, where we also
determine the most general matching conditions in a 
spherically symmetric static background with two regions,
as defined in \ref{SpherRegion},
for a first perturbation tensor $\Kper = -2 \Q^2 \omega(r,\theta) 
dt d \phi$ 
and $\Kperper$ of the general form (\ref{Kperper}).
The results are purely geometric and do not rely on any field equations.
Moreover,
they extend the matching conditions obtained in \cite{ReinaVera2015}
in that the matching hypersurface $\Supfamp$ is not assumed to be axially symmetric,
and are in turn generalised to a still unfixed function $\Q(r)$.
Throughout this section we use the notation of Appendix
\ref{app:staxi_perturbed_matching}. 

The base perturbation scheme fits into the setup of Appendix
\ref{app:staxi_perturbed_matching} as the particular case where
$\omega(r,\theta) = \opertbase(r)$, $\Q(r)=r$  and matching
hypersurface $\Supfamp$ located at $r_+ = r_- = a$.
Proposition \ref{teo:first_order_matching} states that 
$\opertbase$ must satisfy
\begin{align}
  &\left[\opertbase \right]=b_1, \quad b_1 \in \mathbb{R}, \quad \quad \quad 
\quad \quad
\left[\opertbase'\right]=0. \label{omega_matching}
\end{align}
Let us recall that in the base scheme we have further fixed the first order
gauges so that $b_1=0$.

In addition, the
first order deformation functions $\Qone^{\pm}(\tau,\vartheta,\varphi)$ 
on either side satisfy the conditions listed in (\ref{teo:Q1_matching_R}). The explicit forms
of $\B$ and $\A$, defined  in (\ref{DefUpsilon}), are
\begin{align*}
  \B = \frac{1}{2} e^{\nu- \lambda} \left ( \nu'' + \frac{1}{2} \nu'
  \left ( \nu' - \lambda' \right ) \right ), \quad \quad \quad
  \A = \frac{1}{2} r e^{-\lambda} \lambda'.
\end{align*}
Therefore the conditions (\ref{teo:Q1_matching_R}) in our present setup become
\begin{align}
\left[\Qone\right]=0,\qquad 
\Qone[\lambda']=0\qquad  \Qone[\nu'']=0, \label{Q1_matching}
\end{align}
after using that $\normal = - e^{- \frac{\lambda}{2}} \partial_r$ and that
the background matching conditions are $[r] = [ \nu] =
[\lambda ] = [ \nu'] =0 $ (see \eqref{background_matching}).
Moreover, by (\ref{Bp})-(\ref{nupp})) these equations are equivalent to
\begin{equation}
\left[\Qone\right]=0,\qquad \Qone[\Eb]=0.
\label{eq:mc:Q1}
\end{equation}

The second order matching conditions for the base perturbation scheme 
are obtained from
Proposition \ref{teo:second_order_matching} with
$\omega(r,\theta) = \opertbase(r)$ and $\Q(r)=r$. 
It follows
\begin{align*}
\Qone^2 [ \normal(\A) ] = \Qone^2 \left [ \frac{1}{2} e^{- 3 \lambda/2}
\left ( - r\lambda'' + \lambda' (r \lambda' - 1) \right ) \right ] 
= - \frac{\ro}{2} 
e^{-3\lambda(\ro)/2} \Qone^2 [ \lambda''], 
\end{align*}
where we used (\ref{Q1_matching}) (and the obvious fact that 
$\Qone^2 [a] = \Qone^2[b] =0 \Longrightarrow \Qone^2 [ab] =0$).
By a slightly longer, but analogous,  calculation one finds
\begin{align*}
\Qone^2 \left [ \frac{\nu'}{r} \normal(\A) 
+ \frac{2}{e^{\nu}} \normal(\B) \right ] = - e^{-3 \lambda(\ro)/2} \Qone^2 [ \nu'''].
\end{align*}
Using this and rewriting $\{h,k,m\}$ in terms 
of $\{\vhat,\hhat,\qhat\}$ as defined in
(\ref{def:hat_functions}), all the terms involving $f$ in 
the matching conditions (\ref{defor})-(\ref{Propnh})  drop out.
The result is 
\begin{align}
&  \left [ \defor \right ]   =  \ro e^{\lambda(\ro)/2} 
  \left (
  2 c_0 +
  (2 c_1 + H_1) \cos \vartheta  \right ),\label{common:mc:defor} \\
&  \left[\Wtwo \right]=  D_3,\label{common:mc:W2}\\
&   \left[\Wtwo_{,r} \right]= 2e^{-\lambda(\ro)/2} \Qone[\opertbase''],
\label{common:mc:W2p} \\
&[ \vhat - \hhat ] =  c_0 + c_1 \cos \vartheta,
\label{common:mc:vh} \\
&[ \hhat ] = 
\frac{1}{2} \left ( H_0 + \ro \nu'(\ro) c_0  \right )
+ \frac{1}{4} \ro \nu'(\ro) \left ( H_1 + 2 c_1 \right ) \cos \vartheta,
\label{common:mc:h} \\
&\left [ \qhat + \hhat - 2 \vhat - r ( \vhat - \hhat)_{,r} \right ]
=  \left ( H_1 - \frac{1}{2} e^{\lambda(\ro)} \left ( 2 c_1 + H_1
\right ) \right ) \cos \vartheta  \nonumber \\
& \quad \quad \quad \quad \quad \quad 
\quad \quad \quad \quad \quad \quad 
+ \frac{1}{2} \left [ 
\defor e^{-\lambda/2} \left ( \frac{\lambda'}{2} - \frac{1}{r}
\right ) \right ]
- \frac{1}{4} e^{- \lambda(\ro)} \Qone^2 \left [ \lambda'' \right ], 
\label{common:mc:q} \\
&- [ \hhat_{,r}] + \frac{\ro \nu'(\ro)}{2}
[ \vhat_{,r} - \hhat_{,r} ]
+ \nu'(\ro) \left ( 1 - \frac{\ro \nu'(\ro)}{2} \right )
[ \vhat - \hhat ] 
=  
- \frac{1}{2} \nu'(\ro) 
\left ( \left ( 1 - \frac{\ro \nu'(\ro)}{2} \right )
H_1 \right .  \nonumber \\
& \quad \quad \quad \quad \left .  - \frac{1}{2} e^{\lambda(\ro)} \left ( 2 c_1 + H_1 \right ) \right)
\cos \vartheta
- \frac{1}{4} \left [ \defor e^{- \lambda/2}
\left ( \nu'' + \nu'{}^2 - \frac{\nu'}{r}
\right ) \right ]
+ \frac{1}{4} e^{-\lambda(\ro)} \Qone^2 
\left [ \nu''' \right ].
\label{common:mc:hprime} 
\end{align}
Specifically, Proposition \ref{teo:second_order_matching} states that 
the matching conditions are satisfied if and only if
there exist constants $c_0$, $c_1$, $H_0$, $H_1$ and $D_3$
and functions $\defor^{\pm}$ on $\Supfamp$ such that
(\ref{common:mc:defor})-(\ref{common:mc:hprime}) hold. So far no field 
equations have been used. 
In the next proposition we determine
the matching conditions when the field equations hold. 
\begin{proposition}[\textbf{Perturbed matching}]
  \label{Prop:matching}
Assume the setup of the base perturbation scheme ({\baseback}-{\baseKperper}).
Restrict to the class of gauges $\{\gaugeABY\}$,
c.f. Notation \ref{def:gauge_classes}, at both
sides $\mmm^+$ and $\mmm^-$ so that
the results of Proposition \ref{prop:field_equations} hold.

Then the second order matching
conditions across $\Supfamp = \{r=\ro\}$ are satisfied  if and only if
there exists contants $D_3, c_0, H_0$ such that 
\begin{align}
  \left[\Wtwo \right]  = &  D_3,\label{final:W2}\\
   \left[\Wtwo_{,r} \right]  = & 0,
\label{final:W2p} \\
[ \vhat  ]  = &  \frac{H_0}{2} + \left ( 1 + \frac{\ro \nu'(\ro)}{2} \right )
c_0,
\label{final:uhat} \\
 [ \vhat_{,r} ]  = &  
\left ( \frac{1}{\ro} + \frac{\nu'(\ro)}{2} \right ) 
\left ( [\qhat_0] - \frac{H_0}{2} + 
\frac{1}{3} e^{-\nu(\ro)} \ro^4\chiefes \Eb_+(\ro) (\opertbase_+(\ro)-\Omegaperbase)^2
P_2 (\cos \vartheta) 
\right ) \nonumber \\
& - \left ( \frac{e^{\lambda(\ro)}}{\ro} + \frac{1}{2} \ro \nu'(\ro){}^2 \right ) 
c_0, \label{final:uhatp}
\\
[ \sigma ] = & \left ( \frac{1}{2} - \frac{1}{\ro \nu'(\ro)} \right )
H_0 - \frac{2 e^{\lambda(\ro)}}{\ro \nu'(\ro)} c_0, \label{final:sigma} \\
[ \sigma^{\prime}] =&  \frac{1}{2} \chiefes \Qone^2 
E^{\,\prime}_+ (\ro)  - \frac{2 e^{\lambda(\ro)}}{\ro^2 \nu'(\ro)} [\qhat_0]
\quad \quad  \quad \quad \quad
\mbox{\underline{provided}} \quad E_+(\ro) =0.
\label{final:sigma_r} 
\end{align}
\end{proposition}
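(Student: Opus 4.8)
The plan is to insert into the purely geometric matching conditions \eqref{common:mc:defor}--\eqref{common:mc:hprime} — which already incorporate the first order conditions \eqref{omega_matching}, \eqref{eq:mc:Q1}, the cancellation of all $f$-terms and the $\normal(\A)$, $\normal(\B)$ identities established just above — the on-shell expressions \eqref{eq:qhat}--\eqref{eq:hhat} furnished by Proposition \ref{prop:field_equations}. In the gauge class $\{\gaugeABY\}$ these functions are $\stat$- and $\axial$-invariant and time independent, with $\qhat$ and $\hhat$ carrying only $\ell=0$ and $\ell=2$ Legendre content and $\vhat=\vhat_0+\vhat_2 P_2+\vhat_\perp$ carrying in addition the $\ell\ge3$ part $\vhat_\perp$. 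I would then proceed in three steps: (i) project each condition onto Legendre sectors to pin down the auxiliary constants $c_1,H_1$ and to control the deformation functions $\defor^\pm$; (ii) use the background jump identities \eqref{Bp}--\eqref{nuppp} together with $\Qone[\Eb]=0$ to rewrite every remaining jump through $\Eb_+(\ro)$ and $\Eb'_+(\ro)$; (iii) collect the $\ell=0$ information and trade $[\hhat]$, $[\hhat_{,r}]$ for $[\sigma]$, $[\sigma']$ by \eqref{eq:hhat}, arriving at \eqref{final:W2}--\eqref{final:sigma_r} with $D_3,c_0,H_0$ as the surviving free constants.

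For step (i): the $\ell=1$ parts of the left-hand sides of \eqref{common:mc:vh} and \eqref{common:mc:h} vanish, so since $\nu'(\ro)\neq0$ (Section \ref{Sect:BackFE}) one gets $c_1=0$ and $H_1=0$, and \eqref{common:mc:defor} then fixes $[\defor]=2\ro e^{\lambda(\ro)/2}c_0$ to be constant. The right-hand side of \eqref{common:mc:h} being constant forces the $\ell\ge2$ part of $[\hhat]$ to vanish: its $\ell\ge3$ part gives $[\vhat_{\perp,r}]=0$ and hence, via \eqref{common:mc:vh}, $[\vhat_\perp]=0$; its $\ell=2$ part gives $[\vhat_{2,r}]=\bigl(\tfrac1\ro+\tfrac{\nu'(\ro)}2\bigr)[\F]$ and, via \eqref{common:mc:vh}, $[\vhat_2]=0$. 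In particular $[\vhat]=[\vhat_0]$ is constant and equals $[\vhat-\hhat]+[\hhat]=c_0+\tfrac12H_0+\tfrac12\ro\nu'(\ro)c_0$, which is \eqref{final:uhat}; and \eqref{common:mc:W2} is \eqref{final:W2} verbatim with the same $D_3$.

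Step (ii) is where the fluid enters. Writing \eqref{eqfo} in divergence form and using $[\opertbase]=[\opertbase']=0$, $j_-'=0$ and $[\lambda'+\nu']=[\lambda']=\ro e^{\lambda(\ro)}\chiefes\Eb_+(\ro)$, one obtains $[\opertbase'']=0$ whenever $\Eb_+(\ro)=0$; since $\Qone\Eb_+(\ro)=0$ forces $\Qone=0$ otherwise, in every case $\Qone[\opertbase'']=0$, and \eqref{common:mc:W2p} becomes \eqref{final:W2p}. The same bookkeeping on \eqref{eq:f_omega}, using $\lambda'_-+\nu'_-=0$ on the vacuum side, gives $[\F]=\tfrac13 e^{-\nu(\ro)}\ro^4\chiefes\Eb_+(\ro)(\opertbase_+(\ro)-\Omegaperbase)^2$, which turns the relation for $[\vhat_{2,r}]$ above into the $\ell=2$ part of \eqref{final:uhatp}. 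For the $\Qone^2$-terms, \eqref{Bp}--\eqref{nuppp} together with $\Qone^2[\lambda']=\Qone\cdot\Qone[\lambda']=0$ give $\Qone^2[\lambda'']=\ro e^{\lambda(\ro)}\chiefes\Qone^2\Eb'_+(\ro)$ and $\Qone^2[\nu''']=\bigl(1+\tfrac{\ro\nu'(\ro)}2\bigr)e^{\lambda(\ro)}\chiefes\Qone^2\Eb'_+(\ro)$, which is exactly the $\Qone^2$-contribution appearing in \eqref{final:sigma_r}; this also explains the proviso $\Eb_+(\ro)=0$, the only situation in which $\Qone$ need not vanish.

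Finally, step (iii): substitute $\hhat$ and $\hhat_{,r}$ from \eqref{eq:hhat} into \eqref{common:mc:q} and \eqref{common:mc:hprime}. The deformation functions enter these two relations only through $\defor^-\Eb_+(\ro)$ — because $[\defor]$ is already fixed and the jumps $[e^{-\lambda/2}(\tfrac{\lambda'}2-\tfrac1r)]$ and $[e^{-\lambda/2}(\nu''+\nu'{}^2-\tfrac{\nu'}r)]$ are both proportional to $[\lambda']\propto\Eb_+(\ro)$ by \eqref{Bp} and \eqref{nupp} — so one independent linear combination of \eqref{common:mc:q}, \eqref{common:mc:hprime} eliminates $\defor^-$ and produces a genuine condition on the physical quantities, while the complementary combination merely fixes $\defor^\pm$ and, being always solvable, imposes nothing further; when $\Eb_+(\ro)\neq0$ the $\Qone^2$-terms are absent since $\Qone=0$. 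Reading off the $\ell=0$ sectors of this surviving condition together with \eqref{common:mc:vh}, \eqref{common:mc:h}, and using \eqref{eq:hhat} once more to express $[\hhat]$ through $[\sigma]$ and $[\hhat_{,r}]$ through $[\sigma']$, produces the $\ell=0$ part of \eqref{final:uhatp} together with \eqref{final:sigma} and \eqref{final:sigma_r}. I expect the main obstacle to lie in this last step: one must verify that the combination prescribed by Proposition \ref{teo:second_order_matching} is exactly the $\defor$-free one, and then carry along the $\ell=0$ bookkeeping of the background equations \eqref{eq:lambdaprime}--\eqref{eq:pprimenuprime} and \eqref{eqs_back_vacuum} to reproduce the precise coefficients — a lengthy but mechanical computation once the Legendre-sector structure above is in place.
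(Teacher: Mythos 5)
Your route is essentially the paper's own: project the geometric conditions \eqref{common:mc:defor}--\eqref{common:mc:hprime} onto Legendre sectors, use the absence of $\ell=1$ content in $\hhat$ and $\vhat-\hhat$ (from Proposition \ref{prop:field_equations}) to force $c_1=H_1=0$, evaluate $\Qone[\opertbase'']$, $[\F]$, $\Qone^2[\lambda'']$ and $\Qone^2[\nu''']$ via the background jump identities \eqref{Bp}--\eqref{nuppp} together with $\Qone[\Eb]=0$, and then eliminate the undetermined deformation function between \eqref{common:mc:q} and \eqref{common:mc:hprime}. The paper does the last step by writing the combination explicitly, $2\ro\,(\mbox{\ref{common:mc:hprime}})+(2+\ro\nu'(\ro))\,(\mbox{\ref{common:mc:q}})$, in which the $\defor$-factor becomes continuous and the $\Qone^2$-terms cancel by \eqref{nuppp}; your abstract elimination argument is equivalent, and your sector-by-sector bookkeeping of $[\vhat_\perp]$, $[\vhat_2]$ and $[\vhat_{2,r}]$ correctly reproduces \eqref{final:uhat}--\eqref{final:uhatp}.

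The one place where your logic does not close is the proviso in \eqref{final:sigma_r}. You claim that the combination complementary to the $\defor$-free one ``merely fixes $\defor^{\pm}$ and, being always solvable, imposes nothing further'', and separately that the proviso $\Eb_+(\ro)=0$ is explained by this being ``the only situation in which $\Qone$ need not vanish''. Both statements are off, and taken together they contradict your own conclusion that \eqref{final:sigma_r} is produced at all. The coefficient of $\defor_-$ in \eqref{common:mc:q} is $\tfrac{1}{4}e^{-\lambda(\ro)/2}[\lambda']$, which is proportional to $\Eb_+(\ro)$ by \eqref{Bp}. Hence \eqref{common:mc:q} is solvable for $\defor_-$ --- and imposes nothing --- precisely when $\Eb_+(\ro)\neq 0$; when $\Eb_+(\ro)=0$ the $\defor_-$ term drops out and \eqref{common:mc:q} becomes a genuine constraint, which, after substituting $\hhat_{,r}$ from \eqref{eq:hhat} with $\vhat_{,rr}$ eliminated via \eqref{equation2}, is exactly \eqref{final:sigma_r}. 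This dichotomy is also the entire content of the sufficiency (``if'') direction of the proposition, which your plan covers only by the incorrect phrase ``always solvable'': one must verify that for $\Eb_+(\ro)\neq 0$ the single remaining equation can indeed be solved for $\defor_-$, and that for $\Eb_+(\ro)=0$ it instead yields \eqref{final:sigma_r}. Once that case split is stated correctly, the rest of your outline goes through and coincides with the paper's proof.
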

\begin{remark}
Note that in the case $E_+(\ro)=0$ and $E_+'(\ro) \neq 0$,
 the matching condition (\ref{final:sigma_r}) forces the 
first order deformation function $\Qone$ to be a constant on 
$\Supfamp$.
\end{remark}
 
\begin{proof}
We start by computing the linear combination
$2 \ro (\mbox{\ref{common:mc:hprime}} )   + (2 + \ro \nu'(\ro) )
 (\mbox{\ref{common:mc:q}} )$. This  is advantageous because
the factor involving $\defor$ becomes, after inserting
$\nu''$ from the
background equation
(\ref{eq:pprimenuprime}),
\begin{align*}
\frac{\ro}{2} \left [ \defor e^{- \lambda/2} 
\left ( - \nu' \left ( \frac{1}{r} + \frac{\nu'}{2} \right ) 
+ \frac{2 e^{\lambda}- 4}{r^2}  \right ) \right ] 
= \left . e^{- \lambda/2}  \left (
- \frac{\nu'}{2} \left ( 1 +  \frac{r\nu'}{2} \right )
+ \frac{e^{\lambda}- 2}{r}  \right ) 
\right |_{r=\ro}  \left [ \defor \right ],
\end{align*}
the equality being true because the term in parenthesis is  continuous
across $\Supfamp$.
Another simplification occurs with  the terms involving $\Qone^2$, which
become
\begin{align*}
\frac{\ro}{2} e^{- \lambda(\ro)} \Qone^2 
\left [ \nu''' 
- \left ( \frac{1}{r}  + \frac{\nu'}{2} \right ) \lambda'' \right ],
\end{align*}
and this  is zero as a consequence of (\ref{nuppp}).
The explicit form of the linear combination is
\begin{align}
& ( 2 + \ro \nu'(\ro)) [ \qhat - \vhat] + 
\left ( 2 - \ro \nu'(\ro) + \ro^2 \nu'{}^2(\ro) \right )
[ \hhat - \vhat ]
- 2 \ro [ \vhat_{,r}] = \label{replacement}\\
& 
\left . \left (\left ( 2 + \frac{r^2 \nu'^2}{2}\right )
H_1 - e^{\lambda} ( 2 c_1 + H_1 ) \right ) \right |_{r=\ro} \cos \vartheta
+ \left . e^{- \lambda/2} 
\left ( 
- \frac{\nu'}{2} \left ( 1 + \frac{r \nu'}{2} \right ) 
+ \frac{e^{\lambda} -2}{r} 
\right )  \right |_{r=\ro}[ \defor]. \nonumber 
\end{align}
Thus, the matching conditions to be satisfied are 
(\ref{common:mc:defor})-(\ref{common:mc:q}) and (\ref{replacement}). In all
of them $\hhat$ and $\qhat$ are to be understood as short-hands of
the explicit expressions given in (\ref{eq:qhat}) and (\ref{eq:hhat}).

We next show the necessity of (\ref{final:W2})-(\ref{final:uhat}).
  There is nothing to prove in
  (\ref{final:W2}).  For (\ref{final:W2p}) we need to determine
  $[\opertbase'']$. Since $\lambda_-' + \nu'_{-}=0$ and
  $[\opertbase'] =0$ we compute from
  (\ref{eqfo})
  \begin{align*}
    [\opertbase''] = 
    ( \lambda'_{+} + \nu'_{+} )
    \left ( \frac{1}{2} \opertbase_+'
    + \frac{2}{\ro} (\opertbase_+ - \Omegaperbase) \right )
    = [\lambda ']     \left ( \frac{1}{2} \opertbase_+'
    + \frac{2}{\ro} (\opertbase_+ - \Omegaperbase) \right ).
  \end{align*}
  Since $\Qone [\lambda'] = 0$, (\ref{final:W2p}) follows at once from 
  (\ref{common:mc:W2p}).  For the rest of expressions, we first observe that neither $\vhat - \hhat$ nor $\hhat$ have terms
$\ell=1$ in the decompositions \eqref{eq:uhat} and \eqref{eq:hhat}. 
Thus 
(\ref{common:mc:vh}) and (\ref{common:mc:h}) force $c_1= H_1 =0$.
Expression (\ref{final:uhat}) 
is then an immediate consequence of (\ref{common:mc:vh})
and (\ref{common:mc:h}).
Concerning (\ref{final:uhatp}), we  substitute
(\ref{common:mc:defor}), (\ref{common:mc:vh}), 
(\ref{common:mc:h}) into  (\ref{replacement}) to find
\begin{align*}
[ \vhat_{,r} ] =
\left . \left ( \frac{1}{r} + \frac{\nu'}{2} \right ) \right |_{r=\ro}
\left ( [\qhat] - \frac{H_0}{2} \right )
- \left . \left ( \frac{e^{\lambda}}{r} + \frac{1}{2} r \nu'{}^2 \right ) \right |_{r=\ro}
c_0,
\end{align*}
and this transforms into (\ref{final:uhat}) after inserting 
$\qhat = \qhat_0 + \F P_2(\cos \theta))$ and computing  the jump
of $\F$  from its
explicit expression in (\ref{eq:f_omega}) as
\begin{align}
  [\F]= &\frac{1}{6} e^{-(\lam(\ro)+\nu(\ro))} \ro^3 \left( 
          \ro [\opertbase'{}^2] + 2\left[ ( \lam^{\prime} + \nu^{\prime}) (\opertbase-\Omegaperbase)^2\right] \right)\nonumber\\
  =&\frac{1}{3} e^{\lambda(\ro)} \ro^4\chiefes \Eb_+(\ro) (\opertbase_+(\ro)-\Omegaperbase)^2,\label{dif_qhat2}
\end{align}
where in the second equality we used $[\opertbase'] =0$
as well as the general identity $ [AB] = A^+ [B] + B^- [A]$
applied to $A = (\opertbase -  \Omegaperbase)^2$ and
$B = \lambda' + \nu'$, together with (\ref{Bp})
and $\lambda'_{-} + \nu'_{-} =0$. Expression (\ref{final:sigma})
is obtained directly from (\ref{common:mc:h})  after
taking into accound that $\hhat$
is given by (\ref{eq:hhat}) and $[\vhat_{,r}]$ has already been computed.

Finally, we establish (\ref{final:sigma_r}). First of all we take the radial 
derivative of $\hhat$ defined (\ref{eq:hhat}) and replace $\vhat_{,rr}$ from (\ref{equation2}) to obtain
\begin{align}
\hhat_{,r} = & - \frac{\sigma'}{2} +
\frac{e^{\lambda}}{r^2 \nu'} \left ( \Delta_{\mathbb{S}^2} \vhat + 2 
\vhat \right ) - \frac{\vhat_{,r}}{2 \nu'} \left ( \frac{2 \nu''}{\nu'}
- (\lambda' + \nu') \right )
+ \frac{1}{r \nu'} \left ( \frac{1}{r} + \frac{\nu''}{\nu'} \right )
(\qhat_0 + \F P_2 )  \nonumber \\& 
- \frac{e^{\lambda}}{r^2 \nu'} (2 q_0 - \F )  + \frac{e^{-\nu}}{3 \nu'} r^2 \opertbase_{r}^2 ( 1 - P_2  ).
\label{derhr}
\end{align}
Under the assumption $[E] =0$, we have $[\nu''] =0$ 
and $[\lambda']=0$ (cf. (\ref{Bp}) and (\ref{nupp})). This implies that no term
in  (\ref{common:mc:q})
 involves jumps of products with two or more
discontinuous factors. For instance, the term involving $\defor$ is
\begin{align*}
 \frac{1}{2} \left [ 
\defor e^{-\lambda/2} 
\left ( \frac{\lambda'}{2} - \frac{1}{r}
\right ) 
\right ]
= \left ( \frac{\ro \lambda'(\ro)}{2} - 1 \right ) c_0.
\end{align*}

Inserting (\ref{derhr}) into
(\ref{common:mc:q}) one can solve for $[\sigma']$. A straightforward, if somewhat long calculation, yields (\ref{final:sigma_r}) after using (\ref{final:uhat})-(\ref{final:sigma}), as well as
$[\lambda'']  = a e^{\lambda(a)} \chiefes E^{\, \prime}_{+}(\ro)$, cf. (\ref{Bpp}), and
\begin{equation*}
\lambda_{\pm}'(\ro) = - \nu'_{\pm}(\ro)= \ro^{-1} (1 - e^{\lambda(\ro)}),
\end{equation*}
which is a consequence of the  background field equations
(\ref{eq:lambdaprime})-(\ref{eq:nuprime}) under $E_{+}(\ro)=0$.

This proves the ``only if'' part of the proposition. To show suffiency we only need to 
care about (\ref{common:mc:q}) when $[E] \neq 0$, as this is the 
only equation
left out. We now have $[\lambda'] \neq 0$ and hence
the term involving $\defor$ in (\ref{common:mc:q}) becomes, after applying again the
identity $[AB] = A^+ [B] + B^- [A]$,
\begin{align*}
 \frac{1}{2} \left [ 
\defor e^{-\lambda/2} \left ( \frac{\lambda'}{2} - \frac{1}{r}
\right ) \right ]
= \frac{1}{2} [ \defor] e^{-\lambda(\ro)/2} \left ( \frac{\lambda'_{+}(\ro)}{2}
-\frac{1}{\ro} \right ) 
+ \frac{e^{-\lambda(\ro)/2}}{4} [ \lambda']  \defor_{-}.
\end{align*}
Thus, equation
(\ref{common:mc:q}) can be solved for $\defor_{-}$ and hence imposes
no additional restrictions on the matching. 
This concludes the ``if'' part of the 
proposition.
\finn
\end{proof}

\section{Existence and uniqueness results of the ``base'' second order global problem}
\label{sec:e_u_global}
We start by proving the following global decomposition result, for which we use
the analytic results discussed in Appendices \ref{App:Lemmas} and \ref{App:bocher-like}.
This proposition will be used later in several circumstances.

\begin{proposition}
\label{PDEresult}
Let $D = \mathbb{R}^3\setminus \{0\}$, $D^+ = \overline B_a \setminus \{0\}$ and
$D^- = D \setminus B_a$, where $B_a$ is the ball of radius $a>0$ centered
at the origin. Let $\{r,\theta,\phi\}$ be standard spherical coordinates on $D$.
Consider $\uhat \in C^2 (D^+) \cap C^2 (D^-) \cap C^1 (D)$,
invariant
under $\eta= \partial_{\phi}$ and satisfying the PDE
\begin{align}
r^2 \uhat_{,rr}  + r \Af(r) \uhat_{,r} + V(r) \left ( \Delta_{\mathbb{S}^2} \uhat + \gamma(r)  
\uhat \right )  =0
\label{PDEuhat}
\end{align}
on $D^+$ and $D^-$.
Assume that the functions $V(r)$, $\gamma(r)$, $\Af(r)$ satisfy
\begin{itemize}
\item[(i)] $V(r) \geq 0$, $\gamma(r)$ is bounded from above,
\item[(ii)] the parts $V^-(r)$, $\gamma^-(r)$, $\Af^-(r)$ are $C^1([\ro,\infty))$ functions and $V^+(r)$, $\gamma^+(r)$, $\Af^+(r)$ extend to the origin
as $C^1([0,\ro])$ functions.
\item[(iii)] the following limits exist and are finite
\begin{align}
& \lim_{r \rightarrow 0} V^+(r) = \vzero, \quad \quad
\lim_{r \rightarrow \infty} V^-(r) = \vinfty, \quad \quad 
\lim_{r \rightarrow 0} \Af^+(r) = a_0,  \quad \quad
\lim_{r \rightarrow \infty}\Af^-(r) = a_{\infty}, \nonumber \\
& \lim_{r \rightarrow 0} \gamma^+(r) = \gamma_{0}, \quad \quad
\lim_{r \rightarrow \infty} \gamma^-(r) = \gamma_{\infty},
\label{limits}
\end{align}
with $\vzero >0$ and $\vinfty >0$.
\end{itemize}  

 Suppose, in addition, that
$\uhat$  is bounded in $D$. Define $\gamma_{\mbox{\tiny max}} := \sup_D \gamma$.
Then the following holds:
\begin{itemize}
\item If $\gamma_{\mbox{\tiny max}} < 0$ then $\uhat =0$. 
\item
 If $\gamma_{\mbox{\tiny max}} \geq 0$ define $\ell_{\mbox{\tiny max}}$ as the largest natural number satisfying $\ell (\ell+1 ) \leq \gamma_{\mbox{\tiny max}}$. 
Then there exist functions
$\uhat_{\ell}(r) \in C^{2} ((0,a]) \cap C^2([a, \infty)) \cap C^1(0,\infty) \cap
L^{\infty}(0,\infty)$ with  $\ell \in \{ 0, \cdots, \ell_{\mbox{\tiny max}} \}$
 such that
\begin{align}
\uhat(r,\theta) = \sum_{\ell=0}^{\ell_{\mbox{\tiny max}}} \uhat_{\ell} (r) 
P_{\ell} (\cos \theta).
\label{expans}
\end{align}
\end{itemize}
\end{proposition}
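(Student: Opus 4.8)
The plan is to expand $\uhat$ on each sphere $S_r$ in the eigenfunctions of $\Delta_{\mathbb{S}^2}$ that are invariant under $\eta$, namely the Legendre polynomials $P_\ell(\cos\theta)$, and to show that only finitely many modes can survive because the ``mass term'' $\gamma(r)$ is bounded above. Concretely, define $\uhat_\ell(r):=\frac{2\ell+1}{4\pi}\int_{S_r}\uhat\, P_\ell\,\volform$ as in \eqref{fell}. Since $\uhat$ is $C^2$ in $r$ on $D^\pm$ and $C^2$ on each sphere (here one uses that $\uhat$ is bounded, hence $L^2$ on each $S_r$, and the standard elliptic/Fourier theory on $\mathbb{S}^2$), differentiation under the integral sign is justified and, projecting \eqref{PDEuhat} onto $P_\ell$ using $\int_{S_r}(\Delta_{\mathbb{S}^2}\uhat)P_\ell\volform=-\ell(\ell+1)\int_{S_r}\uhat P_\ell\volform$, one obtains that each $\uhat_\ell$ solves the ODE
\begin{align*}
r^2\uhat_\ell''+r\Af(r)\uhat_\ell'+V(r)\bigl(\gamma(r)-\ell(\ell+1)\bigr)\uhat_\ell=0
\end{align*}
on $(0,a]$ and on $[a,\infty)$, with $\uhat_\ell\in C^2$ there and $C^1$ across $r=a$, and bounded on $(0,\infty)$ because $|\uhat_\ell(r)|\le \frac{2\ell+1}{4\pi}\cdot 4\pi\sup_{S_r}|P_\ell|\cdot\sup_D|\uhat|$.

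The heart of the argument is then a maximum-principle / energy estimate showing that for $\ell$ large enough the only bounded solution of this ODE is $\uhat_\ell\equiv 0$. Precisely, if $\ell(\ell+1)>\gamma_{\mbox{\tiny max}}$ then $V(r)(\gamma(r)-\ell(\ell+1))\le 0$ everywhere (using $V\ge 0$ and $\gamma\le\gamma_{\mbox{\tiny max}}$), so the ODE has the form $\bigl(p(r)\uhat_\ell'\bigr)'=c(r)\uhat_\ell$ with $c(r)\ge 0$, where $p(r)=\exp\!\bigl(\int (\Af(r)-2)/r\,dr\bigr)r^2$ is a positive integrating factor. A bounded solution of such an equation cannot have an interior positive maximum or negative minimum unless it is constant on an interval, and then the sign of $c$ forces that constant to be zero wherever $c>0$; coupled with the $C^1$ matching at $r=a$ and a careful treatment of the two endpoints $r\to 0$ and $r\to\infty$ (where the exponents of the ODE are controlled by the limits $a_0$, $a_\infty$, $\gamma_0$, $\gamma_\infty$, $\vzero$, $\vinfty$ guaranteed by hypotheses (ii)--(iii)), one concludes $\uhat_\ell\equiv 0$. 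This is exactly where the existence and uniqueness results of Appendix \ref{App:bocher-like} — a Bôcher-type analysis of the indicial behaviour at the singular endpoints together with the uniqueness of the bounded solution — are invoked, and they furnish in particular that each surviving $\uhat_\ell$ lies in $C^2((0,a])\cap C^2([a,\infty))\cap C^1(0,\infty)\cap L^\infty(0,\infty)$.

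It remains to reconstruct $\uhat$ from its modes. By hypothesis $\uhat$ is bounded, so on each sphere $S_r$ the function $\theta\mapsto\uhat(r,\theta)$ lies in $L^2(\mathbb{S}^2)$ and (being $\eta$-invariant) has a convergent expansion $\sum_\ell \uhat_\ell(r)P_\ell(\cos\theta)$ in $L^2(S_r)$. The vanishing result just proved kills all modes with $\ell>\ell_{\mbox{\tiny max}}$, so the sum is finite; since a finite sum of $C^2$-in-$\theta$ functions converges in $C^2(S_r)$, the $L^2$ identity upgrades to a pointwise identity, giving \eqref{expans}. In the subcase $\gamma_{\mbox{\tiny max}}<0$ there is no admissible $\ell$ at all (even $\ell=0$ gives $c(r)=-V\gamma\ge 0$ with strict positivity somewhere by $\vzero>0$ and $\gamma_0<0$, forcing $\uhat_0\equiv 0$), so $\uhat=0$.

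The main obstacle is the analysis at the two singular endpoints. Away from $r=0$ and $r=\infty$ the ODE is regular and the maximum-principle argument is routine, but at $r=0$ the coefficient $r\Af(r)$ degenerates and at $r=\infty$ the equation is singular as well; one must show that ``bounded'' solutions there are genuinely the right class to rule out spurious second solutions, and that the matching at $r=a$ does not reintroduce them. This is precisely the technical content that the paper offloads to Appendix \ref{App:bocher-like}; assuming those results, the proof above goes through.
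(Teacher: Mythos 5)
Your proposal is correct and follows essentially the same route as the paper: project onto the Legendre modes $\uhat_\ell$, observe that each solves the radial ODE and is bounded, invoke the global existence/uniqueness result of Appendix \ref{App:bocher-like} (Theorem \ref{theorem:global_r} with vanishing inhomogeneity and jumps, noting $b_0=\vzero(\gamma_0-\ell(\ell+1))<0$ and $b_\infty=\vinfty(\gamma_\infty-\ell(\ell+1))<0$ for $\ell>\ell_{\mbox{\tiny max}}$) to kill all high modes, and then upgrade the $L^2$ expansion on each sphere to a pointwise finite sum. The informal maximum-principle discussion you add is consistent with, but ultimately subsumed by, the appendix results you correctly defer to.
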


\begin{proof} For all $\ell \in \mathbb{N} \cup \{ 0 \} $
define $\uhat_{\ell} (r) := \frac{2 \ell +1}{4\pi} 
\int_{\mathbb{S}^2} \uhat  P_{\ell} \volform$. It is clear that this function
is $C^2$ on $\Ia^{+} := (0,a]$ as well as on $\Ia^-:=[a, \infty)$.
It is also $C^1$ on $(0,\infty)$. On $\Ia^{\pm}$ we compute
\begin{align*}
 r^2\frac{d^2 \uhat_{\ell}}{dr^2}  + r\Af(r) \frac{d \uhat_{\ell}}{dr}
& + V(r) \left ( \gamma(r) - \ell (\ell+1) \right ) \uhat_{\ell} =  \\
& \frac{2 \ell+ 1}{4 \pi}
\int_{\mathbb{S}^2} \left (r^2 \uhat_{,rr}
+ r\Af(r) \uhat_{,r} + V(r) (\gamma(r) - \ell(\ell+1)  )
\uhat \right ) P_{\ell} \volform  = \\
&  \frac{2 \ell+ 1}{4 \pi}
\int_{\mathbb{S}^2} - V(r) \left ( \Delta_{\mathbb{S}^2} \uhat
+ \ell (\ell+1) \uhat \right ) P_{\ell} \volform =0,
\end{align*}
where in the second equality we used the
PDE (\ref{PDEuhat}) and in the last one we integrated by parts twice
and used $\Delta_{\mathbb{S}^2} P_{\ell} = - \ell(\ell+1) P_{\ell}$. Thus,
$\uhat_{\ell}$ satisfies the following ODE on $\Ia^{\pm}$
\begin{align}
 r^2\frac{d^2 \uhat_{\ell}}{dr^2} + r \Af(r) \frac{d \uhat_{\ell}}{dr}
+ V(r) \left ( \gamma(r) - \ell (\ell+1) \right ) \uhat_{\ell} = 0.
\label{ODEuhatell}
\end{align}
Boundedness of $\uhat$ on $D$ implies that $\uhat_{\ell}(r)$ is bounded
on $(0,\infty)$.

It suffices to apply Theorem \ref{theorem:global_r}
to the problem for
$\{\uhat^+_{\ell},\uhat^-_{\ell}\}$ with
\[
b_0=\vzero(\gamma_0-\ell(\ell+1),\qquad
b_\infty=\vinfty(\gamma_\infty-\ell(\ell+1)),
\]
$\inhomo^\pm=0$ and $\czero=\cone=0$
to ensure that 
if $b_0,b_\infty<0$ then 
the only bounded solution is the trivial one.
Restrict $\ell$ to satisfy $\ell > \ell_{\mbox{\tiny max}}$. Since, by definition
of supremum, $\gamma_{\mbox{\tiny max}} \geq \gamma_0$
and $\gamma_{\mbox{\tiny max}} \geq \gamma_\infty$ it follows
 $\ell (\ell +1 ) > \gamma_0$ and  $\ell (\ell +1 ) > \gamma_\infty$
and thus $b_0,b_\infty<0$ because $\vzero,\vinfty>0$ by assumption.
To sum up, if $\ell>\ell_{\mbox{\tiny max}}$ then $\uhat_{\ell}^\pm=0$.

At any $r >0$, the function $\uhat |_{S_{r}}$ is $C^2$ and invariant under 
$\partial_{\phi}$. Thus, it can uniquely decomposed as
\begin{align*}
\uhat |_{S_{r}} = \sum_{\ell=0}^{\infty} \uhat_{\ell}(r) P_{\ell} 
\end{align*}
where convergence is in $L^2$. All terms after $\ell_{\mbox{\tiny max}}$ are zero, so convergence is also 
pointwise and we conclude that $\uhat$ takes the form
\begin{align*}
\uhat (r, \theta):=  
\sum_{\ell=0}^{\ell_{\mbox{\tiny max}}} \uhat_{\ell}(r) P_{\ell} (\cos \theta)
\end{align*}
as claimed in the Proposition.
\fin
\end{proof}

\begin{remark}
It is clear from the proof that the condition that $\uhat$ is independent
of $\phi$ can be dropped, at the expense that the decomposition in this
case is in terms
of all spherical harmonics of order  $\ell \le \ell_{\mbox{\tiny max}}$
and not just the Legendre polynomials.
\end{remark}

\subsection{Global problem for $\Wtwo$: existence and uniqueness}
\label{subsection_Wtwo}
In this subsection we study
the  existence, uniqueness and structural properties of the function
$\Wtwo$, which  is restricted to be bounded and satisfy   
the PDE (\ref{Eqtphi_full}) on each side $\mmm^\pm$
together with the matching conditions \eqref{final:W2} and \eqref{final:W2p}.

By item \baseKperper .1 of the base perturbation scheme 
the one-form $\Wtwo \axialfSph \in C^{m+1} (D^+) \cap C^{m+1} (D^-)$ with $m\geq 2$ and hence on 
each $S_r$. We use the terminology introduced in Notation \ref{Notation:Sr}.
By the Hodge decomposition on the sphere, there exist two
functions $\tau, \Gg$ on each $S_r$ satisfying (recall that $\Omegaperperbase=0
$ on $D^-$) 
\begin{align}
(\Wtwo -\Omegaperperbase) \axialfSph = d \mathcal{H} + \star_{\mathbb{S}^2} d \Gg.
\label{HodgeDecom}
\end{align}
Since $\D_A ((\Wtwo - \Omegaperperbase) \eta^A)= \eta(\Wtwo) =0$, the function $\mathcal{H}$
is constant on each $S_r$, and can be set to zero without loss of generality.
The potential function $\Gg$ solves
\begin{align}
\Delta_{\gsph} \Gg = - \mbox{div}_{\mathbb{S}^2} \left ( 
( \Wtwo -  \Omegaperperbase)
\axialfSph\right ),
\label{DeltaGg}  
\end{align}
where, as usual, the divergence of a one-form $\bm{w}$ is 
$\mbox{div}_{\mathbb{S}^2} \bm{w} = \D_A w^A$ with indices  raised
with $\gsph$. The right-hand side of (\ref{DeltaGg}) is
$C^m$ on each $S_r$, so $\Gg$ is $C^{m+2}$ as a function on
$S_r$.\footnote{The problem is one-dimensional and therefore no Hölder
requirement is needed.}   The solution is unique up to an additive constant on each $S_r$, hence
up to a radially symmetric function. In the spherical coordinates 
$\{ \theta,\phi\}$, the Hodge decomposition (\ref{HodgeDecom})
takes the explicit form
\begin{align}
( \Wtwo -  \Omegaperperbase) \sin \theta = \partial_{\theta} \Gg.
\label{ExpForm}
\end{align}
Let $\widetilde{\Gg}$ be the unique solution of this PDE satisfying the 
boundary condition $\widetilde{\Gg}|_{\theta=0} = 0$, i.e. vanishing at the north pole of each sphere $S_r$. 
The right-hand side of (\ref{DeltaGg}) is $C^m$
as a function of $r$ 
both on $r \in (0,a]$ and on $r \in [a, \infty)$. Since the boundary condition
is differentiable in  $r$, the solution $\widetilde{\Gg}$
is $C^m(D^+) \cap C^m(D^-)$. Moreover, $\Wtwo$ is bounded
on $\Du$, so the same holds for $\widetilde{\Gg}$. It turns out to be
convenient to extract the $\ell=0$ component of $\widetilde{\Gg}$ and define
\begin{align*}
\Gg := \widetilde{\Gg} - \frac{1}{4\pi} \int_{S_r} \widetilde{\Gg}
\volform.
\end{align*}
It is clear from this definition that $\Gg \in C^m(D^+) \cap C^m(D^-)$,
bounded on $\Du$ and satisfies $\int_{S_r} \Gg \volform =0$. Next we obtain
the PDE that $\Gg$ must satisfy. We insert $\Wtwo -  \Omegaperperbase  
= (\sin \theta)^{-1} \partial_{\theta} \Gg$ into (\ref{Eqtphi_full}) and find,
after a direct calculation,
\begin{align*}
\frac{1}{\sin \theta} \frac{\partial}{\partial \theta}
\left ( \frac{\partial}{\partial r} \left ( r^4 j \frac{\partial \Gg}{\partial r}
\right )
+ r^2 j e^{\lambda} \left ( \Delta_{\mathbb{S}^2} \Gg + 2 \Gg \right )
+ 4 r^3 j' \Gg \right ) =0.
\end{align*}
Integrating in $\theta$ there appears an arbitrary integration function
of $r$ which is then uniquely fixed by the condition 
$\int_{S_r} \Gg \volform=0$. Thus,
\begin{align}
\frac{\partial}{\partial r} \left ( r^4 j \frac{\partial \Gg}{\partial r}\right )
+ r^2 j e^{\lambda} \left ( \Delta_{\mathbb{S}^2} \Gg + 2 \Gg \right )
+ 4 r^3 j' \Gg =0
\label{PDEGg}
\end{align}
on $\Ia^+=(0,\ro]$ and $\Ia^-=[\ro,\infty)$.
We also need to determine the matching conditions for $\Gg$.
From Proposition \ref{Prop:matching} (specifically from
$[\Wtwo] = D_3$, $[\Wtwo_{,r}]=0$) and (\ref{ExpForm}), the jump of $\Gg$ and $\partial_r \Gg$ satisfy
\begin{align}
\left ( D_3 - \Omegaperperbaseint \right ) \sin \theta = \partial_{\theta} [\Gg],
\quad 
0 = \partial_{\theta} [ \partial_r \Gg] \quad  \Longleftrightarrow 
  \quad
  [\Gg] = ( \Omegaperperbaseint - D_3) \cos \theta, 
\quad
[\partial_r \Gg] = 0
\label{jumpGg}
\end{align}
where in the integration we have imposed that neither $[\Gg]$ nor $[\partial_r \Gg]$ have $\ell=0$ term.

We start with a lemma on existence and uniqueness of  bounded solutions of \eqref{PDEGg}.

\begin{lemma}
  \label{lemma:g}
  Let $\Gg \in C^m (D^+) \cap C^m(D^-)$ ($m \geq 2$)
  be bounded and satisfy
  $\int_{S_r} \Gg \volform=0$
  together with \eqref{PDEGg} on $D^{+}$ and $D^-$ and the jumps
  \begin{align}
    [\Gg] = \Jumpg \cos \theta, 
    \qquad
    [\partial_r \Gg] = 0, \quad \quad \Jumpg \in \mathbb{R}.
    \label{jumpGgu0}
  \end{align}
  Assume that $\Eb_c + \Pb_c \neq 0$. Then, there exists a unique radially symmetric bounded function  $G \in C^2(\overline{D}^+) \cap
  C^{n+2} (D^+) \cap
  C^{\infty}(D^-) \cap C^{1} (D)$ satisfying $G(0)=1$
  and a constant $\Wtwo_c \in \mathbb{R}$   such that
  \begin{align}
    \Gg(r,\theta) = \left \{
    \begin{array}{ll}
      - \Wtwo_c G(r) P_1(\cos \theta) & \mbox{ on } D^+ \\
      - ( \Wtwo_c G(r) + \Jumpg   )  P_1(\cos \theta)  & \mbox{ on } D^-. \\
    \end{array}
    \right .
    \label{Gexp}
  \end{align}
  Moreover,
    the function $G(r)$ on $D^-$ is given by
      \begin{align}
        G^-(r) = - \frac{G'(\ro) \ro^4}{3 r^3} + G_{\infty}, \qquad
        G_\infty\defi G(\ro)+\ro G'(\ro)/3 \label{Gext}
      \end{align}
      and $G'(\ro)>0$, $G_{\infty}> 1$. Clearly also $\lim_{r\to\infty}G=G_{\infty}$.
\end{lemma}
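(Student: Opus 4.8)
The plan is to reduce the problem to a one-dimensional radial ODE for each Legendre mode and then exploit the analytic machinery of Appendices \ref{App:Lemmas} and \ref{App:bocher-like} (in particular the global existence/uniqueness result \ref{theorem:global_r}). First I would verify that the PDE \eqref{PDEGg} on each of $D^\pm$ fits the framework of Proposition \ref{PDEresult}. Writing \eqref{PDEGg} out and dividing through by $r^2 j$, it becomes
\begin{align*}
  r^2\Gg_{,rr}+r\Af(r)\Gg_{,r}+V(r)\left(\Delta_{\mathbb{S}^2}\Gg+\gamma(r)\Gg\right)=0,
\end{align*}
with $V(r)=e^{\lambda}$, $\gamma(r)=2+4r^3j'/(r^2je^{\lambda})=2+4rj'/(je^{\lambda})$, and $\Af(r)=4+r(4j'/j+\lambda')$ (using $\frac{d}{dr}(r^4j)=r^3(4j+rj')$ etc.). By \eqref{eq:jpoj}, $j'/j=-\tfrac12 re^\lambda\chiefes(\Eb+\Pb)$, so $\gamma(r)=2-2r^2\chiefes(\Eb+\Pb)e^{-\lambda}\cdot(\text{something bounded})$; the key point is that $V\geq 0$ everywhere, $\gamma$ is bounded from above, and near $r=0$ the limits $\lim_{r\to 0}V^+=e^{\lambda_0}=1>0$ (using $\lambda_0=0$ from \eqref{eq:l2nu2}), $\lim_{r\to 0}\gamma^+=2$, $\lim_{r\to 0}\Af^+=4$ exist and are finite (all the background functions are radially symmetric and $C^{n+1}$, with $j'\sim r$ near the origin by \eqref{eq:jpoj}). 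On $D^-$ one has $j=1$, $j'=0$, so $\gamma^-\equiv 2$, $\Af^-\equiv 4$, $V^-=e^{\lambda_-}=(1-\chiefes\Mext/4\pi r)^{-1}$, with $\lim_{r\to\infty}V^-=1>0$, $\gamma_\infty=2$. Hence $\gamma_{\mbox{\tiny max}}$ satisfies $2\le\gamma_{\mbox{\tiny max}}<6$, so $\ell_{\mbox{\tiny max}}=1$. Applying Proposition \ref{PDEresult} to the bounded function $\Gg$ gives $\Gg(r,\theta)=\Gg_0(r)P_0+\Gg_1(r)P_1(\cos\theta)$. The condition $\int_{S_r}\Gg\volform=0$ kills the $\ell=0$ mode, so $\Gg_0\equiv 0$ and $\Gg(r,\theta)=\Gg_1(r)P_1(\cos\theta)$.

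Next I would analyze the radial ODE for $\Gg_1$. By the computation in the proof of Proposition \ref{PDEresult}, $\Gg_1^\pm$ solves \eqref{ODEuhatell} with $\ell=1$, i.e.
\begin{align*}
  r^2\Gg_1''+r\Af(r)\Gg_1'+V(r)(\gamma(r)-2)\Gg_1=0
\end{align*}
on $\Ia^+$ and $\Ia^-$, and $\Gg_1$ is bounded on $(0,\infty)$. On $D^-$, where $\gamma-2=0$, this reduces to $r^2\Gg_1''+4r\Gg_1'=0$, i.e. $(r^4\Gg_1')'=0$, whose general solution is $\Gg_1^-(r)=c_1 r^{-3}+c_2$; boundedness at infinity allows both, giving the stated form $G^-(r)=-G'(\ro)\ro^4/(3r^3)+G_\infty$ once I normalize (see below). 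On $D^+$, I would invoke Theorem \ref{theorem:global_r} (the Bôcher-type result of Appendix \ref{App:bocher-like}) with $b_0=\vzero(\gamma_0-2)=1\cdot(2-2)=0$: this is the borderline case, and I expect the appendix to guarantee a one-dimensional space of bounded solutions near $r=0$, spanned by a solution $G(r)$ regular at the origin and behaving like a constant there. I would normalize this solution by $G(0)=1$; the regularity/boundedness statement from Theorem \ref{theorem:global_r} combined with Lemma \ref{res:radial_functions_origin} (radial symmetry, odd derivatives vanishing) upgrades $G$ to $C^2(\overline D^+)\cap C^{n+2}(D^+)$, and on $D^-$ it is explicit hence $C^\infty(D^-)$.

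Then I would impose the matching conditions. The jump conditions \eqref{jumpGgu0} read $[\Gg_1]=\Jumpg$, $[\Gg_1']=0$ (the $\cos\theta$ factors out). Writing $\Gg_1^+(r)=-\Wtwo_c\,G(r)$ for a constant $\Wtwo_c$ to be determined (this uses the one-dimensionality of the bounded solution space on $D^+$, so every bounded interior solution is a multiple of $-G$), the two-parameter exterior solution $\Gg_1^-(r)=c_1r^{-3}+c_2$ must satisfy $\Gg_1^-(\ro)=\Gg_1^+(\ro)+\Jumpg$ and $\Gg_1^{-\prime}(\ro)=\Gg_1^{+\prime}(\ro)$. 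The second equation gives $-3c_1\ro^{-4}=-\Wtwo_c G'(\ro)$, i.e. $c_1=\tfrac13\Wtwo_c G'(\ro)\ro^4$, and then the first fixes $c_2=-\Wtwo_c G(\ro)-\Jumpg-c_1\ro^{-3}=-\Wtwo_c(G(\ro)+\tfrac13\ro G'(\ro))-\Jumpg$. Setting $G^-(r):=-\Gg_1^-(r)/\Wtwo_c$ (for $\Wtwo_c\neq 0$; the case $\Wtwo_c=0$ forces $\Jumpg$ into $c_2$ alone and is handled separately but analogously) yields $G^-(r)=-G'(\ro)\ro^4/(3r^3)+G_\infty$ with $G_\infty=G(\ro)+\ro G'(\ro)/3$, matching \eqref{Gext}. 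This gives existence of the decomposition \eqref{Gexp} with $\Gg(r,\theta)$ split as $-\Wtwo_c G(r)P_1$ on $D^+$ and $-(\Wtwo_c G(r)+\Jumpg)P_1$ on $D^-$ after reabsorbing constants; uniqueness of $G$ (given $G(0)=1$) and of $\Wtwo_c$ follows from the one-dimensionality of the interior bounded solution space plus the two matching equations, which are non-degenerate provided $G'(\ro)\neq 0$ — and this last inequality I would establish next.

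Finally, the positivity statements $G'(\ro)>0$ and $G_\infty>1$. I would prove $G'(r)>0$ for $r\in(0,\ro]$ by an integrating-factor argument on the interior ODE. Rewriting $r^2\Gg_1''+r\Af\Gg_1'+V(\gamma-2)\Gg_1=0$ in the form $\tfrac{d}{dr}(r^4j\,G')+r^2je^\lambda\cdot 2(\gamma-2)/\gamma\cdot(\dots)$ — more directly, going back to the original divergence form \eqref{PDEGg} restricted to the $\ell=1$ mode, $\tfrac{d}{dr}(r^4j\,G')+4r^3j'\,G=0$ (since $\Delta_{\mathbb{S}^2}(P_1)+2P_1=0$) — I get $(r^4jG')'=-4r^3j'G$. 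Since $j'\leq 0$ in the interior by \eqref{eq:jpoj} (as $\Eb+\Pb\geq 0$) and $G(0)=1>0$, near the origin $G>0$ and the right-hand side is $\geq 0$, so $r^4jG'$ is non-decreasing from its value $0$ at $r=0$ (the regular solution has $G'(0)=0$), hence $G'\geq 0$, hence $G\geq 1>0$ on all of $(0,\ro]$; feeding this back, $(r^4jG')'=-4r^3j'G\geq 0$ with strict inequality on a set of positive measure (since $\Eb>0$ somewhere, by Assumption \AsHtwo, so $j'<0$ there, and $\Ebc+\Pbc\neq 0$ is used to ensure $j'\not\equiv 0$ near $0$), giving $r^4jG'>0$ for $r>0$ and in particular $G'(\ro)>0$. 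Then $G_\infty=G(\ro)+\ro G'(\ro)/3>G(\ro)\geq 1$. The main obstacle I anticipate is the borderline case $b_0=0$ in applying Theorem \ref{theorem:global_r}: I need to be sure the appendix's result does give precisely a one-dimensional bounded-near-origin solution space (with a regular solution and a $\log$- or $r^{-1}$-type singular partner that gets excluded by boundedness) in this critical case, and that the regular solution's expansion is compatible with Lemma \ref{res:radial_functions_origin} so that the claimed $C^2(\overline D^+)\cap C^{n+2}(D^+)$ regularity holds; everything else is either a direct ODE integration or a maximum-principle/integrating-factor monotonicity argument.
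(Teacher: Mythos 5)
Your overall strategy coincides with the paper's: Legendre decomposition via Proposition \ref{PDEresult}, explicit integration of the exterior ODE, a one-dimensional space of bounded interior solutions normalised by $G(0)=1$, $C^1$ matching at $r=\ro$, and a monotonicity argument for $G'(\ro)>0$. Your integrating-factor argument $(r^4 j G')' = -4 r^3 j' G \geq 0$ is a perfectly valid (and arguably more self-contained) substitute for the paper's appeal to Lemma \ref{LemmaRot}. However, there are two concrete gaps. First, you apply Proposition \ref{PDEresult} directly to $\Gg$, but that proposition requires $\uhat \in C^1(D)$, and $\Gg$ fails this precisely because of the jump $[\Gg]=\Jumpg\cos\theta$ in \eqref{jumpGgu0}. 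The paper's fix is to apply the proposition to $\Gg_\perp := \Gg - \frac{3}{4\pi}\int_{S_r}\Gg P_1\volform$, which satisfies $[\Gg_\perp]=[\partial_r\Gg_\perp]=0$ and hence is $C^1(D)$; one then concludes $\Gg_\perp=0$ and only afterwards analyses the $\ell=1$ mode with its jumps. You should restructure the first step accordingly.

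Second, and more seriously, the interior existence-and-uniqueness step cannot be obtained from Theorem \ref{theorem:global_r}: that theorem explicitly assumes $b_0<0$, whereas here $b_0=\Bf^+(0)=0$ (indeed $\Bf^+(r)=-2r^2 e^{\lambda}\chiefes(\Eb+\Pb)$ vanishes quadratically at the origin). You correctly anticipate this as the main obstacle, but the resolution is not a borderline extension of Theorem \ref{theorem:global_r}; it is Lemma \ref{Lemma:behaviour}, items (ii) and (iii), which treat exactly the case $(b_0=0,\,a_0>1)$ with $\Bf(t)=t^2\Qf(t)$ and $\Qf(0)\neq 0$. That lemma delivers the unique-up-to-scale bounded solution with $g^+(0)\neq 0$, the $C^2([0,\ro])$ extension, and $g^+{}'(0)=0$ — and it is precisely here that the hypothesis $\Ebc+\Pbc\neq 0$ is consumed, since it guarantees $\Qf(0)=-2\chiefes(\Ebc+\Pbc)\neq 0$. (A minor slip: your coefficient $\Af$ should be $4+r j'/j$, with no $\lambda'$ term and without the factor $4$ on $j'/j$; the limits you quote at $0$ and $\infty$ are nonetheless the correct ones.) With these two repairs your argument closes and matches the paper's proof in substance.
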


\begin{proof}
Let us define $\Gg_{\perp} := \Gg - \frac{3}{4\pi} \int_{S_r} \Gg P_{1}
\volform$. We want to  apply Proposition \ref{PDEresult}, so we check
that all hypotheses are satisfied.
By construction $\Gg_{\perp}$ is $C^m(D^+)\cap C^m(D^-)$, bounded in $\Du$,
and has no $\ell=0$, or $\ell=1$ components.
By (\ref{jumpGgu0}), it satisfies  
$[ \Gg_{\perp} ] = [ \partial_{r} \Gg_{\perp} ] =0$, so that $\Gg_{\perp}\in C^1(D)$.
The PDE (\ref{PDEGg}) in expanded form is
\begin{align}
  \Gg_{\perp}{}_{,rr} + \left ( \frac{4}{r} + \frac{j'}{j} \right )
  \Gg_{\perp}{}_{,r} + \frac{e^{\lambda}}{r^2} \left ( \Delta_{\mathbb{S}^2} \Gg_{\perp} 
  + \left ( 2 + \frac{4 r e^{-\lambda} j'}{j} \right )
  \Gg_{\perp}  \right ) =0,
  \label{PDEGgExp}
\end{align}
so it fits into the general form (\ref{PDEuhat})  with
\begin{align}
  V(r) = e^{\lambda}, \quad 
  \gamma(r) = 2 + \frac{4 r e^{-\lambda}j'}{j} 
  = 2 - 2 r^2 \chiefes  (\den + \pre ),
  \quad 
  \Af(r) = 4 + r \frac{j'}{j} = 
  4 - \frac{1}{2} r^2 e^{\lambda} \chiefes (\den + \pre),
  \label{VgamA}
\end{align}
where (\ref{eq:jpoj}) has been substituted in the last two expressions.
These functions are all $C^{n}([0,\ro]) \cap C^{n}([\ro,\infty))$.
By assumption {\AsHtwo} we have
$\sup_{\Du} \gamma(r)= 2$, and the limit conditions (\ref{limits})  are all 
fullfilled (c.f. \eqref{expansionlam}) with
\begin{align*}
& 
\lim_{r \rightarrow 0} V^+(r) = 1, \quad \quad
\lim_{r \rightarrow \infty} V^-(r) = 1, \quad \quad 
\lim_{r \rightarrow 0} \Af^+(r) = 4,  \quad \quad
\lim_{r \rightarrow \infty} \Af^-(r) = 4, \nonumber \\
& \lim_{r \rightarrow 0} \gamma^+(r) = 2, \quad \quad
\lim_{r \rightarrow \infty} \gamma^-(r) = 2.
\end{align*}
All the conditions of Proposition \ref{PDEresult} are satisfied and
$\ell_{\mbox{\tiny max}} = 1$ so we conclude that
$\Gg_{\perp}$ must be of the form
$\Gg_{\perp} = \Gg_{\perp}^0 (r) + \Gg_{\perp}^1 (r) P_1 (\cos \theta)$. However,
 by construction $\Gg_{\perp}$ has no such components, hence it vanishes
identically. Consequently, $\Gg$ has only $\ell=1$ component, i.e. takes the 
form
\begin{align}
\Gg(r,\theta) = \Gg_1(r) P_1 (\cos \theta)
\label{Ggl=1}
\end{align}
for some radially symmetric function $\Gg_1$ at either $D^\pm$. From
\eqref{PDEGgExp} and \eqref{VgamA}, this function satisfies the ODE
\begin{align}
  \frac{1}{r^3}\frac{d}{dr}\left( r^4 j \frac{d\Gg_1}{dr}\right) + 4 j' \Gg_1 = 0
  \label{eq:Gg1_0}
\end{align}
or, in expanded form,
\begin{equation}
r^2\Gg_1''+r\left(4-\frac{1}{2}r^2 e^{\lambda}\chiefes(\Eb+\Pb)\right)\Gg_1'
-2r^2e^{\lambda}\chiefes(\Eb+\Pb)\Gg_1=0,
\label{eq:Gg1}
\end{equation}
on $\Ia^+ = (0,\ro]$ and $\Ia^-= [\ro,\infty)$ together with the jumps (from
\eqref{jumpGgu0})
\begin{equation}
[\Gg_1]= \Jumpg,\qquad [\Gg_1']=0.\label{eq:mc:Gg1}
\end{equation}
We now show that \eqref{eq:Gg1} admits a unique solution, up to scale,
which is $C^1(\ro,\infty)$  and bounded. We start with the
interior domain $\Ia^+$.  Equation (\ref{eq:Gg1}) satisfies the requirements of
items \emph{(ii)-(iii)} of  Lemma \ref{Lemma:behaviour} with 
\[
\Af(r)=4-\frac{1}{2}r^2 e^{\lambda}\chiefes(\Eb+\Pb),\qquad \Bf(r)=r^2\Qf(r) \quad \mbox{ with }\quad  \Qf(r)=-2e^{\lambda_+}\chiefes(\Eb+\Pb),
\]
so that $a_0=4$, 
and $\Qf(0)=-2\chiefes (\Ebc+\Pbc) \neq 0$ (by assumption of the lemma).
Therefore, Lemma \ref{Lemma:behaviour} ensures that
there exists a unique up to scaling function $g^+(r)$ that stays bounded
in $(0,\ro)$, and  extends to a function in $C^2([0,\ro])$
satisfying $g^+(0)\neq 0$ and $g^+{}'(0)=0$. It is clear that
$g^+(r)\in C^{n+2}((0,\ro])$ also. We fix the scale
by imposing $g^+(0)=1$.
In the exterior part $\Ia^-$, equation (\ref{eq:Gg1}) can be solved explicity. The solution is
\begin{equation}
  g_{\hone,\htwo}^-(r)=-\frac{2\hone}{r^3}-\htwo \qquad \hone, \htwo \in\mathbb{R}.\label{Ggext}
\end{equation}
Consider the function $\{ g^+(r), g^{-}_{\hone,\htwo}(r) \}$. This corresponds to a function $g(r) \in C^1(0,\infty)$ if and only if
\begin{align*}
  g^+(\ro) = g^-_{\hone,\htwo}(\ro) = -\frac{2\hone}{\ro^3}-\htwo, \quad \quad
  g^+{}^{\prime}(\ro) = g^{-}_{\hone,\htwo}{}^{\prime} (\ro) =
  \frac{6 \hone}{\ro^4}.
\end{align*}
It is clear that this system admits a unique solution $\{\hone, \htwo\}$ with corresponding $g^-(r):=g^-_{\hone,\htwo}(r)$ given by
\begin{align}
  g^-(r):= g^+(\ro) + \frac{\ro}{3} g^+{}^{\prime}(\ro) \left ( 1 - \frac{\ro^3}{r^3} \right ).
  \label{gminusexp}
\end{align}
This establishes the existence of a unique bounded function $g(r)
\in C^2([0,\ro]) \cap C^{n+2} ((0,\ro]) \cap C^1(0,\infty) \cap C^{\infty}([\ro,\infty)$ satisfying $g(0)=1$ and solving \eqref{eq:Gg1}  on $\Ia^{\pm}$.
This $g(r)$ is the trace of  a bounded radially symmetric function $G :
\overline{D} \longrightarrow \mathbb{R}$. It is immediate from the properties
of $g(r)$ that $G \in C^{n+2} (D^+) \cap
C^{\infty}(D^-) \cap C^{1} (D)$. Moreover, since $g(r) \in C^2([0,\ro])$ and satisfies $g'(0)=0$, Taylor's theorem gives $g(r)=1+g_2r^2+\Phi_g^{(2)}(r)$
where $g_2\in\mathbb{R}$ and $\Phi_g^{(2)}(r)$ is $C^2([0,\ro])$ and $o(r^2)$.
Using Lemma \ref{origin} it follows that $G \in C^2(\overline{D}^+)$.
This proves the first claim of the Lemma.

The explicit form \eqref{Gext} follows at once from
  \eqref{gminusexp}.
Given that $0\not\equiv\Eb+\Pb\geq 0$
and since $g^+(0)=1$ and $g^+{}'(0)=0$,
Lemma \ref{LemmaRot} establishes that $g^+(\ro)>1$ and $g^+{}'(\ro)>0$.
Therefore $G(\ro) >1$, $G'(\ro)>0$ and the claim $G_\infty>1$ also follows.

Concerning the function $\Gg_1(r)$, by the uniqueless up to scale
of bounded solutions of \eqref{eq:Gg1} on $\Ia^+$, there exists
a constant $\Wtwo_c$ such that $\Gg_1(r) = - \Wtwo_c g(r)$ on $\Ia^+$
(the choice of sign will be convenient later).
On $\Ia^-$, $\Gg_1^-(r)$ has the form (\ref{Ggext}) for some constants $\hone$ and $\htwo$. 
Imposing the jumps (\ref{jumpGgu0}) it is immediate that
$\Gg_1^-(r) = -( \Wtwo_c g(r) +\Jumpg)$ on $\Ia^-$.
Combining with \eqref{Ggl=1} concludes the proof.
\fin
\end{proof}

\begin{remark}
  \label{Gfunc}
  In the proof of this lemma it has been useful to distinguish the
  function $G$ from its trace $g(r)$. For the rest of the paper, this
  is no longer necessary, so we use the same
  symbol $G$ for both. This follows the general convention used
  throughout the paper.
\end{remark}

We can now prove the following result on existence and uniqueness of $\Wtwo$.

\begin{proposition}[\textbf{Existence and uniqueness of $\Wtwo$}]
\label{prop:problem_Wtwo}
Assume the setup of the base perturbation scheme ({\baseback}-{\basematch}).
Then
\begin{enumerate}
\item $\Wtwo$ is radially symmetric on $\overline{D}$, i.e. it is a function 
$\Wtwo(r)$.
\item There exists a (unique) choice of constants $B^\pm$
in the gauge freedom \eqref{gauge_second} such  that the transformed 
function (still denoted by $\Wtwo$) is continuous across $r = \ro$
and fulfils the property that  $\Kperper(\xi, r^{-1} \eta)$ is bounded at infinity. Moreover, this function
is given by
\begin{align*}
  \Wtwo = \Wtwo_c ( G(r) - G_{\infty}), \qquad \Wtwo_c \in \mathbb{R}
\end{align*}
where the function $G(r)$ and constant $G_{\infty}$ are defined in Lemma
\ref{lemma:g}. In particular
$\Wtwo \in C^{n+2}(\mmm^+\setminus\centresph_0) \cap C^2(\mmm^+) \cap
C^{\infty} (\mmm^-) \cap C^1 (\mmm \setminus \centresph_0)$, 
\begin{align*}
  \Wtwo_{-} = \frac{2J}{r^3}, \qquad  J := - \Wtwo_c G'(\ro) \frac{\ro^4}{6}
\end{align*}
and $J$ vanishes if only if $\Wtwo_c=0$.
The parameter $\Omegaperperbaseint$ corresponding to this function is
$\Omegaperperbaseint = - \Wtwo_c G_{\infty}$
 \end{enumerate}
\end{proposition}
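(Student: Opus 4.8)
The plan is to reduce the problem for $\Wtwo$ to the already-solved problem for the potential function $\Gg$ from Lemma \ref{lemma:g}. First I would recall that by item \baseKperper.1 of the base scheme $\Wtwo$ is bounded on $\Du$ and satisfies the decoupled PDE \eqref{Eqtphi_full} on $D^\pm$ together with the matching conditions $[\Wtwo]=D_3$, $[\Wtwo_{,r}]=0$ of Proposition \ref{Prop:matching}. Introducing the Hodge potential $\Gg$ via \eqref{HodgeDecom}–\eqref{ExpForm}, normalized by $\int_{S_r}\Gg\volform=0$, one gets a bounded $\Gg\in C^m(D^+)\cap C^m(D^-)$ satisfying the PDE \eqref{PDEGg} on $D^\pm$ and the jumps \eqref{jumpGg}, i.e. \eqref{jumpGgu0} with $\Jumpg = \Omegaperperbaseint - D_3$. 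Lemma \ref{lemma:g} then applies verbatim (its hypothesis $\Ebc+\Pbc\neq0$ is exactly \baseback), giving the explicit form \eqref{Gexp} in terms of a single radially symmetric function $G(r)$ with $G(0)=1$, the exterior expression \eqref{Gext}, and the signs $G'(\ro)>0$, $G_\infty>1$. Undoing the Hodge decomposition through $(\Wtwo-\Omegaperperbase)\sin\theta = \partial_\theta\Gg$ and using $\partial_\theta(P_1(\cos\theta)) = -\sin\theta$ shows $\Wtwo-\Omegaperperbase$ is constant on each $S_r$; hence $\Wtwo$ is radially symmetric on $\overline{D}$, which is claim (1). Explicitly one obtains $\Wtwo_+ - \Omegaperperbaseint = \Wtwo_c G(r)$ on $D^+$ and $\Wtwo_- = \Wtwo_c G(r) + \Jumpg + \Omegaperperbaseext = \Wtwo_c G(r) + \Jumpg$ on $D^-$ (recall $\Omegaperperbaseext=0$).

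For claim (2) I would use the residual gauge freedom. By Proposition \ref{prop:full_class_gauges} (the $\Wtwo^g = \Wtwo - B$ line, applied independently on each side since the base scheme allows constants $B^\pm$) the function $\Wtwo$ is defined only up to an additive constant on each region. First fix $B^-$ so that $\Wtwo_-$ has no constant term at infinity: since $G^-(r) = -G'(\ro)\ro^4/(3r^3) + G_\infty$, boundedness of $\Kperper(\xi, r^{-1}\eta) = -\Wtwo_- e^{-\nu} r\sin^2\theta$ at infinity forces the constant part of $\Wtwo_-$ to vanish, i.e. $\Wtwo_c G_\infty + \Jumpg = 0$ after the shift; this determines $\Jumpg$ (equivalently $D_3$ and $\Omegaperperbaseint$) and gives $\Wtwo_- = -\Wtwo_c G'(\ro)\ro^4/(3r^3) = 2J/r^3$ with $J = -\Wtwo_c G'(\ro)\ro^4/6$. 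Then fix $B^+$ by continuity $[\Wtwo]=0$ at $r=\ro$: this is consistent because $G\in C^1(0,\infty)$ already guarantees $[\Wtwo_{,r}]=0$, so no contradiction arises with \eqref{final:W2p} (whose right-hand side is zero). The net effect is $\Wtwo_+ = \Wtwo_c(G(r) - G_\infty)$ on $D^+$ and the same on $D^-$ by continuity and the exterior form, giving the claimed uniform expression $\Wtwo = \Wtwo_c(G(r)-G_\infty)$. The regularity statement $\Wtwo \in C^{n+2}(\mmm^+\setminus\centresph_0)\cap C^2(\mmm^+)\cap C^\infty(\mmm^-)\cap C^1(\mmm\setminus\centresph_0)$ is inherited directly from the corresponding properties of $G$ established in Lemma \ref{lemma:g}. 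Finally, $\Omegaperperbaseint$ is read off from $\Wtwo_+ - \Omegaperperbaseint = \Wtwo_c G(r)$, giving $\Omegaperperbaseint = -\Wtwo_c G_\infty$, and $J=0 \iff \Wtwo_c = 0$ follows from $G'(\ro)>0$.

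The only genuinely delicate point is the bookkeeping of the two-sided gauge constants $B^\pm$ versus the matching constant $D_3$ and the a-priori-free parameter $\Omegaperperbaseint$: one must check that imposing boundedness at infinity and continuity at $\ro$ is exactly the right number of conditions to pin down $B^-$, $B^+$ and $D_3$ (equivalently $\Omegaperperbaseint$) with no overdetermination, and that the single surviving free constant is $\Wtwo_c$ — the scale of the bounded interior solution $G$, which cannot be gauged away. Everything else is a direct transcription of Lemma \ref{lemma:g} and Proposition \ref{Prop:matching}. I would also note in passing that the uniqueness-up-to-scale of the bounded interior solution of \eqref{eq:Gg1} (used to define $\Wtwo_c$) is precisely what Lemma \ref{lemma:g} provides, so no further ODE analysis is needed here.
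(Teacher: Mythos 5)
Your proposal is correct and follows essentially the same route as the paper's proof: reduce to the Hodge potential $\Gg$, invoke Lemma \ref{lemma:g} to get the form \eqref{Gexp}, translate back to $\Wtwo$ on each side, and then use the gauge constants $B^{\pm}$ to enforce decay at infinity and continuity at $r=\ro$, reading off $J$ and $\Omegaperperbaseint$ at the end. The only loose phrasing is the remark that boundedness ``determines $\Jumpg$'' — what is actually determined is the gauge constant $B^-$ (and then $B^+$), with the post-gauge constant part of $\Wtwo_-$ being forced to vanish — but you flag this bookkeeping explicitly and the conclusion is the same as in the paper.
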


\begin{proof}
Let   $\Gg$ be the unique function related to 
$\Wtwo$ by the Hodge
decomposition \eqref{ExpForm} and satisfying
$\int_{S_r} \Gg \volform=0$. This function satisfies the jumps 
\eqref{jumpGgu0}, so all the hypothesis of Lemma \ref{lemma:g} hold
and we
conclude that $\Gg$ is given by
\eqref{Gexp} with $\Jumpg = \Omegaperperbaseint-D_3$. Inserting back into
\eqref{ExpForm} and using that 
$\Omegaperperbaseext =0$ and \eqref{Gext}, $\Wtwo$ can be  written as
\begin{align}
    \Wtwo = \left \{
    \begin{array}{ll}
      \Wtwo_c G(r) + \Omegaperperbaseint   
      & \mbox{ on } D^+ \\
      \Wtwo_c G(r) + \Omegaperperbaseint-D_3  = \frac{2J}{r^3} + \Wtwo_c G_{\infty} + \Omegaperperbaseint-D_3                                        & \mbox{ on } D^-. \\
    \end{array}
    \right .
    \label{W2exp}
\end{align}
with $J := - \Wtwo_c G'(\ro) \frac{\ro^4}{6}$. This proves item \emph{1}. 
From \eqref{Kperper}, boundedness of the component 
$\Kperper(\xi,r^{-1} \eta)$ on $\mmm^{-}$ 
is equivalent  
to $\lim_{r \rightarrow \infty} \Wtwo =0$
(compare item \baseKper.2 in the base scheme). In addition, given that $G(r)$ is $C^1$ on $D$, the function $\Wtwo$ is continuous on $D$
if and only if the constant $D_3$ can be transformed away. To achieve both properties, and given the transformation law
\eqref{Wtwog}, we use the gauge transformation with vectors
$\sperper^- = (\Wtwo_c G_{\infty} + \Omegaperperbaseint - D_3 ) t \partial_{\phi}$
on $\mmm^-$ and
$\sperper^+ = (\Wtwo_c G_{\infty} + \Omegaperperbaseint ) t \partial_{\phi}$ on $\mmm^+$.
It is clear that no other possible choice of the constants $B^{\pm}$ in
\eqref{gauge_second} can accomplish this.  The gauge transformed $\Wtwo$ (which we still call
$\Wtwo$) is now given by $\Wtwo = \Wtwo_c (G(r) - G_{\infty})$ everywhere
 and the corresponding parameter $\Omegaperperbaseint = - \Wtwo_c G_{\infty}$,
as follows directly from \eqref{W2exp} on $D^+$.
All the properties claimed in the proposition are immediate consequences
of the corresponding properties for $G(r)$ obtained in Lemma
\ref{lemma:g}. In particular $J$ vanishes if and only if $\Wtwo_c$ does
because $G'(\ro) >0$.
\finn
\end{proof}

\subsection{Global problem for $\vhat$: existence and uniqueness of the $\ell\geq 2$ sector}
\label{sec:vhat_l_2}
In this subsection we apply Proposition \ref{PDEresult} to deal with
the global problem for $\vhat$, consisting of the PDE (\ref{Field})
at either side $D^\pm$ plus the matching conditions \eqref{final:uhat}-\eqref{final:sigma_r}
in Proposition \ref{Prop:matching}. Contrary to the problem for $\Wtwo$, however,
we cannot prove uniqueness of $\vhat$ yet, since the radially symmetric ($\ell=0$) part 
still contains one free function (the integrating factor $\sigma(r)$).
Adding the requirement of a barotropic equation of state (in the next subsection)
will allow us to tackle the existence and uniqueness of the $\ell=0$ sector of $\vhat$.

\begin{proposition}[\textbf{Global problem for $\vhat$}]
\label{prop:problem_vhat}
  Assume the setup of the base perturbation scheme ({\baseback}-{\basematch})
  and restrict to the class of gauges $\{\gaugeABY\}$ constructed in Proposition \ref{prop:field_equations} in both $M^\pm$. Then,
$\vhat(r,\theta)$ must have the form
  \begin{align}
    \vhat(r,\theta) = \vhat_0(r) + \vhat_2(r) P_2 (\cos \theta).
\label{decomvhat}  
  \end{align}
  Moreover,  the field equations and matching conditions
    for $\vhat_2(r)$ admit a unique bounded solution. This solution satisfies
    \begin{itemize}
    \item $\vhat_2^+(r) \in C^{n+1}((0,\ro])$ and
  $\vhat_2^+(r)$ is $O(r^4)$ and
    extends as a $C^1([0,\ro])$ function, and $\vhat_2^+{}'(r)$ is $O(r^3)$
   near $r=0$, 
  \item  $\vhat_2^-(r) \in C^{\infty}([\ro,\infty))$, $\vhat_2^-(r)$ is
    $O(r^{-4})$ and $\vhat_2^-{}'(r)$ is $O(r^{-5})$ near $r=\infty$.
\end{itemize}
    In particular if $\opertbase_{\pm}(r)=0$ then $\vhat_2(r)=0$.
  These results are independent of the function $\beta(r)$.
\end{proposition}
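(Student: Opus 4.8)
The strategy is to apply Proposition \ref{PDEresult} to the PDE \eqref{Field}, splitting off the inhomogeneous term first. Write $\vhat = \vhat_{0}(r) + \vhat_2(r)P_2(\cos\theta) + \vhat_\perp(r,\theta)$ as in Proposition \ref{prop:field_equations}. The $\ell\ge 1$ piece $\vhat_1(r)P_1(\cos\theta) + \vhat_\perp(r,\theta)$ satisfies the \emph{homogeneous} part of \eqref{Field} on each $D^\pm$. For $\vhat_\perp$ the equation is precisely \eqref{eq:uhatperp}. To invoke Proposition \ref{PDEresult} I would rewrite \eqref{eq:uhatperp} in the form \eqref{PDEuhat}: multiplying by the appropriate factor it reads $r^2\vhat_{\perp,rr} + r\Af(r)\vhat_{\perp,r} + V(r)(\Delta_{\mathbb{S}^2}\vhat_\perp + \gamma(r)\vhat_\perp)=0$ with $V(r)=e^{\lambda}$, $\gamma(r)=2$, and $\Af(r) = \tfrac12 r(\lambda'+\nu' - 4\nu''/\nu')$. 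Here the subtle point, and the step I expect to require the most care, is verifying the hypotheses (i)--(iii) of Proposition \ref{PDEresult}: that $V\ge 0$ (clear), that $\gamma$ is bounded above (here $\gamma\equiv 2$, trivial), that $\Af^\pm$ extend $C^1$ to the endpoints with finite limits, and in particular that $\lim_{r\to 0}\Af^+(r)$ and $\lim_{r\to\infty}\Af^-(r)$ exist. The exterior limit is immediate since $\lambda'_-+\nu'_-=0$ and $\nu''_-/\nu'_-$ has a finite limit from \eqref{eqs_back_vacuum}. The interior limit requires the origin expansions \eqref{background_expansion_origin}--\eqref{eq:l2nu2}: near $r=0$ one has $\lambda'+\nu' = O(r)$ and $\nu''/\nu' = 1/r + O(r)$, so $\Af^+(r) = \tfrac12 r(O(r) - 4/r + O(r)) \to -2$; one must check $\gamma_0,\gamma_\infty$ exist (they equal $2$) and $V_0=V_\infty=1>0$. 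With all hypotheses met and $\gamma_{\mathrm{max}}=2$, Proposition \ref{PDEresult} gives $\ell_{\mathrm{max}}=1$, so $\vhat_\perp$ (having no $\ell\le 1$ components by construction) must vanish identically. This yields the decomposition \eqref{decomvhat} once $\vhat_1$ is gauged away as in the text preceding Proposition \ref{prop:field_equations}.

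For the $\ell=2$ sector, $\vhat_2(r)$ satisfies the ODE \eqref{eq:uhat2}, which in the form of Appendix \ref{App:bocher-like} / Lemma \ref{Lemma:behaviour} has coefficients $\Af(r) = \tfrac12 r(\lambda'+\nu'-4\nu''/\nu')$, the zeroth-order coefficient $-4e^\lambda$, and inhomogeneity $\inhomo_2(r)$ from \eqref{eq:mu2}. The plan is: (a) in the interior $D^+$, apply the Böcher-type existence/uniqueness result (Theorem \ref{theorem:global_r} restricted to the interior, or Lemma \ref{Lemma:behaviour}) to get the general solution as a particular solution plus the two-dimensional homogeneous space, of which exactly one dimension stays bounded at the origin; checking $a_0 = \lim\Af^+ = -2$ and that the indicial equation for the homogeneous problem at $r=0$ forces the bounded solution to be $O(r^2)$, then improving to $O(r^4)$ using that $\inhomo_2(r) = O(r^4)$ near the origin (which follows from \eqref{eq:mu2}, \eqref{eq:f_omega} and the origin expansion of $\opertbase$ in \eqref{omega_origin}, since $\F = O(r^3)$ and $\opertbase' = O(r)$ give $\inhomo_2 = O(r^4)$ — one must check the $O(r^3)$ terms cancel); (b) in the exterior $D^-$, where $\lambda_-'+\nu_-'=0$, solve \eqref{eq:uhat2} explicitly; the homogeneous solutions behave like $r^{-4}$ (decaying, bounded) and a growing one that boundedness discards, and since $\inhomo_2^-(r) = O(r^{-?})$ decays one gets a bounded particular solution with $\vhat_2^- = O(r^{-4})$ and derivative $O(r^{-5})$; (c) use the matching conditions \eqref{final:uhat}--\eqref{final:sigma_r} — specifically the $P_2(\cos\vartheta)$ component of \eqref{final:uhatp}, which gives $[\vhat_{2,r}]$ in terms of $[\qhat_0]$'s $\ell=2$ part (there is none, since $\qhat_0$ is radial) plus the $\F$-jump term \eqref{dif_qhat2}, together with continuity $[\vhat_2]$ from \eqref{final:uhat} (whose right side is $\ell=0$, forcing $[\vhat_2]=0$) — to pin down the two free constants uniquely. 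The bounded-at-both-ends requirement removes one constant on each side, the two matching conditions remove the remaining two, giving a unique solution.

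Assembling: boundedness at the origin and at infinity plus the two matching conditions \eqref{final:uhat} (restricted to $\ell=2$: $[\vhat_2]=0$) and \eqref{final:uhatp} (restricted to $\ell=2$, giving $[\vhat_{2,r}]$ proportional to $\chiefes\Eb_+(\ro)(\opertbase_+(\ro)-\Omegaperbase)^2$) uniquely determine $\vhat_2$; note that $\beta(r)$ never appears in any of these, since the hatted variables $\hhat,\vhat,\qhat$ were constructed in Lemma \ref{res:gauges_hat} precisely so that $\Y$ drops out and the remaining gauge freedom in \eqref{uhatg} is only through $\alpha$ (already fixed) and $\beta$ entering only the $\ell=0$ part of $\vhat$; hence $\vhat_2^g = \vhat_2$. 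The regularity statements follow from: $\vhat_2^+ \in C^{n+1}((0,\ro])$ by elliptic regularity of \eqref{eq:uhat2} with $C^{n-1}$ coefficients away from the origin (the coefficients are $C^{n-1}$ since they involve $\nu''$), the $O(r^4)$ behaviour and $C^1$ extension to $[0,\ro]$ from step (a), and the exterior decay from the explicit solution in step (b). Finally, if $\opertbase_\pm \equiv 0$ then $\F \equiv 0$, $\opertbase' \equiv 0$, so $\inhomo_2 \equiv 0$ and $[\F]=0$, whence $\vhat_2$ solves a homogeneous ODE on each side with zero jumps and bounded everywhere; uniqueness forces $\vhat_2 \equiv 0$. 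I expect the main obstacle to be the careful verification that $\inhomo_2(r) = O(r^4)$ near the origin (needed for the stated $O(r^4)$ regularity of $\vhat_2^+$) and the matching of the explicit exterior solution with the interior one, as these involve bookkeeping with the origin expansions and the jump identities \eqref{Bp}--\eqref{dif_qhat2}.
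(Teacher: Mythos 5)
Your overall strategy coincides with the paper's: Proposition \ref{PDEresult} applied to \eqref{eq:uhatperp} (with $V=e^\lambda$, $\gamma=2$, $\Af^+(0)=-2$, $\lim_{r\to\infty}\Af^-=4$, hence $\ell_{\mbox{\tiny max}}=1$) forces $\vhat_\perp=0$, and Theorem \ref{theorem:global_r} applied to \eqref{eq:uhat2} with the jumps extracted from \eqref{final:uhat}--\eqref{final:uhatp} handles $\vhat_2$; your remarks on $\beta$-independence via Lemma \ref{res:gauges_hat} and on the case $\opertbase=0$ are also as in the paper.

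The step you yourself flag as ``the main obstacle'' is, however, where your proposal has a genuine gap. You assert $\F=O(r^3)$ and $\inhomo_2=O(r^4)$ near the origin. First, $\F=O(r^4)$: in \eqref{eq:f_omega} the bracket is $O(r)$ because $r\opertbase'{}^2=O(r^3)$ and $\lambda'+\nu'=O(r)$. More importantly, the estimate $\inhomo_2=O(r^4)$, i.e.\ $\alpha_0=4$, is not good enough to invoke Theorem \ref{theorem:global_r}: with $a_0=-2$, $b_0=-4$ one gets $\lambda^0_-=-4$, so $\alpha_0+\lambda^0_-=0$ and hypothesis \eqref{Assump} fails. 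This is not a technicality --- $r^4$ is precisely the indicial exponent of the bounded homogeneous solution at the origin, so an inhomogeneity genuinely of order $r^4$ would resonate and generically produce $r^4\log r$ terms in the particular solution, ruining both the clean $O(r^4)$ statement and the $O(r^3)$ statement for the derivative. What saves the argument, and what the paper actually establishes, is that the $r^4$ and $r^5$ coefficients of $\inhomo_2^+$ vanish identically: using \eqref{background_expansion_origin}, \eqref{omega_origin} and the relation \eqref{eq:sol_for_w2} one finds $\inhomo_2^+=r^6(\sigma_6+o(1))$, so $\alpha_0=6$, $\alpha_0+\lambda^0_-=2\neq 0$, and the theorem then yields $\vhat_2^+=O(r^{\min\{4,6\}})=O(r^4)$, the $O(r^4)$ coming from the bounded homogeneous branch, with $\vhat_2^+{}'=O(r^3)$. (For the leading cancellation: writing $\F=F_4r^4+\ldots$, the expression \eqref{eq:mu2} has the structure $r\F'-2\F\,\tfrac{r\nu''}{\nu'}-2\F+O(r^6)$ with $\tfrac{r\nu''}{\nu'}\to 1$, so the $r^4$ coefficient is $(4-2-2)F_4=0$; killing the $r^5$ order requires the explicit expansions.) You should replace the claimed $O(r^4)$ bound by this sharper computation; otherwise the existence/uniqueness machinery you rely on does not apply at the origin.
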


\begin{proof} 
  By Proposition \ref{prop:field_equations} the decomposition
  (\ref{eq:uhat}) holds on both regions,
  and the function $\vhat_{\perp}$ 
    satisfies, c.f. \eqref{eq:uhatperp},
\begin{align}
  r^2 \vhat_{\perp}{}_{,rr}+ r \Af(r)   \vhat_{\perp}{}_{\,r}
  + V(r) 
    \left ( \Delta_{\mathbb{S}^2} \vhat_{\perp} +
    2 \vhat_{\perp} \right )
     = 0
\label{eq:vhatbis}
    \end{align}
with
\begin{align*}
  V(r)=e^{\lambda}, \quad \quad
    \Af(r) =\frac{1}{2}r\left(\lambda'+\nu' -4\frac{\nu''}{\nu'}\right).
\end{align*}
Equation (\ref{eq:vhatbis}) is of the form (\ref{PDEuhat}) with
$\gamma(r) =2$.
Recall that $\lambda(r), \nu(r) \in C^{n+1}(\mmm^+) \cap C^{n+1} (\mmm^-)$, so the same holds for $V(r)$. The values of $V(r)$ at the origin
  and infinity are, respectively, $\vzero = 1$ (by \eqref{expansionlam})
  and $\vinfty=1$. Concerning $\Af(r)$,  we use the expansion at the origin
  for $\nu(r)$  in \eqref{background_expansion_origin}
  together with
  $\nu_2\neq 0$,
    which follows from \eqref{eq:l2nu2} because of assumption {\AsHtwo} 
  and the base perturbation scheme condition $\Eb_c + \Pb_c \neq 0$.
  With that,
\begin{align}
  \frac{r}{\nu'_+}  =\frac{1}{2 \nu_2} + \Phi^{(2)}, \qquad
\Phi^{(2)} \in C^{n-1} (\mmm^+)  \mbox{ and }O(r^2).
  \label{eq:for_nu''}
  \end{align}
Thus, $\Af^+ \in C^{n-1} (\mmm^+)$ and $\Af^+(0)= -2$.
Therefore \eqref{eq:vhatbis} satisfies the requirements of Proposition \ref{PDEresult} with $\vzero=1$, $\vinfty=1$,
$a_0=-2$, $a_\infty=4$.
Since $\gamma(r) =2$ we have
$\ell_{\mbox{\tiny max}} = 1$. We conclude that $\vhat_{\perp}$ must be of the form 
$\vhat_{\perp}^0(r) P_0 (\cos \theta) + 
\vhat_{\perp}^1(r) P_1 (\cos \theta)$ and hence identically zero since by construction $\vhat_{\perp}$ does not have such components. The decomposition
(\ref{eq:uhat})
gives 
(\ref{decomvhat}) at once. Furthermore, since the class of gauges
 is restricted to $\alpha(r)=0$,
Lemma \ref{res:gauges_hat} ensures that (\ref{decomvhat}) holds
in the  class of gauges $\{\gaugeABY\}$, i.e. for
arbitrary parameters $A$, $B$ and  free function $\Y(r)$
in \eqref{gauge_second}, as well as for any 
choice of $\beta(r)$.

It remains to show that $\vhat_2(r)$ exists and is unique, and obtain
its behavior around $r=0$ { and $r\to \infty$}.
The problem for $\vhat_2$ is given by
equation \eqref{eq:uhat2} at both $\pm$ sides,
  together with the matching conditions obtained from
(\ref{final:uhat})-(\ref{final:uhatp}) of Proposition \ref{Prop:matching}, which
explicitly read
\begin{align}
  [\vhat_2]  =0, \quad \quad [ \vhat_2' ]  = &  
\frac{1}{6} \left ( 2 + \ro\nu'(\ro) \right ) 
e^{-\nu(\ro)} \ro^3\chiefes \Eb_+(\ro) (\opertbase_+(\ro)-\Omegaperbaseint)^2.
\label{eq:mc:vhat2}
\end{align}
We want to apply  Theorem \ref{theorem:global_r} 
with $\inhomo= \inhomo_2$ given in \eqref{eq:mu2}, and
\begin{align*}
\Af^{\pm}(r)=\frac{1}{2}r\left ( \nu_{\pm}' + \lam_{\pm}' -4 \frac{\nu_{\pm}''}{\nu_{\pm}'} \right ),\qquad
\Bf^{\pm}(r)=-4 e^{\lambda_{\pm}}.
\end{align*}
Let us check that all the hypotheses are satisfied.
We have already seen that 
$\Af^+(r)\in C^{n-1}([0,\ro])$
and $a_0=\Af^+(0)=-2$,
while we have $b_0=\Bf^+(0)=-4$ (c.f. \eqref{expansionlam}).
In the exterior, we may write (by the background field equations)
 \begin{align}
  \Af^-(r)=2(1+e^{\lambda_-})    ,\qquad
\Bf^-(r)=-4 e^{\lambda_-} ,
\label{eq:A_exterior}
 \end{align}
where $\lambda_-(r)$ is given explicitly in \eqref{eqs_back_vacuum}. Consequently,
\begin{align*}
&a_\infty=\lim_{r\to+\infty}\Af^-(r)=4, &&b_\infty=\lim_{r\to+\infty}\Bf^-(r)=-4,\\
&\lim_{r\to+\infty}r^2\frac{d\Af^-(r)}{dr}= -\frac{\chiefes \Mext}{2\pi},
&&\lim_{r\to+\infty}r^2\frac{d\Bf^-(r)}{dr}= \frac{\chiefes \Mext}{\pi}.
\end{align*}
The function $\inhomo_2^+(r)$ is $C^{n-1}((0,\ro])$ and extends continuously
to the centre, where it vanishes. The
structure of $\inhomo_2^+(r)$ around $r=0$ is obtained
using \eqref{background_expansion_origin}, \eqref{omega_origin} and \eqref{eq:sol_for_w2},
and it is found to be of the form
$\inhomo_2^+=r^6(\sigma_6+\Phi_\inhomo^{(1)})$
where $\sigma_6\in\mathbb{R}$
and $\Phi_\inhomo^{(1)}$ is  $o(1)$. 
Concerning  $\inhomo_2^-$, inserting the background vacuum
field equations in \eqref{eq:mu2} gives
\begin{equation}
\inhomo_2^-=\frac{1}{3}r^4(e^\lambda-1)\opertbase_-'{}^2
 =\frac{3\chiefes \Mext J_\opertbase^2}{\pi r^5}\left(1+O\left(\frac{1}{r}\right)\right)
\label{eq:inhomo2_plus}
\end{equation}
where the second equality follows from the explicit form
\eqref{eq:opert_ext_sol}  of $\opertbase_{-}$.
Hence, $\inhomo_2^+$ and $\inhomo_2^-$ satisfy the requirements
of Theorem \ref{theorem:global_r}
with $\alpha_0=6$ and $\alpha_\infty=-5$. The quantities
$\lambda^0_{-}$ and $\lambda^\infty_{-}$ defined in Theorem \ref{theorem:global_r} take the values 
$\lambda^0_-=-4$ and $\lambda^\infty_-=-4$. All the hypothesis of Theorem
\ref{theorem:global_r} are satisfied, including \eqref{Assump},
as well as $\lambda^0_-+1\leq 0$ and $\alpha_0-1\geq 0$. Consequently,
there exists a unique 
solution $\{\vhat_2^+(r),\vhat_2^-(r)\}$
that stays bounded on $(0,\infty)$,
and, moreover, $\vhat^+_2(r)$ is  $O(r^4)$, extends as a $C^1([0,\ro])$ function,
and $\vhat^+_2{}'(r)$ is $O(r^3)$ because $\lambda^0_-=-4$ and $\alpha_0=6$.
The behaviour of the solution $\vhat^-$ and its derivative
  $\vhat^-{}'$ near $r=\infty$ is obtained from Theorem \ref{theorem:global_r} with the values
$\min\{|\lambda^\infty_-|,|\alpha_\infty|\}=4$
and $\min\{|\lambda^\infty_--1|,|\alpha_\infty-1|\}=5$ respectively.
The differentiability of the solutions
in $D^+$ and $D^-$ follow from the fact that the coefficients $\Af^+$, $\Bf^+$
and $\inhomo_2^+$ are
$C^{n-1}$, $C^{n}$ and $C^{n-1}$ on $(0,\ro)$
respectively, while $\Af^-$, $\Bf^-$ and $\inhomo_2^-$ are $C^{\infty}([\ro,\infty))$.

As above, Lemma \ref{res:gauges_hat} ensures that
the class of gauges $\{\gaugeABY\}$ given by
\eqref{gauge_second} with arbitrary
parameters $A$, $B$, free function $\Y(r)$ and $\alpha(r)=0$,
and also free choice of $\beta(r)$,
keeps $\vhat_2(r)$ invariant.

The final statement concerning the case
$\opertbase_{\pm}=0$ is immediate since $\vhat_2(r)=0$ solves the ODE
(\ref{eq:uhat2}) with $\inhomo_2(r)=0$.
\fin
\end{proof}

\subsection{Barotropic equation of state: existence and uniqueness of $\vhat$}
\label{sec:vhat_l_0}
Let us recapitulate. Propositions \ref{prop:field_equations} and \ref{prop:problem_vhat} have shown the existence of a
class of gauges $\{\gaugeABY\}$ and free $\beta(r)$ where
$\vhat$ only has $\ell=0$ and $\ell=2$ components.  Inverting
the definitions \eqref{def:hat_functions},  the original functions $\{h,m,k\}$ take the form (on either side $D^{\pm}$)
\begin{align}
  h &= \hhat + \frac{1}{2}r \nu^{\prime} f \nonumber\\
    &= \frac{1}{2} \sigma
      - \frac{1}{\nu'}(\vhat_0'+\vhat_2'P _2 (\cos \theta))
      + \left ( \frac{1}{r \nu'} + \frac{1}{2} \right ) 
      \left ( \qhat_0 + \F P_2 (\cos \theta) \right ) + \frac{1}{2}r \nu^{\prime} f,
      \label{def:unhat_function_h}\\
  k &= \vhat -\hhat+f \nonumber\\
    &= \vhat_0+\vhat_2 P_2(\cos\theta)-\frac{1}{2} \sigma
      + \frac{1}{\nu'}(\vhat_0'+\vhat_2'P _2 (\cos \theta))- \left ( \frac{1}{r \nu'}
      + \frac{1}{2} \right ) \left ( \qhat_0 + \F P_2 (\cos \theta) \right ) +f
      \label{def:unhat_function_k}\\
  m &= \qhat - h + \frac{1}{2} r f \left ( \lam' + \nu' \right)+ ( r f)_{,r}\nonumber\\
    &= \left (1- \frac{1}{r \nu'} - \frac{1}{2} \right ) \left ( \qhat_0 + \F P_2 (\cos \theta) \right )
      -\frac{1}{2} \sigma + \frac{1}{\nu'}(\vhat_0'+\vhat_2'P _2 (\cos \theta)) +\frac{1}{2}r\lambda' f+( r f)_{,r},
      \label{def:unhat_function_m}
\end{align}
where  $\sigma,\qhat_0,\qhat_2,\vhat_0,\vhat_2,\F$
are functions of $r$, while $f$ is still a free function depending
on $r,\theta$.

From the previous subsections, and leaving aside $f(r,\theta)$ (to be discussed later),
the only part of the solution where existence and uniqueness has not yet
been established is the $\ell=0$ sector, where the unknowns
are $\{ \vhat_0, \qhat_0, \sigma\}$.
In this section we accomplish this by imposing the background barotropic
EOS. As discussed  in subsection \ref{sec:barotropic},
$\qhat_0(r)$ is then given explicitly by (\ref{eq:fintegral}) and 
it is useful
to replace the unknowns $\{\vhat_0(r),\sigma(r)\}$
by $\{\newvhat(r),\newsigma(r)\}$. The main advantage
is that $\newsigma(r)$ decouples from $\newvhat(r)$ and
satisfies a global problem for which we can show existence and uniqueness, while
$\newvhat(r)$ will be shown later to be pure gauge.
Let us first focus on the problem for  $\newsigma(r)$.

We already have the equations that $\newsigma^+$ and $\newsigma^-$
satisfy in their respective domains,
i.e. \eqref{eq:for_zeta}. Let us determine the
jumps of $\newsigma^+$ and $\newsigma^-$ across $r=\ro$,
as well as the regularity
conditions of $\newsigma^+$ around $r=0$, both following
from assumptions {\baseKper} and {\baseKperper} of the base scheme.
Incidentally, no conditions at $r=\infty$ will be needed, since the field equations
will provide bounded solutions only.
We start with the regularity and observe,  first of all,
that  \eqref{eq:newvhat} already implies $\delta(r)$ is $C^m(0,\ro)$ and bounded near $r=0$.
Since we have a priori information on $\{h,m,k\}$, let us 
rewrite  \eqref{def:unhat_function_h}-\eqref{def:unhat_function_m} in
terms of $\{ \newvhat, \newsigma\}$.
After a straightforward calculation and introducing the auxiliary function
\begin{equation}
\keygamma(r)\defi \newsigma'\frac{1}{\nu'}+\newsigma\frac{ r\nu'}{2+r\nu'}
+\frac{2+r \nu'}{r \nu'}2\F
\label{def:keygamma}
\end{equation}
as a shorthand, 
we have
\begin{align}
h&=\frac{1}{2}r\nu'(\newvhat+f)+\fintegralz+\frac{1}{2}\keygamma
+\left ( \frac{2+r\nu'}{2r\nu'}\F -\frac{1}{\nu'}\vhat_2'  \right )P_2(\cos\theta),
\nonumber\\
k&=\newvhat+f-\frac{1}{2}\keygamma+
\left(\vhat_2+\frac{1}{\nu'}\vhat_2'-\frac{2+r\nu'}{2r\nu'}\F\right)P_2(\cos\theta),\label{eq:invert_hats_zeta}\\
m&=\frac{1}{2}r\lambda'(\newvhat+f)+(r(\newvhat+f))_{,r}
+\frac{r\nu'-2}{2(2+r\nu')}\keygamma
+\newsigma\frac{4e^\lambda}{2(2+r\nu')^2}
+\left(\frac{1}{\nu'}\vhat_2'-\frac{2-r\nu'}{2r\nu'}\F\right)P_2(\cos\theta).\nonumber
\end{align}
By 
  Proposition \ref{prop:problem_vhat},  $\vhat_2(r)$ is $C^{n+1}((0,\ro])$ and $O(r^4)$, extends $C^1$
at the origin and $\vhat_2'(r)$ is $O(r^3)$. Moreover,  $\newvhat(r)$ is $C^m(0,\ro)$
    and bounded near $r=0$ and $\F(r)$ is $O(r^4)$ as follows from its defining expression \eqref{eq:f_omega}
together with \eqref{background_expansion_origin} and \eqref{omega_origin}.
Consequently, the expression for $h$ (or that for $k$) 
forces $\keygamma(r)$  to be of class $C^m((0,\ro])$ and bounded near $r=0$.
This implies that $r\keygamma(r)$ must vanish as $r\to 0$.
From \eqref{def:keygamma}, this limit is
\begin{equation}
0=\lim_{r\to 0}r\keygamma(r) = \lim_{r\to 0}\left( \frac{1}{2\nu_2}\newsigma'(r)+r^3\nu_2\newsigma(r)\right),
\label{newsigma_origin}
\end{equation}
where in the second equality we used \eqref{eq:for_nu''}
and \eqref{background_expansion_origin}.
On the other hand, \emph{if the limit of $\keygamma(r)$ as $r\to 0$ exists} then so does
the limit of $h$, and therefore the expression of $h$ in \eqref{eq:invert_hats_zeta}
inserted in \eqref{eq:Ppp_c} provides
\begin{equation}
  \lim_{r\to 0}\Ppp=-(\Ebc+\Pbc)\lim_{r\to 0}\keygamma(r)
    \qquad\mbox{ if } \lim_{r\to 0}\keygamma(r)
\mbox{ exists.}
\label{eq:limit_keygamma}
\end{equation}

We next obtain
the jumps that $\newsigma(r)$ must satisfy on $r=\ro$.
The matching conditions (\ref{final:uhat})-(\ref{final:uhatp}) imply,
restricting to the $\ell=0$ sector,
\begin{align}
  [\vhat_0]  &=\frac{H_0}{2} + \left ( 1 + \frac{\ro \nu'(\ro)}{2} \right ) c_0,\label{eq:mc:vhat0}\\
  [\vhat_0'] &=\left ( \frac{1}{\ro} + \frac{\nu'(\ro)}{2} \right )\left ( [\qhat_0] - \frac{H_0}{2}\right ) - \left ( \frac{e^{\lambda(\ro)}}{\ro} + \frac{1}{2} \ro \nu'(\ro){}^2 \right ) 
c_0.\label{eq:mc:vhat0p}
\end{align}
Eliminating $\newvhat$ from \eqref{eq:newvhat} into  \eqref{eq:newsigma}
gives an expression relating $\vhat_0$ and $\newsigma$. Taking the diference at both sides  and inserting \eqref{eq:mc:vhat0} gives 
\begin{equation}
  [\newsigma]=\frac{1}{4}e^{-\lambda(\ro)}(e^{\lambda(\ro)}+3)(1-e^{\lambda(\ro)})
  (H_0-2[\fintegralz]),
\label{eq:mc:newsigma}
\end{equation}
where $\nu'(a)$ is substituted from  \eqref{nup_value}.
To obtain $[\newsigma']$ we make use of \eqref{eq:qhat_zeta}, after eliminating  $\newvhat$
with \eqref{eq:newvhat}, at both $\pm$ sides. The expression contains $[\vhat_0]$
and $[\vhat_0']$, which we substitute by their expressions in \eqref{eq:mc:vhat0} and
\eqref{eq:mc:vhat0p}. The terms containing $[\qhat_0]$ cancel. Inserting \eqref{eq:mc:newsigma}
and using 
\eqref{nup_value} we obtain
\begin{equation}
[\newsigma']=\frac{1}{4\ro}e^{-\lambda(\ro)}(e^{2\lambda(\ro)}+3)(e^{\lambda(\ro)}-1)(H_0-2[\fintegralz])
-\frac{2}{\ro}(e^{\lambda(\ro)}+1)[\F],
\label{eq:mc:newsigmap}
\end{equation}
keeping in mind that the explicit expression of $[\F]$ is given by \eqref{dif_qhat2}.

  Later in the paper we will face the issue of fixing the gauge completely. To do that it will be determinant to understand the role of the parameter
   $H_0-2[\fintegralz]$. In preparation for that, let us introduce
      $\newsigma_*(r)$ as the function that satisfies the same equation as $\newsigma$ 
  and shares its behaviour around $r=0$, namely
\begin{equation}
\lim_{r\to 0}\left( \frac{1}{2\nu_2}\newsigma_*'(r)+r^3\nu_2\newsigma_*(r)\right)=0,
\label{newsigma_star_origin}
\end{equation}
but with jumps given by
\begin{equation}
[\newsigma_*]=0,\quad[\newsigma_*']=-\frac{2}{\ro}(e^{\lambda(\ro)}+1)[\F].
\label{eq:mc_newsigma_star}
\end{equation}
We also introduce the corresponding function $\keygamma_{\star}$
\begin{equation}
  \keygamma_*\defi \newsigma_{*}'\frac{1}{\nu'}
  +\newsigma_{*}\frac{ r\nu'}{2+r\nu'} +\frac{2+r \nu'}{r \nu'}2\F.
  \label{def:keygamma_star_nss}
\end{equation}
and require
\begin{align}
  \mbox{ if } \quad \lim_{r \to 0} \keygamma_{\star} \quad \mbox{and} \quad   \lim_{r \to 0} \keygamma \quad \mbox{ exist } \qquad \Longrightarrow \qquad
  \lim_{r \to 0} \keygamma_{\star} =
  \lim_{r \to 0} \keygamma. \label{newsigma-newsigmastar}
\end{align}

In the next proposition we establish  existence and uniqueness of $\newsigma_*$. The corresponding result for 
the original $\newsigma$ is obtained as  a corollary.

\begin{proposition}[\textbf{Existence and uniqueness of $\newsigma_*$}]
  \label{prop:newsigma_star_unique}
  The problem for $\newsigma_*(r)$, namely
  equation \eqref{eq:for_zeta}  
    on $D^+$ and $D^-$ with matching
    conditions on $r=\ro$ given by 
    \eqref{eq:mc_newsigma_star}
    and such that the restriction around the origin
    \eqref{newsigma_star_origin} holds, admits a one-parameter family
      of solutions. In addition, the limits 
        $\lim_{r \to 0} \keygamma(r)$ and $\lim_{r \to 0} \keygamma_{\star}(r)$
        exist and the function $\newsigma_{\star}$ is uniquely determined by the value $\Ppp_c\defi  \lim_{r\to 0}\Ppp$ by means of
    \begin{align}
      \lim_{r \to 0} \keygamma_{\star} = - \frac{\Ppp}{\Ebc+\Pbc}.
      \label{boundary:newsigmastar}
           \end{align}
    This solution has  the following properties:
  \begin{enumerate}
  \item  $\newsigma_*^+(r)$ is of class $C^{n+1}((0,\ro])$, extends to a $C^1([0,\ro])$ function,
    and has the form
    \begin{equation}
      \newsigma_*^+(r)=-\chiefes\Ppp_c\frac{\Ebc+3\Pbc}{6(\Ebc+\Pbc)} \newsigma^+_-(r)+\newsigma_P^+(r),
      \label{eq:newsigma_plus_star_sol}
    \end{equation}
    where 
    $\newsigma^+_-(r)$ and $\newsigma_P^+(r)$ are unique:
    $\newsigma^+_-(r)$ solves the homogeneous part of \eqref{eq:for_zeta}
    and satisfies \eqref{behaviour_zeta}, while $\newsigma_P^+(r)$ is the
    only particular solution of \eqref{eq:for_zeta} that is $O(r^4)$.
      \item  $\newsigma_*^-(r)$ reads
    \begin{align}
            \newsigma^-_{*}(r)=
      & e^{\lambda_-(r)}\left(\frac{\newsigma_A}{r^2}
        +\frac{\newsigma_B}{r}\right)+ 2J^2_\opertbase\frac{1}{r^4}(2+e^{\lambda_-(r)}),
        \label{eq:newsigma_minus_star}
    \end{align}
    where  $\lambda_-(r)$ is given by \eqref{eqs_back_vacuum}
    and $\newsigma_A, \newsigma_B$ are constants fully determined by the matching conditions
    in terms of quantities of the background configuration, plus
    $J_\opertbase, \Omegaperbase$ and 
    $\{\Ppp_c, \newsigma^+_-(\ro), \newsigma^+_P(\ro),\newsigma^+_-{}'(\ro),\newsigma^+_P{}'(\ro)\}$.
  \item If $\opertbase_\pm=0$ then 
    
      \begin{align}
        \newsigma_*^+(r)=
        &-\chiefes\Ppp_c\frac{\Ebc+3\Pbc}{6(\Ebc+\Pbc)} \newsigma^+_-(r), \label{eq:newsigma_int_sol_no_w_1}\\
        \newsigma_*^-(r)=
        &
                    \chiefes\Ppp_c\newsigma^-_{**}(r), \label{eq:newsigma_ext_sol_no_w_1}
    \end{align}
    where
    \begin{eqnarray*}
      \newsigma^-_{**}(r) :=
       \frac{\Ebc+3 \Pbc}{6(\Ebc+\Pbc)} e^{\lambda_- (r)}
      \left(\left(\frac{\ro^2}{r^2}-\frac{\ro}{r}\right) e^{-\lambda(\ro)} \ro \newsigma^+_-{}'(\ro)
      +\left(\frac{\ro^2}{r^2} -\frac{\ro}{r}(1+e^{-\lambda(\ro)})\right) \newsigma^+_-(\ro)\right).
    \end{eqnarray*}
  \end{enumerate}
\end{proposition}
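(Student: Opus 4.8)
The plan is to solve the ODE \eqref{eq:for_zeta} separately on the interior domain $D^+$ and the exterior $D^-$, glue the two pieces with the matching conditions \eqref{eq:mc_newsigma_star}, and then identify the single constant left undetermined with $\Ppp_c$. The homogeneous operator attached to \eqref{eq:for_zeta}, acting on $W(r)$, is $r^2W''+\frac12 r^2(\nu'+\lam'-4\nu''/\nu')W'+2e^{\lambda}W$; it is \emph{exactly} the operator in \eqref{eq:uhat0}, it has a regular singular point at $r=0$ with indicial roots $1$ and $2$ (using $\lim_{r\to0}\Af^+=-2$, c.f.\ \eqref{eq:for_nu''}, and $\lim_{r\to0}2e^{\lambda}=2$), and in $D^-$ it has explicit Schwarzschild coefficients, c.f.\ \eqref{eqs_back_vacuum}, so that the exterior equation integrates in closed form.

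\textbf{Interior.} Since $\nu'=2\nu_2r+O(r^3)$ with $\nu_2\neq0$ (by {\AsHtwo} and $\Ebc+\Pbc\neq0$), the origin condition \eqref{newsigma_star_origin} is equivalent to demanding that the solution be $O(r^2)$, i.e.\ to selecting the larger indicial root, which is automatically log-free. By Lemma \ref{Lemma:behaviour}, the homogeneous part of \eqref{eq:for_zeta} then has a one-dimensional space of solutions that are $C^{n+1}((0,\ro])$, extend as $C^1([0,\ro])$ functions and satisfy \eqref{behaviour_zeta}; I fix a generator $\newsigma^+_-(r)$ with that normalisation. The right-hand side of \eqref{eq:for_zeta} is $O(r^4)$ near the origin: by \eqref{background_expansion_origin} and \eqref{omega_origin} one has $\lam'+\nu'=O(r)$, $(\opertbase-\Omegaperbase)^2=O(1)$, $\opertbase'^2 r=O(r^3)$, while $\inhomo_2=O(r^6)$ as computed in the proof of Proposition \ref{prop:problem_vhat}; as $4$ exceeds both indicial roots there is a unique particular solution $\newsigma^+_P(r)$ of \eqref{eq:for_zeta} that is $O(r^4)$ (again $C^{n+1}((0,\ro])$, $C^1$ up to the origin). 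Hence every interior solution compatible with \eqref{newsigma_star_origin} is $\newsigma^+_{\star}=C\,\newsigma^+_-+\newsigma^+_P$ with $C\in\mathbb R$, which is the announced form \eqref{eq:newsigma_plus_star_sol} once $C$ is expressed through $\Ppp_c$.

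\textbf{Exterior and matching.} In $D^-$, reduction of order (or direct substitution) gives $e^{\lambda_-(r)}/r^2$ and $e^{\lambda_-(r)}/r$ as independent homogeneous solutions, and inserting $\opertbase_-=2J_\opertbase/r^3$ from \eqref{eq:opert_ext_sol} makes the inhomogeneous term a combination of $r^{-4}$ and $e^{\lambda_-}r^{-4}$, for which $2J_\opertbase^2 r^{-4}(2+e^{\lambda_-(r)})$ is a particular solution; this yields \eqref{eq:newsigma_minus_star} with two free constants $\newsigma_A,\newsigma_B$. Every such $\newsigma^-_{\star}$ tends to $0$ at infinity, so no boundary condition at $r=\infty$ is available or needed. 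Imposing the two matching conditions \eqref{eq:mc_newsigma_star} at $r=\ro$ gives two linear equations for $\{C,\newsigma_A,\newsigma_B\}$; since the Wronskian of $\{e^{\lambda_-}/r^2,\,e^{\lambda_-}/r\}$ does not vanish at $\ro$, for each $C$ this determines $\{\newsigma_A,\newsigma_B\}$ uniquely by a $2\times2$ system whose data are background quantities at $\ro$, $J_\opertbase$, $\Omegaperbase$, $[\F]$ (given by \eqref{dif_qhat2}) and $\{\newsigma^+_-(\ro),\newsigma^+_P(\ro),\newsigma^+_-{}'(\ro),\newsigma^+_P{}'(\ro)\}$. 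This establishes the one-parameter family of solutions (parameter $C$) and, upon solving the system, property $2$.

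\textbf{The parameter, regularity and the case $\opertbase_\pm=0$.} To see that the family is parametrised by $\Ppp_c$, I would evaluate \eqref{def:keygamma_star_nss} as $r\to0$: the term $\newsigma_{\star}\,r\nu'/(2+r\nu')$ vanishes (both factors do), $\tfrac{2+r\nu'}{r\nu'}\,2\F$ vanishes ($\F=O(r^4)$ by \eqref{eq:f_omega} and $\tfrac{2+r\nu'}{r\nu'}=O(r^{-2})$), and $\newsigma_{\star}'/\nu'$ has a finite limit $\ell_0\,C$, where $\ell_0:=\lim_{r\to0}\newsigma^+_-{}'/\nu'\neq0$ is an explicit constant depending only on $\nu_2$ (hence on $\Ebc,\Pbc$), the $\newsigma^+_P$-contribution dropping out because $\newsigma^+_P=O(r^4)$. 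Thus $C\mapsto\lim_{r\to0}\keygamma_{\star}=\ell_0\,C$ is a linear bijection of $\mathbb R$; the same computation with $\newsigma$ shows $\lim_{r\to0}\keygamma$ also exists, and by \eqref{newsigma-newsigmastar} and \eqref{eq:limit_keygamma} both equal $-\Ppp_c/(\Ebc+\Pbc)$, which is \eqref{boundary:newsigmastar} and determines $C$ uniquely (giving the coefficient $-\chiefes\Ppp_c(\Ebc+3\Pbc)/(6(\Ebc+\Pbc))$ in \eqref{eq:newsigma_plus_star_sol} with $\newsigma^+_-$ normalised by \eqref{behaviour_zeta}); the regularity claims then follow from that of the background coefficients and standard ODE dependence. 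Finally, if $\opertbase_\pm=0$ then $J_\opertbase=0$, $\F\equiv0$ and $\inhomo_2\equiv0$, so the right-hand side of \eqref{eq:for_zeta} is identically zero; hence $\newsigma^+_P=0$ and \eqref{eq:newsigma_plus_star_sol} reduces to \eqref{eq:newsigma_int_sol_no_w_1}, while in $D^-$ the particular term and $[\F]$ drop out, so $\newsigma^-_{\star}$ is the unique $C^1$ homogeneous exterior solution matching $\newsigma^+_{\star}(\ro),\newsigma^+_{\star}{}'(\ro)$, and the $2\times2$ system gives \eqref{eq:newsigma_ext_sol_no_w_1}. I expect the crux to be the near-origin analysis --- proving via Lemma \ref{Lemma:behaviour} that \eqref{newsigma_star_origin} cuts the homogeneous space to one dimension and that the $O(r^4)$ forcing yields an $O(r^4)$ particular solution, the integer gap between the indicial roots $1$ and $2$ requiring care --- together with checking that the surviving constant $C$ controls $\lim_{r\to0}\keygamma_{\star}$ bijectively; the exterior integration and the matching bookkeeping are lengthy but routine.
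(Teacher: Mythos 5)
Your proposal is correct and follows essentially the same route as the paper: the near-origin analysis via Lemma \ref{Lemma:behaviour} (exponents $1$ and $2$, i.e. $\lambda_\pm=-1,-2$ in the paper's notation), the origin condition \eqref{newsigma_star_origin} killing the exponent-$1$ solution, a unique $O(r^4)$ particular solution since the forcing is $O(r^4)$, explicit integration in the Schwarzschild exterior with the same particular term, an invertible $2\times2$ matching system, and identification of the surviving constant through $\lim_{r\to 0}\keygamma_\star=c_-/\nu_2$ and \eqref{eq:limit_keygamma}. The only differences are cosmetic (you parametrise the matching system by $C$ before fixing it via $\Ppp_c$, whereas the paper fixes $c_-$ first and then solves for $\newsigma_A,\newsigma_B$).
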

\begin{proof}
We first analize the equation for  $\newsigma_*^+$ in $D^+$.
To do that we make use of Lemma \ref{Lemma:behaviour}
for the homogeneous part of \eqref{eq:for_zeta}.
Lemma \ref{Lemma:behaviour} applies
with (changing $t$ for $r$)
\[
\Af^+(r)=\frac{1}{2}r\left ( \nu_+' + \lam_+' - 4\frac{\nu_+''}{\nu_+'} \right ),\qquad
\Bf^+(r)=2e^{\lambda_+},
\]
which have been already analised
(except for a different constant factor in $\Bf^+$)
in the proof of
Proposition \ref{prop:problem_vhat}. We showed $\Af^+(r)\in C^{n-1}([0,\ro])$ and $a_0=\Af^+(0)=-2$,
while  $\Bf^+(r)\in C^{n+1}([0,\ro])$ with $b_0=2$ (observe that $b_0\geq 0$, which prevents us from
using Theorem \ref{theorem:global_r}). As a result $\lambda_+=-1$ and $\lambda_-=-2$,
and therefore point \emph{(i)} of Lemma \ref{Lemma:behaviour} ensures there exist
two linearly independent solutions $\newsigma^+_\pm(r)$, which necessarily are of class $C^{n+1}(0,\ro)$,
with
\begin{align}
&\newsigma^+_+(r)=r (1+o(1)),\qquad \newsigma^+_+{}'(r)=1+o(1),\nonumber \\
&\newsigma^+_-(r)=r^2 (1+o(1)),\qquad \newsigma^+_-{}'(r)=r(2+o(1)).
\label{behaviour_zeta}
\end{align}

The inhomogeneous term of equation \eqref{eq:for_zeta} reads
\begin{align}
\inhomop_0^+\defi
-r^3e^{-\nu_+}\left(2(\lambda_+'+\nu_+')(\opertbase_+-\Omegaperbase)^2-\opertbase_+'{}^2r\right)
-4\inhomo_2^+. \label{eq:inhomop}
\end{align}

Although Theorem \ref{theorem:global_r} cannot be applied directly, we
  may still use several constructions introduced in its proof, specifically regarding the properties of the particular solution $U_p$  introduced there.
By direct inspection, the  function $\inhomop_0^+$ is $C^{n-1}([0,\ro])$,
just like $\inhomo_2^+$ (c.f. \eqref{eq:mu2}).
Its structure 
around $r=0$ is obtained
from \eqref{background_expansion_origin}, \eqref{omega_origin},
\eqref{eq:sol_for_w2}, plus the fact that
$\inhomo_2^+$ 
is 
$O(r^6)$, and turns out to be
$\inhomop_0^+= - r^4(4e^{-\nu_0}(\opertbase_0-\Omegaperbase)^2(\lambda_2+\nu_2)+o(1))$.
In the notation of  Theorem \ref{theorem:global_r} with 
$\inhomo^+=\inhomop^+_0$  we have $\alpha_0=4$,
so that $\alpha_0+\lambda_+=3$ and $\alpha_0+\lambda_-=2$
and therefore the general solution of the equation for $\newsigma^+(r)$ 
has the form
\[
\newsigma_*^+(r)=c_+\newsigma^+_+(r) + c_-\newsigma^+_-(r)+ \newsigma_P^+(r),
\]
where the particular solution satisfies, see Remark \ref{remark:diff_origin_particular_D1},
\begin{align}
\newsigma_P^+(r)=r^4(\newsigmaz+o(1)), \qquad
  \newsigma_P^+{}'(r)=r^3(4\newsigmaz+o(1))
  \label{exp:newsigma}
\end{align}
with $\newsigmaz$ a fixed number (see \eqref{Upr0}).
  From \eqref{behaviour_zeta}
    and \eqref{exp:newsigma}, the requirement \eqref{newsigma_star_origin}
  forces $c_+=0$.  This implies, taking into account \eqref{eq:for_nu''} and $\F \in O(r^4)$,
  that $\lim_{r\to 0 }\keygamma_*=c_-/\nu_2$.

So far we  only used
  equation \eqref{eq:for_zeta} and \eqref{newsigma_star_origin}. Both
  are satisfied by the original function  $\newsigma^+$, so it must also be that
    $\newsigma^+(r) = \hat{c}_- \newsigma^+_{-}(r) + \newsigma^+_P(r)$ for an,
    a priori, different integration constant $\hat{c}_{-}$. However, since 
  $\lim_{r \to 0} \keygamma = \hat{c}_-/\nu_2$, condition
  \eqref{newsigma-newsigmastar}
   implies  $\hat{c}_-= c_-$ and we conclude that $\newsigma^+(r) = \newsigma^+_{\star}(r)$.
Using  $\nu_2 =  \chiefes (\Ebc + 3\Pbc)/6$, and that $\Ebc+\Pbc\neq0$,
the relation \eqref{eq:limit_keygamma} fixes $c_-$ as
\[
  c_- = - \frac{\chiefes(\Ebc+3 \Pbc)}{6 (\Ebc + \Pbc)}  \Ppp_c.
\]
This proves \eqref{boundary:newsigmastar} and item \emph{1}.

Equation
(\ref{eq:for_zeta}) for $\newsigma_*$ in the exterior region $D^-$ reads 
\begin{equation*}
r^2 \newsigma_*^-{}''(r)+2r(e^{\lambda_-}+1)\newsigma_*^-{}'(r)
+2 e^{\lambda_-}\newsigma_*^-(r)= - 12 J^2_\opertbase \frac{1}{r^4}(e^{\lambda_-}-4),
\end{equation*}
after inserting  \eqref{eq:A_exterior} for $\Af^-(r)$, \eqref{eq:inhomo2_plus}
and \eqref{eq:opert_ext_sol}. We do not replace $\lambda_-(r)$ by its explicit form \eqref{eqs_back_vacuum} for conciseness.
The general solution is given by
\begin{align*}
\newsigma_*^-(r)=e^{\lambda_-(r)}\left(\frac{\newsigma_A}{r^2}+\frac{\newsigma_B}{r}\right) +  2J^2_\opertbase\frac{1}{r^4}(2+e^{\lambda_-(r)}),
\qquad \qquad \newsigma_A,\newsigma_B \in\mathbb{R}.
\end{align*}
The integration constants $\newsigma_A, \newsigma_B$ are restricted  to
  satisfy the jumps \eqref{eq:mc_newsigma_star},
which can be arranged in the form
\begin{align}
    \left (
      \begin{array}{cc}
        1 & \ro \\
        2 & \ro
      \end{array}
    \right )
    \left (
      \begin{array}{c}
        \newsigma_A \\
        \newsigma_B
      \end{array} \right )
    = \left (
      \begin{array}{c}
       \kappa_1\\
       \kappa_2
      \end{array} \right ),\label{eq:matrix_for_AB}
\end{align}
where $\kappa_1,\kappa_2$ depend on $\{\Ppp_c,\newsigma^+_-(\ro),\newsigma^+_P(\ro),J_\opertbase,\ro,\Mext,\newsigma^+_-{}'(\ro),\newsigma^+_P{}'(\ro),[\F]\}$,
with $[\F]$ given by \eqref{dif_qhat2}. 
The $2\times 2$ matrix has determinant $-\ro\neq 0$ and therefore there exist unique values of $\newsigma_A$, $\newsigma_B$ that fulfill these conditions.  This proves item \emph{2}, as well as the global existence and uniqueness claim.

Assume now $\opertbase_\pm=0$.  The inhomogeneous term vanishes
$\inhomop_0^+=0$ (see \eqref{eq:inhomop} and \eqref{eq:mu2})
and therefore $\newsigma_P^+(r)=0$,
so that \eqref{eq:newsigma_int_sol_no_w_1} follows. In addition,
$J_\opertbase=\Omegaperbase=[\F]=0$, and 
we can solve \eqref{eq:matrix_for_AB}, to  obtain  
\begin{align}
  \newsigma_A &=
                                \frac{\chiefes(\Ebc+3 \Pbc)}{6 (\Ebc + \Pbc)}  \Ppp_c \ro^2
               \left(\ro e^{-\lambda(\ro)}\newsigma^+_-{}'(\ro)+\newsigma^+_-(\ro)\right),\\
  \newsigma_B &=
              - \frac{\chiefes(\Ebc+3 \Pbc)}{6 (\Ebc + \Pbc)}  \Ppp_c \ro
               \left(\ro \newsigma^+_-{}'(\ro)+(e^{-\lambda(\ro)}+1)\newsigma^+_-(\ro)\right).
\end{align}
Inserting into \eqref{eq:newsigma_minus_star} gives
\eqref{eq:newsigma_ext_sol_no_w_1} 
after using the explicit form \eqref{eqs_back_vacuum} of $\lambda_{-}(r)$.\fin
\end{proof}

\begin{corollary}
 \label{coro:newsigma_unique}
The function $\newsigma$ is given by
\begin{align}
  \newsigma^+(r)=&\newsigma^+_*(r),\label{eq:newsigma_plus_new}\\
  \newsigma^-(r)=&
                   \left(H_0-2[\fintegralz]\right)e^{\lambda_-(r)}
                   \left(-\frac{3}{4}\left(\frac{\chiefes\Mext}{4\pi}\right)^2 \frac{1}{r^2}
                   +\left(\frac{\chiefes\Mext}{4\pi}\right) \frac{1}{r}\right)
                   +\newsigma^-_{*}(r).\label{eq:newsigma_minus_new}
\end{align}
\end{corollary}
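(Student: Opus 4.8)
The plan is as follows. Equation \eqref{eq:newsigma_plus_new} is already contained in the proof of Proposition \ref{prop:newsigma_star_unique}: there it is shown that $\newsigma^+$ solves exactly the same interior problem as $\newsigma_*^+$, namely equation \eqref{eq:for_zeta} on $D^+$ together with the behaviour at the origin \eqref{newsigma_origin} (which coincides with \eqref{newsigma_star_origin}) and the coincidence of $\lim_{r\to 0}\keygamma$ with $\lim_{r\to 0}\keygamma_{\star}$ forced by \eqref{newsigma-newsigmastar}. Hence $\newsigma^+=\newsigma^+_*$; in particular $\newsigma^+$ and $\newsigma^+_*$ agree, together with their first derivatives, at $r=\ro$.

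To obtain \eqref{eq:newsigma_minus_new} I would set $\Delta(r):=\newsigma^-(r)-\newsigma^-_*(r)$ on $D^-$. The inhomogeneous term of \eqref{eq:for_zeta} involves only background and first order quantities, hence is the same for $\newsigma$ and for $\newsigma_*$; therefore $\Delta$ solves the homogeneous part of \eqref{eq:for_zeta} on $D^-$. From the explicit integration of that equation performed in the proof of Proposition \ref{prop:newsigma_star_unique} (the homogeneous part of \eqref{eq:newsigma_minus_star}), the exterior homogeneous solutions are spanned by $e^{\lambda_-(r)}/r^2$ and $e^{\lambda_-(r)}/r$, so there exist constants $C_1,C_2$ with
\[
\Delta(r)=e^{\lambda_-(r)}\left(\frac{C_1}{r^2}+\frac{C_2}{r}\right).
\]

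It then remains to fix $C_1$ and $C_2$. Since $\newsigma^+=\newsigma^+_*$, subtracting the matching conditions \eqref{eq:mc_newsigma_star} for $\newsigma_*$ from the matching conditions \eqref{eq:mc:newsigma}--\eqref{eq:mc:newsigmap} for $\newsigma$ makes the interior contributions cancel, leaving
\[
\Delta(\ro)=-\big([\newsigma]-[\newsigma_*]\big),\qquad \Delta'(\ro)=-\big([\newsigma']-[\newsigma_*']\big),
\]
which are explicit expressions proportional to $H_0-2[\fintegralz]$ (the $[\F]$ terms, identical on both sides, drop out). Substituting the Schwarzschild value $e^{-\lambda_-(r)}=1-\chiefes\Mext/(4\pi r)$ from \eqref{eqs_back_vacuum} and solving the resulting $2\times 2$ linear system for $\{C_1,C_2\}$ — whose matrix is nonsingular, of the same type as the one appearing in \eqref{eq:matrix_for_AB} — yields $C_1=-\tfrac{3}{4}(\chiefes\Mext/4\pi)^2\,(H_0-2[\fintegralz])$ and $C_2=(\chiefes\Mext/4\pi)\,(H_0-2[\fintegralz])$, which is precisely the extra term displayed in \eqref{eq:newsigma_minus_new}. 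Together with $\newsigma^-_*$ given by Proposition \ref{prop:newsigma_star_unique} this proves Corollary \ref{coro:newsigma_unique}.

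The only computation of substance is the $2\times 2$ elimination with the background functions evaluated at $\ro$; it is routine provided one uses $e^{\lambda_-(\ro)}-1=\tfrac{\chiefes\Mext}{4\pi\ro}e^{\lambda_-(\ro)}$ to simplify the coefficients, and I do not expect any genuine obstacle. Indeed, existence and uniqueness of $\newsigma^-$ is immediate once $\newsigma^+(\ro)$, $\newsigma^{+}{}'(\ro)$ and the jumps are known, since $\newsigma^-$ is then the unique solution of the second order ODE \eqref{eq:for_zeta} on $[\ro,\infty)$ with those Cauchy data, so no condition at infinity enters.
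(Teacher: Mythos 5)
Your proposal is correct and follows essentially the same route as the paper: $\newsigma^+=\newsigma^+_*$ is read off from the proof of Proposition \ref{prop:newsigma_star_unique}, the exterior difference solves the homogeneous part of \eqref{eq:for_zeta} and is therefore a combination of $e^{\lambda_-(r)}/r^2$ and $e^{\lambda_-(r)}/r$, and the two constants are fixed by subtracting \eqref{eq:mc_newsigma_star} from \eqref{eq:mc:newsigma}--\eqref{eq:mc:newsigmap} (the $[\F]$ terms cancelling) and using \eqref{eqs_back_vacuum} at $r=\ro$. The resulting coefficients match those in \eqref{eq:newsigma_minus_new}, and your closing observation that the exterior solution is fixed by Cauchy data at $\ro$ with no condition at infinity is also consistent with the paper.
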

\begin{proof}
As shown above, only \eqref{eq:newsigma_minus_new} needs attention.
The function
$\tilde\newsigma\defi \newsigma-\newsigma_*$ satisfies
the homogeneous part of equation \eqref{eq:for_zeta}.
The general solution in the exterior $D^-$
is thus given by
\[
\tilde\newsigma^-(r)=e^{\lambda_-(r)}\left(\frac{\tilde\newsigma_A}{r^2}+\frac{\tilde\newsigma_B}{r}\right).
\]
It suffices to obtain the constants $\tilde\newsigma_A$ and $\tilde\newsigma_B$
from the  matching conditions
\begin{align*}
\tilde\newsigma^-(\ro)=-[\tilde\newsigma]=&-\frac{1}{4}e^{-\lambda(\ro)}(e^{\lambda(\ro)}+3)(1-e^{\lambda(\ro)})(H_0-2[\fintegralz]),\\
\tilde\newsigma^-{}'(\ro)=-[\tilde\newsigma']=&-\frac{1}{4\ro}e^{-\lambda(\ro)}(e^{2\lambda(\ro)}+3)(e^{\lambda(\ro)}-1)(H_0-2[\fintegralz]),
\end{align*}
that follow from \eqref{eq:mc:newsigma}-\eqref{eq:mc:newsigmap} and
\eqref{eq:mc_newsigma_star}.
Using
the explicit form of $\lambda(\ro)$, c.f. \eqref{eqs_back_vacuum},
we obtain \eqref{eq:newsigma_minus_new}.
\fin
\end{proof}

\begin{remark}
\label{remark:keygamma_det}
As a consequence of the above results, the function
$\keygamma^+=\keygamma^+_*$ is determined
in terms of quantities of the background configuration plus
$J_\opertbase, \Omegaperbase$ and $\Ppp_c$,
and near $r=0$ it has the form
\begin{equation}
\keygamma^+(r)=-\frac{1}{\Ebc + \Pbc} \Ppp_c + o(1).
\label{eq:Gamma_origin}
\end{equation}
The function $\keygamma$ in $D^-$ takes the form
\begin{equation}
\keygamma^-(r)=-2\frac{4\pi r -\chiefes\Mext}{8\pi r -\chiefes\Mext}(H_0-2[\fintegralz])+
\keygamma^-_*(r),
\label{keygamma_-_determined}
\end{equation}
with $\keygamma^-_*$
being fully determined
in terms of quantities of the background configuration plus
$J_\opertbase, \Omegaperbase$ and 
$\{\Ppp_c, \newsigma^+_-(\ro), \newsigma^+_P(\ro),\newsigma^+_-{}'(\ro),\newsigma^+_P{}'(\ro)\}$.  
\end{remark}

Having established the existence result for $\newsigma$, we show that the
functions that remain undertermined, namely $\newvhat(r)$ and $f(r,\theta)$,
are pure gauge. Since we want to stay in the context where
Propositions \ref{prop:problem_Wtwo} (point \emph{3}) 
and Proposition \ref{prop:problem_vhat} can be applied, the 
available gauge freedom has already been restricted to the subset
$\{\gaugeAY\}$, namely a function $\Y(r,\theta)$ and
a constant $A$, on each side $\mmm^{\pm}$.
There is also the free integration function
$\beta(r)$ on each side. Our next result fixes this freedom and removes the functions $f$ and $\newvhat$ altogether. This concludes our analysis of the base perturbation scheme.

\begin{proposition}[\textbf{Existence and uniqueness of the barotropic base scheme}]
\label{res:hkmf_uniqueness}
  Assume the setup of the barotropic base perturbation scheme
({\baseback}-{\basebaro}).
Then, there exists a gauge on each region given by
$\gaugefinal^+\defi\{\gauge()^+\}$ and $\gaugefinal^-\defi\{\gauge()^-\}$
(no arguments left), c.f. Notation \ref{def:gauge_classes},
in which
\begin{itemize}
\item[1.] the two items of Proposition \ref{prop:problem_Wtwo}
  hold,
\item[2.] $\beta^\pm$ can be fixed such that $f^\gaugefinal_+=0$ and $f^\gaugefinal_-=0$,
\item[3.] The solutions of the field equations, denoted by
$\{h_+^\gaugefinal,k_+^\gaugefinal,m_+^\gaugefinal\}$ in $D^+$
and $\{h_-^\gaugefinal,k_-^\gaugefinal,m_-^\gaugefinal\}$ in $D^-$, exist
and for given $\Ppp_c\in \mathbb{R}$  are unique. 
  Moreover, the corresponding composed functions
     in $\Du$ take the form
    \begin{align*}
    &h^{\gaugefinal}(r,\theta)= h^{\gaugefinal}_0(r)+ h^{\gaugefinal}_2(r) P_2(\cos\theta),\\
      &m^{\gaugefinal}(r,\theta)= m^{\gaugefinal}_0(r)+ m^{\gaugefinal}_2(r) P_2(\cos\theta),\\
    &k^{\gaugefinal}(r,\theta)= k^{\gaugefinal}_2(r) P_2(\cos\theta),
  \end{align*}
with $h^{\gaugefinal}_0, h^{\gaugefinal}_2, m^{\gaugefinal}_0, m^{\gaugefinal}_2, k^{\gaugefinal}_2 \in C^{n}((0,\ro]) \cap C^{\infty}([\ro,\infty))$,
extend continuously to $r=0$ and are bounded.
\end{itemize}
Furthermore, if $\opertbase^\pm=0$ then 
$k^{\gaugefinal}_+=0$ and

\begin{align}
h^{\gaugefinal}_+=&-\chiefes\Ppp_c\frac{\Ebc+3\Pbc}{24(\Ebc+\Pbc)}
\left\{\frac{2+r\nu_+'}{\nu_+'}\newsigma^+_-{}'
          +r\nu'_+\newsigma_-^+\right\},\label{hg_+_no_w}\\
m^{\gaugefinal}_+=&-\chiefes\Ppp_c\frac{\Ebc+3\Pbc}{24(\Ebc+\Pbc)}\frac{1}{r\nu_+'^2}\nonumber\\
          &\times\left\{\left((2+r\nu_+')(2+r\lambda'_+)-4e^{\lambda_+}\right)\newsigma_-^+{}'+\left((2+r\lambda'_+)r\nu'_+-4e^{\lambda_+}\right)\newsigma_-^+\right\},\label{mg_+_no_w}
\end{align}
while $k^{\gaugefinal}_-=0$ and

\begin{align}
  h^{\gaugefinal}_-=
  &\chiefes\Ppp_c\left(\frac{8\pi r -\chiefes\Mext}{4\chiefes\Mext} r  \newsigma^-_{**}{}'
    +\frac{\chiefes\Mext}{4(4\pi r -\chiefes\Mext)} \newsigma^-_{**}\right),
    \label{hg_-_no_w}\\
      m^{\gaugefinal}_-=
  &\chiefes\Ppp_c\left\{\frac{16\pi r -3 \chiefes\Mext}{4\chiefes\Mext}r\newsigma^-_{**}{}'
    -\left(8\pi r +\frac{(16\pi r-3\chiefes\Mext)\chiefes^2\Mext^2}{4(4\pi r-\chiefes\Mext)^2}\right)
    \frac{4\pi r -\chiefes\Mext}{\chiefes\Mext(8\pi r -\chiefes\Mext)}\newsigma^-_{**}\right\},
    \label{mg_-_no_w}
\end{align}
where $\newsigma^+_-(r)$ and $\newsigma^-_{**}(r)$ are defined in Proposition \ref{prop:newsigma_star_unique}.
\end{proposition}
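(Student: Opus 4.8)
The strategy is to combine the results established so far — namely Proposition \ref{prop:problem_Wtwo} for $\Wtwo$, Proposition \ref{prop:problem_vhat} for the $\ell\geq 2$ sector of $\vhat$, Proposition \ref{prop:newsigma_star_unique} and Corollary \ref{coro:newsigma_unique} for $\newsigma$ — and then use the remaining gauge freedom to eliminate the spurious functions $\newvhat(r)$ and $f(r,\theta)$. First I would recall that, after the restrictions made in Proposition \ref{prop:field_equations} and the subsequent subsections, the available gauge class on each side is $\{\gaugeAY\}$, i.e. a constant $A^\pm$ and a function $\Y^\pm(r,\theta)$, together with the free integration function $\beta^\pm(r)$ that only enters through $\partial_\theta f$. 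By the transformation law \eqref{fg} (with $\alpha=0$), $f^g = f + \Y/r + \beta(r)$; choosing $\Y^\pm$ and $\beta^\pm$ appropriately — possible because $f$ has been shown to be bounded and, on each sphere, sufficiently regular, and because $f$ has no genuine $\theta$-obstruction beyond the one absorbed by $\beta$ — sets $f^{\gaugefinal}_\pm = 0$. (One must check this choice of $\Y^\pm$ is still compatible with the regularity requirements of the base scheme and with what was fixed in Proposition \ref{prop:problem_Wtwo}; this is where the constant $A^\pm$ gets used to match $h$ at the origin, via Lemma \ref{res:gauges_hat}.) This proves item 2, and item 1 is inherited directly from Proposition \ref{prop:problem_Wtwo} since that gauge fixing is untouched.

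Next, for item 3, I would observe that once $f=0$ the functions $\{h,k,m\}$ are given by \eqref{def:unhat_function_h}--\eqref{def:unhat_function_m} with $f\equiv 0$, i.e. entirely in terms of $\sigma,\qhat_0,\vhat_0,\vhat_2,\F$, or after the change of variables \eqref{eq:newvhat}--\eqref{eq:newsigma}, in terms of $\newsigma,\newvhat,\vhat_2,\F$ through \eqref{eq:invert_hats_zeta}. Now $\newsigma$ is uniquely fixed (given $\Ppp_c$) by Proposition \ref{prop:newsigma_star_unique} and Corollary \ref{coro:newsigma_unique}; $\vhat_2$ is uniquely fixed by Proposition \ref{prop:problem_vhat}; $\F$ and $\qhat_0$ are determined by background and first-order data through \eqref{eq:f_omega} and \eqref{eq:fintegral}. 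The only remaining freedom is $\newvhat(r)$, which enters \eqref{eq:invert_hats_zeta} only through the combination $\newvhat+f=\newvhat$; since the gauge vector $\Y^\pm$ that was used to kill $f$ simultaneously shifts $\newvhat$ (both $f$ and $\newvhat$ change by $\Y/r$, up to the $\beta$ contribution, as one reads off from \eqref{fg} and the definition of $\newvhat$), the same gauge choice that sets $f=0$ also sets $\newvhat=0$. Alternatively — and this is the cleaner route — I would note that $f$ and $\newvhat$ appear in $\{h,k,m\}$ only through $\newvhat+f$, so fixing $\beta^\pm$ and $A^\pm$ to make $f^{\gaugefinal}=0$ forces us to use $\Y^\pm$ to absorb $\newvhat$ as well; one checks that a single $\Y^\pm(r)$ does both jobs precisely because in \eqref{eq:invert_hats_zeta} the dependence is only on the sum. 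With $f=\newvhat=0$, expressions \eqref{eq:invert_hats_zeta} give $h,k,m$ explicitly in terms of $\newsigma,\vhat_2,\F$, all of which decompose into $\ell=0$ and $\ell=2$ Legendre sectors (no $\ell=1$, by construction); reading off the $P_2$ coefficients and the $\ell=0$ parts yields the claimed form $h^{\gaugefinal}=h_0^{\gaugefinal}+h_2^{\gaugefinal}P_2$, etc., with $k^{\gaugefinal}$ having no $\ell=0$ part because the $\ell=0$ term of $k$ in \eqref{eq:invert_hats_zeta} is $\newvhat+f-\tfrac12\keygamma$ and $\keygamma$ is what enters $h_0$; one verifies the $\ell=0$ part of $k$ vanishes identically in this gauge. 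The differentiability and boundedness statements follow from the corresponding properties of $\newsigma_*^\pm$ (Proposition \ref{prop:newsigma_star_unique}, item 1 and item 2), of $\vhat_2^\pm$ (Proposition \ref{prop:problem_vhat}), and of the background functions $\lambda,\nu$.

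For the final explicit formulas in the case $\opertbase^\pm=0$: here $\F\equiv 0$, $\vhat_2\equiv 0$ (Proposition \ref{prop:problem_vhat}), and $\newsigma_*^\pm$ is given by \eqref{eq:newsigma_int_sol_no_w_1}--\eqref{eq:newsigma_ext_sol_no_w_1} in Proposition \ref{prop:newsigma_star_unique}, item 3. In this situation $\keygamma^\pm = \newsigma'^\pm/\nu' + \newsigma^\pm r\nu'/(2+r\nu')$ (the $\F$-term drops), and plugging these into \eqref{eq:invert_hats_zeta} with $f=\newvhat=0$ gives $h^{\gaugefinal}_\pm = \tfrac12 r\nu'\cdot 0 + \fintegralz + \tfrac12\keygamma^\pm = \fintegralz^\pm + \tfrac12\keygamma^\pm$ (one then uses \eqref{eq:gauge_fintegral} / the gauge choice to set $\fintegralz^\pm=0$ in this final gauge, or absorbs it — this needs a small check that the gauge $\gaugefinal$ does make $\fintegralz$ vanish, which follows from the $A^\pm$-freedom via \eqref{eq:gauge_fintegral}), and similarly $k^{\gaugefinal}_\pm=-\tfrac12\keygamma^\pm+\tfrac12\keygamma^\pm=0$, while $m^{\gaugefinal}_\pm$ comes from the third line of \eqref{eq:invert_hats_zeta}. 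Substituting the explicit $\newsigma_-^+$ and $\newsigma^-_{**}$ and simplifying using the background relations $2+r\nu'=1+e^\lambda$ (vacuum), $\lambda'_-+\nu'_-=0$, and $e^{-\lambda_-}=1-\chiefes\Mext/4\pi r$ produces \eqref{hg_+_no_w}--\eqref{mg_-_no_w}. The main obstacle in the whole argument is not any single step but bookkeeping: verifying that one and the same gauge vector (the pair $A^\pm$, $\Y^\pm$ plus $\beta^\pm$) can be chosen to simultaneously (i) preserve the $\Wtwo$-gauge of Proposition \ref{prop:problem_Wtwo}, (ii) kill $f$, (iii) kill $\newvhat$, (iv) kill $\fintegralz^\pm$, and (v) respect all regularity/boundedness constraints near $r=0$ and $r=\infty$ — the consistency of these five demands is what must be checked with care, and it works precisely because $f$, $\newvhat$ and $\fintegralz$ are affected by disjoint pieces of the gauge data ($\Y$ and $\beta$ for $f+\newvhat$, $A$ for $\fintegralz$ and the constant part of $h$), as encoded in Lemma \ref{res:gauges_hat} and \eqref{eq:gauge_fintegral}.
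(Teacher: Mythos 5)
Your proposal follows essentially the same route as the paper: restrict to the residual gauge $\{\gaugeAY\}$ plus the free $\beta(r)$, use $\Y$ and $\beta$ to remove $f$ together with the combination $\newvhat+f$ (the paper's explicit choice is $\Y=-r\left(\newvhat+f-\tfrac{1}{2}\keygamma\right)$ and $\beta=\newvhat-\tfrac{1}{2}\keygamma$), use the constants $A^\pm$ to kill $\fintegralz$ via \eqref{eq:gauge_fintegral}, and then read off $h,k,m$ from \eqref{eq:invert_hats_zeta} with existence/uniqueness supplied by Propositions \ref{prop:problem_vhat} and \ref{prop:newsigma_star_unique} and Corollary \ref{coro:newsigma_unique}. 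One bookkeeping caveat: in your sketch of the $\opertbase^\pm=0$ case you include the gauge shift $\Y/r$ in $k$ but omit the corresponding shift $\tfrac12\Y\nu'$ in $h$ (cf.\ \eqref{hg}), so the correct $\ell=0$ part is $h^{\gaugefinal}_0=\tfrac{\keygamma}{4}\left(2+r\nu'\right)$ rather than $\fintegralz+\tfrac12\keygamma$; relatedly, under this gauge $\newvhat$ transforms by $-\beta$ and $f$ by $\Y/r+\beta$, so it is only the sum $\newvhat+f$ that is shifted by $\Y/r$.
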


\begin{proof}
We start considering the classes $\{\gaugeABY^\pm\}$
in which Proposition \ref{prop:problem_vhat} holds, both on $\mmm^+$ and $\mmm^-$.
We apply a change of gauge \eqref{hg}-\eqref{fg} in
Proposition \ref{prop:full_class_gauges} with $\Q=r$, $\alpha(r)=0$ and
\begin{equation}
 \Y=-r\left(\newvhat+f-\frac{\keygamma}{2}\right).\label{gauge_Y_new}
\end{equation}
We also fix $\beta(r)$ by
\begin{equation}
\beta(r)=\newvhat-\frac{\keygamma}{2}.
\label{eq:beta_fixed_new}
\end{equation}
Applying this to \eqref{eq:invert_hats_zeta} one obtains
\begin{align}
  h^g=&\frac{1}{2}\left(A+2\fintegralz+\frac{\keygamma}{2}\left(2+r\nu'\right)\right)+\left ( \frac{2+r\nu'}{2r\nu'}\F -\frac{1}{\nu'}\vhat_2'  \right )P_2(\cos\theta),\label{hg_final_new}\\
  k^g=&\left(\vhat_2+\frac{1}{\nu'}\vhat_2'-\frac{2+r\nu'}{2r\nu'}\F\right)P_2(\cos\theta),
       \label{kg_final_new}\\
  m^g=&\frac{1}{r\nu'}\left\{\frac{\keygamma}{4}\left( (2+r\nu')(2+r\lambda')-4e^{\lambda}\right)
       -\frac{2 e^\lambda}{2+r\nu'}\newsigma-\frac{1}{6}r^3e^{-\nu}\left(2(\lambda'+\nu')(\opertbase-\Omegaperbase)^2-r\opertbase'{}^2\right)\right\}\nonumber\\
     &+\left(\frac{1}{\nu'}\vhat_2'-\frac{2-r\nu'}{2r\nu'}\F\right)P_2(\cos\theta),\label{mg_final_new}\\
  f^g=&0.
\end{align}
This already proves item \emph{2}.
Observe that this partial gauge fixing still leaves arbitrary
the constants $A,B$ on each side.
We now choose (uniquely) $B^+$ and $B^-$ so that the second point in
Proposition \ref{prop:problem_Wtwo} holds.

At the interior we choose 
$A^+=-2\fintegralz^+$,
so that \eqref{hg_final_new}-\eqref{mg_final_new} become
\begin{align}
  h^{\gaugefinal}_+
  =&\frac{\keygamma^+}{4}\left(2+r\nu'_+\right)+\left ( \frac{2+r\nu'_+}{2r\nu'_+}\F^+ -\frac{1}{\nu_+'}\vhat_2^+{}'  \right )P_2(\cos\theta),\label{hg_+_new}\\
  k^{\gaugefinal}_+
  =&\left(\vhat^+_2+\frac{1}{\nu'_+}\vhat^+_2{}'-\frac{2+r\nu'_+}{2r\nu_+'}\F^+\right)P_2(\cos\theta),
               \label{kg_+_new}\\
  m^{\gaugefinal}_+
  =&\frac{1}{r\nu_+'}\left\{\frac{\keygamma^+}{4}\left( (2+r\nu_+')(2+r\lambda_+')-4e^{\lambda_+}\right) 
    -\frac{2 e^{\lambda_+}}{2+r\nu_+'}\newsigma^+\right.\nonumber\\
  &\left.-\frac{1}{6}r^3e^{-\nu_+}\left(2(\lambda_+'+\nu'_+)(\opertbase_+-\Omegaperbase)^2-r\opertbase_+'{}^2\right)\right\}+\left(\frac{1}{\nu_+'}\vhat^+_2{}'-\frac{2-r\nu_+'}{2r\nu_+'}\F^+\right)P_2(\cos\theta).\label{mg_+_new}
\end{align}
The behaviour of these expressions as $r\to 0$, using
\eqref{background_expansion_origin}-\eqref{expansionlam},
\eqref{eq:for_nu''}, \eqref{eq:Gamma_origin},
that $\F\in O(r^4)$,
and Propositions \ref{prop:problem_vhat} and
\ref{prop:newsigma_star_unique} as well as Corollary
\ref{coro:newsigma_unique}, is given by
\begin{eqnarray*}
\lim_{r\to 0} h^\gaugefinal_+=-\frac{1}{2(\Ebc + \Pbc)} \Ppp_c,\qquad
\lim_{r\to 0} k^\gaugefinal_+=0,\qquad
\lim_{r\to 0} m^\gaugefinal_+=0,
\end{eqnarray*}
i.e. the limits exist. This shows in particular that the change of gauge
  defined by \eqref{gauge_Y_new} lies within the class
of gauges described in Notation \ref{def:gauge_classes}
and therefore that \eqref{hg_+_new}-\eqref{mg_+_new} are written
  in an admissible and fully fixed gauge $\{\gauge()^+\}$ (no arguments left), which we have denoted simply by $\gaugefinal^+$.
These expressions involve only functions whose existence, uniqueness
  and regularity properties have already been established in
  Propositions \ref{prop:problem_vhat} and
  \ref{prop:newsigma_star_unique} together with Corollary \ref{coro:newsigma_unique}.
  Specifically $\vhat^+_2, \newsigma^+$ are $C^{n+1}((0,a])$, $\vhat^+_2$
  is unique and $\newsigma^+$, and thus also $\keygamma^+$ by Remark \ref{remark:keygamma_det}, is unique up to the constant $\Ppp_c$.
The claim for $\{h_+^\gaugefinal,k_+^\gaugefinal,m_+^\gaugefinal\}$ follows.

Regarding $\{h_-^g,k_-^g,m_-^g\}$ in $D^-$ we choose
$A^-=-2\fintegralz^++H_0$.
Then \eqref{hg_final_new}-\eqref{mg_final_new} become, in the fixed gauge
  $\gaugefinal^-=\{\gauge()^-\}$ (no arguments left), 
\begin{align}
       h^{\gaugefinal}_-=
       &\frac{1}{4}(2+r\nu'_-)\keygamma^-_*          +\frac{1}{\chiefes\Mext}\left (3(8\pi r-\chiefes\Mext)\frac{J^2_\opertbase}{r^4} 
	-(4\pi r-\chiefes\Mext) r \vhat^-_2{}' \right )P_2(\cos\theta),\label{hg_-_new}\\
k^{\gaugefinal}_-=&\left(\vhat^-_2+\frac{4\pi r-\chiefes\Mext}{\chiefes\Mext}r\vhat^-_2{}'
	-\frac{8\pi r -\chiefes\Mext}{\chiefes\Mext}3\frac{J^2_\opertbase}{r^4}\right)P_2(\cos\theta),\label{kg_-_new}\\
m^{\gaugefinal}_-=&\left(8\pi r-\frac{(16\pi r-3\chiefes\Mext)(8\pi r-\chiefes\Mext)}{4\pi r -\chiefes\Mext}\right)
              \frac{3 J_\opertbase^2}{\chiefes\Mext r^4}\nonumber\\
       & -\left(8\pi r +\frac{(16\pi r-3\chiefes\Mext)\chiefes^2\Mext^2}{4(4\pi r-\chiefes\Mext)^2}\right)\frac{4\pi r -\chiefes\Mext}{\chiefes\Mext(8\pi r -\chiefes\Mext)}\newsigma^-_{*}
         -\frac{16\pi r -3 \chiefes\Mext}{4\chiefes\Mext}r\newsigma^-_{*}{}'\nonumber\\
&
+\left(\frac{4\pi r-\chiefes\Mext}{\chiefes\Mext}r \vhat_2^-{}'
	-\frac{8\pi r - 3\chiefes\Mext}{\chiefes\Mext}3\frac{J_\opertbase^2}{r^4}\right)P_2(\cos\theta)\label{mg_-_new}
\end{align}
after using \eqref{eqs_back_vacuum}, \eqref{eq:opert_ext_sol},
\eqref{eq:f_omega},  \eqref{eq:newsigma_minus_new}
and \eqref{keygamma_-_determined}. It is straigforward to check first that
given \eqref{eq:newsigma_minus_star}, $\newsigma^-_{*}$ is $O(1/r)$ and $\newsigma^-_{*}{}'$
is $O(1/r^2)$ and therefore $\keygamma^-_*$, c.f. \eqref{def:keygamma_star_nss},
is bounded near $r=\infty$.
Then, since $\vhat_2$ is $O(1/r^4)$ and $\vhat'_2$ is $O(1/r^5)$, c.f. Proposition
\ref{prop:problem_vhat}, the three functions in \eqref{hg_-_new}-\eqref{mg_-_new}
are bounded in $D^-$, which justifies the fact that the change of
gauge was indeed within the class $\{\gauge\}$ from Notation \ref{def:gauge_classes}.
Again,  Propositions \ref{prop:problem_vhat} 
  and \ref{prop:newsigma_star_unique} together with Corollary \ref{coro:newsigma_unique}
have established all the required existence, uniqueness and regularity properties: $\vhat_2^{-}, \newsigma^- $ are both
  $\in C^{\infty} ([\ro,\infty))$, $\vhat_2^{-}$ is uniquely determined and $\newsigma_{*}$ is explicitly given in
 \eqref{eq:newsigma_minus_star} with the constants  $\newsigma_A$ and $\newsigma_B$
 fully determined once $\Ppp_c$ is fixed,  and Remark \ref{remark:keygamma_det} establishes the same for $\keygamma_{*}$.
The claim for $\{h_-^\gaugefinal,k_-^\gaugefinal,m_-^\gaugefinal\}$ follows.

We now consider the particular case $\opertbase^\pm=0$,
so that, in particular, $\Omegaperbaseint=J_\opertbase=0$ and $\F(r)=0$.
First, Proposition \ref{prop:problem_vhat} gives $\vhat_2^\pm(r)=0$  
and therefore all terms in the $\ell=2$ sector in \eqref{hg_+_new}-\eqref{mg_-_new} vanish.
Thence we  already have that $k^\gaugefinal_+=k^\gaugefinal_-=0$.
On the other hand,
Proposition \ref{prop:newsigma_star_unique}  item \emph{3},
and Corollary \ref{coro:newsigma_unique},
provide the form of $\newsigma^+$ and the explicit expression of
$\newsigma^-$, which inserted in \eqref{hg_+_new},  \eqref{mg_+_new}, \eqref{hg_-_new},
 using \eqref{def:keygamma},
and \eqref{mg_-_new} yield \eqref{hg_+_no_w} and \eqref{mg_+_no_w} in $D^+$,
and 
\eqref{hg_-_no_w} and \eqref{mg_-_no_w} in $D^-$.
\fin
\end{proof}

\begin{remark}
  \label{res:final_Ppp}
  Combining $A^+=-2\fintegralz^+$ with \eqref{eq:gauge_fintegral}
  it follows that, in the gauge $\gaugefinal$,
  the constant $\fintegralz^\gaugefinal$ vanishes. Hence, the perturbed pressure and energy density
  \eqref{eq:Ppp_h}-\eqref{eq:Epp_h} in this gauge are
  \begin{align}
    \Ppp{}^\gaugefinal=& - 2(\Eb+\Pb)\left(h_+^\gaugefinal + \frac{1}{3}e^{-\nu}r^2(\opertbase-\Omegaperbase)^2 \left ( P_2(\cos\theta) - 1 \right )  \right),\nonumber\\
  \Epp{}^\gaugefinal=& \frac{4}{\nu'}\Eb'\left(h_+^\gaugefinal + \frac{1}{3}e^{-\nu}r^2(\opertbase-\Omegaperbase)^2 \left ( P_2(\cos\theta) - 1 \right )  \right).\nonumber                        
  \end{align}
\end{remark}

\section{Existence and uniqueness of the general set up}
\label{sec:existence_and_uniqueness}
We are now ready to apply the results obtained for the ``base'' perturbation scheme
to solve, using a bootstrap argument, the general first order and second order problems for the
perturbation scheme in the canonical 
form 
over a background
configuration for a rigidly rotating perfect fluid interior and vacuum exterior
following Definition \ref{def:PFSTAXpert}.

Before stating the main results of this paper, it is necessary
to discuss  the physical meaning of the constant $\Ppp_c$ that has been
introduced  in the previous section. We already know
that $\Ppp = \lim_{r\rightarrow 0} \Ppp$
(Proposition \ref{prop:newsigma_star_unique}),
so one might think that this parameter already has a clear meaning.
However, the point is more subtle than
one may think, as we discuss next.

\subsection{The perturbed central pressure}

 In this work, three different sets of gauge vectors  play a role.
  Theorem \ref{theo:paper1} assumes a $C^{n+1}$ perturbation scheme. So, in particular it assumes perturbation tensors  $\Kper^o$ and $\Kperper^o$
  that are $C^{n} (\mmm^{\pm})$ and $C^{n-1}(\mmm^{\pm})$ respectively. The first set of gauge vectors is the standard one, namely vector fields that respect this differentiability class everywhere. They correspond to gauge vectors 
  $\sper \in C^{n+1}(\mmm^{\pm})$ at first order
  and
  $\sperper \in C^{n}(\mmm^{\pm})$ at second order. The second set of gauge vectors is the one that transforms
  $\Kper^o$, $\Kperper^o$ into $\Kper^{\Psi}$, $\Kperper^{\Psi}$
  as given in Theorem
    \ref{theo:paper1}.
    These gauge vectors are no longer differentiable everywhere. However, it is part of the content of Theorem \ref{theo:paper1} that they can be chosen to have no radial component and to extend continuously at the origin. Moreover, under a very mild extra condition discussed in
    Remark \ref{remark:main}, when the target tensors
    $\Kper^{\Psi}$ and $\Kperper^{\Psi}$ are taken as fixed, these vectors are uniquely defined up to a linear
    combination of the background Killings $\xi$ and $\eta$.
    So, all such vectors
    extend continuously to the origin and have no radial component.
    We call this  ``canonical gauge transformation''.
  The third class is defined in Proposition \ref{prop:full_class_gauges} and has been called $\{\gaugeCABYa\}$
  in Notation \ref{def:gauge_classes}. This class has been extensively used in the analysis of the base perturbation scheme.

  Concerning the perturbed pressures $\Pp$ amd $\Ppp$, the field equations imply that these functions are of class $C^{n-2}(\mmm^+)$ and $C^{n-3}(\mmm^+)$ in the starting gauge $\Kper^o$ and $\Kperper^o$. We already know (Lemma \ref{lemma:central_P}) that, under this first set of gauge transformations,
$\Ppp_c$ is invariant provided the configuration
  has equatorial symmetry (and this property is true in the present setup, see below).

  In the base perturbation scheme we have only assumed the \emph{outcome}
  of Theorem \ref{theo:paper1}. While $\Pp=0$ followed directly from the field equations, at this level of generality we did not know a priori that
  $\Ppp$ is well-behaved at the centre (not even bounded). We prefered this route
  (instead of assuming the \emph{hypotheses} of Theorem \ref{theo:paper1},
  which would of course would have been justified) in order to emphasize that, even with this generality, imposing that the fluid satisfies a barotropic equation of state (independent of the perturbation parameter $\pertp$)
 already forces the continuity of $\Ppp$
 at the centre and hence the existence of the parameter $\Ppp_c$.

We now discuss the gauge invariance of $\Ppp_c$ under the canonical gauge transformation and under $\{\gaugeCABYa\}$ 
when the \emph{hypotheses}
  of Theorem \ref{theo:paper1} are assumed
  (instead of only its conclusions). We are only interested in the
  case when $\Pp=0$.

  \begin{lemma}
\label{res:Ppp_gauge_inv}
    Let $\Kper^o$ and $\Kperper^o$
    be perturbation tensors defined by the $C^{n+1}$ ($n \geq 3$) perturbation scheme
    $(\mmm_\pertp, \gfam_\pertp,\{\psi_\pertp\})$ assumed in Theorem
    \ref{theo:paper1}.    Assume further that  the perturbed field equations for a rigidly rotating
    perfect fluid hold on $\mmm^+$ with $\Pp=0$ and set $\Ppp_c :=
        \Ppp(0)$.  
    \begin{itemize}
    \item[(i)]  If $\gfam_{\pertp}$, $\pertp \neq 0$, only admits one axial Killing vector, then $\Ppp_c$ is gauge invariant under the
      canonical gauge transformation.
\item[(ii)] $\Ppp_c$ is gauge invariant under  $\{\gaugeCABYa\}$.
\end{itemize}
    \end{lemma}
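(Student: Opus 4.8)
The plan is to prove both statements by tracking how $\Ppp$ transforms under each class of gauge vectors, evaluate the transformation law at the origin $r=0$, and show the extra terms vanish there. The starting point is the second-order pressure transformation rule \eqref{eq:gauge_pressure}, which under the hypothesis $\Pp=0$ (and hence $\Pp{}^{g}=\sper(\Pb)$) reads $\Ppp{}^g = \Ppp + \sperper(\Pb) + \sper(\Pp + \Pp{}^g)= \Ppp + \sperper(\Pb) + \sper(\sper(\Pb))$. So I must show that both $\sperper(\Pb)$ and $\sper(\sper(\Pb))$ vanish at $r=0$ for the relevant gauge vectors. Since $\Pb$ is radially symmetric and $C^{n-1}$, Lemma \ref{res:radial_functions_origin} gives that $d\Pb$ vanishes at the centre, and moreover $\Pb = \Pb_c + \tfrac{1}{2}\Pb''(0) r^2 + o(r^2)$ with the quadratic behaviour controlled.

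For part (i), under the canonical gauge transformation of Theorem \ref{theo:paper1}, the gauge vectors $\sper = \sper^{\Psi}$ and $\sperper = \sperper^{\Psi}$ carrying a fixed $C^{n+1}$ scheme to fixed canonical tensors $\Kper^{\Psi},\Kperper^{\Psi}$ are, by Remark \ref{remark:main} (invoking the single-axial-Killing hypothesis), unique up to the addition of a background Killing vector commuting with $\axial$; in particular both $\sper$ and $\sperper$ can be chosen to have \emph{no radial component} and to \emph{extend continuously to zero at $\centresph_0$}. Now $\sperper(\Pb) = \sperper^r \partial_r \Pb$ in spherical coordinates (since $\Pb$ depends only on $r$); but the radial component of $\sperper$ vanishes (indeed $\sperper$ is tangent to $S_r$), so $\sperper(\Pb)\equiv 0$. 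For the term $\sper(\sper(\Pb))$: first $\sper(\Pb) = \sper^r\partial_r\Pb = 0$ identically by the same reasoning applied to $\sper$, hence $\sper(\sper(\Pb))=0$ as well. Therefore $\Ppp{}^{g} = \Ppp$ everywhere, and in particular $\Ppp_c$ is invariant. (Alternatively, even allowing the additive background Killing, $\xi$ and $\eta$ annihilate the static radial function $\Pb$, so nothing changes.)

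For part (ii), under $\{\gaugeCABYa\}$ the gauge vectors are given explicitly by \eqref{gauge_first} and \eqref{gauge_second}: $\sper = C t\partial_\phi$ (modulo a background Killing) and $\sperper = A t\partial_t + B t\partial_\phi + 2\Y(r,\theta)\partial_r + 2\alpha(r)\sin\theta\,\partial_\theta + \killback$. Since $\Pb$ depends only on $r$, we have $\sper(\Pb) = C t\,\partial_\phi\Pb = 0$, so the composed term $\sper(\sper(\Pb))=0$; and $\sperper(\Pb) = 2\Y(r,\theta)\,\Pb'(r)$, while the $\partial_t,\partial_\phi,\partial_\theta$ pieces and the Killing $\killback$ all annihilate $\Pb$. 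Thus $\Ppp{}^{g} = \Ppp + 2\Y\,\Pb'$. Now invoke the regularity of the canonical perturbation scheme: when the \emph{hypotheses} of Theorem \ref{theo:paper1} hold, the gauge vector $\sperper$ staying within the scheme must have $\Y$ continuous (and $\Y\to 0$) at $\centresph_0$ — more precisely, from the structure in Theorem \ref{theo:paper1} the relevant gauge functions extend continuously and $\Y$ is bounded near the origin — while $\Pb'(r) \to 0$ as $r\to 0$ by Lemma \ref{res:radial_functions_origin}. Hence $2\Y\Pb' \to 0$ at $r=0$, giving $\Ppp_c{}^{g}=\Ppp_c$.

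The main obstacle is the bookkeeping of \emph{which} regularity of $\Y$ (or of $\sperper$ more generally) one is entitled to assume: in part (i) one needs the sharp statement from Remark \ref{remark:main} that the canonical gauge vectors have no radial component, and in part (ii) one needs that, among the gauges in $\{\gaugeCABYa\}$ compatible with the \emph{hypotheses} of Theorem \ref{theo:paper1}, the function $\Y$ stays bounded at the origin, so that the product $\Y\Pb'$ — with $\Pb'=O(r)$ — vanishes there. Both facts are supplied by the cited results, so the argument is short once they are correctly quoted; the only care needed is to handle the composed term $\sper(\Pp+\Pp^g)$ correctly using $\Pp=0$ and $\Pp^g=\sper(\Pb)=0$.
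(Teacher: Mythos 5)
Your proposal is correct and follows essentially the same route as the paper: in both cases one evaluates the transformation law \eqref{eq:gauge_pressure} with $\Pp=0$, uses that the canonical gauge vectors have no radial component (so $\sper(\Pb)=\sperper(\Pb)=0$ identically) for item (i), and for item (ii) reduces the residual term to $2\Y\,\Pb'$, which vanishes at the origin by boundedness of $\Y$ together with $d\Pb=0$ at the centre from Lemma \ref{res:radial_functions_origin}. Your explicit bookkeeping of the composed term $\sper(\Pp+\Pp^{g})=\sper(\sper(\Pb))$ is slightly more careful than the paper's, but the content is the same.
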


\begin{proof} For the gauge vectors $\sper$ and $\sperper$ that transform
  $\Kper^o$ and $\Kperper^o$ into the form $\Kper^{\Psi}$ and
  $\Kperper^{\Psi}$ given in Theorem \ref{theo:paper1}, we know that
  $\sper$ and $\sperper$ have no radial component (because the condition described in Remark \ref{remark:main} is satisfied). Since
  the background pressure $P$ is radially symmetric we have
  $\sper(P) = \sperper(P) =0$  outside the centre, and in fact
  everywhere because
  $\sper$, $\sperper$ extend continuosly to the centre. It is now
  immediate from \eqref{eq:gauge_pressure} that
  $\Pp{}^g= \Pp =0$ and 
  $\Ppp{}^g= \Ppp$. In particular, the value at the centre
  $\Ppp_c = \Ppp(0)$ is gauge invariant. This proves item \emph{(i)}.

For item \emph{(ii)}, let $\sper$, $\sperper$ be gauge vectors
    given by \eqref{gauge_first}
    and \eqref{gauge_second} respectively. Since $\sper$ has again no
    radial component, the property $\Pb = \Pp =0$ follows as before. Now,
    \eqref{eq:gauge_pressure} gives
    $\Ppp{}^g=\Ppp+\sperper(\Pb)
=\Ppp + 2 \Y(r,\theta) \partial_r(\Pb)$. Boundedness of $\Y$
and the vanishing of $d\Pb$ at the centre, ensured by Lemma \ref{res:radial_functions_origin}, proves that $\Ppp_c$ is also gauge invariant in this case.
\fin
\end{proof}

    This result allows us to call $\Ppp_c$ ``the perturbed
    central pressure'' (to second order) in an unambiguous way.

\subsection{Existence and uniqueness of the first order problem}
We focus first on the first order problem, i.e. that for (\ref{res:fopert_tensor}).
To do that we only need to consider the \emph{base perturbation scheme}
with $\opertbase(r)=\Omegaperbase=0$ and identify
$\Kper^\Psi$ with $\Kperper$ in \eqref{Kperper} thanks to the substitutions
\begin{align}
&\Wtwo (r,\theta)\to \opert(r,\theta),\quad 
h \to \foh,\quad
m \to \fom,\quad
k \to \fok,\quad
f \to \fof,\label{eq:subs_2_to_1_K1}
\end{align}
at both sides $\pm$, while the perturbed density and pressure in the exterior $\mmm^-$
get substituted by
\begin{equation}
\Ppp \to \Pp, \qquad \Epp \to \Ep. \label{eq:subs_2_to_1_rho_p}
\end{equation}
Obviously, also
  \begin{equation}
\Wtwo_c \to \opert_c, \qquad \Omegaperperbaseint \to
  \Omegaperbaseint,  \qquad \Ppp_c \to \Pp_c,
    \label{eq:subs_2_to_1_param}
\end{equation}
    and the latter
  is called simply ``perturbed pressure at the origin''. We make the full argument precise in the proof of the following proposition.

\begin{proposition}[\textbf{Rotating stars to 1st order}]
\label{res:first_order}
Consider a $C^{n+1}$ perfect fluid ball configuration
according to Definition \ref{def:background}
with $n\geq 4$ and $\Ebc+\Pbc\neq 0$.
Let us be given a $C^{n+1}$ maximal
perturbation scheme $(\mmm_\pertp, \gfam_\pertp,\{\psi_\pertp\})$
inheriting the
Abelian
$G_2$ generated by $\{ \xi=\partial_t, \eta=\partial_\phi\}$.
Assume that the corresponding first order perturbation tensor $\Kper$
\begin{itemize}
\item solves the 1st order perturbed equations for a rigidly rotating perfect fluid
  with the barotropic equation of state of the background
  in $\mmm^+$ and for vacuum in $\mmm^-$,
\item satisfies the linearized matching conditions across the boundary of the fluid ball.
\item it is bounded,
\item the perturbed pressure vanishes at the origin, $\Pp_c=0$.
\end{itemize}
Then there exist first order gauge vectors
(at each interior and exterior regions $\mmm^\pm$)
such that the gauge transformed tensor $\Kper^{\gaugefinal}$
takes the form
\begin{equation}
\Kper^{\gaugefinal}=-2 \opert_c (G(r)-G_\infty) r^2 \sin^2\theta dt d\phi,
\label{res:K1_prop}
\end{equation}
where
    $\opert_c\in\mathbb{R}$,
  $G(r)$ is the unique $C^1(0,\infty)$
  solution of \eqref{eq:Gg1} in $D$ 
  with $G(0)=1$,  and $G_\infty=\lim_{r\to\infty}G(r)=G(\ro)+\ro G'(\ro)/3>1$.
  Moreover, $G^+$ extends to a 
    $C^2(\mmm^ +)\cap C^{n+2}(\mmm^+\setminus \centresph_0)$ function,
    while $G^-$ is $C^{\infty}([\ro,\infty))$ and given explicitly by \eqref{gminusexp}.
 
    Furthermore, the parameter      $\Omegaper= - \opert_c G_{\infty}$ and
      the first order perturbed pressure and density $\Pp$ and $\Ep$ vanish identically.
\end{proposition}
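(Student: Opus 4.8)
The plan is to reduce the first order problem to the base perturbation scheme by setting $\opertbase = \Omegaperbase = 0$, and to invoke the bootstrap equivalence between a first order problem and a second order problem with vanishing first order tensor. First I would apply Theorem \ref{theo:paper1} to the given $C^{n+1}$ perturbation scheme, together with Proposition \ref{prop:rigidrot_OT} (the hypotheses of which hold: the background is static spherically symmetric, hence simply connected after the identification in {\AsHonep}, and $\eta$ has a non-empty axis), to deduce that the scheme is orthogonally transitive and that $\Kper$ can be brought into the canonical form \eqref{res:fopert_tensor}. This form for $\Kper$ is, up to the substitutions \eqref{eq:subs_2_to_1_K1}--\eqref{eq:subs_2_to_1_param}, exactly the form \eqref{Kperper_gauge} of the second order tensor in the base scheme with $\opertbase = 0$; and because $\Kper = 0$ identically in the base scheme's first order slot, the second order tensor transforms under a gauge change precisely as a first order tensor does (as noted in the Remark after Proposition \ref{prop:full_class_gauges}). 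Thus every conclusion established for $\Kperper$ of the base scheme applies verbatim to $\Kper^{\gaugefinal}$.

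Next I would verify that the four bulleted hypotheses on $\Kper$ translate into the hypotheses {\baseback}--{\basebaro} of the barotropic base scheme under these substitutions. The rigid rotation plus barotropic EOS in $\mmm^+$ and vacuum in $\mmm^-$ give {\baseback}, {\baseKperper.2}, {\basebaro}; boundedness of $\Kper$ gives the boundedness in {\baseKper} and {\baseKperper.1}; the linearized matching conditions give {\basematch}. One subtlety is that in the base scheme {\baseKper.1} and {\baseKper.2} assert a specific form and ODE for $\opertbase$, whereas here we want $\opertbase = 0$; but this is exactly the case $\opertbase_\pm = 0$ that Propositions \ref{prop:problem_Wtwo} (item 3), \ref{prop:problem_vhat} and \ref{res:hkmf_uniqueness} all treat explicitly, so no circularity arises. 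Then I would apply Remark \ref{rem:first_FE}: the first order field equations for $\Kper$ of the form \eqref{Kper} are equivalent to \eqref{eqfo} together with \eqref{eq:EpPp_zero}; with $\opertbase$ replaced by $\opert$, this shows $\opert$ satisfies \eqref{eqfo} with source set by $\Omegaper$, which is precisely \eqref{eq:Gg1} after the identification, and it shows $\Ep = \Pp = 0$ identically, giving one of the final claims for free.

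Then the form \eqref{res:K1_prop} and the statements about $G$ follow by applying Proposition \ref{prop:problem_Wtwo} with the substitution $\Wtwo \to \opert$, $\Wtwo_c \to \opert_c$, $G$ unchanged, $G_\infty$ unchanged: boundedness at infinity plus the freedom \eqref{gauge_first} (here playing the role of \eqref{gauge_second} restricted to the $B t\partial_\phi$ term) fixes the gauge uniquely so that $\opert = \opert_c(G(r) - G_\infty)$, with $G$ the unique $C^1(0,\infty)$ solution of \eqref{eq:Gg1} normalized by $G(0)=1$, $G^+ \in C^2(\mmm^+)\cap C^{n+2}(\mmm^+\setminus\centresph_0)$, $G^- \in C^\infty([\ro,\infty))$ given by \eqref{gminusexp}, and $G_\infty = G(\ro) + \ro G'(\ro)/3 > 1$ by Lemma \ref{lemma:g}. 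The parameter relation $\Omegaper = -\opert_c G_\infty$ is read off from the $\Omegaperperbaseint = -\Wtwo_c G_\infty$ statement in Proposition \ref{prop:problem_Wtwo} under the substitution \eqref{eq:subs_2_to_1_param}. Finally, the hypothesis $\Pp_c = 0$ is consistent but in fact redundant here since we have already shown $\Pp \equiv 0$; nonetheless stating it makes the parallel with the second order result transparent.

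The main obstacle I anticipate is the careful bookkeeping of the substitution dictionary: one must check that \emph{every} auxiliary quantity appearing in Propositions \ref{prop:field_equations}, \ref{prop:problem_Wtwo}, \ref{prop:problem_vhat}, \ref{res:hkmf_uniqueness} (the functions $\hhat, \vhat, \qhat$, the integration constants $c_0, H_0, D_3$, the matching data, etc.) has a well-defined first order counterpart, and in particular that the one-order drop in differentiability ($C^{m}$ versus $C^{m+1}$) of the second order scheme is not actually needed — the first order scheme is \emph{more} regular, so this direction is harmless, but it must be explicitly noted. A second, minor, point is confirming that the gauge class $\{\gaugefinal\}$ fixed in Proposition \ref{res:hkmf_uniqueness} restricts, in the $\opertbase = 0$ first order reading, to a genuine first order gauge vector of the form \eqref{gauge_first} (plus a background Killing), which it does since the $A t\partial_t$, $\Y\partial_r$, $\alpha\partial_\theta$ parts all act trivially on a pure $dt\,d\phi$ tensor at this order. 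I would close the proof by remarking that the case $\opert_c = 0$ recovers the trivial (static) first order perturbation, as expected.
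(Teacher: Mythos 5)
Your overall strategy is the same as the paper's: treat the canonical-form first order tensor as the second order tensor of the base scheme with vanishing first order part, check hypotheses {\baseback}--{\basebaro} under the substitution dictionary, and read off the conclusion from Propositions \ref{prop:problem_Wtwo} and \ref{res:hkmf_uniqueness}. That part is sound. However, there is one genuine error: your use of Remark \ref{rem:first_FE} and the ensuing claim that the hypothesis $\Pp_c=0$ is redundant. Remark \ref{rem:first_FE} applies to a first order tensor that is \emph{already} of the restricted form \eqref{Kper}, i.e.\ a pure $dt\,d\phi$ term with coefficient depending only on $r$. At the point where you invoke it, $\Kper$ is only known to be in the canonical form \eqref{res:fopert_tensor}, which still carries the $\foh,\fom,\fok,\fof$ components and a $\theta$-dependent $\opert$; the equivalence with \eqref{eqfo} plus \eqref{eq:EpPp_zero} does not apply to such a tensor, so you cannot conclude $\Ep=\Pp\equiv 0$ ``for free'' at that stage.

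This matters because $\Pp_c=0$ is precisely the hypothesis that kills the $h,m,k$ sector. Under the translation, $\Pp$ corresponds to $\Ppp$ of the base scheme, and by \eqref{eq:Ppp_h} together with \eqref{hg_+_no_w} (with $\opertbase=0$) the functions $h^{\gaugefinal}_+$ and $m^{\gaugefinal}_+$ are proportional to $\Ppp_c$, hence $\Ppp\equiv 0$ if and only if $\Ppp_c=0$. A nonzero $\Pp_c$ corresponds to a perfectly admissible first order perturbation (a displacement within the static spherically symmetric family, changing the central pressure) that satisfies all your bulleted hypotheses except the last one, and for which \eqref{res:K1_prop} fails because $\Kper^{\gaugefinal}$ then acquires nontrivial $dt^2$ and $dr^2$ components. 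So the hypothesis is essential, not redundant, and the correct order of the argument (as in the paper) is: apply Proposition \ref{res:hkmf_uniqueness} with $\opertbase^\pm=0$ \emph{and} $\Ppp_c=0$ to obtain $h^{\gaugefinal}=m^{\gaugefinal}=k^{\gaugefinal}=0$, and only then deduce $\Ep=\Pp=0$ from Remark \ref{res:final_Ppp} (or, equivalently, from Remark \ref{rem:first_FE} once the tensor has been reduced to the form \eqref{Kper}). With that correction the rest of your argument goes through.
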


\begin{proof}
We start by setting the problem under the frame of the  \emph{base perturbation scheme}.
Point \baseback{} of the \emph{base perturbation scheme} is satisfied by assumption.
We set $\Kper^\pm=0$, i.e. $\opertbase_\pm=0$,
 and  the whole
  point \baseKper{} is trivially satisfied with $\Omegaperbase=J_\opertbase=0$, while the matching conditions
\eqref{omega_matching} and \eqref{Q1_matching} are also satisfied for $b_1=\Qone^\pm=0$.
By Proposition \ref{prop:rigidrot_OT} the
perturbation scheme inherits the ortogonal transitivity of the group generated
by $\{\stat,\axial\}$ and therefore Theorem \ref{theo:paper1} applies
to both the interior and exterior regions $\mmm^\pm$
ensuring that there exists a gauge transformation 
at each region $M^\pm$
for which the first order perturbation tensors
take the form \eqref{res:fopert_tensor} on each $M^\pm$.
With the identifications in \eqref{eq:subs_2_to_1_K1} and setting $\Q(r)=r$ (which is allowed by the assumptions in Definition \ref{def:background}), the properties of the functions in \eqref{res:fopert_tensor}
imply that point \baseKperper.1{} of the \emph{base perturbation scheme}
is satisfied for $\Kperper=\Kper^\Psi$  with $m=n-1$ (on both $M^+$ and $M^-$). Finally, the points  \baseKperper.2{}, \basematch{} and \basebaro{}
are incorporated as assumptions in the Proposition.

It suffices now to apply Proposition \ref{res:hkmf_uniqueness}
for the case $\opertbase^\pm=0$ and impose $\Ppp_c=0$, which is the translation of
$\Pp_c=0$ under \eqref{eq:subs_2_to_1_rho_p}.  Applying again the translation
    \eqref{eq:subs_2_to_1_rho_p} to the outcome of this Proposition
    gives \eqref{res:K1_prop} as well as all the listed properties of
    $G(r)$.

    The parameter $\Omegaper$ was called $\Omegaperbaseint$ in the \emph{base scheme} and takes the value (by Proposition \ref{prop:problem_Wtwo} after appyling the translation \eqref{eq:subs_2_to_1_param})
  $\Omegaperbaseint = - \opert_c  G_{\infty}$. 
  Moreover, Remark \ref{res:final_Ppp},
  applied to $\opertbase=0$ and
    $h^\gaugefinal=0$ provides $\Epp=\Ppp=0$.
The translation \eqref{eq:subs_2_to_1_rho_p} provides the final claim.
Calling $\Ppp_c$ the ``perturbed pressure at the origin''
  is justified by Lemma \ref{res:Ppp_gauge_inv}
\finn
\end{proof}

\subsection{Existence and uniqueness to second order}
Given the previous result for the first order problem, the second order problem
just follows the \emph{base  perturbation scheme}, and we just need to make direct use of
Proposition \ref{res:hkmf_uniqueness}. We make the result and the argument precise in the
following. 

\begin{theorem}[\textbf{Rotating stars to second order}]
  \label{res:second_order}
  Consider a $C^{n+1}$ perfect fluid ball configuration
  according to Definition \ref{def:background}
  with $n\geq 4$  and $\Ebc+\Pbc\neq 0$.
  Let us be given a $C^{n+1}$ maximal
  perturbation scheme $(\mmm_\pertp, \gfam_\pertp,\{\psi_\pertp\})$
  inheriting the 
  Abelian
  $G_2$ generated by $\{ \xi=\partial_t, \eta=\partial_\phi\}$.
  Assume that the corresponding first and second order perturbation tensors $\Kper$
  and $\Kperper$
  \begin{itemize}
  \item solve the perturbed equations to second order for a rigidly rotating perfect fluid
    with the barotropic equation of state of the background
    in $\mmm^+$  and for vacuum in $\mmm^-$,
  \item satisfy the matching conditions to second order across the boundary of the fluid ball,
  \item are bounded,
  \item the perturbed pressure at the origin vanishes, $\Pp_c=\Ppp_c=0$.
  \end{itemize}
  Then there exist  first and second order gauge vectors
  (at each interior and exterior regions $\mmm^\pm$)
  such that the gauge transformed tensors
    $\Kper^{\gaugefinal}$ and $\Kperper^{\gaugefinal}$ are of class
    $C^2(\mmm^+)\cap C^{n+2}(\mmm^+\setminus\centresph_0)\cap C^{\infty}(\mmm^-)\cap C^1(\mmm)$
    and $C^0(\mmm^+)\cap C^{n}(\mmm^+\setminus\centresph_0)\cap C^{\infty}(\mmm^-)$ respectively,
    and take the form
  \begin{align}
    &\Kper^{\gaugefinal}=-2\opert_c (G(r)-G_\infty) r^2 \sin^2\theta dt d\phi,\label{res:K1_theorem}\\
    &\Kperper^{\gaugefinal} = \left ( -4 e^{\nu(r)} h(r,\theta) +
      2 \opert^2_c (G(r)-G_\infty)^2 r^2 \sin^2 \theta  \right ) dt^2
      + 4 e^{\lam(r)} m(r,\theta) dr^2 \nonumber \\
    &+ 4 k(r,\theta) r^2 \left ( d \theta^2 + \sin^2 \theta d \phi^2 \right )
      - 2  \Wtwo_c (G(r)-G_\infty)r^2 \sin^2\theta  dt d\phi\label{res:K2_theorem},
  \end{align}
  with
  \begin{align}
    &h(r,\theta)= h_0(r)+ h_2(r) P_2(\cos\theta),\nonumber\\
    &m(r,\theta)= m_0(r)+ m_2(r) P_2(\cos\theta),\label{res:hmk_theorem}\\
    &k(r,\theta)= k_2(r) P_2(\cos\theta),\nonumber
  \end{align}
  where $\opert_c, \Wtwo_c \ \in \mathbb{R}$ are free parameters and
  \begin{itemize}
  \item[(i)]$G(r)$  is the unique $C^1(0,\infty)$
    solution of \eqref{eq:Gg1} in $D$ 
    with $G(0)=1$.
    Moreover,  $G_\infty=\lim_{r\to \infty}G(r)= G(\ro)+\ro G'(\ro)/3>1$, $G'(\ro)>0$ and
    $G(r)$ in $D^+$ extends to a
    $ C^2([0,\ro])\cap C^{n+2}((0,\ro])$ function.
  \item[(ii)] The functions $h_0(r),h_2(r),m_0(r),m_2(r),k_2(r)$
    are of class
    $C^{n}((0,\ro])\cap C^{\infty}([\ro,\infty))$, extend continuously to $r=0$,
    are bounded,  are 
    uniquely determined by $\opert_c$,
    and all vanish if $\opert_c=0$.
    \item[(iii)] The rotation parameters
      $\Omegaper = - \opert_c G_{\infty}$ and $\Omegaperper = - \Wtwo_c G_{\infty}$.  \end{itemize}
\end{theorem}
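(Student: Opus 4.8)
The strategy is to recognize that the hypotheses are precisely those of the \emph{base perturbation scheme} with the first order tensor put into canonical form, and then invoke the bootstrap logic already set up. The proof proceeds in three stages. First I would deal with the first order tensor $\Kper$: by Proposition \ref{prop:rigidrot_OT} the scheme inherits orthogonal transitivity, so Theorem \ref{theo:paper1} applies on each region $\mmm^\pm$ and provides a gauge in which $\Kper$ takes the form \eqref{res:fopert_tensor}. This is exactly the situation of the \emph{first order problem}, so Proposition \ref{res:first_order} (itself an instance of Proposition \ref{res:hkmf_uniqueness} with $\opertbase_\pm=0$) gives that, after a further first order gauge transformation, $\Kper^{\gaugefinal}$ has the form \eqref{res:K1_theorem} with $G(r)$ the stated function and $\Omegaper=-\opert_c G_\infty$. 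Note that this step forces $\Pp=\Ep=0$ identically, so the hypotheses of the base scheme item \baseKper{} are met with $\opertbase(r)=\opert_c(G(r)-G_\infty)$ (this function solves \eqref{eqfo}, which by Remark \ref{rem:first_FE} is equivalent to the first order field equations it already satisfies).

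\textbf{Second stage: casting the second order problem into the base scheme.} Having fixed $\Kper$ in canonical form, Theorem \ref{theo:paper1} also puts $\Kperper$ into the canonical form \eqref{res:sopert_tensor} by a second order gauge transformation, with the differentiability classes of that theorem (so item \baseKperper.1{} holds with $m=n-2$, which is $\geq 2$ since $n\geq 4$). The hypotheses ``solves the perturbed equations to second order for a rigidly rotating perfect fluid / vacuum'', ``matching conditions to second order'', ``bounded'', and ``barotropic EOS of the background'' translate directly into items \baseKperper.2{}, \basematch{}, and \basebaro{}. Thus the full set {\baseback}--{\basebaro} of the \emph{barotropic base perturbation scheme} holds, with $\opertbase_\pm = \opert_c(G(r)-G_\infty)$. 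I would then simply apply Proposition \ref{res:hkmf_uniqueness}: it provides the fully fixed gauge $\gaugefinal^\pm$, the existence and uniqueness (given $\Ppp_c$) of $\{h^{\gaugefinal},m^{\gaugefinal},k^{\gaugefinal}\}$, their Legendre structure (only $\ell=0,2$, and $k$ only $\ell=2$), their regularity, and the vanishing of $f^{\gaugefinal}$. Combined with Proposition \ref{prop:problem_Wtwo} this gives the explicit form \eqref{res:K2_theorem} with $\Wtwo=\Wtwo_c(G(r)-G_\infty)$ and $\Omegaperper=-\Wtwo_c G_\infty$. Imposing the remaining hypothesis $\Ppp_c=0$ selects the unique solution; Remark \ref{res:final_Ppp} then shows $\fintegralz^{\gaugefinal}=0$ so the perturbed pressure and density are as stated.

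\textbf{Third stage: the bootstrap closure and the structural conclusions.} The subtle point — and the main obstacle — is that Proposition \ref{res:hkmf_uniqueness} was proved \emph{under} the assumptions \baseKper.1{}--\baseKper.2{}, i.e. under the a priori postulate that the first order tensor has the special form $\opertbase e^{-\nu}\Sig$ with $\opertbase$ radially symmetric and satisfying \eqref{eqfo}. Here this is not postulated but must be \emph{derived}: it is exactly what the first stage accomplishes by reducing $\Kper$ to canonical form and then applying the first order result, which \emph{proves} $\opertbase$ is radially symmetric, bounded, and of the required regularity. In other words the bootstrap is: the first order problem is the base scheme with vanishing first order tensor (handled by Proposition \ref{res:first_order}), and once its output $\opertbase=\opert_c(G-G_\infty)$ is fed back in, the second order problem becomes the base scheme itself. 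I would make this circular-looking dependency explicit and check it is well-founded — the first order analysis never uses any second order hypothesis. Finally, the differentiability classes claimed for $\Kper^{\gaugefinal}$ and $\Kperper^{\gaugefinal}$ across $\mmm$ follow from the $C^1$ matching of $G$ established in Lemma \ref{lemma:g}/Proposition \ref{prop:problem_Wtwo} together with the regularity of the $\ell=0,2$ coefficient functions from Propositions \ref{prop:problem_vhat} and \ref{prop:newsigma_star_unique}; and the ``$\opert_c=0\Rightarrow$ all vanish'' statement is the final clause of Proposition \ref{res:hkmf_uniqueness} (using $\opertbase_\pm=0\Rightarrow J_\opertbase=\Omegaperbase=0$). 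The equatorial symmetry and stationarity/axisymmetry of the configuration asserted in the informal theorem are immediate corollaries of the explicit $\theta$-dependence (only $P_0,P_2$) and $t$-independence of the final forms.
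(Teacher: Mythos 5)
Your proposal is correct and follows essentially the same route as the paper's own proof: reduce the first order tensor via Proposition \ref{prop:rigidrot_OT}, Theorem \ref{theo:paper1} and Proposition \ref{res:first_order}, verify that {\baseback}--{\basebaro} then hold with $\opertbase=\opert_c(G-G_\infty)$ and $m=n-2\geq 2$, and conclude by Proposition \ref{res:hkmf_uniqueness} together with Proposition \ref{prop:problem_Wtwo} for the rotation parameters. Your explicit discussion of why the bootstrap is well-founded is a useful clarification but does not change the argument.
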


\begin{remark}\label{remark:global_solution}
    In all the expressions referred to in this remark
    the replacement $\opertbase_+-\Omegaperbaseint\to \opert_c G$
      is to be made.
      
      The global solution in the gauge $\gaugefinal$
    is obtained, apart from $G(r)$,
    in terms of two fully determined
    functions $\newsigma_*(r)$ and $\vhat_2(r)$.
        The function $\vhat_2(r)$ is the unique bounded solution
    of the equation \eqref{eq:uhat2} with \eqref{eq:mu2} and \eqref{eq:f_omega}
        that satisfies  the matching conditions
    \[
      [\vhat_2]=0,\qquad
      [\vhat_2']=\frac{1}{6}e^{\lambda(\ro)}(1+e^{\lambda(\ro)})\ro^3\chiefes\Eb_+(\ro) G^2(\ro),
    \]    
    c.f. Proposition \ref{prop:problem_vhat}.
    The function $\newsigma_*(r)$
    is the unique bounded solution
    of the equation \eqref{eq:for_zeta}  with \eqref{eq:mu2} and \eqref{eq:f_omega}
      that satisfies \eqref{newsigma_star_origin} and the
    matching conditions
    \[
      [\newsigma_*]=0,\qquad
      [\newsigma_*']=-\frac{2}{3}e^{\lambda(\ro)}(1+e^{\lambda(\ro)})\ro^3\chiefes\Eb_+(\ro) G^2(\ro),
    \]
    and is determined by the value $\Ppp_c=\lim_{r\to 0}\Ppp$
    through the relation
    \eqref{boundary:newsigmastar}, 
    c.f. Proposition \ref{prop:newsigma_star_unique}.
        Although in Theorem \ref{res:second_order}
      we have set $\Ppp_c=0$, the solution has been obtained
      for general values of the second order perturbed pressure $\Ppp_c$. This may be of independent interest.
  \begin{itemize}
  \item In the interior region we have $\Ep=\Pp=0$, and
      the functions $\{h,m,k\}$ and $\{\Epp,\Ppp\}$
    are given by the
    right-hand sides of \eqref{hg_+_new}-\eqref{mg_+_new} and Remark \ref{res:final_Ppp}, respectively, 
    with $\keygamma(r)$ as defined in \eqref{def:keygamma}, $\F$ as in 
    \eqref{eq:f_omega}
        and $\newsigma_*^+$, $\vhat^+_2$ being the interior parts of
      $\newsigma_*$, $\vhat_2$.
          
  \item In the exterior region the functions $\{h,m,k\}$ are
    given by the right-hand sides of \eqref{hg_-_new}-\eqref{mg_-_new}
    with  $\keygamma_{\star}$ as defined in \eqref{def:keygamma_star_nss}
    and $J_{\opertbase} \to - \opert_c G'(a) a^4/6$, with $\newsigma^-_*$
      and $\vhat_2^-$ being the exterior parts of $\newsigma_*$ and $\vhat_2$.
  \end{itemize}
\end{remark}

\begin{proof}
  All the hypotheses of Proposition \ref{res:first_order} are satisfied, so \eqref{res:K1_theorem} as well as item \emph{(i)} follow readily.
We now use the  \emph{base perturbation scheme}.
The first point \baseback{} is satisfied by assumption. Point \baseKper{}  holds with $\Kper=\Kper^\gaugefinal$ and
$\opertbase=\opert_c(G-G_\infty)$, $\Omegaperbaseint=-\opert_c G_\infty$.
Proposition \ref{prop:rigidrot_OT} and Theorem \ref{theo:paper1}
  imply the existence of a second order gauge vector that transforms the second
  order tensor onto the form $\Kperper^\Psi{}^\pm$ as given in \eqref{res:sopert_tensor}  on each $M^\pm$.
  By the same Theorem \ref{theo:paper1}, point \baseKperper.1 of the \emph{base perturbation scheme}
is satisfied with $m=n-2$. Since
$n\geq 4$ by assumption, the condition $m\geq 2$ holds.
Finally, \baseKperper.2, \basematch{} and \basebaro{} are satisfied by
assumption.

It suffices now to apply Proposition \ref{res:hkmf_uniqueness} with
  $\opertbase=\opert_c (G-G_\infty)$ to conclude
 \eqref{res:K2_theorem}, \eqref{res:hmk_theorem}, as well as
point \emph{(ii)}.
As before, calling $\Ppp_c$ the ``perturbed pressure at the origin''
  (at second order) is justified by Lemma
  \ref{res:Ppp_gauge_inv}. Item \emph{(iii)} follows because $\Omegaperper$ corresponds
  to $\Omegaperperbaseint$ in the  base scheme and its value was given in
Proposition \ref{prop:problem_Wtwo} (the value of
$\Omegaper$ already appears in Proposition \ref{res:first_order}).

The matching conditions for $\newsigma_*$ and $\vhat_2$ in the
Remark follow from \eqref{eq:mc_newsigma_star} and \eqref{eq:mc:vhat2}
respectively, for $\opertbase=\opert_c (G-G_\infty)$ and
$\Omegaperbaseint=-\opert_c G_\infty$, so that
$\opertbase_+(\ro)-\Omegaperbaseint=\opert_c G(\ro)$ and
therefore $[\F]=\frac{1}{3}e^{\lambda(\ro)}\ro^4\chiefes \Eb_+(\ro) G^2(\ro)$,
c.f. \eqref{dif_qhat2}. The expressions are simplified using
\eqref{nup_value} and that $\nu(\ro)=-\lambda(\ro)$,
c.f. \eqref{eqs_back_vacuum} and \eqref{background_matching}.\fin
\end{proof}

We finish the paper by writing down the family of metrics $\gfamp_\pertp$ to second order
for the gauge $\gaugefinal$ obtained in this theorem.
The final step will be to exploit the freedom in  redefining 
the perturbation parameter $\pertp \to \tilde\pertp=\tilde\pertp(\pertp)$,
inherent to any perturbation theory, as well as the
scalability of the perturbations,
to obtain the clasical
form of the stationary and axially symmetric  perturbations
around static balls.

From Theorem \ref{res:second_order} we have
\begin{align}
  g_\pertp=g&-\pertp(2\opert_c+\pertp \Wtwo_c)( G(r)-G_\infty) r^2\sin^2\theta dt d\phi\nonumber\\
            &+\pertp^2\left\{\left ( -2 e^{\nu(r)} h(r,\theta) +\opert^2_c (G(r)-G_\infty)^2 r^2 \sin^2 \theta  \right ) dt^2
              + 2 e^{\lam(r)} m(r,\theta) dr^2 \right.\nonumber \\
            &\left.+ 2 k(r,\theta) r^2\left ( d \theta^2 + \sin^2 \theta
              d \phi^2 \right ) \frac{}{}\right\}+ O(\pertp^3).\label{res:fam_g_e}
\end{align}
If $\opert_c=0$ then \eqref{res:fam_g_e} reduces to
$g_\pertp=g-\pertp^2 \Wtwo_c ( G(r)-G_\infty) r^2\sin^2\theta dt d\phi
+ O(\pertp^3)$, which simply means that the perturbation is set to start
at second order, which then becomes the first non-trivial order and takes exacly the
same form as the first order with $\opert_c=\Wtwo_c/2$. 
We can thus assume $\opert_c\neq 0$ without loss of generality.
The change $\tilde\pertp=\pertp(\opert_c+\pertp \Wtwo_c/2)$,
after a suitable rescaling
\[\{h,m,k\}\to \{h/\opert_c^2,m/\opert_c^2,k/\opert_c^2\}
\]
yields
\begin{align}
g_{\tilde\pertp}=g&-2\tilde\pertp ( G(r)-G_\infty) r^2\sin^2\theta dt d\phi\nonumber\\
                  &+\tilde\pertp^2\left\{\left ( -2 e^{\nu(r)} h(r,\theta)
                    +( G(r)-G_\infty)^2 r^2 \sin^2 \theta  \right ) dt^2
                    + 2 e^{\lam(r)} m(r,\theta) dr^2 \right.\nonumber \\
                  &\left.+ 2 k(r,\theta) r^2\left ( d \theta^2 + \sin^2 \theta
                    d \phi^2 \right ) \frac{}{}\right\}+ O(\tilde\pertp^3).\label{res:fam_g_te}
\end{align}
This redefinition amounts to setting $\opert_c=1$ and $\Wtwo_c=0$.
  Let us stress the fact that all functions in this last expression are unique,
and that the only free paramenter that enters
the first and second order perturbations, which we have taken to be $\opert_c$,
is now integrated into $\tilde\pertp$. This corresponds to the property, widely used in the literature, that stationary and axially symmetric perturbations
  of fluid balls to second order depend on a single free parameter,
  that  can be encoded
  in the perturbation parameter and which physically is related to the rotation of the fluid. As discussed in the introduction, establishing this fact rigorously was one of the aims of  this paper.
  
Let us finally stress that the relation of the parameters and functions
in \eqref{res:fam_g_te} with the rotation of the star, as measured by the static observer
$\stat$ at infinity, requires the full control of the gauges and the jump
at the surface of the relevant functions, and is of global nature and gauge invariant,
as discussed in \cite{ReinaVera2015}.
One detail that is missing in \cite{ReinaVera2015} is that boundedness of the perturbation
forces the gauges (at first order) to be fixed so that $\stat$
remains to be the stationary observer for $g_\pertp$ at infinity.
This fixing of gauges, as done in the present paper, yields the following fluid velocity
\begin{align}
  \label{velocity}
u= F_{\pertp}
(\stat -\tilde\pertp G_\infty\axial) +  O(\tilde\pertp^3), \qquad
F_{\pertp} : =   e^{-\frac{\nu}{2}} +
    \frac{1}{2}\tilde\pertp^2 e^{- \frac{3\nu}{2}}
\left ( -2 e^\nu h+G^2 r^2\sin^2\theta \right ).
\end{align}
Indeed, the redefinition $\{\opert_c \to 1, \Wtwo_c \to 0\}$
 gives (by Theorem \ref{res:second_order}, item (ii))
  $\{\Omegaper = - G_{\infty} = - (G(\ro)+\ro G'(\ro)/3) <-1 , \Omegaperper=0\}$
  and \eqref{velocity}  follows from \eqref{uper} and \eqref{uperper} applied to
\eqref{res:fam_g_te}.
The velocity of rotation of $u_\pertp$ along $\axial$ as
measured with respect to
$\stat$ at infinity is thus given by $-\tilde\pertp G_\infty$,
and vanishes iff $\tilde\pertp=0$.

\section*{Acknowledgements}

We thank Alfred Molina for very interesting discussions on this problem and
  related matters. M.M. acknowledges financial support under the projects
PGC2018-096038-B-I00
(Spanish Ministerio de Ciencia, Innovaci\'on y Universidades and FEDER)
and SA083P17 (JCyL).
B.R. and R.V. acknowledge financial support under
the projects FIS2017-85076-P (Spanish Ministerio de Ciencia, Innovación y Universidades and
FEDER) and IT-956-16 (Basque Government). B.R. was supported by the post-doctoral grant
POS-2016-1-0075 (Basque Government). Many calculations have been performed using the free
PSL version of the REDUCE computer algebra system.

\appendix

\section{First order perturbed Ricci tensor in covariant form}
\label{tfi_sector}
In this appendix we derive, in  a fully covariant manner, the first order perturbed Ricci tensor in backgrounds admitting
two Killing vectors $\xi$ and $\eta$ satisfying suitable conditions
(see Proposition \ref{prop:R1} below) and for metric perturbation tensors $\Kper$
with a single component along the $\xi,\eta$ direction. It turns out that
 the perturbed Ricci tensor preserves this structure, namely its only non-zero component is again along the $\xi,\eta$ direction. In practice
this implies a decoupling of the perturbed field equations.

In the main text we apply these results in the context 
of static and spherically symmetric backgrounds. However, the decoupling
holds in much more generality. In view of their potential interest for other
problems we present the general result.

We start by writing down an expression for the perturbed Ricci tensor in an arbitrary
background when the metric perturbation tensor 
$\Kper$ is splitted as
${\Kper}_{\alpha\beta} = \y \Sig_{\alpha\beta}$, where $\y$ and $\Sig_{\alpha\beta}$ are for the moment any $C^2$ scalar and symetric $(0,2)$-tensor, respectively.
Directly from the definition (\ref{calSper})
\begin{align}
&\Sper_{\mu\alpha\beta} =
\frac{1}{2} \left ( \nabla_{\alpha} \y \, \Sig_{\mu\beta} 
+ \nabla_{\beta} \y \, \Sig_{\mu\alpha} - \nabla_{\mu} \y \, \Sig_{\alpha\beta} \right )
+ \y \PP_{\mu\alpha\beta}, \label{calSperP} \\
&\PP_{\mu\alpha\beta} :=
 \frac{1}{2} \left ( \nabla_{\alpha} \Sig_{\mu\beta} + \nabla_{\beta} \Sig_{\mu\alpha}
- \nabla_{\mu} \Sig_{\alpha\beta} \right ).
\nonumber 
\end{align}
Taking the trace in $\mu\alpha$ in (\ref{calSper}) yields immediately
\begin{align}
\Sper^{\mu}_{\phantom{\mu}\mu\beta} = \frac{1}{2} \nabla_{\beta} {\Kper}^{\mu}_{\phantom{\mu}\mu}
  = \frac{1}{2} \nabla_{\beta} \left ( \y \Sig^{\mu}_{\phantom{\mu}\mu} \right ).
  \label{tr1}
\end{align}
We also need to compute
\begin{align}
\nabla_{\mu} \Sper^{\mu}_{\phantom{\mu}\alpha\beta}
= &  \phantom{+} \frac{1}{2} \left ( \nabla^{\mu} \nabla_{\alpha} \y \, \Sig_{\mu\beta}
+ \nabla^{\mu} \nabla_{\beta} \y \, \Sig_{\mu\alpha}
- \nabla^{\mu} \nabla_{\mu} \y\, \Sig_{\alpha\beta} \right ) \nonumber \\
& + \frac{1}{2} \left ( \nabla_{\alpha} \y \, \nabla^{\mu} \Sig_{\mu\beta}
+ \nabla_{\beta} \y \, \nabla^{\mu} \Sig_{\mu\alpha} 
- 2 \nabla^{\mu} \y \, \nabla_{\mu} \Sig_{\alpha\beta}
+ \nabla^{\mu} \y \, \nabla_{\alpha} \Sig_{\mu\beta}
+ \nabla^{\mu} \y \, \nabla_{\beta} \Sig_{\mu\alpha}
\right ) \nonumber \\
& + \y \nabla_{\mu} \PP^{\mu}_{\phantom{\mu}\alpha\beta}  \nonumber \\
= & \phantom{+}
\frac{1}{2} \Big ( \nabla_{\alpha} \left ( \nabla^{\mu} \y \Sig_{\mu\beta} \right )
+\nabla_{\beta} \left ( \nabla^{\mu} \y \Sig_{\mu\alpha} \right )
- \nabla^{\mu} \nabla_{\mu} \y \, \Sig_{\alpha\beta} \Big )
- \nabla_{\mu} \y \nabla^{\mu} \Sig_{\alpha\beta} \nonumber \\
& + \frac{1}{2} \left ( \nabla_{\alpha} \y \nabla^{\mu} \Sig_{\mu\beta} +
\nabla_{\beta} \y \nabla^{\mu} \Sig_{\mu\alpha} \right ) +  
 \y \nabla_{\mu} \PP^{\mu}_{\phantom{\mu}\alpha\beta}. \label{tr2}
\end{align}
Inserting (\ref{tr1})-(\ref{tr2})  into (\ref{Ricpera}), yields
the following (fully general) identity:
\begin{align*}
R^{(1)}_{\alpha\beta}
= & \phantom{+}
\frac{1}{2} \left ( \frac{}{}\nabla_{\alpha} \left ( \nabla^{\mu} \y \, \Sig_{\mu\beta} \right )
+\nabla_{\beta} \left ( \nabla^{\mu} \y \, \Sig_{\mu\alpha} \right )
- \nabla^{\mu} \nabla_{\mu} \y \, \Sig_{\alpha\beta} \right )
- \nabla_{\mu} \y \,\nabla^{\mu} \Sig_{\alpha\beta} \\
& + \frac{1}{2} \left ( \frac{}{}\nabla_{\alpha} \y \, \nabla^{\mu} \Sig_{\mu\beta} +
\nabla_{\beta} \y  \, \nabla^{\mu} \Sig_{\mu\alpha} \right ) +
 \y  \,\nabla_{\mu} \PP^{\mu}_{\phantom{\mu}\alpha\beta}
- \frac{1}{2} \nabla_{\alpha} \nabla_{\beta} \left ( \y \, \Sig^{\mu}_{\phantom{\mu}\mu} \right ).
\end{align*}
We now assume that $(\mmm,g)$ admits two Killing vectors $\xi$ and 
$\eta$ and that $\Sig_{\alpha\beta} = \xi_{\alpha} \eta_{\beta}
+ \xi_{\beta} \eta_{\alpha}$. Then
$\nabla^{\mu} \y \, \Sig_{\mu\alpha} = \xi(\y) \eta_{\alpha}
+ \eta(\y) \xi_{\alpha}$, 
$\Sig^{\mu}_{\phantom{\mu}\mu} =
2 \la \xi, \eta \ra$ and
\begin{align*}
\nabla^{\mu} \Sig_{\mu\alpha} = 
\xi_{\mu} \nabla^{\mu} \eta_{\alpha} 
+ \eta_{\mu} \nabla^{\mu} \xi_{\alpha}
= -\nabla_{\alpha} \la \xi, \eta \ra.
\end{align*}
Moreover, the Killing equations also imply
\begin{align}
\PP_{\mu\alpha\beta} = \nabla_{\alpha} \xi_{\mu} \, \eta_{\beta}
+ \nabla_{\beta} \xi_{\mu} \, \eta_{\alpha}
+ \nabla_{\alpha} \eta_{\mu} \, \xi_{\beta}
+ \nabla_{\beta} \eta_{\mu} \, \xi_{\alpha},
\label{PP}
 \end{align}
whose divergence is, after using the standard identity
$\nabla_{\mu} \nabla_{\alpha} \xi^{\mu} = R_{\alpha\mu} \xi^{\mu}$ (and similarly
for $\eta$),
\begin{align*}
\nabla_{\mu} \PP^{\mu}_{\phantom{\mu}\alpha\beta}
= R_{\alpha\mu} \xi^{\mu} \eta_{\beta} + 
R_{\beta\mu} \xi^{\mu} \eta_{\alpha}+ 
R_{\alpha\mu} \eta^{\mu} \xi _{\beta} + 
R_{\beta\mu} \eta^{\mu} \xi_{\alpha}  
- 2 \nabla_{\alpha} \xi^{\mu} \nabla_{\beta} \eta_{\mu}
- 2 \nabla_{\beta} \xi^{\mu} \nabla_{\alpha} \eta_{\mu}.
\end{align*}
Putting everything together, it follows that 
\begin{align}
R^{(1)}_{\alpha\beta} = &
\nabla_{(\alpha} \left ( \xi(\y) \eta_{\beta)} \right ) + 
\nabla_{(\alpha} \left ( \eta(\y) \xi_{\beta)} \right )
- \frac{1}{2} \left ( \nabla_{\mu} \nabla^{\mu}  \y \right ) \Sig_{\alpha\beta}
- \nabla^{\mu} \y \, \nabla_{\mu} \Sig_{\alpha\beta} - 
\nabla_{(\alpha} \y \nabla_{\beta)} \la \xi , \eta \ra \nonumber \\
& 
+ 2 \y \left ( R_{\mu(\alpha} \xi^{\mu} \eta_{\beta)} 
+ R_{\mu(\alpha} \eta^{\mu} \xi_{\beta)}  \right )
- 4 \y \nabla_{(\alpha} \xi^{\mu} \nabla_{\beta)} \eta_{\mu}- \nabla_{\alpha} \nabla_{\beta} \left (  \y \la \xi, \eta \ra \right ).
\label{Ric1prod}
\end{align}
where brackets denote symmetrization.
This is 
a general identity valid for a perturbation tensor ${\Kper}_{\alpha\beta}$ 
of the form
${\Kper}_{\alpha\beta} =
 2 \y \xi_{(\alpha} \eta_{\beta)}$ with $\xi$ and $\eta$
Killing vectors of the background.

This general identity may have applications in several contexts. For
the purposes of this paper we need the following particular case:
\begin{proposition}
\label{prop:R1}
Let $(\mmm,g)$ be a spacetime admitting two Killing vectors 
$\xi$ and $\eta$ satisfying the following three conditions:
\begin{itemize}
\item[(i)] $\xi$ and  $\eta$ are perpendicular,  i.e. $\la \xi,\eta
\ra =0$,
\item[(ii)] $[\xi,\eta]=0$,
\item[(iii)] both $\xi$ and $\eta$ are hypersurface orthogonal and
non-null on an open set $U$. 
\end{itemize}
Consider a first order perturbation tensor ${\Kper}_{\alpha\beta} = \y 
\Sig_{\alpha\beta}$ with
$\Sig_{\alpha\beta} := 2 \xi_{(\alpha} \eta_{\beta)}$
and
$\y \in C^{2}(U)$ satisfying $\xi(\y) = \eta(\y)=0$.
 Then, on $U$, the first
order perturbation of the Ricci tensor is
\begin{align}
R^{(1)}_{\alpha\beta} = &
\Sig_{\alpha\beta} \left (( \riccixi + \riccieta) \y
- \frac{1}{2 \la \xi, \xi \ra \la \eta , \eta \ra}
\nabla_{\mu} \Big ( 
\la \xi, \xi \ra \la \eta, \eta \ra \nabla^{\mu} \y \Big )
- \frac{\y}{2 \la \xi, \xi \ra \la \eta , \eta \ra}
\nabla_{\mu} \la \xi, \xi \ra \nabla^{\mu} \la \eta,\eta \ra 
\right ),
\label{Ricperetaxi}
\end{align}
where $\riccixi$ and $\riccieta$ are defined by
$\riccixi = \frac{1}{\la \xi, \xi \ra} \Ric(\xi,\xi)$ and 
$\riccieta = \frac{1}{\la \eta, \eta \ra} \Ric(\eta,\eta)$.
\label{FirstOrder}
\end{proposition}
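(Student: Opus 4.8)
\textbf{Proof strategy for Proposition \ref{prop:R1}.}

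The plan is to specialize the general identity \eqref{Ric1prod} under the three hypotheses (i)--(iii). First, since $\la\xi,\eta\ra=0$ by (i), every term in \eqref{Ric1prod} containing $\nabla_{(\alpha}\y\nabla_{\beta)}\la\xi,\eta\ra$ or $\nabla_\alpha\nabla_\beta(\y\la\xi,\eta\ra)$ drops out. Next, because $\xi(\y)=\eta(\y)=0$, the first two terms $\nabla_{(\alpha}(\xi(\y)\eta_{\beta)})$ and $\nabla_{(\alpha}(\eta(\y)\xi_{\beta)})$ vanish identically. What remains is
\[
R^{(1)}_{\alpha\beta} = -\tfrac12(\Box\y)\Sig_{\alpha\beta} - \nabla^\mu\y\,\nabla_\mu\Sig_{\alpha\beta} + 2\y\left(R_{\mu(\alpha}\xi^\mu\eta_{\beta)}+R_{\mu(\alpha}\eta^\mu\xi_{\beta)}\right) - 4\y\,\nabla_{(\alpha}\xi^\mu\nabla_{\beta)}\eta_\mu.
\]
The first Ricci term is immediately $\y(\riccixi+\riccieta)\Sig_{\alpha\beta}$ by the definitions of $\riccixi$, $\riccieta$, provided one also checks that the ``cross'' pieces $R_{\mu\alpha}\xi^\mu$ is proportional to $\xi_\alpha$ (and similarly for $\eta$); this proportionality is exactly hypothesis (iii) together with the fact that a hypersurface-orthogonal non-null Killing vector satisfies $R_{\mu\alpha}\xi^\mu = \riccixi\,\xi_\alpha$ (one sees this from the integrability/Frobenius condition $\xi_{[\alpha}\nabla_\beta\xi_{\gamma]}=0$ contracted appropriately, or simply from the fact that $\Ric(\xi,\cdot)$ must lie in the span of $\bm\xi$ when $\bm\xi$ is hypersurface orthogonal). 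So the Ricci contribution collapses cleanly onto $\Sig_{\alpha\beta}$.

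The remaining work is to show that the two ``kinematical'' terms, $-\nabla^\mu\y\,\nabla_\mu\Sig_{\alpha\beta}$ and $-4\y\,\nabla_{(\alpha}\xi^\mu\nabla_{\beta)}\eta_\mu$, also combine into a multiple of $\Sig_{\alpha\beta}$ and produce precisely the divergence term and the $\nabla_\mu\la\xi,\xi\ra\nabla^\mu\la\eta,\eta\ra$ term in \eqref{Ricperetaxi}. For the first of these I would use the Killing identity $\nabla_\mu\Sig_{\alpha\beta} = \xi_\mu{}^{;\,}\cdots$ — more concretely, expand $\nabla_\mu(\xi_\alpha\eta_\beta+\xi_\beta\eta_\alpha)$ using $\nabla_\mu\xi_\alpha = \nabla_{[\mu}\xi_{\alpha]}$ and the hypersurface-orthogonality relation $\nabla_\mu\xi_\alpha = \tfrac{1}{2\la\xi,\xi\ra}(\xi_\mu\nabla_\alpha\la\xi,\xi\ra - \xi_\alpha\nabla_\mu\la\xi,\xi\ra)$ (valid for hypersurface-orthogonal Killing vectors, and likewise for $\eta$). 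Substituting these, contracting with $\nabla^\mu\y$, and using $\xi(\y)=\eta(\y)=0$ to kill the terms where $\nabla^\mu\y$ hits $\xi_\mu$ or $\eta_\mu$, the expression should reorganize into $\Sig_{\alpha\beta}$ times $-\tfrac{1}{2\la\xi,\xi\ra}\nabla_\mu\la\xi,\xi\ra\nabla^\mu\y - \tfrac{1}{2\la\eta,\eta\ra}\nabla_\mu\la\eta,\eta\ra\nabla^\mu\y$. Similarly, $-4\y\nabla_{(\alpha}\xi^\mu\nabla_{\beta)}\eta_\mu$, after inserting the same two hypersurface-orthogonality formulas, produces a term proportional to $\xi_\alpha\eta_\beta+\xi_\beta\eta_\alpha = \Sig_{\alpha\beta}$ with coefficient $-\tfrac{\y}{2\la\xi,\xi\ra\la\eta,\eta\ra}\nabla_\mu\la\xi,\xi\ra\nabla^\mu\la\eta,\eta\ra$ — this is where hypothesis (ii), $[\xi,\eta]=0$, is used, since it guarantees $\nabla_\alpha\xi^\mu\nabla_\beta\eta_\mu$ is symmetric in the right way and that the mixed contractions $\xi^\mu\nabla_\mu\eta_\alpha$ etc. behave as expected.

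Finally I would assemble all pieces: the $\Box\y$ term combines with the $\nabla_\mu\la\xi,\xi\ra\nabla^\mu\y$ and $\nabla_\mu\la\eta,\eta\ra\nabla^\mu\y$ terms to form exactly $-\tfrac{1}{2\la\xi,\xi\ra\la\eta,\eta\ra}\nabla_\mu(\la\xi,\xi\ra\la\eta,\eta\ra\nabla^\mu\y)$ after expanding the divergence by the Leibniz rule, and the leftover $\nabla_\mu\la\xi,\xi\ra\nabla^\mu\la\eta,\eta\ra$ piece matches the last term of \eqref{Ricperetaxi}. The main obstacle I anticipate is bookkeeping: correctly tracking the symmetrizations and making sure that every term where $\nabla\y$ is contracted with $\bm\xi$ or $\bm\eta$ genuinely vanishes (it does, by $\xi(\y)=\eta(\y)=0$, but the intermediate expressions are bulky), and invoking the hypersurface-orthogonality formula for $\nabla_\mu\xi_\alpha$ in the correct form — this is the one nontrivial input and the place where a sign error is easiest to make. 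Everything else is a direct, if somewhat lengthy, tensor computation.
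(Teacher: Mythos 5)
Your strategy is essentially identical to the paper's proof: it specializes the general identity \eqref{Ric1prod} using the derivative formula for hypersurface-orthogonal Killing vectors, the relations $\xi(\la \eta,\eta\ra)=\eta(\la \xi,\xi\ra)=0$ that follow from $[\xi,\eta]=0$, and the Ricci-eigenvector property of hypersurface-orthogonal Killing fields, and then assembles the pieces exactly as the paper does. One small caveat: the formula you quote, $\nabla_\mu\xi_\alpha=\tfrac{1}{2\la\xi,\xi\ra}\left(\xi_\mu\nabla_\alpha\la\xi,\xi\ra-\xi_\alpha\nabla_\mu\la\xi,\xi\ra\right)$, has the wrong overall sign (contracting with $\xi^\alpha$ yields $-\nabla_\mu\la\xi,\xi\ra$ rather than $+\nabla_\mu\la\xi,\xi\ra$) --- precisely the pitfall you flagged --- although this cancels in the quadratic term $\nabla_{(\alpha}\xi^\mu\nabla_{\beta)}\eta_\mu$ and only needs care in the $\nabla^\mu\y\,\nabla_\mu\Sig_{\alpha\beta}$ term.
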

\begin{remark}
  \label{rem:prop:R1}
In the main text we use this result several times. For notational
simplicity, it is convenient to define the second order
differential operator
$\riccifunc(f)$ 
\begin{align}
\riccifunc(f):= &
\left (( \riccixi + \riccieta) f
- \frac{1}{2 \la \xi, \xi \ra \la \eta , \eta \ra}
\nabla_{\mu} \Big ( 
\la \xi, \xi \ra \la \eta, \eta \ra \nabla^{\mu} f \Big )
- \frac{f}{2 \la \xi, \xi \ra \la \eta , \eta \ra}
\nabla_{\mu} \la \xi, \xi \ra \nabla^{\mu} \la \eta,\eta \ra
\right )
\label{eq:ricci_functional}
\end{align}
with $\riccixi$ and $\riccieta$
as above, so that (\ref{Ricperetaxi}) is simply
$R^{(1)}_{\alpha\beta} =\riccifunc(\y) \Sig_{\alpha\beta}$.
\end{remark}
\begin{proof}
We work on $U$. Being hypersurface orthogonal
and non-null, the derivatives of $\xi$  and $\eta$ are necessarily of the form
\begin{align}
\nabla_{\alpha} \xi_{\beta} & = \xi_{\alpha} H_{\beta} 
- \xi_{\beta} H_{\alpha} \quad \quad
H_{\alpha} := - \frac{1}{2 \la \xi, \xi \ra } \nabla_{\alpha} \la \xi,\xi
\ra \label{nabxi} \\
\nabla_{\alpha} \eta_{\beta} & = \eta_{\alpha} M_{\beta} 
- \eta_{\beta} M_{\alpha} \quad \quad
M_{\alpha} := - \frac{1}{2 \la \eta, \eta \ra } \nabla_{\alpha} \la \eta,\eta
\ra. \label{nabeta}
\end{align}
For any Killing field $\xi$ and vector $X$ it holds ($\lie$ denotes Lie derivative)
  \begin{align*}
    \xi^{\mu} \nabla_{\mu} \la X,X \ra  = \lie_{\xi} \la X, X \ra =
    2 \la \lie_{\xi} X, X \ra .
  \end{align*}
As a consequence,  the commutation property $[\xi,\eta] =0$ implies
$\xi^{\alpha} M_{\alpha} = \eta^{\alpha} H_{\alpha}=0$. We can now compute 
\begin{align*}
&  \nabla_{\alpha} \xi^{\mu} \nabla_{\beta} \eta_{\mu}
= \left (\xi_{\alpha} H^{\mu} - \xi^{\mu} H_{\alpha} \right )
\left (\eta_{\beta} M_{\mu} - \eta_{\mu} M_{\beta} \right ) =
\xi_{\alpha} \eta_{\beta} H_{\mu} M^{\mu} 
\quad \quad  \Longrightarrow 
\\
&  2 \nabla_{(\alpha} \xi^{\mu} \nabla_{\beta)} \eta_{\mu} = 
\Sig_{\alpha\beta} \left ( H_{\mu} M^{\mu} \right ) 
=  \Sig_{\alpha\beta}
\left ( \frac{1}{4 \la \xi, \xi \ra \la \eta,\eta \ra}
\nabla_{\mu} \la \xi,\xi \ra \nabla^{\mu} \la \eta,\eta \ra 
\right )
\end{align*}
and  also
\begin{align*}
\nabla_{\mu} \Sig_{\alpha\beta}
& = 2\nabla_{\mu} \left ( \xi_{(\alpha} \eta_{\beta)} \right )=
2 \left ( \nabla_{\mu} \xi_{(\alpha} \right ) \eta_{\beta)} 
+ 2 \left ( \nabla_{\mu} \eta_{(\alpha} \right ) \xi_{\beta)}  \\
 & = 
2 \xi_{\mu} H_{(\alpha} \eta_{\beta)} 
+ 2 \eta_{\mu} M_{(\alpha} \xi_{\beta)}
- \left ( H_{\mu} + M_{\mu} \right ) \Sig_{\alpha\beta}
\\
& = 
2 \xi_{\mu} H_{(\alpha} \eta_{\beta)} 
+ 2 \eta_{\mu} M_{(\alpha} \xi_{\beta)}
+\frac{1}{2 \la \xi,\xi \ra \la \eta, \eta \ra}
\nabla_{\mu} \Big ( \la \xi, \xi \ra \la \eta, \eta \ra \Big )
\Sig_{\alpha\beta} .
\end{align*}
Hypersurface orthogonality implies that both $\xi$ and
$\eta$ are eigenvectors of the Ricci tensor, so that 
\begin{align*}
R_{\alpha\mu} \xi^{\mu} = \riccixi \xi_{\alpha}, \quad \quad
R_{\alpha\mu} \eta^{\mu} = \riccieta \eta_{\alpha}, 
\end{align*}
with $\riccixi$ and $\riccieta$ as defined in the Proposition. Thus,
under assumptions (i),(ii) and (iii),
the first order perturbation of the Ricci tensor (\ref{Ric1prod}) simplifies
to
\begin{align*}
R^{(1)}_{\alpha\beta} = &
\nabla_{(\alpha} \left ( \xi(\y) \eta_{\beta)} \right ) + 
\nabla_{(\alpha} \left ( \eta(\y) \xi_{\beta)} \right )
+  \xi(\y)  
\frac{1}{\la \xi, \xi \ra} \nabla_{(\alpha} \la \xi, \xi \ra
\eta_{\beta)} 
+  \eta(\y) 
\frac{1}{\la \eta, \eta \ra} \nabla_{(\alpha} \la \eta, \eta \ra
\xi_{\beta)} \\
&\hspace{-1mm} + 
\Sig_{\alpha\beta} \left ( (  \riccixi + \riccieta) \y
- \frac{1}{2 \la \xi, \xi \ra \la \eta , \eta \ra}
\nabla_{\mu} \Big ( 
\la \xi, \xi \ra \la \eta, \eta \ra \nabla^{\mu} \y \Big )
 - \frac{\y}{2 \la \xi, \xi \ra \la \eta , \eta \ra}
\nabla_{\mu} \la \xi, \xi \ra \nabla^{\mu} \la \eta,\eta \ra 
\right ).
\end{align*}
When, in addition, $\y$ is invariant under $\xi$ and
$\eta$, the first four terms vanish and
the perturbed Ricci tensor is proportional
to the metric perturbation tensor $\Kper$, with explicit
expression given in (\ref{Ricperetaxi}).\fin
\end{proof}

\begin{remark}
\label{nSperee}
We note that
under the assumptions of this Proposition, we may insert (\ref{nabxi})
and (\ref{nabeta}) into the expression for $\PP$ in (\ref{PP}) to get
\begin{align*}
\PP_{\mu\alpha\beta} = 
- \frac{1}{2} \frac{\nabla_{\mu} \left ( \la \xi, \xi\ra \la \eta,\eta \right )}{
\la \xi, \xi\ra \la \eta,\eta \ra} \Sig_{\alpha\beta}
+ 
\frac{1}{\la \xi, \xi \ra}
\xi_{\mu} \eta_{(\alpha} \nabla_{\beta)} \la \xi,\xi \ra
+
 \frac{1}{\la \eta, \eta \ra}
\eta_{\mu} \xi_{(\alpha} \nabla_{\beta)} \la \eta,\eta \ra 
\end{align*}
and  the tensor $\Sper$ (\ref{calSperP})  takes the following form
\begin{align}
\Sper_{\mu\alpha\beta} = &
\frac{1}{2} \left ( \nabla_{\alpha} \y \, \Sig_{\mu\beta} 
+ \nabla_{\beta} \y \, \Sig_{\mu\alpha} - \nabla_{\mu} \y \, \Sig_{\alpha\beta} \right )
- \frac{1}{2} \frac{\nabla_{\mu} \left ( \la \xi, \xi\ra \la \eta,\eta \right )}{
\la \xi, \xi\ra \la \eta,\eta \ra} {\Kper}_{\alpha\beta} \nonumber \\
 & + 
\frac{\y}{\la \xi, \xi \ra}
\xi_{\mu} \eta_{(\alpha} \nabla_{\beta)} \la \xi,\xi \ra
+
 \frac{\y}{\la \eta, \eta \ra}
\eta_{\mu} \xi_{(\alpha} \nabla_{\beta)} \la \eta,\eta \ra.
\label{calSperP2}
\end{align}
\end{remark}

\section{Geometrical stationary and axisymmetric perturbed matching to second order}
\label{app:staxi_perturbed_matching}
The perturbed matching to second order for the Hartle setup
presented in \cite{ReinaVera2015} assumes that
the perturbed matching hypersurface is axially symmetric,
so that the interior and exterior regions are stationary and axially symmetric both
in structure and in shape.
In this appendix we revisit that framework by dropping any assumption on the perturbation
of the matching hypersurface, thus considering the general case.
Furthemore, for the sake of generality, we will also include the radial functions $\Q_\pm$ at either side
in the background configuration.
To be more precise, in Propositions 1 and 2 in \cite{ReinaVera2015}, apart from having set $\Q(r)=r$, all
four functions $\Qone$, $\Tone$, $\Qtwo$, $\Ttwo$ on $\Supfamp$ are assumed not to depend on $\varphi$. We present
in the following the corresponding general results.

We start by recalling the perturbed matching theory
to second order, as developed
in \cite{Mars2005}
(see \cite{Battye01,Mukohyama00} for the first order).
We do this for completeness and also because, following \cite{ReinaVera2015},
it allows us to introduce a quantity 
with better gauge behaviour that simplifies the expressions 
to some extent.

The first order matching conditions
require the equality of two pairs of symmetric tensors
$\hfamp^{(1)}_{\pm}$, $\kfamp^{(1)}_{\pm}$ defined
on the background matching hypersurface $\Supfamp$. Geometrically, these tensors
correspond, respectively, to the linear pertubations of the first and 
second fundamental forms of the matching hypersurfaces
$\Sigma_{\epsilon}$ in the one-paramenter family of spacetimes $(M,\ge^{\pm})$ defining the perturbation. They take explicit forms
in terms of background quantities, the metric perturbation tensor
$\Kper^{\pm}$ and a vector field $Z_1^\pm$ along $\Supfamp$ which encodes
the first order variation of the matching hypersurface with 
$\epsilon$. Its decomposition into normal and tangential components
$Z_1^\pm=\Qone^\pm \normal^\pm+\vecTone^\pm$, where $\normal^{\pm}$ is the unit normal
to $\Supfamp^{\pm}$, introduces two scalars  $\Qone^\pm$ which describe
the deformation of $\Supfamp^\pm$ as a set of points, and two tangential vectors
$\vecTone^{\pm}$ which determine how the different points within the sets are
identified. The construction to second order is analogous and involves
tensors 
$\hfamp^{(2)}_{\pm}$, $\kfamp^{(2)}_{\pm}$ and vector fields
$Z_2^\pm = \Qtwo^\pm \normal^\pm+\Ttwo^\pm$ along 
$\Supfamp$. 

We drop the $\pm$ indexes for simplicity. 
The matching problem involves two independent gauges, the usual
spacetime gauge and a hypersurface gauge. The former involves two
vectors $\sper$ and $\sperper$ called (spacetime) gauge vectors
and affect $Z_1$ and $Z_2$ as \cite{Mars2005}
\begin{equation}
{Z}^{g}_1={Z}_1-\sper |_{\Supfamp}  ,\qquad
{Z}^{g}_2={Z}_2-\sperper-2\nabla_{{Z}_1} \sper+\nabla_{\sper}  \sper |_{\Supfamp}.
\label{eq:Z_spt_gauges}
\end{equation}
The hypersurface gauge involves
two vector fields  $ U_1$ (first order) and $ U_2$ (second order) both
tangential to  $\Supfamp$ and 
transform $Z_1$ and $Z_2$ as \cite{Mars2005}
\begin{equation}
{Z}^{h}_1={Z}_1+ U_1,\qquad
{Z}^{h}_2={Z}_2+ U_2+2\nabla_{{U}_1} Z_1- \sigma \kappa( U_1, U_1)  \normal,
\label{eq:Z_hyp_gauges}
\end{equation}
where $\sigma = +1$ when $\Supfamp$ is timelike and
$\sigma = -1$ when $\Supfamp$ is spacelike\footnote{In this paper
  we only deal with $\sigma = +1$, but here we present the general expressions
  in terms of the new variables.}.
One possible use of the hypersurface gauge is setting to zero
the tangential parts at one side of  $\vecTone$  and $\vecTtwo$
(either side but not both sides simultaneously).
Concerning the  effect of (\ref{eq:Z_hyp_gauges})
on the normal components, we observe that  
the scalar $Q_1$  is not affected at all. This just reflects the fact
that the hypersurface gauge does not modify the matching hypersurfaces
as sets of points
and  only affects
how they are identified pointwise. This is no longer true at second order. The underlying reason is that ${Z}_2$ measures ``accelerations'' (in the sense of second order changes) and  this has 
the not so obvious consequence that $Q_2$ is affected by ${U}_1$. From the second in 
(\ref{eq:Z_hyp_gauges}) it follows
\[
Q_2^{h} = Q_2 + 2 {U}_1 (Q_1) - \sigma \kappa(U_1, U_1 + 
2{\Tone}). \nonumber
\] 
This suggests the construction 
of the {\it hypersurface gauge invariant} quantity (cf. \cite{ReinaVera2015})
\begin{equation}
\hatQtwo := Q_2 + \sigma \kappa (\vecTone, \vecTone) - 2 \vecTone (Q_1). \label{def:Q2hat}
\end{equation}
We thereore rewrite the explicit expressions
of $\hfamp^{(1)}$, $\kfamp^{(1)}$, $\hfamp^{(2)}$, $\kfamp^{(2)}$  given in Propositions 2 and 3 in \cite{Mars2005} in terms of this gauge invariant quantity
$\hatQtwo$. The first order objects 
$\hfamp^{(1)}$, $\kfamp^{(1)}$  are independent of $Q_2$, so we simply reproduce from \cite{Mars2005}:
\begin{align}
\qper_{ij} = & 
\pounds_{\vecTone } h_{ij}  + 2 \Qone \kappa_{ij} + {\Kper}_{\alpha\beta} e^{\alpha}_i e^{\beta}_j,
\label{FirstPert.1}
\\
\kappaper_{ij} = & 
\pounds_{\vecTone } \kappa_{ij} 
- \sigma D_{i} D_{j} \Qone 
+ \Qone \left ( - \normal^{\mu} \normal^{\nu} R_{\alpha\mu\beta\nu}
e^{\alpha}_i e^{\beta}_j + \kappa_{il} \kappa^{\,\,l}_{j} \right )
+ \frac{\sigma}{2}  \Kpernornor
\kappa_{ij} - \normal_{\mu} \Sper^{\mu}_{\phantom{\mu}\alpha\beta}e^{\alpha}_i e^{\beta}_j, 
\label{FirstPert.2}
\end{align}
where $D$ is the Levi-Civita covariant derivative of the (background)
induced metric $h$ on $\Supfamp$,
$\Sper$ is defined in (\ref{calSper}), $e^{\alpha}_i$ 
are tangent vectors to $\Supfamp$ 
and $\Kpernornor := \Kper( \normal, \normal)$.

For second order quantities, we replace $Q_2$ in terms of $\hatQtwo$ in 
the expressions in \cite[Proposition 3]{Mars2005}. The result is
\begin{align}
\qperper_{ij} & = \lie_{\vecTtwo} h_{ij} + 2 \hatQtwo \kappa_{ij} + {\Kperper}_{\alpha\beta} e^{\alpha}_i e^{\beta}_j
+ 2 \lie_{\vecTone} \qper_{ij} - \lie_{\vecTone} \lie_{\vecTone} h_{ij} + \nonumber \\
& + \lie_{2 \Qone \Kpernortan - 2 \Qone \kappa (\vecTone )   -  D_{\vecTone} 
\vecTone } h_{ij} 
+ 4 \sigma \Qone \Kpernornor 
\kappa_{ij} + \nonumber \\
& + 2 \Qone^2 \left ( -\normal^{\mu} \normal^{\nu} R_{\alpha\mu\beta\nu} e^{\alpha}_i
e^{\beta}_j + \kappa_{il} \kappa^{l}_{j} \right ) + 2 \sigma D_{i} \Qone D_{j} \Qone
- 4 \Qone \normal_{\mu}\Sper^{\mu}_{\phantom{\mu}\alpha\beta} e^{\alpha}_i e^{\beta}_j,
\label{SecondPert.1} \\
\kappaperper_{ij} & =
\lie_{\vecTtwo} \kappa_{ij} - \sigma D_{i} D_{j} \hatQtwo 
+  \left ( \hatQtwo 
+ \sigma \Qone \Kpernornor 
\right )
\left ( - \normal^{\mu} \normal^{\nu} R_{\alpha\mu\beta\nu} e^{\alpha}_i e^{\beta}_j 
+ \kappa_{il} \kappa^{l}_{j} \right )
- \normal_{\mu} \Sperper^{\mu}_{\phantom{\mu}\alpha\beta} e^{\alpha}_i e^{\beta}_j 
 \nonumber \\
& + 2 \lie_{\vecTone} \kappaper_{ij}\nonumber\\
& + \kappa_{ij} \left ( \frac{\sigma}{2} \Kperpernornor - \frac{1}{4} (\Kpernornor)^2 - \sigma 
\left ( \Kpernortan_{l} + \sigma D_{l}\Qone \right )
\left ( \Kpernortan{}^{l} + \sigma D^{l}\Qone \right ) 
+ 2 \sigma \Qone \normal_{\mu} \normal^{\rho} \normal^{\delta}\Sper^{\mu}_{\phantom{\mu}\rho\delta} \right )
\nonumber \\
& +  \left ( \sigma \Kpernornor \normal_{\mu} + 2 \Kpernortan_{\mu} + 2 \sigma D_{\mu} \Qone \right ) 
\Sper^{\mu}_{\phantom{\mu}\alpha\beta} e^{\alpha}_i e^{\beta}_j
- 2 \Qone \normal_{\mu} \normal^{\nu} ( \nabla_{\nu}\Sper^{\mu}_{\phantom{\mu}\alpha\beta} ) e^{\alpha}_i e^{\beta}_j \nonumber \\
& - 2 \normal_{\mu} \normal^{\nu} \Sper^{\mu}_{\phantom{\mu}\alpha \nu} \left ( e^{\alpha}_i D_j \Qone 
+ e^{\alpha}_j D_i \Qone  \right )
- 2 \Qone \normal_{\mu} \Sper^{\mu}_{\phantom{\mu}\alpha \beta} e^{\beta}_l  
\left ( e^{\alpha}_i \kappa^{l}_j +  e^{\alpha}_j \kappa^{l}_i \right ) \nonumber \\
&  + \lie_{
- \frac{1}{2} \Kpernornor \grad (\Qone) + 2 \sigma \Qone \kappa ( \grad(\Qone) )} \,  h_{ij}  + \frac{1}{2} \left ( D_{i} \Qone D_j \Kpernornor 
+ D_{j} \Qone D_i \Kpernornor \right ) \nonumber \\
& -  \lie_{\vecTone} \lie_{\vecTone} \kappa_{ij} 
-  \lie_{2 \Qone \kappa (\vecTone ) +  D_{\vecTone} \vecTone } \, \kappa_{ij} 
- 2 \sigma \Qone \lie_{\grad (\Qone) } \kappa_{ij}
\nonumber \\
& - \Qone^2 \left ( 
\normal^{\mu} \normal^{\nu} \normal^{\delta} ( \nabla_{\delta} R_{\alpha\mu\beta\nu} ) e^{\alpha}_i e^{\beta}_j
+ 2 \normal^{\mu} \normal^{\nu} R_{\delta\mu\alpha\nu} e^{\delta}_l e^{\alpha}_j \kappa^{l}_i
+ 2 \normal^{\mu} \normal^{\nu} R_{\delta\mu\alpha\nu} e^{\delta}_l e^{\alpha}_i \kappa^{l}_j \right ),
\label{SecondPert.2}
\end{align}
where $\Sperper$ is given in (\ref{calSperper}),
$\Kperpernornor := \Kperper( \normal, \normal)$, 
$ \Kpernortan$  is the tangent vector defined by
$h(\Kpernortan, e_i ) = \Kper( \normal,  e_i)$,
$D_{\mu} Q$ is defined
by $\{ \normal^{\mu} D_{\mu} \Qone =0$, $e^{\mu}_i D_{\mu} \Qone = D_i \Qone\}$  
and, for any tangent vector ${V}$,
$\kappa( {V})$ is the tangent vector with components $\kappa^{i}_{\, j} V^j$.

The perturbed matching conditions at first order \cite{Battye01,Mukohyama00,
Mars2005}
demand the existence
of $\Qone^{\pm}$ and $\vecTone^{\pm}$ such that 
$[\hp]=[\kp]=0$. At second  order \cite{Mars2005}
the perturbed matching conditions hold iff
there exist $\hatQtwo^{\pm}$ and $\vecTtwo^{\pm}$ such that
$[\hpp]=[\kpp]=0$. 

We may now apply the  perturbed matching theory to our specific setting.
As in the main text, for any pair of quantities $F^{\pm}$
(we use $+$ and $-$ as super or subindexes indistinctly)
on $\Supfamp$ satisfying $[F] =0$ we simply write $F^{+} = F^- =:F$.
\begin{proposition}
\label{teo:first_order_matching} 
Let $(\mmm,\gback)$ be a static and spherically symmetric
spacetime with two regions as in Definition \ref{SpherRegion}.
Consider the metric perturbation tensors $\fpt^\pm$ of the form 
\begin{equation}
\Kper = -2 \omega(r,\theta) \Q^2(r) \sin^2 \theta dt d \phi
\label{teo:Kper}
\end{equation}
at either side $M^\pm$.
Let us assume that
\begin{align}
  (i) \quad   \normal(\Q)  |_{\Supfamp} \neq 0, \quad 
  (ii) \quad  \left .  \left ( \frac{1}{2}  \normal(\nu) 
  - \frac{  \normal(\Q)}{\Q} + \frac{1}{\Q  \normal(\Q)} 
\right )
\right |_{\Supfamp} \neq 0,
  \quad \
  (iii) \quad  \normal(\nu)|_{\Supfamp} \neq 0,
  \label{conditions}
\end{align}
where $ \normal := - e^{-\lambda/2} \partial_r$.
The perturbations $\fpt^\pm$ 
satisfy the first order matching conditions if and only if there
exists a constant $b_1$ such that
\begin{align}
\left[\omega \right]= b_1 \in \mathbb{R}, \qquad
\left[ \normal(\omega)\right]=0. \label{teo:omegap_matching_R}
\end{align}
Moreover, introducing the quantities
\begin{equation}
\B = \frac{1}{2} e^{\nu} \left (  \normal(   \normal(\nu ))
+ \frac{1}{2} ( \normal(\nu ))^2 \right ),
\qquad
\A = - \Q  \normal(  \normal(\Q )),
\label{DefUpsilon}
\end{equation}
the deformation vectors 
$Z_1^{\pm} = \Qone^{\pm} \normal^{\pm} + \vecTone^{\pm}$
must satisfy
\begin{align}
[\vecTone]= 
b_1 \tau  \eta  +  \zeta, \qquad
\left[Q_1\right]=0,\qquad \left[\B\right]Q_1=0,
\qquad
\left[\A\right]Q_1=0, \label{teo:Q1_matching_R} 
\end{align}
where $ \eta = \partial_{\varphi}$ and
$ \zeta$ is an arbitrary background Killing vector.
\end{proposition}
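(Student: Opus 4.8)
\textbf{Proof plan for Proposition \ref{teo:first_order_matching}.}
The plan is to specialise the general first order perturbed matching conditions $[\hp]=[\kp]=0$, with $\hp,\kp$ given by \eqref{FirstPert.1}--\eqref{FirstPert.2}, to the static spherically symmetric background written in the coordinates $\{\tau,\vartheta,\varphi\}$ on $\Supfamp$ and to the perturbation tensor \eqref{teo:Kper}. First I would compute all the background ingredients that enter the formulae on $\Supfamp^{\pm}$: the induced metric $h^{\pm}$ and second fundamental form $\kappa^{\pm}$ (these are \eqref{h0ij}--\eqref{k0ij} with $\Q_\pm(\ro_\pm)$ kept general), the unit normal $\normal^{\pm}=-e^{-\lambda_\pm/2}\partial_{r_\pm}$, the pull-back ${\Kper}_{\alpha\beta}e^\alpha_ie^\beta_j$ of \eqref{teo:Kper}, which only has a $(\tau,\varphi)$ component proportional to $\omega\,\Q^2\sin^2\vartheta$, and the relevant components of $\Sper_{\mu\alpha\beta}$ from \eqref{calSper}, in particular $\normal_\mu\Sper^\mu{}_{\alpha\beta}e^\alpha_ie^\beta_j$ and $\normal^\mu\normal^\nu R_{\alpha\mu\beta\nu}e^\alpha_ie^\beta_j$. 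Because $\Kpernornor=\Kper(\normal,\normal)=0$ and $\Kpernortan=0$ for \eqref{teo:Kper}, several terms in \eqref{FirstPert.1}--\eqref{FirstPert.2} drop out and the remaining system becomes tractable.

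Next I would decompose the tangential vector $\vecTone^{\pm}$ along $\{\partial_\tau,\partial_\vartheta,\partial_\varphi\}$ and feed everything into $[\hp]=0$ first. The $(\tau,\varphi)$ component of $[\hp]=0$ gives, after using $[\Q]=0$ and $[\nu]=0$, a relation of the form $[\omega]\,\Q^2\sin^2\vartheta + (\text{Lie-derivative terms in }[\vecTone]) = 0$; the $\vartheta$-independent part forces $[\omega]$ to be a constant, call it $b_1$, and simultaneously fixes the structure of $[\vecTone]$ as $b_1\tau\eta$ plus a piece that Lie-drags $h$ trivially, i.e. a Killing vector $\zeta$ of the background induced metric. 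The purely spatial components of $[\hp]=0$ together with $[\Q]=0$ leave no further constraint beyond the appearance of this Killing freedom; the $(\tau,\tau)$ and $(\tau,\vartheta)$ components, using $[\nu]=0$ and $[Q_1\kappa_{ij}]=0$, yield $[Q_1]=0$ once condition (i) (so that $\kappa$ is ``non-degenerate enough'' in the $\vartheta\vartheta$ block) is invoked. Then I would turn to $[\kp]=0$: the terms $-\sigma D_iD_jQ_1$, $Q_1(-\normal^\mu\normal^\nu R_{\alpha\mu\beta\nu}e^\alpha_ie^\beta_j+\kappa_{il}\kappa^l_j)$ and $-\normal_\mu\Sper^\mu{}_{\alpha\beta}e^\alpha_ie^\beta_j$ must be combined. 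Using the explicit background curvature components and the TOV-type relations, the jump of the curvature term reorganises into expressions involving $[\B]$ and $[\A]$ as defined in \eqref{DefUpsilon} (these are precisely the combinations $\normal(\normal(\nu))+\tfrac12(\normal\nu)^2$ and $\Q\,\normal(\normal\Q)$ that appear when one differentiates $h$ and $\kappa$ twice along the normal). The upshot is the conditions $Q_1[\B]=0$ and $Q_1[\A]=0$, while the $(\tau,\varphi)$ component of $[\kp]=0$, after inserting $\Sper$, reduces to $[\normal(\omega)]=0$. I would verify carefully that conditions (ii) and (iii) are exactly what is needed to guarantee that the remaining components of $[\kp]=0$ impose no new restrictions and that the $Q_1$-multiplied jumps cannot be traded for constraints on $\omega$.

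The main obstacle I anticipate is bookkeeping: correctly organising the many terms of \eqref{FirstPert.1}--\eqref{FirstPert.2} so that after using $[\Q]=[\nu]=[\normal(\nu)]=[\normal(\Q)]=0$ (the background matching \eqref{background_matching_R}) the genuine first order content separates cleanly into \eqref{teo:omegap_matching_R} and \eqref{teo:Q1_matching_R}. In particular, keeping $\Q_\pm$ general (rather than setting $\Q=r$) means the curvature pull-backs and the quantities $\A,\B$ must be handled symbolically, and one has to check that the non-vanishing hypotheses \eqref{conditions} are used in exactly the right places — (i) to solve algebraically for the tangential reparametrisation and extract $[Q_1]=0$, (ii) to ensure that the combined $\kp$-equation does not degenerate, and (iii) to guarantee that $[\normal\omega]=0$ rather than some weaker statement. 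Once the algebra is arranged, each individual step is a routine computation; the content of the proposition is that after imposing the background matching the first order matching is equivalent to the clean set \eqref{teo:omegap_matching_R}--\eqref{teo:Q1_matching_R}, and the Killing-vector ambiguity $\zeta$ reflects the residual hypersurface-gauge freedom encoded in \eqref{eq:Z_hyp_gauges}.
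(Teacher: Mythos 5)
Your overall strategy --- specialise \eqref{FirstPert.1}--\eqref{FirstPert.2} to the coordinates $\{\tau,\vartheta,\varphi\}$ on $\Supfamp$, decompose $\vecTone^{\pm}$, and use the background matching \eqref{background_matching_R} to simplify --- is the paper's, but there is a genuine gap in where you locate the conclusion $[\Qone]=0$. The conditions $[\qper]=0$ alone do \emph{not} give $[\Qone]=0$: their $\{A,B\}$ components only say that $[\Tone]_A$ is a conformal Killing vector of $\mathbb{S}^2$, i.e. $[\Tone]_A=f_a(\tau)\D_AY^a+q_a(\tau)\epsilon_{AB}\D^BY^a$ with $Y^a$ the $\ell=1$ harmonics, and hypothesis (i) then lets you solve for $[\Qone]=\frac{\Q}{\normal(\Q)}\big|_{\Supfamp}\,f_a(\tau)Y^a$, an a priori nonzero $\ell=1$ deformation; the $\{\tau,\tau\}$ component only yields a second order ODE in $\tau$ for $f_a$ whose coefficient is proportional to $\normal(\nu)|_{\Supfamp}$. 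These time-dependent, non-axisymmetric modes $f_a$ are exactly what makes the proposition nontrivial (no symmetry of the deformation is assumed), and they are killed only at the $[\kappaper]=0$ stage: in the $\{\tau,A\}$ component of $[\kappaper]=0$ the term $\dot f_a\,\D_AY^a$ is multiplied by precisely the quantity in hypothesis (ii), so (ii) forces $\dot f_a=0$, and then hypothesis (iii), i.e. $\normal(\nu)|_{\Supfamp}\neq 0$, combined with the ODE forces $f_a=0$, whence $[\Qone]=0$ and $[\vecTone]=b_1\tau\eta+\zeta$. Your attributions are therefore off: (i) does not ``extract $[\Qone]=0$'' from the first fundamental form, (iii) is not what gives $[\normal(\omega)]=0$ (that follows from the linear independence of $\D_AY^a$ and $\eta_A$ in the same $\{\tau,A\}$ component), and your claim that the purely spatial components of $[\qper]=0$ ``leave no further constraint'' misses the conformal--Killing decomposition that drives the whole argument.

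Apart from this misallocation the remaining steps are as you describe: once $f_a=0$, the surviving content of $[\kappaper]=0$ is $\Qone[\normal^\mu\normal^\nu R_{\alpha\mu\beta\nu}e^\alpha_ie^\beta_j]=0$, which by the block-diagonal form of this tensor on $\Supfamp$ is equivalent to $\Qone[\B]=\Qone[\A]=0$ with $\A,\B$ as in \eqref{DefUpsilon}; and the constancy of $[\omega]$ follows from the $\{\tau,A\}$ component of $[\qper]=0$ because $[\omega]\eta_A$ must be a Killing vector of the sphere for each $\tau$ while $\omega$ is $t$-independent. So the plan is repairable, but as written the chain ``first fundamental form $\Rightarrow[\Qone]=0$'' would fail, and without reorganising which hypothesis enters where you could not reach \eqref{teo:Q1_matching_R}.
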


\begin{remark}
  This proposition holds in full generality, i.e. no a priori
  restriction (such as e.g. axial symmetry) is assumed on
  how the matching hypersurface is deformed to first order.
\end{remark}
\begin{remark}
  Conditions (\ref{conditions}) are well-defined because the expresions they involve agree when computed from  either side of the matching hypersurface. This is a consequence of the  background matching conditions (\ref{background_matching_R}). 
\end{remark}

\begin{proof}
Let $\gsph$ be the standard round sphere on $\mathbb{S}^2$ and denote by 
$\D$ its associated derivative. We use coordinates $\{\tau,\vartheta,\varphi\} :=
\{ \tau, x^A \}$ on $\Supfamp$. For any tangent vector $ V =
V^{\tau} \partial_{\tau} + V^A \partial_{x^A}$, and any symmetric  tensor
$\S = a_1  d\tau^2 + a_2 \gsph$, with $a_1$, $a_2$ constants, the
Lie derivative $\lie_{ V} \S$ can be expressed as 
\begin{align}
\lie_{ V}  \S = 2 a_1 \dot{V}^{\tau} d \tau^2 
+ 2 \left ( a_1 \D_A V^{\tau} + a_2 \partial_{\tau} V_A \right )
d \tau dx^A + a_2 \left ( \D_{A} V_B + \D_B V_A \right ) 
d x^A d x^B
\label{EasyIden}
\end{align}
where the dot denotes derivative with respect to $\tau$ and
Latin indices $A,B,\cdots$ are raised and lowered with $\gsph$.
By (\ref{h0ij})-(\ref{k0ij}),  
the tensors $h_{ij}$, $\kappa_{ij}$  are both of this form.
For notational convenience we write them as
\begin{align}
& h = \alpha_1 d\tau^2 + \alpha_2 \gsph,
\quad \quad
\kappa = \beta_1 d\tau^2 + \beta_2 \gsph,
\label{hkappa}  \\
& \alpha_1 := - e^{\nu} |_{\Supfamp},  \quad \quad
\alpha_2 := \Q^2 |_{\Supfamp} , \quad \quad
\beta_1 := - \frac{1}{2} e^{\nu }  \normal(\nu) |_{\Supfamp}, \quad \quad
\beta_2 := \Q  \normal(\Q) |_{\Supfamp}.
\label{defsalphabeta}
\end{align}
Note that  $\alpha_1, \alpha_2$ are both non-zero.
The first set of matching conditions (\ref{FirstPert.1}) are
\begin{align}
\lie_{[\vecTone]} h + 2 [\Qone] \kappa + [\Phi^{\star} (\Kper)] =0,
\label{FirstFirst}
\end{align}
where $\Phi^{\star}$ is the pull-back to $\Supfamp$. Note that 
the last term in (\ref{FirstFirst}) has components only in $d\tau d\varphi$.
Applying (\ref{EasyIden}), the $\{ A,B\} $ components of (\ref{FirstFirst}) read
\begin{align}
\alpha_2 \left ( \D_A [\Tone]_B + \D_B [\Tone]_A \right ) + 2 \beta_2 [\Qone] 
\gsph_{AB} =0. \label{ConfSph}
\end{align}
Thus $[\Tone]_A(\tau) $ is a conformal Killing vector of 
$\mathbb{S}^2$. Let $Y^a$ ($a=1,2,3$) 
be the spherical harmonics with $\ell=1$ on the sphere. More specifically,
$Y^a$ is defined 
as the restriction of the Cartesian
coordinate $x^a$ to the unit sphere, and the labels
are chosen so that the rotation generated by
$\eta$ has axis along $x^3$. The spherical harmonics $Y^a$  satisfy
$\D_A \D_B Y^a = - Y^a \gsph{}_{AB}$ and the six dimensional algebra
of conformal Killing vectors on $\mathbb{S}^2$ is spanned by
$\{\D_A Y^a\} $ (proper conformal Killings) and 
$\{ \epsilon_{AB} \D{}^B Y^a\} $ (Killing vectors) 
where $\epsilon_{AB}$ is the volume form of $(\mathbb{S}^2,\gsph)$ 
with $\{ \partial_{\vartheta}, \partial_{\varphi}\}$ positively oriented.
The axial Killing vector $\axial$ is tangent to
the foliation of  $\Supfamp$ by spheres, so in particular it defines an axial
Killing vector on the unit sphere and we can write
$\axial=\axial^A\partial_A$. By definition we have
 $\eta_A :=  \gsph{}_{AB}\axial^B = \epsilon_{AB} \D^B Y^3$. 
In expressions without indexes, we will use $\axialfSph := \gsph(\eta, \cdot)$
to distinguish $\axial_A$ from $\axial_\alpha$. Note that 
$\bm{\eta} = \alpha_2 \axialfSph$, where 
$\bm{\eta}$ is defined by lowering indices with the induced metric
on $\Supfamp$.

Consequently, (\ref{ConfSph}) is equivalent to the existence of   functions $f_a(\tau), q_a(\tau)$ such that
$[\Tone]_A = f_a (\tau) \D_A Y^a + q_a (\tau) \epsilon_{AB}
\D{}^B Y^a$ and
\begin{align}
[\Qone] =  \frac{\alpha_2}{\beta_2} f_a(\tau) Y^a,
\label{Qone1}
\end{align}
where we have used  assumption (i) in
(\ref{conditions}), i.e.
$\beta_2 \neq 0$. Now, the tensor 
$[ \Phi^{\star} (\Kper)]$ 
is $[ \Phi^{\star} (\Kper)] = -2 [\omega] \alpha_2 \eta_A d \tau d x^A$
and the $\{ \tau,A \}$
component of (\ref{FirstFirst}) becomes
\begin{align}
&\alpha_1 \D_A [\Tone]^{\tau} + \alpha_2 \partial_{\tau} [\Tone]_A 
- [\omega] \alpha_2 \eta_A =0 
\quad \quad \Longleftrightarrow \nonumber \\
& \D_A 
\left ( \alpha_1 [\Tone]^{\tau} + \alpha_2 \dot{f}_a Y^a \right ) 
+ \epsilon_{AB} \D{}^B \left ( \alpha_2 \dot{q}_a Y^a
\right ) -[\omega] \alpha_2  \eta_A = 0.
\label{tfi}
\end{align}
Taking $\D^A$ of (\ref{tfi})  and using that $\D^A ([\omega] \eta_A ) =0$
it follows
\begin{align*}
\Delta_{\gsph} 
\left ( \alpha_1 [\Tone]^{\tau} + \alpha_2 \dot{f}_a Y^a \right )  =0,
\end{align*}
hence the term in parenthesis depends only on $\tau$, i.e. 
\begin{align}
[\Tone]^{\tau} = - \frac{\alpha_2}{\alpha_1} \dot{f}_a Y^a + 
\Czo(\tau).
\label{Tone1}
\end{align}
Substituting back into (\ref{tfi}) yields
\begin{align}
\epsilon_{AB} \D^B \left ( \alpha_2 \dot{q}_a Y^a
\right ) =  [\omega]\alpha_2 \eta_A.
\label{tfi2}
\end{align}
At each value of $\tau$, the left hand side is a Killing vector of
$\mathbb{S}^2$. Since $\eta_A$ is also a Killing vector of the sphere, this imples that 
$[\omega]$ can at most depend on $\tau$. However, since $\omega(r,\theta)$
we conclude that $[\omega]$ is constant, and we write
$[\omega] = b_1$. Recalling that  $\eta_A = \epsilon_{AB} \D^B Y^3$, equation (\ref{tfi2}) can be written as
\begin{align*}
\alpha_2 \epsilon_{AB} \D^B \left ( \dot{q}_a Y^a - b_1 Y^3 \right ) =0,
\end{align*}
from which it imediatelly follows that $\dot{q}_1 = \dot{q}_2 =0$ and
$\dot{q}_3 = b_1 \Longleftrightarrow q_3 = b_1 \tau + c_3$ with
$c_3$ constant.  Finally, the $\{\tau, \tau\}$ component of (\ref{FirstFirst})
is, using (\ref{EasyIden}),
\begin{align*}
\alpha_1 \frac{d [\Tone]^{\tau}}{d \tau} + \beta_1 [\Qone] =0
\quad \quad 
\Longleftrightarrow
\quad \quad
\alpha_1 \dot{C}^{(1)}_0 - \alpha_2 \ddot{f}_a Y^a + \frac{\beta_1 \alpha_2}{\beta_2} f_a Y^a 
=0
\end{align*}
where in the last equality we inserted (\ref{Qone1}) and (\ref{Tone1}). This implies that $\Czo$ is constant and that $\ddot{f_a} = \frac{\beta_1}{\beta_2} f_a$.
Sumarizing, the linearized matching conditions $[\qper_{ij}]=0$
are fullfilled iff
\begin{align}
[\vecTone] & = \left ( - \frac{\alpha_2}{\alpha_1} \dot{f}_a Y^a + \Czo \right )
\partial_{\tau} + f_a (\tau) \D^A Y^a \partial_A + b_1 \tau \eta + \zeta_0,
\quad 
[\Qone] = \frac{\alpha_2}{\beta_2} f_a(\tau) Y^a, \label{sumFirst.1} \\
[\omega] & = b_1, \quad \quad \ddot{f}_a
= \frac{\beta_1}{\beta_2} f_a
\label{sumFirst.2}
\end{align}
where $\zeta_0$ is any Killing vector on the sphere. In particular, we have 
established the first in (\ref{teo:omegap_matching_R}).

We next impose the second set of linearized matching conditions.
The last term in (\ref{FirstPert.2}) (we drop the $\pm$ indexes here) is, using 
Remark \ref{nSperee} with $\y = \omega e^{-\nu}$,
\begin{align}
\Phi^{\star} \left ( \normal_{\mu}\Sper^{\mu}_{\phantom{\mu}\alpha\beta} \right ) & = 
\left . - \frac{1}{2} \left  (  \normal \left (\omega e^{-\nu} \right ) 
+ 
\frac{ \normal \left ( \la \xi,\xi \ra \la \eta, \eta \ra \right )}{
\la \xi, \xi \ra \la \eta,\eta \ra} \omega e^{-\nu} \right ) 
\right |_{\Supfamp}
\Phi^{\star} (\Sig_{\alpha\beta} ) \nonumber \\
& = \left .  
 \frac{1}{2 \alpha_1}   \left ( 
\normal (\omega) 
+ \frac{2 \omega}{\Q} \normal (\Q) \right ) 
\right |_{\Supfamp} 
\Phi^{\star} (\Sig_{\alpha\beta} )  \nonumber \\
& = 
\left (  \alpha_2 \normal (\omega) |_{\Supfamp} 
+ 2 \beta_2 \omega |_{\Supfamp}  \right ) d \tau \otimes_s \axialfSph
\label{Spereeexp}
\end{align}
after using $\la \xi, \xi \ra = - e^{\nu} \stackrel{\Supfamp}{=} \alpha_1$, $\la \eta, \eta 
\ra = \Q^2 \sin^2 \theta$ and the fact that
\begin{align}
\label{PhistarSig}
\Phi^{\star} (\Sig) =  2  \alpha_1 \alpha_2 d\tau \otimes_s \axialfSph,
\end{align}
where 
$\otimes_s$ stands for symmetrized tensor product, $\bm{\alpha} \otimes_s \bm{\beta} =
\frac{1}{2} ( \bm{\alpha} \otimes \bm{\beta} +
\bm{\beta} \otimes \bm{\alpha} )$. The Hessian $D_i D_j Q$ of any function $Q$
on $\Supfamp$ has the following components
\begin{align}
D_\tau D_{\tau} Q = \partial_{\tau} \partial_{\tau} Q,
\quad \quad
D_{\tau} D_A Q = \partial_{\tau} \partial_A Q,
\quad \quad 
D_{A} D_B Q = \D_A \D_B Q.
\label{HessExp}
\end{align}
Equations (\ref{FirstPert.2}) are therefore
\begin{align}
  \lie_{[\vecTone]} \kappa_{ij} 
  - D_i D_j [\Qone] - [ \Qone
  \normal^{\mu} \normal^{\nu} R_{\alpha\mu\beta\nu}
  e^{\alpha}_i e^{\beta}_j ] 
  + [\Qone] \kappa_{il} \kappa^{l}_j
  - \left (  \alpha_2 [\normal (\omega)] + 2 \beta_2 [\omega] \right )  (d \tau \otimes_s \axialfSph)_{ij} =0,
  \label{B5bis}
\end{align}
where we have used $\Kpernornor =0$ and have inserted 
(\ref{Spereeexp}).
We first consider the $\{ \tau, A\}$ component.
The third and fourth terms are spherically symmetric, hence
their $\{ \tau, A\}$ component vanishes. The first two
are computed using  (\ref{EasyIden}) and
(\ref{HessExp})  as well as  (\ref{sumFirst.1}). The result is
\begin{align}
  \left ( - \frac{\beta_1  \alpha_2}{\alpha_1} + \beta_2 - \frac{\alpha_2}{\beta_2}   \right ) \dot{f}_a \D_A Y^a
 - \frac{1}{2} \alpha_2 [\normal (\omega)] \eta_A =0.
\label{tauA}
\end{align}
The first factor in parenthesis 
is (ii) in (\ref{conditions}), hence non-zero by assumption.
Since the vector fields $\D_A Y^a$, $\axial_A$ are linearly independent
  and $\alpha_2 \neq 0$, \eqref{tauA} is equivalent to
  $\dot{f}_a=0$ and $[\normal (\omega)]=0$. The former combined with 
(\ref{sumFirst.2}) and $\beta_1=0$ forces $f_a=0$ and 
(\ref{sumFirst.1}) simplifies to
 \begin{align*}
   [\vecTone ]  = b_1 \tau \eta +  \Czo \partial_{\tau} + \zeta_0, \quad \quad 
   [\Qone] = 0.
 \end{align*}
 This proves the first two in (\ref{teo:Q1_matching_R})
with $\zeta\defi  \Czo \partial_{\tau} + \zeta_0$
any Killing vector on $\Supfamp$. 
Equations \eqref{B5bis} have been reduced
to $  \Qone
      [\normal^{\mu} \normal^{\nu} R_{\alpha\mu\beta\nu}
    e^{\alpha}_i e^{\beta}_j ] =0$.
It is straightforward to check that
\begin{align}
\normal^{\mu} \normal^{\nu} R_{\alpha\mu\beta\nu}  e^{\alpha}_i e^{\beta}_j 
= \B \delta_i^{\tau} \delta_j^{\tau}
+ \A \gsph_{AB} \delta_i^A \delta_j^B,
\label{nnRee}
\end{align}
which proves the last two statements of the Proposition. \fin
\end{proof}

\begin{remark}
\label{remark:gauge_iso_1}
The presence of a Killing vector $\zeta$ in (\ref{teo:Q1_matching_R})
is a consequence of the isometries present in the background
configuration, and can never be determined \cite{Mars2007}. 
\end{remark}
\begin{remark}
  \label{B04}
  Whenever $\Q(r)=r$ we have $\A=r e^{-\lambda}\lambda'/2$ and
$2e^{\lambda-\nu}\B=\nu''-\nu'\lambda'/4+\nu'^2/2$, and 
this lemma extends to the general case (without axial
  symmetry on $\Qone$ and $\vecTone$) the consequences of Proposition 1 in \cite{ReinaVera2015},
  and in particular $\Qone[\lambda']=\Qone[\nu'']=0$ from (\ref{teo:Q1_matching_R}).   We note that the condition $\nu'\neq 0$
  is (wrongly) missing in Proposition 1 in \cite{ReinaVera2015}. 
\end{remark}

Before going into the second order matching problem we state and prove a lemma
that will simplify the computations.
\begin{lemma}
\label{b1=0}
Let $(\mmm,\gback)$ be a static and spherically symmetric
spacetime with two regions as in Definition \ref{SpherRegion}.
Assume that the hypotheses
in Proposition \ref{teo:first_order_matching} hold and that
the corresponding first order matching conditions are satisfied.

Consider second order metric perturbation tensors $\spt^\pm$ of the form
\begin{align}
\spt =& \left(-4 e^{\nu(r)} h(r, \theta) + 2{\omega}^2(r, \theta)\Q^2(r) \sin ^2 \theta \right)dt^2 -2  \Wtwo (r,\theta) \Q^2(r) \sin^2\theta dt d\phi \nonumber\\
& + 4 e^{\lambda(r)} m(r, \theta) dr^2 +4  k(r, \theta) \Q^2(r) (d\theta^2+\sin ^2 \theta  d\phi^2)
+ 4e^{\lambda(r)}\partial_\theta f(r,\theta)\Q(r)dr d\theta.
\label{teo:sopert_tensor_W2_a}
\end{align}
Apply first a hypersurface gauge defined by $ U_1 = - \vecTone^--b_1\tau\partial_\varphi$,
$U_2 =0$ and then a spacetime gauge on each side defined by
$ \sper^- =  -b_1 t \partial_{\phi}$, $ \sper^+ =0$ and  $\sperper^{\pm}=0$.
Using superstript ${}^{g}$  to denote spacetime quantities in the final
gauge and ${}^{hg}$ to denote hypersurface quantities in the final
hypersurface and spacetime gauges, the following
identitites hold
\begin{align}
& \omega_-^g = \omega_-+b_1 ,\qquad \omega_+^g = \omega_+, \qquad [ \normal(\omega^{g})]=[ \normal(\omega)]=0,\quad [ \normal( \normal(\omega^{g}))]=[ \normal( \normal(\omega))]\nonumber\\
& f_+^{g} = f_++\beta_+(r_+),\quad f_-^{g} = f_-+\beta_-(r_-)\quad \Longrightarrow \quad  [f^g] = [ f] +[\beta], \quad [\beta] \in \mathbb{R} \label{transg1}\\  
&  [\omega^g] = 0, \quad
[h^g] = [h], \quad 
[k^g] = [k], \quad 
[m^g] = [m], \quad 
[\Wtwo^g] = [\Wtwo], \\
&[ \normal(h^g)] = [ \normal(h)], \quad [ \normal(k^g)] = [ \normal(k)],\quad
[ \normal(\Wtwo^g)] = [ \normal(\Wtwo)], \label{transg2} \\
&  [\vecTone^{hg}] =  \zeta, \quad \quad 
\Qone^{hg}{}^{\pm} =  \Qone^\pm \quad  (\Longrightarrow  \quad
[\Qone^{hg}] = [\Qone] =0) , 
 \label{transg3} \\
& [\vecTtwo^{hg}] = [\vecTtwo]-2b_1\tau D_{\vecTone^-} \axial-2 D_{\vecTone^-} \zeta
     -b_1^2\tau^2 D_{ \axial} \axial-2b_1\tau D_{\axial}\zeta-2b_1\Qone\tau \kappa(\axial),  \label{transg4}\\
&
\hatQtwo^{hg+}=\hatQtwo^{+},\qquad \hatQtwo^{hg-}=\hatQtwo^--2\b_1 \tau\axial(\Qone) \quad \Longrightarrow \quad
[\hatQtwo^{hg}] = [\hatQtwo] + 2 b_1 \tau  \eta(\Qone^-).  \label{transg5}
\end{align}
\end{lemma}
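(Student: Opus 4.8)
\emph{Strategy and Step 1 (metric coefficients).} The plan is to push the prescribed gauges through the relevant objects one layer at a time: first the metric perturbation tensors via Proposition \ref{prop:full_class_gauges}, then the first order deformation vectors $Z_1^{\pm}=\Qone^{\pm}\normal^{\pm}+\vecTone^{\pm}$ through \eqref{eq:Z_hyp_gauges} (hypersurface) followed by \eqref{eq:Z_spt_gauges} (spacetime), and finally the second order deformation vectors $Z_2^{\pm}=\Qtwo^{\pm}\normal^{\pm}+\vecTtwo^{\pm}$, exploiting the hypersurface--gauge invariance of $\hatQtwo$ built into \eqref{def:Q2hat} to short--circuit most of the second order bookkeeping ($\sigma=+1$ throughout since $\Supfamp$ is timelike). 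For Step 1, the vectors $\sper^{-}=-b_1 t\partial_\phi$, $\sper^{+}=0$ are of the form $C^{\pm}t\partial_\phi$ with $C^{-}=-b_1$, $C^{+}=0$, and $\sperper^{\pm}=0$, so Proposition \ref{prop:full_class_gauges} applies verbatim (with $A=B=0$, $\Y=0$, $\alpha(r)=0$, $\killback=0$, and $\Q(r)$ kept general): it yields $\omega^{g}_{\pm}=\omega_{\pm}-C^{\pm}$, i.e.\ $\omega^{g}_{-}=\omega_{-}+b_1$ and $\omega^{g}_{+}=\omega_{+}$; $h^{g}=h$, $k^{g}=k$, $m^{g}=m$, $\Wtwo^{g}=\Wtwo$ by \eqref{hg}--\eqref{Wtwog}; and $f^{g}_{\pm}=f_{\pm}+\beta_{\pm}(r_{\pm})$ with the residual radial functions $\beta_{\pm}$ noted there. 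The mechanism is that for $\sper\propto t\partial_\phi$ the tensors $\lie_{\sper}g$, $\lie_{\sper}\Kper^{g}$, $\lie_{\sper}\lie_{\sper}g$ have components only along $dt^2$, so the $dr^2$, $(d\theta^2+\sin^2\theta\,d\phi^2)$, $dr\,d\theta$ and $dt\,d\phi$ coefficients are untouched and the change of the $dt^2$ coefficient is exactly the one needed to keep $h$ fixed once $\omega$ shifts by a constant. Since $[\omega]=b_1$ by \eqref{teo:omegap_matching_R} we get $[\omega^{g}]=[\omega]-b_1=0$, and since $b_1$ and the $\beta_{\pm}$ are radial, applying $\normal$ gives $[\normal(\omega^{g})]=[\normal(\omega)]$, $[\normal(\normal(\omega^{g}))]=[\normal(\normal(\omega))]$, $[\normal(h^{g})]=[\normal(h)]$, $[\normal(k^{g})]=[\normal(k)]$, $[\normal(\Wtwo^{g})]=[\normal(\Wtwo)]$, together with $[f^{g}]=[f]+[\beta]$, $[\beta]:=\beta_{+}(a)-\beta_{-}(a)\in\mathbb{R}$. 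This establishes \eqref{transg1}--\eqref{transg2}.

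\emph{Steps 2 and 3 (deformation vectors).} Both $U_1=-\vecTone^{-}-b_1\tau\partial_\varphi$ and $\sper^{\pm}|_{\Supfamp}$ (which equal $-b_1\tau\eta$ and $0$) are tangent to $\Supfamp$, so neither \eqref{eq:Z_hyp_gauges} nor \eqref{eq:Z_spt_gauges} changes the normal components: $\Qone^{hg\pm}=\Qone^{\pm}$, hence $[\Qone^{hg}]=[\Qone]=0$ by \eqref{teo:Q1_matching_R}. On the tangential parts the two transformations act additively, and composing them gives $\vecTone^{hg-}=\vecTone^{-}+U_1-(-b_1\tau\eta)=0$ and $\vecTone^{hg+}=\vecTone^{+}+U_1=[\vecTone]-b_1\tau\eta=\zeta$ (using \eqref{teo:Q1_matching_R}), so $[\vecTone^{hg}]=\zeta$, which is \eqref{transg3}. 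For the second order vector I would decompose $\nabla_{U_1}Z_1^{\pm}$, $\nabla_{Z_1}\sper^{\pm}$ and $\nabla_{\sper^{\pm}}\sper^{\pm}$ into tangential and normal parts by the Gauss--Weingarten relations (the sign of $\kappa$ fixed by \eqref{k0ij}) and the Killing identities for $\eta$ ($\nabla_{\normal}\eta$ tangential, $\langle\nabla_{\normal}\eta,\normal\rangle=0$, $\nabla_{\eta}\eta=-\tfrac12\,\mathrm{grad}\,\langle\eta,\eta\rangle$) together with $\eta(\tau)=\normal(t)=0$. Because \eqref{def:Q2hat} is hypersurface--gauge invariant, the hypersurface step contributes nothing to $\hatQtwo$, so $\hatQtwo^{hg\pm}$ comes from $\hatQtwo^{\pm}$ by the \emph{spacetime} gauge alone: on the $+$ side $\sper^{+}=0$ gives $\hatQtwo^{hg+}=\hatQtwo^{+}$, and on the $-$ side the various $\sper^{-}$--terms (the $-\sigma\kappa(U_1,U_1)\normal$ piece introduced at the hypersurface level being precisely what makes the $\kappa(\eta,\eta)$ contributions cancel) collapse to the single shift $\hatQtwo^{hg-}=\hatQtwo^{-}-2b_1\tau\eta(\Qone^{-})$; taking the jump yields \eqref{transg5}. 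Collecting the tangential pieces of $Z_2$, namely $2D_{U_1}[\vecTone]$ from the hypersurface step (with $[\vecTone]=b_1\tau\eta+\zeta$, $[\Qone]=0$) and the tangential projection of $-2\nabla_{Z_1^{h-}}\sper^{-}+\nabla_{\sper^{-}}\sper^{-}$, one obtains $[\vecTtwo]$ shifted by $-2b_1\tau D_{\vecTone^{-}}\eta-2D_{\vecTone^{-}}\zeta-b_1^{2}\tau^{2}D_{\eta}\eta-2b_1\tau D_{\eta}\zeta-2b_1\Qone\tau\kappa(\eta)$, which is \eqref{transg4}.

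\emph{Main obstacle.} Steps 1 and 2 are routine substitution into the quoted transformation laws. The genuine work is the second order vector: composing the two \emph{second order} gauge maps forces one to keep the spacetime connection $\nabla$ and the intrinsic connection $D$ of $\Supfamp$ strictly apart, to invoke Gauss--Weingarten and the Killing identities of $\eta$ (and of $\sper^{-}|_{\Supfamp}=-b_1\tau\eta$) in order to eliminate or combine terms, and to carry through a sizeable number of $\kappa$-- and $D$--terms so that the curvature-- and second--fundamental--form contributions cancel in the stated minimal form. Getting all $\kappa$ signs consistent with the convention underlying \eqref{k0ij} is the easiest place to slip and should be settled once, carefully, at the outset.
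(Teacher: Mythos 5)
Your proposal is correct and follows essentially the same route as the paper: Proposition \ref{prop:full_class_gauges} (with $C^-=-b_1$, $C^+=0$ and vanishing second-order gauge vectors) for the metric coefficients, followed by composing the hypersurface law \eqref{eq:Z_hyp_gauges} with the spacetime law \eqref{eq:Z_spt_gauges}, decomposing via Gauss--Weingarten with the geodesic extension $\normal=-e^{-\lambda/2}\partial_r$ (so $\nabla_{\normal}\eta\stackrel{\Supfamp}{=}\kappa(\eta)$). Your use of the hypersurface-gauge invariance of $\hatQtwo$ to handle \eqref{transg5} is a mild streamlining of the paper's explicit computation of $\Qtwo^{hg\pm}$ followed by assembly via \eqref{def:Q2hat}, but the cancellation mechanism you describe (the $-\kappa(U_1,U_1)\normal$ term absorbing the $\kappa(\eta,\eta)$ contributions) is exactly what occurs there.
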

\begin{proof} By  Proposition \ref{prop:full_class_gauges}
with $C = - b_1$ (resp.  $C=0$) it follows
$\omega_-^g = \omega_-+b_1$ (resp. $\omega_+^g = \omega_+$), so that, 
in particular, $ \normal(\omega_\pm^{g})= \normal(\omega_{\pm })$,
$ \normal( \normal(\omega_{\pm}^g))= \normal( \normal(\omega_{\pm }))$. As a result,
\[
[\omega^g] = [\omega] - b_1  
\stackrel{(\ref{teo:omegap_matching_R})}{=} 0,\quad \quad [ \normal(\omega^g)] = [ \normal(\omega)]\stackrel{(\ref{teo:omegap_matching_R})}{=} 0.
\]
The same proposition with $A=B= \Y = \alpha=0$ (the second order gauge
vector $\sperper$ vanishes on both sides)
 gives  (\ref{transg1})-(\ref{transg2}).
Concerning the deformation vectors,
we apply  the hypersurface gauge transformation law
(\ref{eq:Z_hyp_gauges}),
followed by the spacetime gauge transformation  (\ref{eq:Z_spt_gauges})
and  insert $\sperper= U_2=0$. The result is 
\begin{align}
{Z}^{hg}_1&={Z}^h_1- \sper |_{\Supfamp}={Z}_1+ U_1- \sper |_{\Supfamp},\label{eq:Z1_hg}\\
{Z}^{hg}_2&={Z}^h_2-\nabla_{ \sper}  \sper |_{\Supfamp}-2\nabla_{{Z}^h_1} \sper+2\nabla_{ \sper} \sper |_{\Supfamp}\nonumber\\
  &={Z}_2+2\nabla_{{U}_1} Z_1- \kappa( U_1, U_1)  n
-2\nabla_{{Z}^h_1} \sper+\nabla_{ \sper} \sper |_{\Supfamp}\nonumber\\
&={Z}_2+2\nabla_{{U}_1} Z_1- \kappa( U_1, U_1)  n
-2\nabla_{ Z_1^{hg}} \sper|_{\Supfamp}-\nabla_{ \sper} \sper |_{\Supfamp}.
\label{eq:Z2_hg}
\end{align}
Inserting in \eqref{eq:Z1_hg} the explicit forms of $U_1$ and $\sper$ in the Lemma and using (\ref{teo:Q1_matching_R}) yields
\begin{align}
&  \left . \begin{array}{l}
\vecTone^{hg}{}^+ = \vecTone^+-\vecTone^--b_1\tau \axial=[\vecTone]-b_1\tau \axial
  = \zeta\\
                 \vecTone^{hg}{}^- = \vecTone^--\vecTone^--b_1 \tau \axial+b_1\tau \axial=0      \end{array}
  \right \} \qquad \Longrightarrow \qquad 
[\vecTone^{hg}]= \zeta,\label{eq:T1hg}\\
&\Qone^{hg}{}^+ = \Qone^{+}, \quad  
\Qone^{hg}{}^- = \Qone^{-}.
\end{align}
This proves (\ref{transg3}).
For $Z_2^{\pm}$ we first compute
\begin{align*}
\nabla_{ U_1}  Z_1^{\pm} = \nabla_{ U_1} (\vecTone^{\pm} +  \Qone^{\pm}  \normal) 
=D_{ U_1} \vecTone^{\pm}  - \kappa( U_1, \vecTone^{\pm})  \normal
+  U_1 (\Qone^{\pm} )  \normal
+ \Qone^{\pm} \kappa( U_1). 
\end{align*}
Inserting into (\ref{eq:Z2_hg}), together with $[\Qone]=0$, $ U_1=-\vecTone^-$, $\sper^-=-b_1t \axial$,
$ \sper^+=0$ and the first order $hg$ quantity
$ Z_1^{hg}{}^{-} = Q_1 \normal  + b_1 \tau \eta$,
 leads to
\begin{align*}
 Z_2^{hg}{}^+=&\Qtwo^+  \normal +\vecTtwo^+  +2D_{ U_1} \vecTone^+  -2 \kappa( U_1,  \vecTone^{+})  \normal
+2  U_1 (\Qone)  \normal 
+2 \Qone\kappa( U_1)- \kappa( U_1, U_1)  \normal,\\
 Z_2^{hg}{}^-=& Z_2^- +2D_{ U_1} \vecTone^-  -2 \kappa( U_1, \vecTone^{-})  \normal
+2  U_1 (\Qone)  \normal 
+2 \Qone \kappa( U_1)- \kappa( U_1, U_1)  \normal\\
                  &+2\nabla_{\Qone  \normal}(b_1 t \axial)|_{\Supfamp}-\nabla_{b_1 t \axial}(b_1 t \axial)|_{\Supfamp}\\
  =&\Qtwo^- \normal+\vecTtwo^- +2D_{ U_1} \vecTone^-  -2 \kappa( U_1, \vecTone^{-})  \normal
+2  U_1 (\Qone)  \normal 
+2 \Qone \kappa( U_1)- \kappa( U_1, U_1)  \normal\\
&
+2b_1\Qone\tau \kappa(\axial)-b_1^2\tau^2
\left ( D_{ \eta}  \eta - \kappa( \eta,
 \eta)  \normal \right ),
\end{align*}
To do this computation it is useful to introduce the spacelike unit
  vector field $ \normal := - e^{-\lambda/2} \partial_r$. This field restricts to
  $\Supfamp$ as the unit normal before and commutes with $ \eta$, so that
$\nabla_{\normal}  \eta =
[  \normal,  \eta] + \nabla_{ \eta}  \normal \stackrel{\Supfamp}{=}
\kappa( \eta)$. It must stressed that the extension of the normal
  $\normal$ does not
  change the outcome of the computation.
Extracting the tangential to $\Supfamp$
and using $ U_1=-\vecTone^--b_1\tau\axial$ we obtain the diference
\begin{align*}
[\vecTtwo^{hg}]=&\vecTtwo^+ +2 D_{ U_1}\vecTone^+ +2\Qone\kappa( U_1)
-\left(\vecTtwo^-+2D_{ U_1}\vecTone^- +2\Qone\kappa( U_1)
+2b_1\Qone\tau \kappa(\axial)-b_1^2\tau^2 D_{ \eta}  \eta\right)\\
=&[\vecTtwo]-2D_{(\vecTone^-+b_1\tau \axial)}[\vecTone]-2b_1\Qone\tau \kappa( \axial)+b_1^2\tau^2 D_{ \eta}  \eta\\
  =&[\vecTtwo]-2D_{\vecTone^-}\left(b_1\tau\axial+ \zeta\right)
  -2b_1\tau D_{ \axial}\left(b_1\tau \axial+\zeta\right)-2b_1\Qone\tau \kappa(\axial)+b_1^2\tau^2 D_{ \eta}
  \eta\\
  =&[\vecTtwo]-2b_1\tau D_{\vecTone^-} \axial-2 D_{T_1^-} \zeta
     -b_1^2\tau^2 D_{ \axial}\axial-2b_1\tau D_{\axial}\zeta-2b_1\Qone\tau \kappa(\axial).
\end{align*}
This proves (\ref{transg4}). Regarding the normal parts, we have
\begin{align*}
&\Qtwo^{hg}{}^+=\Qtwo^+ -2 \kappa( U_1,\vecTone^+)+2  U_1(\Qone)-\kappa( U_1, U_1),\\
&\Qtwo^{hg}{}^-=\Qtwo^- -2\kappa( U_1,\vecTone^-)+2  U_1(\Qone)- \kappa( U_1, U_1)+ b_1^2\tau^2\kappa(\axial,\axial),
\end{align*}
and the corresponding gauge invariant quantities  (\ref{def:Q2hat}) are
\begin{align*}
\hatQtwo^{hg}{}^+\stackrel{(\ref{def:Q2hat})}{=}&
\Qtwo^{hg+}+\kappa(\vecTone^{hg+},\vecTone^{hg+})-2\vecTone^{hg+}(\Qone)\\
  \stackrel{}{=} &
                   \Qtwo^{+}-2 \kappa( U_1,\vecTone^+)+2  U_1(\Qone)
                   -\kappa( U_1, U_1)
                   +\kappa(\vecTone^++U_1,\vecTone^++U_1)-2(\vecTone^++U_1)(\Qone)\\
  =&\Qtwo^++\kappa(\vecTone^{+},\vecTone^{+})-2\vecTone^{+}(\Qone)\stackrel{(\ref{def:Q2hat})}{=}\hatQtwo^+,\\
  \hatQtwo^{hg-}\stackrel{(\ref{def:Q2hat})}{=}&
  \Qtwo^{hg-}+\kappa(\vecTone^{hg-},\vecTone^{hg-})
  -2\vecTone^{hg-}(\Qone)
                                                 \stackrel{(\ref{eq:T1hg})}{=}
                                                 \Qtwo^{-hg}\\
  =&\Qtwo^- -2\kappa( U_1,\vecTone^-)+2 U_1(\Qone)- \kappa( U_1, U_1)+ b_1^2\tau^2\kappa(\axial,\axial)\\
  =&\Qtwo^- +2\kappa(\vecTone^-,\vecTone^-)+2b_1\tau\kappa(\axial,\vecTone^-)
     -2\vecTone^-(\Qone)-2b_1\tau \axial(\Qone)-\kappa(\vecTone,\vecTone )\\
     &  -b_1^2\tau^2\kappa( \axial,\axial)
     -2b_1\tau\kappa(\axial,\vecTone^-)+ b_1^2\tau^2\kappa(\axial,\axial)=\hatQtwo^--2b_1\tau  \axial(\Qone),
\end{align*}
which establishes  (\ref{transg5}).\fin
\end{proof}

\vspace{3mm}
We can now solve the second order matching problem.
\begin{proposition} \label{teo:second_order_matching}
Let $(\mmm,\gback)$ be a static and spherically symmetric
spacetime with two regions as in Definition \ref{SpherRegion}.
Assume that the hypotheses
in Proposition \ref{teo:first_order_matching} hold and that
the corresponding first order matching conditions are satisfied.

Consider second order metric perturbation 
tensors $\spt^\pm$ of the form
\begin{align}
\spt =& \left(-4 e^{\nu(r)} h(r, \theta) + 2{\omega}^2(r, \theta) \Q^2(r) \sin ^2 \theta \right)dt^2 -2 \Wtwo (r,\theta) \Q^2(r) \sin^2\theta dt d\phi \nonumber\\
& + 4 e^{\lambda(r)} m(r, \theta) dr^2 +4  k(r, \theta) \Q^2(r) (d\theta^2+\sin ^2 \theta  d\phi^2)
+ 4e^{\lambda(r)}\partial_\theta f(r,\theta)\Q(r)dr d\theta.
\label{teo:sopert_tensor_W2}
\end{align}
Then the second order matching conditions are satisfied if and only if there exist functions $\hatQtwo^{\pm}$
on $\Supfamp$  such that, in terms of
\begin{align}
\defor^{\pm} := \hatQtwo^{\pm} - 2 \Q (e^{\lambda/2} f) |_{\Sigma^{\pm}} + 2 \vecTone^{\pm} (\Qone)
\label{defor}
\end{align}
the following expressions hold
\begin{align}
  [ \defor]   = & \left . - \frac{\Q}{\normal(\Q)} \right |_{\Supfamp}
  \left (
  2 c_0 +
  (2 c_1 + H_1) \cos \vartheta  \right ),\label{jumpdefor} \\
 [ \Wtwo ]   = &    D_3 - 2  \zeta_0 (\omega^+ |_{\Supfamp}), \label{PropWtwo} \\
  [ \normal(\Wtwo)]  = & -2 \zeta_0 ( \normal(\omega) |_{\Supfamp})-2\Qone
                      [ \normal( \normal(\omega))],\label{PropnWtwo} \\
 [k]  = &  - \left . \normal(\Q) \right |_{\Supfamp} [ e^{\lambda/2}  f] +c_0 + c_1  \cos \vartheta, \label{Propk} \\
 [h]   = &   \frac{1}{2 } H_0 +  \left . \frac{\Q \normal(\nu)}{4 \normal(\Q)} \right |_{\Supfamp} \left (  2 [k] 
+ H_1 \cos \vartheta \right ) ,    \label{Proph} \\
[m] = &  2 [k]
+ \left . \frac{\Q}{\normal(\Q)} \right |_{\Supfamp} [  \normal(k)] 
+ \left ( H_1 - \frac{(2 c_1 + H_1)}{2 \normal(\Q)^2 |_{\Supfamp} }    \right ) \cos \vartheta \nonumber \\
&+ \frac{1}{2} \left [ ( \defor  + 2 \Q e^{\lambda/2} f ) 
\left ( 
- \frac{1}{\Q \normal(\Q)} \A 
  + \frac{\normal(\Q)}{\Q} \right ) \right ]
- \left . \frac{1}{2 \Q \normal(\Q)}  \right |_{\Supfamp}
(\Qone)^2  [  \normal(\A)],
\label{Propm} \\
[\normal(h) ]  = &  \left . \frac{\Q \normal(\nu)}{2 \normal(\Q)} \right |_{\Supfamp} [\normal(k)]
+ \left . \frac{\normal(\nu)}{2} \left ( 1 - \frac{\Q \normal(\nu)}{2 \normal(\Q)} 
\right ) \right |_{\Supfamp} \left ( 2 [k] + H_1 \cos \vartheta \right )
- \left . \frac{\normal(\nu)}{4 \normal(\Q)^2}  \right |_{\Supfamp} (H_1+ 2 c_1)\cos \vartheta \nonumber \\
& + \frac{\normal(\nu)}{4} \left [ 
\left ( \defor + 2 \Q e^{\lambda/2} f \right ) \left (
-\frac{1}{\Q \normal(\Q)} \A - \frac{2 }{e^{\nu} \normal(\nu)} \B + \frac{\normal(\Q)}{\Q} - \frac{\normal(\nu)}{2} 
\right) \right ]
 \nonumber \\
& - \frac{1}{4} (\Qone)^2 \left [ \frac{\normal(\nu)}{\Q \normal(\Q)}\normal(\A) + \frac{2}{e^{\nu}} \normal(\B)  \right ] , \label{Propnh} 
\end{align}
where $D_3$, $H_1$, $H_0$, $c_0$, $c_1$
are arbitrary constants.
\end{proposition}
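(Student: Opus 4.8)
\textbf{Proof proposal for Proposition \ref{teo:second_order_matching}.}

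The plan is to specialize the general second order matching conditions $[\hpp_{ij}]=[\kpp_{ij}]=0$ (with $\hpp$, $\kpp$ given by \eqref{SecondPert.1}--\eqref{SecondPert.2}) to the present static spherically symmetric background with two regions and the perturbation tensor \eqref{teo:sopert_tensor_W2}, exactly mirroring the strategy already used for the first order problem in Proposition \ref{teo:first_order_matching}. The first reduction is to pass to the convenient gauge of Lemma \ref{b1=0}: after the hypersurface gauge $U_1=-\vecTone^- - b_1\tau\partial_\varphi$, $U_2=0$ and the spacetime gauges $\sper^-=-b_1 t\partial_\phi$, $\sper^+=0$, $\sperper^\pm=0$, we have $[\omega^g]=0$, $[\vecTone^{hg}]=\zeta$, $[\Qone^{hg}]=0$, and all jumps of $\{h,k,m,\Wtwo,f\}$ and their normals are unchanged except for $[f]\to[f]+[\beta]$, $[\beta]\in\mathbb{R}$. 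Thus it suffices to prove the statement in this gauge, where $b_1=0$ effectively, and then undo the gauge at the end — the constants $D_3$, $H_i$, $c_i$ absorbing the $\zeta_0$-terms that appear in \eqref{PropWtwo}--\eqref{PropnWtwo} and the $[\beta]$ freedom being exactly the origin of the arbitrary function $f$ entering only through $\partial_\theta f$.

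Next I would introduce the hypersurface-gauge-invariant deformation scalar $\hatQtwo$ of \eqref{def:Q2hat} and, following \cite{ReinaVera2015}, the shifted quantity $\defor$ defined in \eqref{defor}; the point of $\defor$ is that the $f$-dependent terms in $\hpp$, $\kpp$ (which come from the $dr\,d\theta$ component of \eqref{teo:sopert_tensor_W2} pulled back and Lie-dragged) combine with the $-2\Q e^{\lambda/2}f$ piece so that all explicit $f$'s drop out of the \emph{jumps}, leaving only the combination $\defor+2\Q e^{\lambda/2}f$ appearing, which is continuous-gauge-covariant in the right way. Then one expands \eqref{SecondPert.1} component by component ($\{\tau\tau\}$, $\{\tau A\}$, $\{AB\}$) using the identities \eqref{EasyIden}, \eqref{HessExp}, \eqref{nnRee}, \eqref{PhistarSig}, the explicit $\Sper$ from Remark \ref{nSperee}, and $\Sperper$ from \eqref{calSperper}. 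The $\{AB\}$-trace part and the $\{AB\}$ tracefree part of $[\hpp]=0$ yield, after using that the only nonconstant angular dependence that can survive is $Y^a$ (by the same spherical-harmonic/conformal-Killing argument as in Proposition \ref{teo:first_order_matching}, now sourced by the $\ell=1$ vector $\eta_A$ and by products of background quantities with $\Qone$, $\vecTone$), the equations \eqref{jumpdefor}, \eqref{Propk}; the $\{\tau A\}$ component gives \eqref{PropWtwo} and \eqref{PropnWtwo} (the $\Wtwo$ sector decouples just as in the bulk equations); and the $\{\tau\tau\}$ component gives \eqref{Proph}. The constants $c_0,c_1,H_0,H_1,D_3$ are precisely the integration constants/$\ell=0,1$ harmonic amplitudes of the free data $\hatQtwo^\pm$, $\vecTtwo^\pm$ that are not fixed by the matching. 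For $[\kpp]=0$ one proceeds the same way: the $\{\tau\tau\}$ component produces \eqref{Propnh} and the $\{AB\}$ components produce \eqref{Propm}, after repeatedly eliminating $\hatQtwo$, $\vecTone$, $\vecTtwo$, $\Qone$-derivatives in favour of the quantities already solved for, and using the background matching relations \eqref{background_matching_R} together with \eqref{DefUpsilon}, \eqref{nnRee} to handle the Riemann and $\nabla\Sper$ terms (the latter being where $[\normal(\A)]$, $[\normal(\B)]$, i.e. effectively $[\lambda'']$, $[\nu''']$, enter, multiplied by $(\Qone)^2$).

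The main obstacle I anticipate is purely computational bookkeeping: \eqref{SecondPert.2} is a long expression with many Lie-derivative, Hessian, curvature and $\Sper\!\cdot\!\Sper$ terms, and assembling $[\kpp]=0$ into the clean forms \eqref{Propm}, \eqref{Propnh} requires careful and systematic elimination of the auxiliary deformation data and consistent use of the background field/matching equations — this is exactly the place where a computer-algebra check is warranted. A secondary subtlety, already flagged in Remark \ref{B04}, is that the hypotheses \eqref{conditions}, in particular $\normal(\nu)\neq 0$ at $\Supfamp$, are needed to solve for the deformation scalars and must be invoked at the right steps; and one must be careful that the extension of the unit normal $\normal=-e^{-\lambda/2}\partial_r$ off $\Supfamp$ used to compute terms like $\nabla_\nu\Sper$ does not affect the intrinsic quantities, as noted in the proof of Lemma \ref{b1=0}. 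Once the reduction to the $b_1=0$ gauge is in place and the component expansion is organized around $\defor$ and $\hatQtwo$, the remaining work is deterministic, and the ``if'' direction follows by reading the derivation backwards: given constants $D_3,c_0,c_1,H_0,H_1$ and functions $\defor^\pm$ satisfying \eqref{jumpdefor}--\eqref{Propnh}, one reconstructs admissible $\hatQtwo^\pm$, $\vecTone^\pm$, $\vecTtwo^\pm$ solving $[\hpp]=[\kpp]=0$.
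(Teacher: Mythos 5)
Your overall strategy coincides with the paper's: reduce to the gauge of Lemma \ref{b1=0}, expand the second order matching conditions component by component on $\Supfamp$ using \eqref{EasyIden}, \eqref{HessExp}, \eqref{nnRee} and the explicit $\Sper$, organize everything around $\hatQtwo$ and $\defor$, and restore the gauge and redefine constants at the end. However, your assignment of output equations to components of $[\qperper_{ij}]=0$ versus $[\kappaperper_{ij}]=0$ is wrong in a way that would derail the computation if followed literally. The perturbed first fundamental form $\qperper$ involves only $\Phi^{\star}(\Kperper)$ and first order data, with no normal derivatives of $\Kperper$; since $\Phi^{\star}(dr)=0$ on $\Supfamp$, the $dr\,d\theta$ component carrying $f$ is also invisible to it. Consequently $[\normal(\Wtwo)]$, i.e.\ \eqref{PropnWtwo}, cannot come from the $\{\tau,A\}$ component of $[\qperper]=0$ as you claim: in the paper it comes from the $\{\tau,A\}$ component of $[\kappaperper]=0$ (which contains $\lie_{\normal}\Kperper$ through $\normal_{\mu}\Sperper^{\mu}{}_{\alpha\beta}$), and that same component is what forces $\dot{f}^{(2)}_a=0$. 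Likewise \eqref{Propk} and \eqref{jumpdefor}, which involve $[e^{\lambda/2}f]$, are obtained from the Hessian structure of the $\{A,B\}$ part of $[\kappaperper]=0$ (where $f$ enters via $\Kperpernortan_i$), combined with the relation $[\hatQtwo]+2\zeta(\Qone)=\frac{\alpha_2}{\beta_2}(f^{(2)}_aY^a-2[k])$ that the $\{A,B\}$ part of $[\qperper]=0$ actually yields; the latter does not by itself produce \eqref{Propk}. Your attributions for \eqref{PropWtwo}, \eqref{Proph}, \eqref{Propm} and \eqref{Propnh} are correct, and the rest of the outline (the role of $\defor$ in absorbing the $f$-terms, the origin of the constants as $\ell=0,1$ amplitudes and $\tau$-integration constants, the observation that $[\vecTtwo]$ only fixes $\vecTtwo^+$ in terms of $\vecTtwo^-$ so the ``if'' direction follows by reconstruction) matches the paper.
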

\begin{remark}
  As in Remark \ref{B04}, setting $\Q(r)=r$ the results of this proposition
    extend to the general case (without assuming axial symmetry on $\Qone$, $\vecTone$, $\Qtwo$ and $\vecTtwo$) the outcome of Proposition 2 in \cite{ReinaVera2015}.
\end{remark}

\begin{proof}
  We exploit Lemma \ref{b1=0} to simplify the proof. Specifically, we solve the problem in the
  gauge $g$ and $hg$ and  then we translate into the original  gauge.
For the sake
of notational simplicity we drop the superindexes $g$ and $hg$  along the
proof  and we only restore them at the end.

In  the gauge of Lemma \ref{b1=0} we have $\vecTone^{-}=0$ (by \eqref{eq:T1hg}) and $[\omega] =0$ (so that we may simply write $\omega$). Thus, Proposition
\ref{teo:first_order_matching} 
gives $\vecTone^{+} =  \zeta
  = \Czo \partial_{\tau} +
   \zeta_0$ with $\zeta_0$ a Killing vector on the sphere.
   It is useful to introduce  the tangent vector to $\Supfamp$ given by
  \begin{align*}
      \vecVtwo \defi [\vecTtwo]
      - 2 \Qone \kappa( \zeta) - D_{ \zeta\,} {\zeta}.
  \end{align*}
  The second order matching conditions $[\qperper_{ij}] =0$ obtained from
  \eqref{SecondPert.1}  become, after using Proposition \ref{teo:first_order_matching}
  \begin{align}
    0 & = [\qperper_{ij}] =
    \lie_{\vecVtwo} h_{ij}
    + 2 [\hatQtwo] \kappa_{ij} + [\Phi^{\star}(\Kperper)]_{ij}
    + 2 \lie_{ \zeta\, } \qper_{ij}
    - 4 \Qone [ \normal_{\nu}\Sper^{\mu}_{\phantom{\mu}\alpha\beta} e^{\alpha}_i
      e^{\beta}_j] \nonumber \\
    & =
\lie_{\vecVtwo} h_{ij}
    + 2 [\hatQtwo] \kappa_{ij} + [\Phi^{\star}(\Kperper)]_{ij}
    + 4 {\zeta} (\Qone) \kappa_{ij}
    + 2 \lie_{ \zeta} \, \Phi^{\star} (\Kper)_{ij},
\label{SecondOrderMatchGen}
      \end{align}
      where in the last equality we
      inserted the explicit expression of $\qper$  
 from (\ref{FirstPert.1}) and used the facts that 
$\lie_{ \zeta} \, h_{ij} =
  \lie_{ \zeta} \, \kappa_{ij} =0$ and, in the present gauge, also
  $[ \normal_{\nu}\Sper^{\mu}_{\phantom{\mu}\alpha\beta} e^{\alpha}_i e^{\beta}_j] =0$.
  To ellaborate 
(\ref{SecondOrderMatchGen})
 further, we
use  $\Phi^{\star} (\Kper) =
- 2 \alpha_2 \omega |_{\Supfamp} d\tau \otimes_s \axialfSph$,
(\ref{PhistarSig}),
so that
\begin{align*}
  \lie_{ \zeta} \, \Phi^{\star} (\Kper)
  =
  - 2 \alpha_2 d\tau \otimes_s \left (
{\zeta}_0 (  \omega |_{\Supfamp}) \axialfSph
 + \omega|_{\Supfamp} \gsph([ \zeta_0,  \eta\, ],
\cdot)
  \right ).
\end{align*}
Inserting this and the pull-back of (\ref{teo:sopert_tensor_W2})
  on $\Supfamp$ transforms (\ref{SecondOrderMatchGen}) into
\begin{align}
 &\lie_{\vecVtwo} h 
 + 2 [\hatQtwo] \kappa   + 4 \alpha_1 [h] d \tau^2
 - 2 \alpha_2
 [ \Wtwo ]
 d\tau \otimes_s \axialfSph
+ 4 \alpha_2 [k] \gsph + 4 {\zeta} (\Qone) \kappa \nonumber \\
&     
    -4 \alpha_2 d\tau \otimes_s \left ( {\zeta}_0 (
    \omega |_{\Supfamp}) \axialfSph
+ \omega|_{\Supfamp} \gsph([ \zeta_0,  \eta \, ], \cdot)
  \right )=0.
  \label{genqperper}
\end{align}
We consider first the  $A, B$ components of this expression.
Decomposing
$\vecVtwo = \Vtwo^{\tau} \partial_{\tau} + \Vtwo^A \partial_{A}$
and computing $\lie_{\vecVtwo} h$ with the general
  identity (\ref{EasyIden}), we  find that these components give
\begin{align*}
  \alpha_2 ( \D_A {\Vtwo}_B + \D_B {\Vtwo}_A ) + 2 \beta_2
        [\hatQtwo] \gsph_{AB}
        +4 \alpha_2 [k] \gsph_{AB} + 4
        \beta_2 {\zeta} (\Qone) \gsph_{AB} =0.
  \end{align*}
As in the proof of Proposition \ref{teo:first_order_matching}, this is equivalent to
the
existence of six functions $\ftwo_a(\tau)$, $\qtwo_a(\tau)$ such that
\begin{align}
  \Vtwo{}_A & = \ftwo_a(\tau) \D_A Y^a + \qtwo_a(\tau) \epsilon_{AB} \D^B Y^a,
\label{ExpV2} \\
 [\hatQtwo] + 2 {\zeta} (\Qone) & =
  \frac{\alpha_2}{\beta_2} \left (
  \ftwo_a(\tau) Y^a - 2 [k] \right ).
  \label{JumphatQtwo}
\end{align}
We next consider the $\{ \tau, A\}$ component of
(\ref{genqperper}). Another application of 
(\ref{EasyIden}) gives
\begin{align}
  & \D_A \left (
  \alpha_1 \Vtwo^{\tau} + \alpha_2
  \dot{f}^{(2)}_a Y^a \right )
  + \alpha_2 \dot{q}^{(2)}_a \epsilon_{AB} \D^B Y^a  -  \alpha_2 \left (
  [\Wtwo] \eta_A 
  + 2  \zeta_0 (\omega|_{\Supfamp})  \eta_A
  + 2 \omega |_{\Supfamp} [ \zeta_0,  \eta\, ]_A \right ) =0.
 \label{tauA.2}
\end{align}
The divergence $\D^A$ of the second term is identically zero.
The divergence of
the last term is also zero because 
${\eta} (\Wtwo)=0$ and  $\eta_A$, $[ \zeta_0, \eta]_A$ are Killing vectors (hence
divergence-free) and, in addition,
\begin{align*}
  \eta^A \D_A \left ( \zeta_0^B \D_B (\omega|_{\Sigma_{0}} )\right )
  + [ \zeta_0,  \eta\, ]^A \D_A ( \omega|_{\Supfamp})
  & = \eta^A \zeta_0^B \D_A \D_B (\omega|_{\Supfamp})
  + (\zeta_0^B \D_B \eta^A) \D_A (\omega |_{\Supfamp}) \\
  & =
  \zeta_0^B \D_B \left ( \eta^A \D_A (\omega |_{\Supfamp}) \right ) =0,
\end{align*}
where in the first equality we expanded the Lie bracket
and in the last equality we used $\eta(\omega|_{\Supfamp})=0$.
Thus, the divergence of
(\ref{tauA.2}) is equivalent to 
\begin{align}
  \Delta_{\gsph} \left (
 \alpha_1 \Vtwo^{\tau} + \alpha_2
 \dot{f}^{(2)}_a Y^a \right )  =0
 \quad \quad \Longleftrightarrow
 \quad \quad
  \Vtwo^{\tau} = C_0^{(2)}(\tau)
  -\frac{\alpha_2}{\alpha_1} \dot{f}^{(2)}_a Y^a,
 \label{Vtwotau}
 \end{align}
with $C_0^{(2)}(\tau)$ an integration function depending only on $\tau$.
Substituting back into (\ref{tauA.2})  yields
\begin{align}
\dot{q}^{(2)}_a \epsilon_{AB} \D^B Y^a  
-  [\Wtwo] \eta_A   -2   \zeta_0 (\omega|_{\Supfamp}) \eta_A
-2  \omega |_{\Supfamp} [ \zeta_0,  \eta\, ]_A  =0.
\label{tauA.3}
\end{align}
We now decompose
$\zeta_0{}_{A} = \zeta_0{}_a \epsilon_{AB} \D^B Y^a$,
$ \zeta_0{}_a \in \mathbb{R}$
and define $\eta^a_A := \epsilon_{AB} \D^B Y^a$ so that in particular
$ \eta =  \eta{}^{\, 3}$. The commutation relations are, $a,b,\cdots =1,2,3$,
\begin{align*}
  [  \eta^{\,a} ,  \eta^{\,b} ] = - \epsilon^{ab}{}_{c} \, {\eta}^{\,c}
  \end{align*}
  with  $\epsilon^{abc}$ the Levi-Civita totally antisymmetric
  symbol, so (\ref{tauA.3}) takes the form
  \begin{align*}
    \dot{q}^{(2)}_a \eta^a_{A} - [\Wtwo] \eta_A - 2 {\zeta}_0
      (\omega|_{\Supfamp} ) \eta_A + 2 \omega|_{\Supfamp}
      \zeta_0{}_b \epsilon^{b3}{}_a \eta^a_A =0.
  \end{align*}
  By linear independence of $\{  \eta^{\,a}\, \}$, this is equivalent to
    \begin{align}
  \dot{q}^{(2)}_3 &= [\Wtwo] + 2 {\zeta}_0 (\omega|_{\Supfamp}), 
\label{dotqm3} \\
    \dot{q}^{(2)}_{a} &= -2 \omega|_{\Supfamp} \zeta_0{}_b \epsilon^{b3}{}_a,
    \quad a=1,2.
\label{dotqm12}
\end{align}
Since $[\Wtwo]$ and $\omega|_{\Supfamp}$ 
are constant along $\tau$ it follows that $\ddot{q}^{(2)}_a =0$, i.e.
there exist six constants $b^{(2)}_a$ and $d^{(2)}_a$ such that
\begin{align}
  q^{(2)}_{a} = b^{(2)}_a \tau + d^{(2)}_a, \quad \quad
  [\Wtwo] = b^{(2)}_3 - 2 {\zeta}_0 (\omega|_{\Supfamp}), \qquad
  b^{(2)}_a  + 2 \omega|_{\Supfamp} \zeta_0{}_b \epsilon^{b3}{}_a=0, \quad a =1,2.
\label{qtwos}
\end{align}
It only remains to impose the $\{\tau,\tau\}$ component of
(\ref{genqperper}), which is
\begin{align*}
   \alpha_1 \dot{\Vtwo}^{\tau} +  \beta_1 [\hatQtwo]  + 2\alpha_1
  [h] + 2 \beta_1 \zeta(\Qone) =0.
\end{align*}
Upon inserting (\ref{JumphatQtwo}) and (\ref{Vtwotau}) this is equivalent to
\begin{align*}
  \alpha_1 \dot{C}^{(2)}_0 
  + 2 \alpha_1 [h] + 
\alpha_2 \left [ \left ( - \ddot{f}^{(2)}_a
+  
\frac{\beta_1}{\beta_2} f^{(2)}_a \right ) Y^a - \frac{2\beta_1}{\beta_2}
[k] \right ]=0.
\end{align*}
The fact that $[h]$ and $[k]$ are $\tau$-independent 
and $\varphi$-independent imposes that 
$\dot{C}_0^{(2)}$ is constant  and that the term in parentheses is constant
for $a=3$ and zero for $a=1,2$. In other words, there exist $c^{(2)}_0, c^{(2)}_1,
f_{0}^{(2)} 
\in \mathbb{R}$ such that
\begin{align}
C^{(2)}_0(\tau) = c^{(2)}_0 + c^{(2)}_1 \tau, \quad 
\ddot{f}^{(2)}_a
        - \frac{\beta_1}{\beta_2} f^{(2)}_a = f_{0}^{(2)} \delta_a^3, \quad 
        [h] = \frac{\beta_1 \alpha_2}{\alpha_1\beta_2} [k]
        - \frac{1}{2 } c^{(2)}_1
        + \frac{\alpha_2 f_{0}^{(2)} }{2 \alpha_1} Y^3.
\label{jumph}
\end{align}
Summarizing, the second order matching conditions $[\qperper_{ij}]=0$
are equivalent to (\ref{ExpV2}), (\ref{JumphatQtwo}),
(\ref{Vtwotau}), (\ref{qtwos}) and (\ref{jumph}).
We next deal with  $[\kappaperper_{ij}]=0$.
We note the following facts:
\begin{align*}
\mbox{(a)} \quad \quad & \normal_{\mu} \normal^{\nu} \Sper^{\mu}_{\phantom{\mu}\nu\alpha} =0, \quad \quad 
  & & \mbox{(see }(\ref{calSperP2})) 
  \\
\mbox{(b)} \quad \quad & [ \Sper_{\mu\alpha\beta} e^{\mu}_{l} e^{\alpha}_{i} e^{\beta}_{j}]  =0, 
  & & \mbox{as a consequence of } [\omega] =0,  \\
\mbox{(c)} \quad \quad & [ \normal_{\mu} \Sper^{\mu}_{\phantom{\mu}\nu\alpha} e^{\alpha}_i e^{\beta}_j] =0,
                    & & \mbox{as consequence of } [\omega] =0, [ \normal(\omega)] =0 \mbox{ (see (\ref{Spereeexp}))}.
\end{align*}
Additional facts that we will use are
\begin{align*}
\mbox{(d)} \quad \quad & \lie_{[\vecTone]} \kappa_{ij} = \lie_{ \zeta\, } \kappa_{ij} =0, \\
\mbox{(e)} \quad \quad & 
\lie_{[\vecTtwo]} \kappa_{ij} -  \lie_{2 \Qone \kappa ([\vecTone] ) +  [D_{\vecTone} \vecTone] } \, \kappa_{ij} 
= \lie_{\vecVtwo} \kappa_{ij}, \\
\mbox{(f)} \quad \quad & [\lie_{\vecTone} \kappaper_{ij}] = \lie_{[\vecTone]} \kappaper_{ij} = \lie_{\vec\zeta\, } \kappaper_{ij} \\
& \quad \quad
= \lie_{ \zeta\, } \left (  -  D_i D_j \Qone + \Qone \left . \big ( 
- \normal^{\mu} \normal^{\nu} R_{\alpha\mu\beta\nu}
e^{\alpha}_i e^{\beta}_j + \kappa_{il} \kappa^{\,\,l}_{j} \big ) \right |_{\Supfamp^-} 
- \left.\normal_{\mu}\Sper^{\mu}_{\phantom{\mu}\alpha\beta}e^{\alpha}_i e^{\beta}_j\right|_{\Supfamp} \right ) \\
  & \quad \quad = - D_i D_j \lieQone 
+ \lieQone \left . \big (  -\normal^{\mu} \normal^{\nu} R_{\alpha\mu\beta\nu}
e^{\alpha}_i e^{\beta}_j + \kappa_{il} \kappa^{\,\,l}_{j} \big ) 
\right |_{\Supfamp^-} \\
& \quad \quad \quad  
 - \lie_{ \zeta\,} \left ( \left (  \alpha_2 \normal (\omega) |_{\Supfamp} 
+ 2 \beta_2 \omega |_{\Supfamp}  \right ) d \tau \otimes_s \axialfSph \right )_{ij},
\end{align*}
where   in (f) we used  that $ \zeta$ is the restriction
to $\Supfamp$ of an ambient Killing vector tangential to 
$\Supfamp$, which  has the consequence that $\lie_{ \zeta}$ 
commutes  with  the Hessian of $(\Supfamp,h)$ 
and that it anhilates
$(- \normal^{\mu} \normal^{\nu} R_{\alpha\mu\beta\nu}
e^{\alpha}_i e^{\beta}_j + \kappa_{il} \kappa^{\,\,l}_{j}) |_{\Supfamp^-}$.
Note that $\normal^{\mu} \normal^{\nu} R_{\alpha\mu\beta\nu}
e^{\alpha}_i e^{\beta}_j$ may be discontinuous on $\Supfamp$, but since it is 
multiplied by
$\lieQone$, it does not matter (by  \eqref{nnRee} and \eqref{teo:Q1_matching_R})
whether we evaluate it on  $\Supfamp^-$ (as we have chosen), or 
on $\Supfamp^+$. We have also inserted
(\ref{Spereeexp}) in the last equality.

Using (a)-(f) in (\ref{SecondPert.2}) together with $\Kpernornor=0$, $\Kpernortan=0$ and $[Q_1] =0$, the equations $[\kappaperper_{ij}]=0$ become
\begin{align}
0  = [ \kappaperper_{ij}]   = \nonumber & 
\lie_{\vecVtwo} \kappa_{ij} 
- D_i D_j \left ( [\hatQtwo]  + 2 \lieQone \right )
+  \left [ \hatQtwo \left ( - \normal^{\mu} \normal^{\nu} R_{\alpha\mu\beta\nu} e^{\alpha}_i e^{\beta}_j
+ \kappa_{il} \kappa^{l}_{j} \right ) \right ]  \nonumber \\
&  + 2 \lieQone 
\left . \big ( - \normal^{\mu} \normal^{\nu} R_{\alpha\mu\beta\nu} e^{\alpha}_i e^{\beta}_j
+ \kappa_{il} \kappa^{l}_{j} \big ) \right |_{\Supfamp^-}  
- 2\lie_{ \zeta\,} \left ( \left (  \alpha_2 \normal (\omega) |_{\Supfamp} 
+ 2 \beta_2 \omega |_{\Supfamp}  \right ) d \tau \otimes_s \axialfSph\right )_{ij} \nonumber \\
& - \left [\normal_{\mu} \Sperper^{\mu}_{\phantom{\mu}\alpha\beta} e^{\alpha}_i e^{\beta}_j \right ]
+ \frac{1}{2} [\Kperpernornor]  \kappa_{ij} 
- 2 \Qone 
\left [\normal_{\mu} \normal^{\nu} ( \nabla_{\nu}\Sper^{\mu}_{\phantom{\mu}\alpha\beta} ) e^{\alpha}_i e^{\beta}_j \right ] 
\nonumber \\
&  - \Qone^2 \left [ 
                  \normal^{\mu} \normal^{\nu} \normal^{\delta} ( \nabla_{\delta} R_{\alpha\mu\beta\nu} ) e^{\alpha}_i e^{\beta}_j
                  + 2 \normal^{\mu} \normal^{\nu} R_{\delta\mu\alpha\nu} e^{\delta}_l e^{\alpha}_j \kappa^{l}_i
                  + 2 \normal^{\mu} \normal^{\nu} R_{\delta\mu\alpha\nu} e^{\delta}_l e^{\alpha}_i \kappa^{l}_j \right ].
\label{GenMatchkappa}
\end{align}
We proceed with the first and third terms in the third line. 
Consider, as before,  the extension
$\normal := -e^{- \lambda/2} \partial_r$  of the normal vector  
off the matching hypersurface (the result is independent
of how we extend). Directly from the definition of $\Sperper^{\mu}_{\phantom{\mu}\alpha\beta}$ (\ref{calSperper}) we get
\begin{align}
\normal_{\mu} \Sperper^{\mu}_{\phantom{\mu}\alpha\beta}
 = \frac{1}{2} \left ( \nabla_{\alpha} \left ( {\Kperper}_{\mu\beta} \normal^{\mu} \right )
+ \nabla_{\beta} \left ( {\Kperper}_{\mu\alpha} \normal^{\mu} \right )
- \lie_{\normal} {\Kperper}_{\alpha\beta} \right ).
\label{nSperperalphabeta}
\end{align}
For any one-form $P_{\alpha}$ one has the following identity, easy to prove,
\begin{align*}
\Phi^{\star} ( \nabla P )_{ij} =
D_i P_j + \kappa_{ij} (P_{\alpha} \normal^{\alpha} |_{\Supfamp})
\end{align*}
where $P_i = \Phi^{\star} (P)_{i}$. Applying this to (\ref{nSperperalphabeta}) it follows
\begin{align}
\normal_{\mu} \Sperper^{\mu}_{\phantom{\mu}\alpha\beta} e^{\alpha}_i e^{\beta}_j =
\frac{1}{2} \left ( D_i \Kperpernortan_j + D_j \Kperpernortan_i - \Phi^{\star}(\lie_{\normal} \Kperper)_{ij} \right )
  +\Kperpernornor \kappa_{ij}
  \label{nSperpera}
\end{align}
where we have defined $\Kperpernortan_i := {\Kperper}_{\alpha\beta} \normal^{\alpha} |_{\Supfamp}  e^{\beta}_i$
and recall that $\Kperpernornor\defi {\Kperper}_{\alpha\beta} \normal^{\alpha}\normal^\beta|_{\Supfamp}$.
For the $\nabla \Sper$-term, we first observe that by the background symmetries (or by direct computation)  
the vector field $\normal$  is geodesic, i.e. $\nabla_{ \normal }  \normal=0$. Thus,   (\ref{calSperP2}) (c.f. (\ref{Spereeexp})) gives
\begin{align*}
\normal_{\mu} & \normal^{\nu} ( \nabla_{\nu}\Sper^{\mu}_{\phantom{\mu}\alpha\beta} ) 
= \normal^{\nu} \nabla_{\nu}  \left ( \normal_{\mu}\Sper^{\mu}_{\phantom{\mu}\alpha\beta} \right )
=   - \frac{1}{2} \normal^{\nu} \nabla_{\nu} \left  [ e^{-\nu} \left (  \normal(\omega) + \frac{2 \omega  \normal(\Q)}{\Q} \right ) \Sig_{\alpha\beta} 
\right ] \\
& = \normal^{\nu} \nabla_{\nu} \left  [ 
\sin^2 \theta \left ( 
\Q^2  \normal(\omega) + 2 \omega \Q  \normal(\Q)  \right ) 
(dt \otimes_s d \phi)_{\alpha\beta}  \right ]
\\
& = \sin^2 \theta \left [ 
\normal \left ( \Q^2  \normal(\omega) + 2 \omega \Q  \normal(\Q)  \right ) 
- \left ( 
\Q^2  \normal(\omega) + 2 \omega \Q  \normal(\Q)  \right ) \left ( \frac{1}{2} \normal (\nu) + \frac{\normal (\Q)}{\Q} \right ) \right ]
(dt \otimes_s d \phi)_{\alpha\beta}
\end{align*}
where we replaced  $\y = \omega e^{-\nu}$ and in the last equality we inserted  
\begin{align*}
\normal^{\nu} \nabla_{\nu} \nabla_{\alpha} t = - \frac{1}{2}  \normal(\nu) dt_{\alpha}, \quad \quad
\normal^{\nu} \nabla_{\nu} \nabla_{\alpha} \phi = - \frac{1}{\Q}  \normal(\Q) d\phi_{\alpha}, \quad \quad
\end{align*}
which follows by a simple computation. Consequently,
\begin{align*}
\left [
\normal_{\mu}  \normal^{\nu} ( \nabla_{\nu}\Sper^{\mu}_{\phantom{\mu}\alpha\beta} ) 
\right ]
& = 
[\normal \left ( \Q^2  \normal(\omega) + 2 \omega \Q  \normal(\Q)  \right )]
\left ( d \tau \otimes_s \axialfSph \right )_{ij} \nonumber \\
& = \alpha_2 [ \normal( \normal(\omega))] 
+  \omega|_{\Supfamp} [  \normal( \normal(\Q^2 ))] \left ( d \tau \otimes_s \axialfSph \right )_{ij}.
\end{align*}
With this and (\ref{nSperpera}), equation (\ref{GenMatchkappa}) is rewritten as
\begin{align}
& \lie_{\vecVtwo} \kappa_{ij} 
- D_i D_j \left ( [\hatQtwo]  + 2 \lieQone \right )
+  \left [ \hatQtwo \left ( - \normal^{\mu} \normal^{\nu} R_{\alpha\mu\beta\nu} e^{\alpha}_i e^{\beta}_j
+ \kappa_{il} \kappa^{l}_{j} \right ) \right ]  \nonumber \\
&  + 2 \lieQone 
\left . \big ( - \normal^{\mu} \normal^{\nu} R_{\alpha\mu\beta\nu} e^{\alpha}_i e^{\beta}_j
+ \kappa_{il} \kappa^{l}_{j} \big )   \right |_{\Supfamp^-}
- 2\lie_{ \zeta\,} \left ( \left (  \alpha_2 \normal (\omega) |_{\Supfamp} 
+ 2 \beta_2 \omega |_{\Supfamp}  \right ) d \tau \otimes_s \axialfSph\right )_{ij} \nonumber \\
& 
- \frac{1}{2} \left ( D_i [\Kperpernortan_j]  + D_j [\Kperpernortan_i] \right ) 
+ \frac{1}{2} \left [ (\lie_{\normal} \Kperper)_{\alpha\beta} e^{\alpha}_i
e^{\beta}_j \right ] - \frac{1}{2} [ \Kperpernornor] \kappa_{ij}  
 - 2 \Qone \alpha_2 [  \normal( \normal(\omega))] 
\left ( d \tau \otimes_s \axialfSph \right )_{ij}
\nonumber \\
&  - \Qone^2 \left [ 
                  \normal^{\mu} \normal^{\nu} \normal^{\delta} ( \nabla_{\delta} R_{\alpha\mu\beta\nu} ) e^{\alpha}_i e^{\beta}_j
                  + 2 \normal^{\mu} \normal^{\nu} R_{\delta\mu\alpha\nu} e^{\delta}_l e^{\alpha}_j \kappa^{l}_i
                  + 2 \normal^{\mu} \normal^{\nu} R_{\delta\mu\alpha\nu} e^{\delta}_l e^{\alpha}_i \kappa^{l}_j \right ] =0.
\label{GenMatchkappa.2}
\end{align}
where we have also used the
first order matching condition $\Qone [ \normal( \normal(\Q^2 ))] =0$.
So far we imposed no restriction on $\Kperper$. We now use (\ref{teo:sopert_tensor_W2}), which implies
\begin{align*}
  [ \Kperpernornor] = 4 [m], \quad \quad
[\Kperpernortan_i] = - 2  
D_i [ \Q e^{\lambda/2} f] 
\end{align*}
so that, in particular,
\begin{align}
- \frac{1}{2} \left ( D_i [\Kperpernortan_j]  + D_j [\Kperpernortan_i] \right ) 
= 2  D_i D_j [ \Q
e^{\lambda/2} f].
\label{D_itau_j}
\end{align}
We start by analyzing the $\{A, B\}$ component of
(\ref{GenMatchkappa.2}).
The backgroung spherical symmetry and the fact that 
$(\Phi^{\star}(\Kperper))_{AB}$ is proportional to $\gsph_{AB}$ implies 
\begin{align}
\D_A \D_B \left ( 
2 
[ \Q e^{\lambda/2} f] -   [\hatQtwo ]  - 2 \lieQone 
\right )  +  \L \gsph_{AB} =0
\label{HessEq}
\end{align}
for some function $\L$ that will be determined later.  This equation 
states that 
$\D_A ( 2 \Q
[e^{\lambda/2} f] -   [\hatQtwo ]  - 2 \lieQone)$
is a conformal Killing vector on the sphere.
The most general 
conformal Killing vector which, in addition, is a gradient
is a linear combination (with coefficients
that may depend of $\tau$) of $\D_A Y^a$. Hence, there exist three functions 
$s_a(\tau)$ such that
\begin{align*}
\D_A \left ( 2  \Q [e^{\lambda/2}  f] -   [\hatQtwo ]  - 2 \lieQone 
\right ) &
= s_a(\tau) \D_A Y^a \quad \quad
\Longleftrightarrow \\
& 2  \Q [e^{\lambda/2}  f]
 -   [\hatQtwo ]  - 2 \lieQone  =
s_a(\tau) Y^a + s_0(\tau),
\end{align*}
where $s_0(\tau)$ is a further integration ``constant''. Inserting (\ref{JumphatQtwo}) we finally arrive at
\begin{align}
\Q [e^{\lambda/2}  f] =  - \frac{\alpha_2}{\beta_2} [k] +
\frac{1}{2}
s_0(\tau) 
+ \frac{1}{2} \left ( s_a(\tau) + \frac{\alpha_2}{\beta_2} f_a^{(2)}(\tau)\right ) Y^a.
\label{jumpf}
\end{align}
This equation already provides relations between $s_a$ and $f_a^{(2)}$, but we will come to that later.
With this information, (\ref{HessEq}) reduces to
\begin{align}
\L =  s_a(\tau) Y^a.
\label{Lsm}
\end{align}
To find the explicit form of $\L$ (as well as for the rest of equations) we  
need $[(\lie_{\normal} \Kperper)_{\alpha\beta} e^{\alpha}_i e^{\beta}_j]$. 
It is convenient
to extend also $e^{\alpha}_i$ to a spacetime neighbourhood of $\Supfamp$ (the result being again independent of how the extension is made).
We make the natural
choice $e^{\alpha}_i \partial_{\alpha} = \partial_{x^i}$. 
The structure $\normal^{\mu} = \normal^r(r) \delta^{\mu}_r$  implies 
\begin{align*}
(\lie_{\normal} \Kperper)_{\alpha\beta} e^{\alpha}_i e^{\beta}_j
=  \normal \left ( {\Kperper}_{ij} \right ).
\end{align*}
Note that  ${\Kperper}_{ij}$ are spacetime scalars so 
their directional derivative is well-defined.
We  can now analyze the $\{ \tau,A\}$ component of
(\ref{GenMatchkappa.2}). The $i= \tau, j=A$ component of (\ref{D_itau_j}) is zero
because $\Q e^{\lambda/2} f |_{\Supfamp}$ does not depend on $\tau$.
Inserting the general identity (\ref{EasyIden}) and using the forms
of $\vecVtwo$, $[\Qtwo] + 2 \lieQone$ 
and $(\Kperper)_{\tau A}$, one finds
\begin{align*}
\left ( \beta_2 - \frac{\beta_1 \alpha_2}{\alpha_1} - \frac{\alpha_2}{\beta_2}
\right ) 
\dot{f}_a^{(2)} \D_A Y^a
+ \beta_2 \dot{q}_a^{(2)} \epsilon_{AB} \D^B Y^a 
+ \left (  \alpha_2 \normal (\omega) |_{\Supfamp} 
+ 2 \beta_2 \omega |_{\Supfamp}  \right ) 
 \zeta_0{}_b \epsilon^{b3}{}_a
\epsilon_{AB} \D^B Y^a 
& \\
- \left ( \lie_{ \zeta_0\,}  
\left (  \alpha_2 \normal (\omega) |_{\Supfamp} 
+ 2 \beta_2 \omega |_{\Supfamp}   \right ) +
\Big ( \beta_2 [\Wtwo] + \frac{1}{2} \alpha_2 [ \normal(\Wtwo)] \Big ) 
+ \Qone \alpha_2 
[ \normal (  \normal(\omega) ) |_{\Supfamp} ] \right ) \eta_A
 =0 &,
\end{align*}
where we used $\lie_{ \zeta_0} \bm{\eta}_A = -\zeta_0{}_b \epsilon^{b3}{}_a\epsilon_{AB} \D^B Y^a$.
By the hypotheses of the Proposition, the first factor in parenthesis is
non-zero, so linear independence of $\D_A Y^a$ and 
$\epsilon_{AB} \D^B Y^a$ implies firstly that
$\dot{f}^{(2)}_{a} =0$, which combined with the second in (\ref{jumph}) gives
\begin{align*}
f^{(2)}_a = 0 \quad (a=1,2), \quad \quad \quad
f_0^{(2)}= - \frac{\beta_1}{\beta_2} f_3^{(2)}, \quad \quad
f_3^{(2)} \mbox{constant},
\end{align*}
and secondly that, for $a=1,2$,
\begin{align*}
\beta_2 \dot{q}^{(2)}_a + 
\alpha_2 \normal(\omega) |_{\Supfamp} 
\zeta_0{}_b \epsilon^{b3}{}_a
+ 2 \beta_2 \omega |_{\Supfamp}
\zeta_0{}_b \epsilon^{b3}{}_a =0 \quad \quad 
\stackrel{(\ref{dotqm12})}{\Longleftrightarrow}
\quad \quad
\normal(\omega) |_{\Supfamp} 
\zeta_0{}_b \epsilon^{b3}{}_a =0,
\end{align*}
and for $a=3$, after inserting (\ref{dotqm3}),
\begin{align*}
[  \normal(\Wtwo)] = -2 \lie_{ \zeta_0} ( \normal(\omega)|_{\Supfamp}) -2 
\Qone [ \normal (  \normal(\omega) ) ].
\end{align*}
Observe that  the constancy of
$f^{(2)}_3$ and the vanishing of 
$f^{(2)}_1$, 
$f^{(2)}_2$, together  with  the fact that $f$ and $[k]$ are 
$\tau$- and $\varphi$-independent imply, via (\ref{jumpf}), that 
$s_a =0$ for $a=1,2$, and $s_0, s_3$ are both constant. With these
  restrictions, $\vecVtwo$ (from \eqref{ExpV2} and
    \eqref{Vtwotau}),  $[\hatQtwo]$ (from \eqref{JumphatQtwo}) and (\ref{jumpf}) become
\begin{align}
\vecVtwo & = 
\left ( c_0^{(2)} + c_1^{(2)} \tau  \right ) \partial_\tau + 
\left ( f_3^{(2)} \D^A Y^3 
+ ( b^{(2)}_a \tau + d^{(2)}_a) \epsilon^{AB} \D_B Y^a 
\right )
\partial_A, \label{JumpvecWtwo} \\
[ \hatQtwo] & = 2 \Q [ e^{\frac{\lambda}{2}} f] - 2 \lieQone
              - s_3 Y^3 - s_0,  \label{JumphatQtwobis} \\
 [k] & = - \frac{\beta_2}{\alpha_2} \Q [ e^{\lambda/2} f] 
+ \frac{\beta_2}{2 \alpha_2} s_0
+ \frac{1}{2} \left ( \frac{\beta_2}{\alpha_2} s_3 
+ f_3^{(2)} \right ) Y^3.
\label{jumpk}
\end{align}
The remaining equations involve the
curvature terms in the last line of 
(\ref{GenMatchkappa.2}).  With our extension $e_i = \partial_{x^i}$, it holds $[ n ,  e_i ] =0$, and then  \begin{align*}
\normal^{\alpha} \nabla_{\alpha} e^{\beta}_i |_{\Supfamp} = 
\kappa_i^{j} e^{\beta}_j.
\end{align*}
Using that the normal field $\normal$ is geodesic, as well as (\ref{nnRee}) and (\ref{hkappa}), we compute
\begin{align*}
                  \normal^{\mu} \normal^{\nu} \normal^{\delta} ( \nabla_{\delta} 
R_{\alpha\mu\beta\nu} ) e^{\alpha}_i e^{\beta}_j
& + 2 \normal^{\mu} \normal^{\nu} R_{\delta\mu\alpha\nu} e^{\delta}_l e^{\alpha}_j \kappa^{l}_i
 + 2 \normal^{\mu} \normal^{\nu} R_{\delta\mu\alpha\nu} e^{\delta}_l e^{\alpha}_i \kappa^{l}_j 
 \\  
& = \normal^{\delta} \nabla_{\delta}
\left ( 
\normal^{\mu} \normal^{\nu} R_{\alpha\mu\beta\nu}  e^{\alpha}_i e^{\beta}_j \right )
+  \normal^{\mu} \normal^{\nu} R_{\delta\mu\alpha\nu} e^{\delta}_l e^{\alpha}_j \kappa^{l}_i
         +  \normal^{\mu} \normal^{\nu} R_{\delta\mu\alpha\nu} e^{\delta}_l e^{\alpha}_i \kappa^{l}_j
  \\
 & = \left ( \normal (\B) +  
\frac{2\beta_1}{\alpha_1} \B \right )
\delta_i^{\tau} \delta_j^{\tau} + 
\left ( \normal (\A) + \frac{2 \beta_2}{\alpha_2} \A \right )
\gsph_{AB} \delta_i^{A} \delta_j^B,
\end{align*}
and therefore, given that $\Qone [\A ] =0$ and $\Qone [ \B] =0$, the last line
in (\ref{GenMatchkappa.2}) simplifies to
\begin{align*}
 \Qone^2 \left [ 
\normal^{\mu} \normal^{\nu} \normal^{\delta} ( \nabla_{\delta} 
R_{\alpha\mu\beta\nu} ) e^{\alpha}_i e^{\beta}_j
+ 2 \normal^{\mu} \normal^{\nu} R_{\delta\mu\alpha\nu} e^{\delta}_l e^{\alpha}_j \kappa^{l}_i
\right . & \left . 
+ 2 \normal^{\mu} \normal^{\nu} R_{\delta\mu\alpha\nu} e^{\delta}_l e^{\alpha}_i \kappa^{l}_j  \right ]  =  \\
& \Qone^2 [ \normal(\B)] \delta_{i}^{\tau} \delta_j^{\tau}
+  \Qone^2 [  \normal(\A)] \delta_i^{A} \delta_j^B \gsph_{AB}.
\end{align*}
We also need  $\kappa_{il} \kappa^{l}{}_j$, namely
\begin{align*}
\kappa_{il} \kappa^{l}{}_j=
\frac{\beta_1^2}{\alpha_1} \delta_i^{\tau} \delta_j^\tau
+ \frac{\beta_2^2}{\alpha_2} 
\delta_i^{A} \delta_j^B \gsph_{AB},
\end{align*}
and we can finally obtain the explicit form of $\L$ by collecting the appropriate ${A,B}$ terms
in (\ref{GenMatchkappa.2}) (all except for the two Hessians):
\begin{align*}
\L =&  
- 2 \beta_2 f_a^{(2)} Y^a
+ \left [ \hatQtwo \left ( - \A + \frac{\beta_2^2}{\alpha_2} \right ) \right ]
+ 2 \lieQone
\left ( - \A^- + \frac{\beta_2^2}{\alpha_2} \right ) \\
& + 2 \alpha_2 [  \normal( k)]  + 4 [k] \beta_2
- 2 [m] \beta_2
- \Qone^2 [  \normal(\A)],
\end{align*}
where for any quantity $a$ we set  $a^-\defi a|_{\Supfamp^-}$.
Hence (\ref{Lsm}), and the properties we have found for $f_a^{(2)}$
and $s_a$ yield
\begin{align}
[m] = &  2 [k] - \frac{1}{2 \beta_2} \Qone^2 
[  \normal(\A)] + \frac{\alpha_2}{\beta_2} [  \normal(k)] \nonumber \\
& - \left (f_3^{(2)} + \frac{s_3}{2\beta_2} \right ) Y^3
+ \frac{1}{2} \left [ \hatQtwo \left ( - \frac{\A}{\beta_2} 
+ \frac{\beta_2}{\alpha_2} \right ) \right ]
+ \lieQone
\left ( - \frac{\A^-}{\beta_2} + \frac{\beta_2}{\alpha_2} \right ) , \label{jumpm}
\end{align}
The last step is to impose the $\{\tau, \tau\}$ component 
of   (\ref{GenMatchkappa.2}). Using
$\Vtwo^{\tau} = c_0^{(2)} 
+ c_1^{(2)} \tau$ (see \eqref{JumpvecWtwo})
and the fact that $[\hatQtwo] + 2  \lieQone$ is
$\tau$-independent (see (\ref{JumphatQtwobis})), this $\{\tau, \tau\}$
component is
\begin{align*}
2 \beta_1 c_1^{(2)} 
+ \left [ \hatQtwo \left ( - \B + \frac{\beta_1^2}{\alpha_1} \right ) \right ]
& + 2 (\lie_{ \zeta\,} \Qone)
\left ( - \B^- + \frac{\beta_1^2}{\alpha_1} \right ) 
\\
& + 2 \alpha_1 [ \normal(h)] 
+ 4 \beta_1 [h]
-2 [m] \beta_1 - \Qone^2 [  \normal(\B)] =0,
 \end{align*}
where we also used $[\normal ( \Q^2 \omega^2)] =0$. Solving
for $[ \normal(h)]$ and inserting $[m]$ from (\ref{jumpm}) and
$[h]$ from (\ref{jumph}) one finds 
\begin{align*}
[\normal(h) ]  = & \frac{\beta_1\alpha_2}{\alpha_1 \beta_2}[\normal(k)]
+ \frac{\beta_1}{\alpha_1} \left ( 1 - \frac{\beta_1 \alpha_2}{\alpha_1 \beta_2 } 
\right ) \left ( 2 [k] - f_3^{(2)} Y^3 \right )
- \frac{\beta_1}{2 \alpha_1 \beta_2} s_3 Y^3 \\
& + \frac{\beta_1}{2 \alpha_1} \left ( \left [ 
\hatQtwo \left (
-\frac{\A}{\beta_2}+\frac{\B}{\beta_1}+\frac{\beta_2}{\alpha_2}-\frac{\beta_1}{\alpha_1}\right) \right ]
+ 2\, \zeta(\Qone) \left(-\frac{\A^-}{\beta_2}+\frac{\B^-}{\beta_1}+\frac{\beta_2}{\alpha_2}-\frac{\beta_1}{\alpha_1}\right)
\right .
 \\
& + \left . \Qone^2 \left [ - \frac{\normal(\A)}{\beta_2} + \frac{\normal(\B)}{\beta_1}  \right ] 
\right ). 
\end{align*}
This concludes the process of solving the second order matching 
conditions in the $g$-gauge. We put together the results 
and restore the $g's$ and $hg's$:
\begin{align}
  [ \Wtwo^g ]   = &    b^{(2)}_3 - 2  \zeta_0 (\omega^g |_{\Supfamp}), \nonumber \\
  [ \normal(\Wtwo)^g]  = & -2 \zeta_0 ( \normal(\omega^g) |_{\Supfamp})-2\Qone^{hg}
                      [ \normal( \normal(\omega^g))],\nonumber\\
 [k^g]  = &  - \frac{\beta_2}{\alpha_2} \Q[ e^{\lambda/2}  f^g] +
\frac{\beta_2}{2 \alpha_2} s_0 
+ \frac{1}{2} \left ( \frac{\beta_2}{\alpha_2} s_3 + 
f_3^{(2)} \right ) Y^3, \label{jumpkg} \\
 [h^g]   = &  - \frac{1}{2 } c^{(2)}_1+  \frac{\beta_1 \alpha_2}{2 \alpha_1\beta_2} \left (  2 [k^g] 
- f_3^{(2)} Y^3 \right ) ,    \nonumber\\
[m^g] = &  2 [k^g] - \frac{1}{2 \beta_2} (\Qone^{hg})^2 
[  \normal(\A)] + \frac{\alpha_2}{\beta_2} [  \normal(k)] \nonumber \\
& - \left (f_3^{(2)} + \frac{s_3}{2\beta_2} \right ) Y^3
+ \frac{1}{2} \left [ \hatQtwo^{hg} \left ( - \frac{\A}{\beta_2} 
+ \frac{\beta_2}{\alpha_2} \right ) \right ]
+ \zeta(\Qone^{hg})
 \left ( - \frac{\A^-}{\beta_2} + \frac{\beta_2}{\alpha_2} \right )  \label{jumpmg} \\
[\normal(h^g) ]  = &  \frac{\beta_1\alpha_2}{\alpha_1 \beta_2}[\normal(k^g)]
+ \frac{\beta_1}{\alpha_1} \left ( 1 - \frac{\beta_1 \alpha_2}{\alpha_1 \beta_2 } 
\right ) \left ( 2 [k^g] - f_3^{(2)} Y^3 \right )
- \frac{\beta_1}{2 \alpha_1 \beta_2} s_3 Y^3 \nonumber \\
& + \frac{\beta_1}{2 \alpha_1} \left ( \left [ 
\hatQtwo^{hg} \left (
-\frac{\A}{\beta_2}+\frac{\B}{\beta_1}+\frac{\beta_2}{\alpha_2}-\frac{\beta_1}{\alpha_1}\right) \right ]
+ 2 \zeta(\Qone^{hg}) \left(-\frac{\A^-}{\beta_2}+\frac{\B^-}{\beta_1}+\frac{\beta_2}{\alpha_2}-\frac{\beta_1}{\alpha_1}\right)
\right .
 \nonumber \\
& + \left . (\Qone^{hg})^2 \left [ - \frac{\normal(\A)}{\beta_2} + \frac{\normal(\B)}{\beta_1}  \right ] 
\right ), \label{jumpnhg} \\
[\vecTtwo^{hg}] = &
\left ( c_0^{(2)} + c_1^{(2)} \tau  \right ) \partial_\tau + 
\left ( f_3^{(2)} \D^A Y^3 
+ ( b^{(2)}_m \tau + d^{(2)}_m) \epsilon^{AB} \D_B Y^m 
\right )
\partial_A 
+ 2 \Qone^{hg} \kappa( \zeta) + D_{ \zeta\, } {\zeta}  \label{jumpTtwo} \\
[ \hatQtwo^{hg}] = & 2 \Q [ e^{\lambda/2} f^g] - 2 \zeta (\Qone^{hg} )
- s_3 Y^3 - s_0,  \label{jumpQhat2} 
\end{align}
where $f_3^{(2)}, c_0^{(2)}, c_1^{(2)}, b^{(2)}_1, b^{(2)}_2, b^{(2)}_3, d^{(2)}_1,
d^{(2)}_2, d^{(2)}_3, s_0, s_3$ are
constants, the spherical Killing $\zeta_0$
decomposes as $\zeta_0 = \zeta_0{}_a  \eta^{\,a}$  and $\{\zeta_0{}_a\}$ and
$\{ b^{(2)}_a\}$ are related by 
\begin{align*}
b^{(2)}_a + 2 \omega^g |_{\Supfamp} \zeta_0{}_b \epsilon^{b3}{}_a =0 \quad \quad (a=1,2).
\end{align*}
Observe that the constant $s_0$ appears only in (\ref{jumpkg}) and
(\ref{jumpQhat2}), accompanying the term $[e^{\lambda/2} f^g]$.
This reflects the fact that $f$ is defined up to an arbitrary 
additive function that can be different on both sides of $\Supfamp$.
The arbitrary difference $[e^{\lambda/2} (f^g - f) ]$ can thus be combined with
$s_0$ to produce a single constant. This will be used in the redefinition of $s_0$ below.
The constants $c_0^{(2)}$, $d_a^{(2)}$ appear
only in $[\vecTtwo^{hg}]$ and state that $[\vecTtwo^{hg}]$ is defined
up to an additive Killing vector $\zeta^{(2)} := c_0^{(2)} \partial_{\tau}
+ d_a^{(2)} \epsilon^{AB} \D_B Y^a \partial_A$ of
$(\Supfamp,h_{ij})$.

We can now apply the gauge relations described in Lemma \ref{b1=0}
to rewrite these conditions in the original gauge. We introduce the redefinitions
of constants (see Remark \ref{rede} below)
\begin{align}
  &f_3^{(2)}\to -H_1,\quad b^{(2)}_3 \to  D_3,\quad
c_1^{(2)}\to -H_0,\nonumber\\
&s_0\to 2\frac{\alpha_2}{\beta_2}c_0+2\Q[e^{\lambda/2} (f^g - f) ],\quad
s_3 \to  \frac{\alpha_2}{\beta_2}(2c_1+H_1) \label{redefinition}
\end{align}
and use the explicit expression \eqref{defsalphabeta} 
for $\alpha_1, \alpha_2, \beta_1, \beta_2$
which imply
\begin{align*}
\frac{\beta_1 \alpha_2}{\alpha_1 \beta_2 } = \frac{\Q \normal(\nu)}{2 \normal(\Q)}, 
\quad \quad \quad
\frac{\beta_2}{\alpha_2}  =  \frac{\normal(\Q)}{\Q},
\quad \quad \quad
\frac{\beta_1}{\alpha_1} = \frac{1}{2} \normal(\nu),
\end{align*}
and the first four equations yield (\ref{PropWtwo})-(\ref{Proph}) immediately.
From Lemma \ref{b1=0} we have 
$\Qone^{hg}{}^{\pm} = \Qone^{\pm} = \Qone$, $\hatQtwo^{hg}{}^+ = \hatQtwo^+$ and
$\hatQtwo^{hg}{}^- = \hatQtwo^- - 2 b_1 \tau \eta(\Qone)$, while
Proposition \ref{teo:first_order_matching} states
$[\vecTone] = 
  b_1 \tau \eta + \zeta$. Recalling the definition (\ref{defor})
it is immediate that (\ref{jumpQhat2}) is equivalent to (\ref{jumpdefor}). 
Next, we use the identity $[ab] = a^+ [b] + [a] b^-$ 
(valid for any $a,b$) to compute, for an arbitrary
quantity $\P$,
\begin{align*}
[\hatQtwo^{hg} \P] + 2 \zeta(\Qone^{hg}) \P^- & =
[\hatQtwo \P] + 2 b_1 \tau \eta(\Qone) \P^-
+ 2 \zeta(\Qone) \P^- \\
& = [ ( \defor + 2 \Q  e^{\lambda/2} f )\, \P]
- 2 [ \vecTone(Q_1) \, \P]
+ 2 ( \zeta(\Qone) + b_1 \tau \eta(\Qone) ) \P^-  \\
& = [ ( \defor + 2 \Q  e^{\lambda/2} f )\, \P]
- 2 \vecTone^+(Q_1) [\P].
\end{align*}
This identity applied respectively to
\begin{align*}
\P = - \frac{\A}{\beta_2} + \frac{\beta_2}{\alpha_2}
\quad \quad \quad \quad \mbox{and} \quad \quad \quad \quad
\P = - \frac{\A}{\beta_2} + \frac{\B}{\beta_1}+\frac{\beta_2}{\alpha_2}
- \frac{\beta_1}{\alpha_1}
\end{align*}
transforms (\ref{jumpmg}) into
into (\ref{Propm})
and (\ref{jumpnhg}) into
(\ref{Propnh}),
after using that  $\vecTone^+(\Qone) [\P] = \vecTone^+ (\Qone [\P]) =0$,
which follows form the constancy of $[\P]$  on $\Supfamp$ and
(\ref{teo:Q1_matching_R}).
To conclude the proof, note that
(\ref{jumpTtwo}) simply determines $\vecTtwo^+$ in terms
of $\vecTtwo^-$. Neither term appears in the rest of
expressions, so this condition poses no additional
restriction to the matching.
\fin
\end{proof}
\begin{remark}
\label{rede}
    The redefinition of constants (\ref{redefinition}) at the end of the proof
    has been done so that the result matches the expressions found and used in 
    \cite{ReinaVera2015}.
\end{remark}

\section{Basic  analytic lemmas}
\label{App:Lemmas}
We use the notation, conventions and definitions of elliptic operators in \cite{Gilbarg}. Specifically, $\domain$ denotes a 
domain of $\mathbb{R}^n$ (i.e. a connected open subset). As usual $\partial \domain$
denotes its topological boundary and $\overline{\domain}$ its closure.
A second order operator
 $L = a^{ij}(x)\frac{\partial^2  }{\partial x_ix_j} + b^i(x) \frac{\partial  }{\partial x_i}  + c(x)$, $a^{ij}(x) = a^{ji}(x)$,
defined on $\domain$ is
uniformly elliptic if the lowest eigenvalue $\lambda(x)$ and largest eigenvalue $\Lambda(x)$ satisfy that $\lambda$ is positive
and
$\Lambda/\lambda$ is bounded  on $\domain$. 
At points $x \in \partial \domain$ where the outer normal exists, this will be denoted by $\partial_{\nu}$. 

We need the following version of the boundary point lemma and maximum principle. 

\begin{lemma}[\bf Boundary point lemma]
		Suppose that $L$ is uniformly elliptic, $u \in C^2(\domain)$ and $Lu \geq 0$ in $\domain$. Let $x_0 \in \partial \domain$ be such that
\begin{enumerate}
	\item $u$ is continuous at $x_0$ and $u(x_0) \geq 0$.
	\item $u(x_0) > u(x)$ for all $x \in \domain$,
	\item $\partial \domain$ satisfies an interior sphere condition at $x_0$ (i.e. there exists a ball $B \subset \domain$ with $x_0 \in \partial B$).
\item $c \leq 0$ and $|c|/\lambda$, $|b^i|/\lambda$ are bounded in $B$.
\end{enumerate}
Then the outer normal derivative of $u$ at $x_0$, if it exists, satisfies the strict inequality 
\begin{equation}
	  \partial_\nu u (x_0) > 0.
\end{equation}
\finn
\end{lemma}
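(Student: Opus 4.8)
The plan is to prove this by the classical barrier (comparison function) argument of Hopf, as in \cite[Lemma~3.4]{Gilbarg}, paying attention to the fact that the four hypotheses on $L$ are precisely what the barrier estimate requires. First I would reduce to a convenient interior ball: by hypothesis~(3) there is a ball $B_R(y)\subset\domain$ with $x_0\in\partial B_R(y)$; replacing it by $B^{\ast}:=B_{R/2}(y^{\ast})$ with $y^{\ast}:=\tfrac12(x_0+y)$ we still have $B^{\ast}\subset\domain$ and $x_0\in\partial B^{\ast}$, and now $\overline{B^{\ast}}\setminus\{x_0\}\subset\domain$ because $\partial B^{\ast}\subset\overline{B_R(y)}$ meets $\partial B_R(y)$ only at $x_0$ (and hypothesis~(4) still holds on the smaller ball). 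Relabelling $B^{\ast}$ as $B=B_R(y)$, I work on the closed annulus $A:=\overline{B}\setminus B_{R/2}(y)$, whose interior lies in $\domain$ and whose closure meets $\partial\domain$ only at $x_0$, and introduce the barrier
\begin{equation*}
  v(x):=e^{-\alpha\,|x-y|^2}-e^{-\alpha R^2},
\end{equation*}
which satisfies $v\ge 0$ on $A$, $v\equiv 0$ on $\partial B$ (so $v(x_0)=0$), and $v>0$ on $\{|x-y|=R/2\}$.

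The main step is to choose $\alpha$ so large that $Lv>0$ on $A$. With $r:=|x-y|$, a direct differentiation gives
\begin{equation*}
  Lv=e^{-\alpha r^2}\Bigl(4\alpha^2\,a^{ij}(x)(x_i-y_i)(x_j-y_j)-2\alpha\,a^{ii}(x)-2\alpha\,b^i(x)(x_i-y_i)\Bigr)+c(x)\,v(x).
\end{equation*}
On $A$ one has $a^{ij}(x_i-y_i)(x_j-y_j)\ge\lambda(x)\,r^2\ge\lambda(x)R^2/4$, $a^{ii}(x)\le n\,\Lambda(x)$, and $0\le v\le e^{-\alpha r^2}$, so dividing by $\lambda(x)e^{-\alpha r^2}>0$ and using uniform ellipticity (to bound $\Lambda/\lambda$), the boundedness of $|b^i|/\lambda$ and $|c|/\lambda$ on $B$, and $c\le 0$ (which yields $c\,v\ge c\,e^{-\alpha r^2}$), I obtain a lower bound of the form $Lv/(\lambda e^{-\alpha r^2})\ge\alpha^2 R^2-C_1\alpha-C_2$ with $C_1,C_2$ independent of $\alpha$; taking $\alpha$ large makes this positive, hence $Lv>0$ on $A$.

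Next I combine $u$ with the barrier. Since $\{|x-y|=R/2\}$ is a compact subset of $\domain$, hypothesis~(2) gives $\sup_{|x-y|=R/2}u<u(x_0)$, so there is $\varepsilon>0$ with $w:=u-u(x_0)+\varepsilon v\le 0$ there; on $\partial B$ we have $v=0$ and $u-u(x_0)\le 0$ (by~(2) and continuity of $u$ at $x_0$), so $w\le 0$ on all of $\partial A$, and $w$ is continuous on $\overline{A}$ since $\overline{A}\setminus\{x_0\}\subset\domain$. Moreover $L\bigl(u(x_0)\bigr)=c\,u(x_0)\le 0$ because $c\le 0$ and $u(x_0)\ge 0$, hence $Lw=Lu-c\,u(x_0)+\varepsilon Lv\ge\varepsilon Lv>0$ in the interior of $A$. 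The weak maximum principle for operators with non-positive zeroth-order coefficient then forces $w\le 0$ on $\overline{A}$, while $w(x_0)=0$; thus $x_0$ is a maximum of $w$ on $\overline{A}$. Finally, if the exterior normal derivative exists at $x_0$, then with $\nu=(x_0-y)/R$ one has $w(x_0-t\nu)\le 0=w(x_0)$ for small $t>0$, so $\partial_\nu w(x_0)\ge 0$, i.e.
\begin{equation*}
  \partial_\nu u(x_0)\ge-\varepsilon\,\partial_\nu v(x_0)=2\alpha R\,\varepsilon\,e^{-\alpha R^2}>0,
\end{equation*}
which is the asserted strict inequality.

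The only genuinely delicate point is the barrier estimate $Lv>0$: one must divide through by $\lambda$ and check that the $\alpha^2$-term really dominates, and it is exactly here that all four hypotheses on $L$ enter (uniform ellipticity for the ratio $\Lambda/\lambda$, and boundedness of $|b^i|/\lambda$, $|c|/\lambda$ together with $c\le 0$ for the lower-order and zeroth-order terms). The reduction of the interior ball, the comparison via the weak maximum principle, and the one-sided derivative estimate are routine.
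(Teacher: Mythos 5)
Your proof is correct, and it is essentially the same argument the paper relies on: the paper gives no independent proof but simply remarks that the barrier construction of Lemma 3.4 in Gilbarg--Trudinger establishes this version, and your write-up is precisely that construction, with the hypotheses $c\le 0$, $u(x_0)\ge 0$, and the boundedness of $\Lambda/\lambda$, $|b^i|/\lambda$, $|c|/\lambda$ entering exactly where they must. No gaps.
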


Although not stated in this form in \cite{Gilbarg}, the proof
of Lemma 3.4 in \cite{Gilbarg} also establishes this version.
Concerning the next result, its validity
is explicitly stated in a remark after Theorem 3.5 in \cite{Gilbarg}. 

\begin{theorem}[\bf Strong maximum principle]
Let $L$ be uniformly elliptic on a  domain $\domain$ and $u \in C^2(\domain)$ satisfy  $Lu \geq 0 \,\, (\leq 0)$. Assume $c \leq 0$ and $|c|/\lambda$, $|b|/\lambda$
are locally bounded in $\domain$. Then $u$ cannot achieve a non-negative maximum (non-positive minimum) in the interior of $\domain$ unless it is constant.\finn
\end{theorem}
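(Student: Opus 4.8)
The plan is to deduce the strong maximum principle from the boundary point lemma stated immediately above, by the classical connectedness argument (cf. \cite{Gilbarg}). It suffices to treat the case $Lu \geq 0$ with a non-negative interior maximum: the case $Lu \leq 0$ with a non-positive interior minimum follows at once upon replacing $u$ by $-u$, since $L(-u)$ has the same coefficients and hence the same sign condition $c \leq 0$ and the same local boundedness of $|c|/\lambda$, $|b^i|/\lambda$.

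So I would argue by contradiction: assume $u$ is not constant on $\domain$ and attains its maximum $M := \sup_{\domain} u \geq 0$ at an interior point. Set $\domain^- := \{x \in \domain : u(x) < M\}$. By continuity $\domain^-$ is open; it is non-empty because $u$ is non-constant; and it is a proper subset of $\domain$ because the maximum is attained. Since $\domain$ is connected, $\domain^-$ cannot be relatively closed in $\domain$, so $\partial \domain^- \cap \domain \neq \emptyset$. The one geometric step is standard: choose $y \in \domain^-$ with $\mathrm{dist}(y,\partial\domain^-) < \mathrm{dist}(y,\partial\domain)$ (possible since there are points of $\domain^-$ arbitrarily close to $\partial\domain^-\cap\domain$) and let $B$ be the largest open ball centered at $y$ with $B \subset \domain^-$. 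Then $\overline B \subset \domain$, one has $u < M$ throughout $B$, and there is a point $z \in \partial B$ with $u(z) = M$; in particular $z$ is an interior point of $\domain$.

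I would then apply the boundary point lemma to $u$ on the domain $B$ at the boundary point $z$: $L$ is uniformly elliptic and $u \in C^2(B)$ with $Lu \geq 0$ on $B$; $u$ is continuous at $z$ with $u(z) = M \geq 0$; $u(z) > u(x)$ for every $x \in B$ since $B \subset \domain^-$; $\partial B$ satisfies an interior sphere condition at $z$ (take a ball internally tangent at $z$); and $c \leq 0$ with $|c|/\lambda$, $|b^i|/\lambda$ bounded on $B$, because $\overline B$ is a compact subset of $\domain$ and these quotients are locally bounded there. The lemma gives $\partial_\nu u(z) > 0$ whenever the outer normal derivative at $z$ exists. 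But $z$ is an interior point of $\domain$ at which the $C^2$ function $u$ attains its global maximum, so $\nabla u(z) = 0$ and hence $\partial_\nu u(z) = 0$ — a contradiction. Therefore $u$ is constant. The argument is routine once the boundary point lemma is in hand; the only point needing care is the selection of $y$ and of the maximal ball $B$, arranged so that simultaneously $\overline B \subset \domain$ (so that local boundedness of the lower-order coefficients upgrades to boundedness on $B$, as the boundary point lemma requires) and $\partial B$ meets $\{u = M\}$ before meeting $\partial\domain$ — and checking that the resulting touching point $z$ is genuinely interior to $\domain$. I expect that to be the only mildly delicate part.
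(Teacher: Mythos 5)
Your proof is correct and is precisely the classical connectedness/Hopf argument: the paper itself gives no proof of this theorem, deferring instead to the remark following Theorem 3.5 in \cite{Gilbarg}, and the argument you reconstruct (reduce to the $Lu\geq 0$ case, take the maximal ball in $\{u<M\}$ with closure in $\domain$, apply the boundary point lemma at the touching point, and contradict $\nabla u=0$ at an interior maximum) is exactly the proof given there. The points you flag as delicate — arranging $\overline{B}\subset\domain$ so that local boundedness of $|c|/\lambda$ and $|b^i|/\lambda$ upgrades to boundedness on $B$, and checking the touching point is interior — are handled correctly.
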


We shall use these results in a  very simple context, namely for second order ODE operators. We consider two types of intervals
$I^+= (0,\valA )$ ($\valA  >0$) and $I^- = (\valA ,\infty)$. In both cases the  interior sphere condition is obviously satisfied. We use $\r$
to denote the real coordinate of $I^{+}$ and $I^{-}$. The outer normal derivative at $\r=\valA $
is obviously $\partial_{\r}$ for $I^+$ and $\partial_{\r}$ for $I^-$.

 The first result we need is the following (the proof
is an essentially trivial consequence of the previous results, but we include it for completeness)

\begin{lemma}
\label{LemmaIp}
On $I^+=(0,\valA )$, let  $L^+$ be
\begin{equation}
	L^+:= \frac{d^2}{d\r^2} + b^+ (\r) \frac{d}{d\r} + c^+ (\r) ,
\end{equation}
where $|b^+(r)|$ and $|c^+(r)|$ are locally bounded in $(0,\valA ]$. 
Let $f \in C^2(I^+) \cap C^0(\overline{I^+}) \cap C^1((0,\valA ])$ satisfy $L^+ f = 0$.
Assume $c^{+}(\r) \leq 0$. Then,
\begin{itemize}
	\item[(i)] $f(\valA )> f(0) \geq 0 \quad \Longrightarrow \quad \partial_\r f (\valA ) >0$,
	\item[(ii)]  $f(\valA )<f(0) \leq 0 \quad \Longrightarrow \quad \partial_\r f (\valA ) <0$,
	\item[(iii)] $f(\valA )= f(0)=0 \quad \Longrightarrow \quad f(\r) = 0 \quad \forall \r \in  \overline{I^+}$.
\end{itemize}
\end{lemma}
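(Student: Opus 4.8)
The plan is to apply the boundary point lemma and the strong maximum principle (both quoted above) to the operator $L^+$ on $I^+ = (0,a)$, handling the three implications in turn. The key structural facts I would use are: $f$ is $C^2$ on the open interval and continuous up to both endpoints, $L^+f = 0$ (hence trivially $L^+ f \ge 0$ and $L^+f \le 0$), $c^+ \le 0$, and the coefficients $|b^+|, |c^+|$ are locally bounded on $(0,a]$ — which in particular gives local boundedness of $|b^+|/\lambda$ and $|c^+|/\lambda$ near $r = a$ since $\lambda \equiv 1$ here. The interior sphere condition at $r = a$ is automatic for an interval (take the ball $(a-\varepsilon, a)$), and the outer normal derivative at $r=a$ is $\partial_r$.

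First I would prove (iii): suppose $f(a) = f(0) = 0$. If $f$ is not identically zero, then since $f$ is continuous on the compact $\overline{I^+}$ it attains a maximum $M$ and a minimum $\mu$ there; at least one of $M > 0$ or $\mu < 0$ holds (otherwise $f \equiv 0$). Say $M > 0$; then because $f$ vanishes at both endpoints, this positive maximum is attained at some interior point, contradicting the strong maximum principle applied with $Lu = L^+ f \ge 0$ (the hypotheses $c^+ \le 0$, local boundedness of $|b^+|/\lambda, |c^+|/\lambda$ on the interior are met). Symmetrically, $\mu < 0$ is excluded using $L^+ f \le 0$. Hence $f \equiv 0$ on $\overline{I^+}$.

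Next, for (i): assume $f(a) > f(0) \ge 0$. I claim $f(a) > f(r)$ for all $r \in I^+$. Indeed, $f$ attains its maximum on $\overline{I^+}$; if that maximum were attained at an interior point $r_0$, it would be a non-negative maximum (since $f(a) > 0$), contradicting the strong maximum principle unless $f$ is constant — but $f$ constant is incompatible with $f(a) > f(0)$. The maximum is therefore attained only on the boundary, and since $f(a) > f(0)$ it is attained only at $r = a$. Now all hypotheses of the boundary point lemma are in place at $x_0 = a$: $f$ continuous at $a$ with $f(a) \ge 0$, $f(a) > f(r)$ on $I^+$, interior sphere condition, and $c^+ \le 0$ with $|c^+|/\lambda, |b^+|/\lambda$ bounded near $a$. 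The lemma yields $\partial_r f(a) > 0$. Case (ii) is obtained from (i) by applying the same argument to $-f$, which satisfies $L^+(-f) = 0$ and $(-f)(a) > (-f)(0) \ge 0$; this gives $\partial_r(-f)(a) > 0$, i.e. $\partial_r f(a) < 0$.

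I do not expect any serious obstacle here: the statement is essentially a textbook packaging of the Hopf boundary lemma and the strong maximum principle specialized to an ODE on an interval. The only point needing a small amount of care is ensuring that the one-sided differentiability hypothesis $f \in C^1((0,a])$ is exactly what the boundary point lemma needs (it asks only that $\partial_\nu u(x_0)$ exists), and that "locally bounded in $(0,a]$" indeed delivers boundedness on the ball $B = (a-\varepsilon,a)$ used in the lemma — both are immediate. A secondary subtlety is that nothing is assumed about the behaviour of $f$ or the coefficients as $r \to 0^+$; but this is harmless because every application is at the endpoint $r = a$ and uses only interior or near-$a$ information, with continuity of $f$ at $0$ entering purely through the comparison of boundary values.
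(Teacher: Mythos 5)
Your proof is correct and follows essentially the same route as the paper: the strong maximum principle forces the maximum of $f$ to be attained only at $r=\valA$ in case (i), the boundary point lemma then gives $\partial_\r f(\valA)>0$, case (ii) follows by applying this to $-f$, and case (iii) is a direct consequence of the strong maximum principle. The only difference is that you spell out the localization of the maximum and the sign argument in (iii) in slightly more detail than the paper does, which is harmless.
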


\begin{proof} $L^+$ is obviously uniformly elliptic with $\lambda =1$. Note, in particular that $|c^+|/\lambda$ and $|b^+|/\lambda$ are locally bounded in
$I^+$.
Consider first the case $f(\valA )>f(0) \geq 0$. Since $f$ is non-constant, the strong maximum principle implies that 
the supremum of $f$ in $\overline{I^+}$ is $f(\valA )$ and it is achieved only at $\r =\valA $. By
the boundedness of $c$ and $|b^i|$ on $(0,\valA ]$ we may apply the boundary point lemma at $\r = \valA $ 
to conclude $\partial_{\r} f(\valA ) > 0$.
The case (ii) follows from (i) when applied to $-f$. Finally, when $f(\valA )=f(0)=0$ the strong maximum principle implies $f(\r)=0$
immediately.
\fin
\end{proof}

\begin{lemma}
\label{LemmaRot}
In the setting of Lemma \ref{LemmaIp} assume further that
$c^+(r)$ is not identically zero, $f(0) \geq 0$ and $\partial_rf(0)=0$. Then
$f(\valA)>f(0)$ and $\partial_r f(\valA)>0$.
\end{lemma}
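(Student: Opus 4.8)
The statement to be proven is Lemma \ref{LemmaRot}: under the hypotheses of Lemma \ref{LemmaIp} plus the extra assumptions that $c^+(r)$ is not identically zero, $f(0)\geq 0$ and $\partial_r f(0)=0$, one concludes $f(\valA)>f(0)$ and $\partial_r f(\valA)>0$. The plan is to argue by contradiction using the strong maximum principle of Appendix \ref{App:Lemmas}, and then to upgrade the resulting non-strict inequalities to strict ones via the boundary point lemma, exactly as in Lemma \ref{LemmaIp}.

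\textbf{Step 1: $f$ cannot attain a non-negative interior maximum.} First I would observe that $L^+$ is uniformly elliptic with $\lambda=1$, and $|c^+|/\lambda$, $|b^+|/\lambda$ are locally bounded in $I^+$, so the strong maximum principle applies on any subinterval compactly contained in $(0,\valA)$, and more carefully on $(0,\valA]$ using one-sided boundary behaviour. Since $c^+\leq 0$ and $c^+\not\equiv 0$, if $f$ were identically zero this would contradict $L^+f=0$ only trivially, so I must rule out $f\equiv 0$ differently: if $f\equiv 0$ then certainly $f(\valA)=f(0)=0$, contradicting the desired conclusion $f(\valA)>f(0)$; but wait — I actually need to \emph{prove} $f(\valA)>f(0)$, so I cannot assume $f\not\equiv0$. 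Here the key point is that $c^+\not\equiv 0$ forces $f\not\equiv 0$ automatically is false; rather, the correct logic is: \emph{suppose for contradiction} that $f(\valA)\leq f(0)$. Combined with $\partial_r f(0)=0$, I want to derive that $f$ is constant and then that the constant is zero, contradicting $c^+\not\equiv 0$.

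\textbf{Step 2: the contradiction argument.} Assume $f(\valA)\leq f(0)$. I would consider the maximum $M$ of $f$ over the closed interval $\overline{I^+}=[0,\valA]$, which exists by continuity. If $M$ is attained at an interior point $r_0\in(0,\valA)$, then since $M\geq f(0)\geq 0$, the strong maximum principle forces $f\equiv M$ on $[0,\valA]$, hence $L^+f = c^+ M = 0$ on $I^+$; since $c^+\not\equiv 0$ this gives $M=0$, so $f\equiv 0$, whence $f(\valA)=f(0)=0$ and also $c^+\cdot 0 = 0$ is consistent — but then I still need the contradiction. The resolution: if $f\equiv 0$, the conclusions $f(\valA)>f(0)$ and $\partial_r f(\valA)>0$ fail, so I need to show $f\equiv 0$ is \emph{impossible} under the hypotheses. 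It is not impossible from $L^+f=0$ alone, so the statement as literally phrased must be using some hidden input — and indeed, in the application (proof of Lemma \ref{lemma:g}) the function $g^+$ has $g^+(0)=1\neq 0$. I suspect the intended reading is that the hypothesis $f(0)\geq 0$ together with $c^+\not\equiv0$ is meant to be applied with $f(0)>0$, OR the lemma implicitly assumes $f$ is a non-trivial solution. I would therefore add the (clearly intended) standing assumption that $f$ is not identically zero, or equivalently work out that the only way to have $f(\valA)\leq f(0)$ is $f\equiv\text{const}$, and handle the constant case: if $f\equiv k$ then $c^+k=0$ on $I^+$, and $c^+\not\equiv 0$ gives $k=0$; so under $f\not\equiv 0$ we get a contradiction, establishing $f(\valA)>f(0)$. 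If instead $M$ is attained only at endpoints: $M=f(0)$ (using $f(\valA)\leq f(0)$), so $f(0)\geq f(r)$ for all $r$, with $f(0)>f(r)$ for $r\in(0,\valA]$ unless $f$ is constant. Apply the boundary point lemma at the endpoint $r=0$ with outer normal $-\partial_r$: this would give $-\partial_r f(0)>0$, i.e. $\partial_r f(0)<0$, contradicting $\partial_r f(0)=0$. Hence this case cannot occur either, completing the proof that $f(\valA)>f(0)$.

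\textbf{Step 3: strictness of the derivative at $\valA$.} Once $f(\valA)>f(0)\geq 0$ is established, together with the fact (extracted from Step 2) that $f$ is non-constant so $f(\valA)=\sup_{\overline{I^+}}f$ is attained only at $r=\valA$, I would invoke part (i) of Lemma \ref{LemmaIp} directly — its hypothesis $f(\valA)>f(0)\geq 0$ is now met — to conclude $\partial_r f(\valA)>0$. This closes the argument. The main obstacle I anticipate is precisely the bookkeeping around the degenerate case $f\equiv 0$ (and more generally $f\equiv\text{const}$): the cleanest way to handle it is to note that $\partial_r f(0)=0$ together with the boundary point lemma at $r=0$ rules out $f$ having its maximum at $0$ unless $f$ is constant, and then $c^+\not\equiv 0$ kills the constant case; I would make sure to state explicitly that $f$ is assumed non-trivial (as is the case in every application in the paper), so that the conclusion is genuinely non-vacuous.
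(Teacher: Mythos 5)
Your proof is correct and follows essentially the same route as the paper's: the strong maximum principle pushes the (non-negative) supremum of $f$ to the boundary of $I^+$, the boundary point lemma at $r=0$ combined with $\partial_r f(0)=0$ excludes that endpoint, and hence the supremum sits strictly at $r=\valA$, where the boundary point lemma (equivalently, item (i) of Lemma \ref{LemmaIp}) gives $\partial_r f(\valA)>0$. Your concern about the trivial solution is legitimate rather than a defect of your argument: the paper's own proof dismisses constants with the assertion that ``$f$ cannot be constant because $c^+$ is not identically zero'', which only rules out \emph{nonzero} constants, so $f\equiv 0$ does satisfy all the stated hypotheses while violating the conclusion; in the sole application (Lemma \ref{lemma:g}, where $g^+(0)=1>0$) this is harmless, and your proposed fix of adding $f\not\equiv 0$ (or $f(0)>0$) is exactly the right repair.
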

\begin{proof}
  The supremum of $f$ in $\overline{I^{+}}$ is clearly non-negative and $f$ cannot be constant
    because $c^+(r)$ is not identically zero. Thus, 
    the strong maximum principle implies that the supremum can only be achieved at the boundary. This supremum cannot be $f(0)$ because $\partial_r f(0)=0$ would contradict the boundary point lemma. Thus, the supremum is at
    $f(\valA)> f(0)$ and  the boundary point lemma implies  $\partial_r f(\valA)>0$, as claimed.\fin
\end{proof}

An analogous result to Lemma \ref{LemmaIp} holds for the unbounded domain $I^-$.

\begin{lemma}
\label{LemmaIm}
On $I^-= (\valA ,\infty)$, let $L^-$ be
\begin{equation}
	L^-:= \frac{d^2}{d\r^2} + b^- (\r) \frac{d}{d\r} + c^- (\r) ,
\end{equation}
where $|b^-(\r)|$ and $|c^-(\r)|$ are  bounded in $\overline{I^+}$. 
Let $f \in C^2(I^-) \cap C^1(\overline{I^-})$  satisfy $L^- f = 0$ and
\begin{equation}
	\lim_{\r \to \infty} f (\r) = \finfty < \infty    \label{limitfm}
\end{equation}
Assume that  $c^- (\r) \leq  0$ in $I^-$. Then,
\begin{itemize}
	\item[(i)] $f(\valA )> \finfty \geq 0 \quad \Longrightarrow \quad \partial_\r f (\valA ) < 0$,
	\item[(ii)] $f(\valA )< \finfty \leq 0 \quad \Longrightarrow \quad \partial_\r f (\valA ) > 0$,
	\item[(iii)] $f(\valA )= \finfty =0 \quad \Longrightarrow \quad f(\r) = 0 \quad \forall \r \in  \overline{I^-}$.
\end{itemize}
\end{lemma}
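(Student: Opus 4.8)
The plan is to mirror the proof of Lemma \ref{LemmaIp}, now on the unbounded interval $I^-=(\valA,\infty)$; the only genuinely new ingredient is the control of $f$ at infinity supplied by the hypothesis $\lim_{\r\to\infty}f(\r)=\finfty<\infty$. First I would record the routine verifications: $L^-$ is uniformly elliptic with $\lambda\equiv 1$; by hypothesis $|c^-|/\lambda=|c^-|$ and $|b^-|/\lambda=|b^-|$ are bounded on $\overline{I^-}$; the interior sphere condition at $\r=\valA$ is automatic for an interval; and $f$ is continuous at $\valA$ (indeed $C^1$ there) because $f\in C^1(\overline{I^-})$. Hence both the Strong maximum principle and the Boundary point lemma above are applicable on $I^-$. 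I would also note that the outer normal derivative of $I^-$ at $\r=\valA$ is $-\partial_\r$, so that $\partial_\nu f(\valA)=-\partial_\r f(\valA)$.

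For item (i), assume $f(\valA)>\finfty\geq 0$. Since $f$ is continuous on $[\valA,\infty)$ and converges to $\finfty$ at infinity, it is bounded there, so $M:=\sup_{\overline{I^-}}f$ is finite and $M\geq f(\valA)>\finfty\geq 0$. Because $M>\finfty$, the supremum cannot be merely approached as $\r\to\infty$: there is $R_0>\valA$ with $f(\r)<\tfrac12(M+\finfty)<M$ for all $\r\geq R_0$, so $M=\max_{[\valA,R_0]}f$ is attained at some $\r^\ast\in[\valA,R_0]$. If $\r^\ast$ were interior to $I^-$, the Strong maximum principle (applied with $Lf=0\geq 0$ and $c^-\leq 0$) would force $f$ constant, contradicting $f(\valA)>\finfty=\lim_{\r\to\infty}f$; hence $\r^\ast=\valA$, i.e. $M=f(\valA)$, and the same argument shows $f(\r)<f(\valA)$ for every $\r\in(\valA,\infty)$. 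The Boundary point lemma at $\r=\valA$ then gives $\partial_\nu f(\valA)>0$, i.e. $\partial_\r f(\valA)<0$. Item (ii) follows by applying (i) to $-f$, which satisfies $L^-(-f)=0$, $\lim_{\r\to\infty}(-f)=-\finfty$ and $-f(\valA)>-\finfty\geq 0$.

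For item (iii), suppose $f(\valA)=\finfty=0$ but $f\not\equiv 0$; replacing $f$ by $-f$ if necessary we may assume $f(\r_1)>0$ for some $\r_1\in I^-$, so $M:=\sup_{\overline{I^-}}f$ is finite and $M>0=\finfty$. Exactly as in (i), $M$ is attained at some $\r^\ast\in\overline{I^-}$; since $M>\finfty$ and $M>0=f(\valA)$, this $\r^\ast$ is interior to $I^-$, and the Strong maximum principle forces $f\equiv M$, contradicting $f(\valA)=0$. Hence $f\leq 0$ on $\overline{I^-}$, and applying the same reasoning to $-f$ gives $f\geq 0$, so $f\equiv 0$. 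The main (indeed only) obstacle, compared with the bounded case of Lemma \ref{LemmaIp}, is precisely this handling of the noncompactness of $I^-$: one must use the finite limit $\finfty$ together with the strict sign of $f(\valA)-\finfty$ to ensure that the supremum of $f$ over $\overline{I^-}$ is finite and attained at a genuine point of $\overline{I^-}$, rather than only along $\r\to\infty$, so that the Strong maximum principle and the Boundary point lemma can be invoked; everything else is a verbatim transcription of the argument for $I^+$.
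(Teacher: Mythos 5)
Your proof is correct and follows essentially the same route as the paper: the strong maximum principle plus the boundary point lemma, with the finite limit $\finfty$ used to force the supremum to be attained at an actual point of $\overline{I^-}$ rather than only along $\r\to\infty$. The paper dismisses items (i) and (ii) as ``immediate consequences'' and only writes out (iii); your argument for (iii) coincides with the paper's, and your careful handling of the attainment of the supremum in (i)--(ii) is exactly the detail the paper leaves implicit.
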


\begin{proof} The first two statements are immediate consequences of 
the strong maximum principle and boundary point lemma. For the third one,  assume by contradiction that there is $\r_0 > \valA $ with 
$f(\r_0) \neq 0$. By replacing $f \rightarrow -f$ we may assume without loss of generality that $f(\r_0) >0$. By the limit assumption 
(\ref{limitfm}) with $\finfty =0$ there exists $\r_1$ sufficiently large (in particular satisfying $\r_1 > \r_0$) with $f(\r_1) < f(\r_0)$. The strong maximum
principle applied to $(0,\r_1)$ gives a contradition, because the function is not constant but 
its supremum (which is at least $f(\r_0)$ and hence positive) is achieved necessarily at an interior point. Thus, it must be that $f(\r)=0$ as claimed.
\fin
\end{proof}

\section{Existence and uniqueness of bounded global solutions of a class of ODE}
  
\label{App:bocher-like}

We use the following result (Corollary 6.2 in \cite{Kigurazde}).
We will use
$f\in C^0((a,\infty])$ to indicate $f\in C^0(a,\infty)$ and that the limit of $f(s)$ as   $s\to \infty$
exists and is finite.
\begin{lemma}
  \label{Asymp}
  Consider the second order homonegous ODE
  \begin{align}
    \ddot{z} + \alpha(s) \dot{z} + \beta(s) z =0  \label{ODEz}
  \end{align}
  defined on the interval $(s_0, \infty)$. Assume that
    $\alpha,\beta \in C^{0}((s_0,\infty])$
  and  let
  $\alpha_0 := \lim_{s\rightarrow \infty} \alpha(s)$, $\beta_0
  := \lim_{s\rightarrow \infty} \beta(s)$.  Assume further that
  \begin{align}
    \int_{s_0}^\infty \left | \alpha(s) - \alpha_0 \right | ds < \infty, \quad \quad
    \int_{s_0}^\infty \left | \beta(s) - \beta_0 \right | ds < \infty 
    \label{integral}
  \end{align}
    Define
  \begin{align*}
    \mu_{\pm} := \frac{-\alpha_0 \pm \sqrt{\alpha_0^2 - 4 \beta_0}}{2}.
  \end{align*}
  If $\mu_{\pm}$ are real and distinct, then (\ref{ODEz}) admits two lineary independent real solutions $z_{\pm}(s)$ satisfying the following asymptotic
  behaviour at $s \rightarrow \infty$
  \begin{align}
    z_{\pm}(s) = e^{\mu_{\pm} s} \left ( 1 + o(1) \right ),
    \quad \quad \dot{z}_\pm(s) = e^{\mu_{\pm} s} \left ( \mu_{\pm} + o(1) \right ). \label{behav}
  \end{align}\fin
\end{lemma}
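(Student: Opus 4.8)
The plan is to reduce \eqref{ODEz} to two scalar integral equations, one attached to each root, and to solve each by a contraction argument in a Banach space of bounded continuous functions with a prescribed limit at infinity. Without loss of generality assume $\mu_+>\mu_-$ and set $\delta:=\mu_+-\mu_->0$; recall $\mu_++\mu_-=-\alpha_0$ and $\mu_\pm^2+\alpha_0\mu_\pm+\beta_0=0$. For the dominant root substitute $z(s)=e^{\mu_+ s}w(s)$. Using $2\mu_++\alpha_0=\delta$ and the characteristic relation, a direct computation transforms \eqref{ODEz} into
\begin{equation}
\ddot w+(\delta+q(s))\dot w+p(s)\,w=0,\qquad q:=\alpha-\alpha_0,\quad p:=(\alpha-\alpha_0)\mu_++(\beta-\beta_0),
\end{equation}
where $p,q\in L^1((s_0,\infty))$ by \eqref{integral}. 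Setting $u=\dot w$, the first order linear equation for $u$ has integrating factor $e^{\delta s+\int_{s_0}^s q}$; solving it with the decaying particular choice and integrating once more from $s$ to $\infty$ leads to the fixed point relation
\begin{equation}
w(s)=1-\int_s^\infty u(\tau)\,d\tau,\qquad u(\tau)=-\int_{s_0}^\tau e^{-\delta(\tau-t)-\int_t^\tau q}\,p(t)\,w(t)\,dt.
\end{equation}

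For the estimates one uses that $\delta>0$ and $\int|q|<\infty$ give $\big|e^{-\delta(\tau-t)-\int_t^\tau q}\big|\le C\,e^{-\delta(\tau-t)}$ with a fixed $C$. Splitting $\int_{s_0}^\tau$ at $\tau/2$ and invoking $p\in L^1$ shows $|u(\tau)|\to0$, while Fubini gives $\int_{s_0}^\infty|u|\le(C/\delta)\int_{s_0}^\infty|p|\,|w|$, so the operator $(\mathcal T w)(s):=1-\int_s^\infty u(\tau)\,d\tau$ is well defined on bounded functions. On the ball $\{\,\|w-1\|_{\infty}\le1/2\,\}$ over $[s_1,\infty)$, with $s_1$ large enough that $(C/\delta)\int_{s_1}^\infty|p|<1/4$, one checks that $\mathcal T$ maps the ball into itself and is a contraction. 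Its fixed point $w_+$ satisfies $w_+(s)\to1$ and $\dot w_+=u\to0$, hence $z_+:=e^{\mu_+ s}w_+$, extended to all of $(s_0,\infty)$ by uniqueness for linear ODE, obeys $z_+(s)=e^{\mu_+ s}(1+o(1))$ and $\dot z_+(s)=e^{\mu_+ s}(\mu_++o(1))$.

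For the subdominant root the substitution $z=e^{\mu_- s}w$ produces $\ddot w+(-\delta+q)\dot w+\tilde p\,w=0$ with $2\mu_-+\alpha_0=-\delta$ and $\tilde p:=(\alpha-\alpha_0)\mu_-+(\beta-\beta_0)\in L^1$; the only change is that now both the equation for $u=\dot w$ and the subsequent integration are carried out from $s$ to $\infty$ (this is where $e^{-\delta s}\to0$ is used), yielding a contraction operator of the same type and a fixed point $w_-$ with $w_-\to1$, $\dot w_-\to0$, whence $z_-=e^{\mu_- s}w_-$ with the asserted asymptotics. Linear independence then follows from the Wronskian: $W(z_+,z_-)=z_+\dot z_--z_-\dot z_+=e^{(\mu_++\mu_-)s}\big((\mu_--\mu_+)+o(1)\big)=e^{-\alpha_0 s}(-\delta+o(1))$, which is nonzero for large $s$ since $\delta\neq0$ (equivalently, by Abel's identity $W$ is a nonzero constant multiple of $e^{-\int\alpha}$).

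The only delicate point—and the one to get right first—is choosing the \emph{direction of integration} adapted to each root: the $L^1$ smallness of $p,q$ becomes useful only once the relevant Green's kernel decays exponentially, which forces integrating the correction of the dominant mode starting at $s_0$ and that of the subdominant mode starting at $\infty$. Once the two integral equations are written in this form, the contraction estimates are entirely routine. One could alternatively invoke Levinson's asymptotic theorem directly, writing \eqref{ODEz} as a first order system $\dot Y=(A_0+R(s))Y$ with $A_0$ having distinct real eigenvalues $\mu_\pm$ and $\|R\|\in L^1$, but the self-contained contraction argument above is more transparent and keeps the paper reasonably self-contained.
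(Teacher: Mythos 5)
Your argument is correct, but it is worth noting that the paper does not actually prove Lemma \ref{Asymp} at all: it is quoted verbatim as Corollary 6.2 of \cite{Kigurazde} and used as an external input. What you have supplied is therefore a self-contained proof where the paper has only a citation. Your route is the standard one for scalar Levinson-type asymptotics: peel off the exponential $e^{\mu_\pm s}$, observe that the residual coefficients $p,q$ are in $L^1$ precisely because of \eqref{integral}, and solve the resulting first-order system by a contraction on $[s_1,\infty)$ with $s_1$ large, with the direction of integration of the Green kernel chosen so that it decays (from below for the dominant root, from infinity for the subdominant one) --- you correctly identify this as the only non-routine choice. The estimates you sketch (the uniform bound $e^{\int|q|}$ on the integrating-factor correction, the Fubini bound $\int|u|\le (C/\delta)\int|p|\,|w|$, the splitting at $\tau/2$ to get $u\to 0$, and the Wronskian computation for independence) all close. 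One cosmetic inconsistency: you define $u(\tau)$ by integrating from $s_0$ but run the contraction on the ball over $[s_1,\infty)$; for the fixed-point operator to act on that space the lower limit should be $s_1$ (i.e.\ the normalization $u(s_1)=0$), after which the solution is extended back to $(s_0,\infty)$ by linear ODE uniqueness exactly as you say. This does not affect the asymptotics. The trade-off between the two approaches is the usual one: the citation keeps the paper short, while your argument (or the matrix Levinson theorem you mention) makes the dependence on the hypotheses --- only continuity, the existence of the limits, and the $L^1$ conditions --- completely transparent.
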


We want to apply this result to analyze the behaviour of solutions
to ODE with certain type of singularities at $t=0$. Specifically, in the main text we need the following lemma.

\begin{lemma}
  \label{Lemma:behaviour}
  Consider the second order homogeneous ODE
  \begin{align}
    t^2 x'' + t \Af(t) x'  +\Bf(t) x= 0 \label{ODEx}
  \end{align}
  defined in the interval $(0,t_0)$. Assume that
  $\Af(t),\Bf(t) \in C^1([0,t_0))$ and let $a_0 := \Af(0)$ and $b_0 := \Bf(0)$. Define
  \begin{align*}
    \lambda_{\pm} := \frac{a_0 - 1 \pm \sqrt{(a_0-1)^2 - 4 b_0}}{2}.
  \end{align*}
\item[(i)] If $4 b_0 < (a_0-1)^2  $ there exist two real linearly independent
  solutions $x_{\pm}(t)$ of (\ref{ODEx}),
    and have the following behaviour near $t=0$:
  \begin{align}
    x_{\pm}(t) = t^{-\lambda_{\pm}} \left ( 1 + o(1) \right ), \quad
    \quad x_{\pm}'(t)= -t^{-(1+\lambda_{\pm})} \left ( \lambda_{\pm} + o(1) \right ).
    \label{claimx} 
  \end{align}
  \item[(ii)] If either $b_0 <0$ or $(b_0 =0, a_0 >1)$ then there exists a unique up to scaling solution $x(t)$ of (\ref{ODEx}) that stays bounded in $(0,t_0)$. $x(t)$ extends continuously at $t=0$ 
with $x(0)=0$ if  $b_0 <0$
and $x(0)\neq 0$ if $(b_0 =0, a_0 >1)$.
\item[(iii)] When $(b_0 =0,a_0 >1)$ assume further that
  $\Bf(t) = t^2 \Qf(t)$ where
  $\Qf (t) \in C^1([0,t_0))$ and satisfying $\Qf(0) \neq 0$. Then, the bounded
  solution $x(t)$ in item (ii)  extends to a $C^2$ function
  in $[0,t_0)$ satisfiying $x'(0)=0$.
\end{lemma}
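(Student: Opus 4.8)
The plan is to treat the three items in order, since (ii) rests on (i) and (iii) is a refinement of (ii). For item \emph{(i)}, the natural move is to transform \eqref{ODEx} into an equation to which Lemma \ref{Asymp} applies. First I would change the independent variable by $t = e^{-s}$ (so $t\to 0^+$ corresponds to $s\to +\infty$) and set $z(s) := x(e^{-s})$. A direct computation using $t\,\frac{d}{dt} = -\frac{d}{ds}$ and $t^2\frac{d^2}{dt^2} = \frac{d^2}{ds^2} + \frac{d}{ds}$ turns \eqref{ODEx} into
\begin{align*}
\ddot z + (\Af(e^{-s}) - 1)\dot z + \Bf(e^{-s})\, z = 0,
\end{align*}
i.e. \eqref{ODEz} with $\alpha(s) = \Af(e^{-s}) - 1$ and $\beta(s) = \Bf(e^{-s})$. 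Then $\alpha_0 = a_0 - 1$ and $\beta_0 = b_0$, so the roots $\mu_\pm$ of Lemma \ref{Asymp} equal the $\lambda_\pm$ of the present lemma (after the sign bookkeeping: $\mu_\pm = \tfrac{-(a_0-1)\pm\sqrt{(a_0-1)^2-4b_0}}{2} = -\lambda_\mp$, so relabelling gives $\mu_\pm = -\lambda_\pm$). The hypothesis $4b_0 < (a_0-1)^2$ guarantees $\mu_\pm$ real and distinct. The integrability conditions \eqref{integral} follow because $\Af,\Bf\in C^1([0,t_0))$: writing $\Af(e^{-s}) - a_0 = -\int_0^{e^{-s}} \Af'(\tau)\,d\tau$, boundedness of $\Af'$ near $0$ gives $|\Af(e^{-s}) - a_0| \leq C e^{-s}$, which is integrable in $s$ on $(s_0,\infty)$; similarly for $\Bf$. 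Lemma \ref{Asymp} then yields $z_\pm(s) = e^{\mu_\pm s}(1+o(1))$ and $\dot z_\pm(s) = e^{\mu_\pm s}(\mu_\pm + o(1))$. Translating back via $t = e^{-s}$, $s = -\log t$, gives $x_\pm(t) = t^{-\mu_\pm}(1+o(1)) = t^{-\lambda_\pm}(1+o(1))$ after relabelling, and using $x'(t) = -e^{s}\dot z(s)$ one obtains $x_\pm'(t) = -t^{-(1+\lambda_\pm)}(\lambda_\pm + o(1))$, which is \eqref{claimx}.

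For item \emph{(ii)}, I would argue by cases on the sign of $b_0$. If $b_0 < 0$ then $(a_0-1)^2 - 4b_0 > (a_0-1)^2 \geq 0$, so item \emph{(i)} applies, and moreover $\sqrt{(a_0-1)^2 - 4b_0} > |a_0-1|$, which forces $\lambda_+ > 0 > \lambda_-$ (one root strictly positive, one strictly negative). Hence $x_+(t) = t^{-\lambda_+}(1+o(1))$ blows up as $t\to 0^+$ while $x_-(t) = t^{-\lambda_-}(1+o(1)) = t^{|\lambda_-|}(1+o(1)) \to 0$. Any solution is a linear combination $c_+ x_+ + c_- x_-$; boundedness on $(0,t_0)$ forces $c_+ = 0$ (otherwise the $t^{-\lambda_+}$ term dominates), leaving the one-parameter family $c_- x_-$, all of which extend continuously with value $0$ at $t=0$. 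If instead $b_0 = 0$ and $a_0 > 1$, then $(a_0-1)^2 - 4b_0 = (a_0-1)^2 > 0$ and the roots are $\lambda_+ = a_0 - 1 > 0$ and $\lambda_- = 0$. Item \emph{(i)} gives $x_+(t) = t^{-(a_0-1)}(1+o(1))$, unbounded, and $x_-(t) = 1 + o(1)$, bounded with nonzero limit. The same dominance argument shows the bounded solutions form the one-parameter family $c_- x_-$, extending continuously with $x(0) = \lim x_- \neq 0$. In both cases uniqueness up to scale is immediate.

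For item \emph{(iii)}, we are in the case $b_0 = 0$, $a_0 > 1$, and additionally $\Bf(t) = t^2\Qf(t)$ with $\Qf\in C^1([0,t_0))$, $\Qf(0)\neq 0$. I would take the bounded solution $x(t) = x_-(t)$ from \emph{(ii)}, normalized so $x(0) = x_0 \neq 0$, and bootstrap its regularity directly from the ODE. Rewriting \eqref{ODEx} as $x'' = -\frac{\Af(t)}{t}x' - \Qf(t)\,x$, and using the asymptotics $x(t) = x_0 + o(1)$ together with the estimate $x'(t) = -t^{-1}(\lambda_-+o(1)) = o(t^{-1})$ from \eqref{claimx} with $\lambda_- = 0$, the key obstruction is controlling $\frac{\Af(t)}{t} x'(t)$ near $t = 0$ — a priori this is only $o(t^{-2})$, which is too weak. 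To sharpen it I would integrate the equation: from $t^2 x'' + t\Af x' + t^2\Qf x = 0$, dividing by $t$ and recognizing $x'' + \frac{\Af}{t}x'$ is close to $\frac{1}{t^{a_0}}(t^{a_0}x')'$, I would introduce the integrating factor $\mu(t) := \exp\!\big(\int_1^t \frac{\Af(\tau)}{\tau}d\tau\big)$, which behaves like $C t^{a_0}(1+o(1))$ since $\Af(\tau)/\tau = a_0/\tau + O(1)$. Then $(\mu x')' = -\mu \Qf x$, so $\mu(t) x'(t) = -\int_0^t \mu(\tau)\Qf(\tau)x(\tau)\,d\tau$ (the lower limit $0$ is legitimate because $\mu x' \to 0$ there, as $\mu\sim t^{a_0}$ and $x' = o(t^{-1})$ with $a_0 > 1$). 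The integrand is $\sim -C\tau^{a_0}\Qf(0)x_0(1+o(1))$, so the integral is $\sim -C\frac{\Qf(0)x_0}{a_0+1}t^{a_0+1}(1+o(1))$, giving $x'(t) = -\frac{\Qf(0)x_0}{a_0+1}t(1+o(1))$. In particular $x'(t)\to 0$ as $t\to 0$, so $x$ extends to a $C^1$ function on $[0,t_0)$ with $x'(0) = 0$. Feeding $x' = O(t)$ and $x = x_0 + O(t^2)$ back into $x'' = -\frac{\Af}{t}x' - \Qf x$ shows $x''$ extends continuously to $t=0$ (the term $\frac{\Af}{t}x'$ is now $O(1)$), so $x\in C^2([0,t_0))$, completing the proof. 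The main obstacle throughout is precisely this last sharpening of the $x'$ asymptotics — the raw conclusion $\lambda_- = 0$ from part (i) only gives $x' = o(t^{-1})$, and one must use the extra structure $\Bf = t^2\Qf$ via the integrating-factor identity to promote this to $x' = O(t)$.
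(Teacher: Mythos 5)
Your overall strategy for items \emph{(i)} and \emph{(ii)} is the same as the paper's (substitution $t=e^{-s}$, Lemma \ref{Asymp}, then discarding the unbounded mode), but there is a sign slip in the change of variables that you should fix: with $t\frac{d}{dt}=-\frac{d}{ds}$ and $t^2\frac{d^2}{dt^2}=\frac{d^2}{ds^2}+\frac{d}{ds}$, equation \eqref{ODEx} becomes $\ddot z+(1-\Af(e^{-s}))\dot z+\Bf(e^{-s})z=0$, i.e.\ the drift coefficient is $1-\Af$, not $\Af-1$. With the correct sign one gets $\alpha_0=1-a_0$ and hence $\mu_{\pm}=\lambda_{\pm}$ exactly, so $z_{\pm}(s)=e^{\lambda_{\pm}s}(1+o(1))=t^{-\lambda_{\pm}}(1+o(1))$ and \eqref{claimx} falls out with no relabelling. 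As written, your $\mu_{\pm}=-\lambda_{\mp}$ would give $x_{\pm}(t)=t^{\lambda_{\mp}}(1+o(1))$, which is not \eqref{claimx}; the ``relabelling'' that repairs this is really just undoing the sign error. Item \emph{(ii)} is then identical to the paper's argument and is fine.

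For item \emph{(iii)} you take a genuinely different route. The paper differentiates the transformed equation, eliminates $z$ algebraically (using $\beta\neq 0$ near infinity, guaranteed by $\Qf(0)\neq 0$), and applies Lemma \ref{Asymp} a second time to the ODE for $\q=\dot z$, obtaining roots $a_0-1$ and $-2$ and discarding the growing mode via $\dot z\to 0$. You instead write the equation in self-adjoint form $(\mu x')'=-\mu\Qf x$ with $\mu(t)=\exp\bigl(\int_1^t\Af(\tau)\tau^{-1}d\tau\bigr)\sim Ct^{a_0}$, justify the lower limit $0$ in the integrated identity from $x'=o(t^{-1})$ and $a_0>1$, and read off $x'(t)=O(t)$ (indeed $x'(t)/t$ has a limit) directly from the integral. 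Both arguments are correct; yours is more elementary (no second asymptotic-integration step, no differentiation of the coefficients) and in fact only uses $\Qf(0)\neq 0$ to identify the leading coefficient of $x'$, not to establish $x'=O(t)$, so it is marginally more robust. The final bootstrap to $C^2$ via the ODE is the same in both proofs.
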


\begin{proof}
  Consider the change of variables $t(s) = e^{-s}$ which sends $(0,t_0)$ to
  $(s_0:=- \ln t_0,\infty)$. Define $z(s):=x(t(s))$, $\alpha(s):=1 - \Af(t(s))$,
  $\beta(s):=\Bf(t(s))$. The ODE (\ref{ODEx}) takes the form (\ref{ODEz}). For any function $\gamma(s)$ we have the equality
  \begin{align*}
    \int_{a}^{\infty} \gamma(s) ds = \int_{0}^{e^{-a}} \frac{\gamma(s(t))}{t} dt.
  \end{align*}
  Since the functions $\Af(t)$ and $\Bf(t)$ are $C^1$ up to $t=0$, 
  $|\Af(t)-a_0|/t$ and $|\Bf(t)-b_0|/t$ are bounded,
  so the  hypotheses of Lemma \ref{Asymp} are satisfied. In addition
  $\alpha_0 = 1-  a_0$, $\beta_0=b_0$ so that, in particular
  $\mu_{\pm} = \lambda_{\pm}$. When $4 b_0 < (a_0-1)^2$ we have
  that $\lambda_{+}$ and $\lambda_{-}$ are real and distinct.
  The linearly independent
  solutions $z_{\pm}(s)$ whose existence is guaranteed by Lemma
  \ref{Asymp} show the existence of two solutions $x_{\pm}(t)$ with
  the behaviour claimed in (\ref{claimx}). This proves item (i).
  
  For item (ii), in either case $b_0<0$ or $(b_0=0, a_0 >1)$ we have
  $\lambda_+ >0$ and $\lambda_-\leq 0$. The solution $x_{+}(t)$
  of item (i) is unbounded near zero, while $x_{-}(t)$ is bounded. Since the general solution is a linear combination of both, the first statement follows. The continuous extension at $t=0$ is direct from (\ref{claimx}) given that
    $\lambda_{-}<0$ when $b_0 <0$ and
    $\lambda_{-}=0$ when $(b_0 =0, a_0 >1)$.

  Finally, for item (iii) we already know  by item (ii) that
    $x_{-}(t)$  (which is the only one up to scaling that remains bounded) admits a continuous extension to $t=0$. 
    Furthermore the corresponding
  $z_{-}(s)$ satisfies
  \begin{align}
    \lim_{s \rightarrow \infty}  \dot z_{-}(s)=0 \label{limzminus}
  \end{align}
  as a consequence of (\ref{behav}) and $\mu_{-} = \lambda_{-} =0$. We next prove that $x_{-}'(t)$ satisties $\lim_{t \rightarrow 0^+} x_{-}'(t) = 0$. First observe that $\Qf(0) \neq 0$ implies that there is $t_1>0$ sufficiently small such that
  $\Bf(t) = t^2 \Qf(t)$ has a constant sign in $(0,t_1)$. We restrict to this domain and to the equivalent in the $s$-variable $(s_1:=- \ln t_1, \infty)$, where $\beta(s)$ is guaranteed to vanish nowhere. Since both
  $\alpha(s)$ and $\beta(s)$ are $C^1((s_1, \infty))$ we may take a derivative of (\ref{ODEz}) and replace $z(s)$ obtained algebraically from (\ref{ODEz}) itself.
  The result is the following ODE for $\q(s):=\dot{z}(s)$
  \begin{align}
    \ddot{\q} + \left ( \alpha(s) - \frac{\dot{\beta}(s)}{\beta(s)}
    \right ) \dot{\q} + \left ( \beta(s) + \dot{\alpha}(s)
    - \frac{\alpha(s) \dot{\beta}(s)}{\beta(s)} \right ) \q
    =0. \label{ODE3}
  \end{align}
  It is immediate to compute
  \begin{align*}
    \dot{\alpha} (s) =  \left . t \Af'(t)  \right |_{t= e^{-s}},
    \quad \quad
    \frac{\dot{\beta}(s)}{\beta(s)}= \left .
    -2 - t \frac{\Qf'(t)}{\Qf(t)} \right |_{t= e^{-s}}
  \end{align*}
  so the coefficients of the ODE (\ref{ODE3}) are continuous in $(s_1, \infty]$. One checks easily that the integral conditions (\ref{integral}) are also satisfied.
    The corresponding constants of Lemma \ref{Asymp} are
  \begin{align*}
    \mu_{\pm} = \frac{a_0 -3 \pm | a_0 +1 |}{2}.
  \end{align*}
  Using now $a_0 >1$ we find $\mu_{+} = a_0-1$ and $\mu_{-} =-2$, so by Lemma \ref{Asymp}, the function $\q$ must have the asymptotic behaviour
  \begin{align*}
    \q(s)= e^{-2s} \left (\aone + o(1) \right ) + e^{(a_0-1)s} \left (\atwo + o(1) \right ) 
  \end{align*}
  with constants $\aone, \atwo$. But $\q(s)= \dot{z}(s)$ is forced to 
  approach zero at infinity (see (\ref{limzminus})), so  $\atwo =0$. We conclude that
  \begin{align}
    x'(t) = \left . - (e^{s} \q(s))  \right |_{s= -\ln t} =
    -t \left ( \aone + o(1) \right ) \label{xprime}
  \end{align}
 and we have shown that $x'(t)$ extends continuously to $t=0$ with the value zero. It only remains to show that $x''(t)$ also extends
  continuously to $t=0$, but this follows at once from the ODE itself
  \begin{align*}
    x''(t) + t^{-1} \Af(t) x'(t) + \Qf(t) x(t)=0
  \end{align*}
  since we already know that the second and third terms extend continuously to $t=0$ (the second term by (\ref{xprime})). \fin
\end{proof}

The following theorem is the main result of the appendix.
\begin{theorem}
\label{theorem:global_r}
  Let $a$ be a positive constant. Assume that
  $\Af^{+}, \Bf^{+} : [0,a] \rightarrow \mathbb{R}$ and
  $\Af^{-}, \Bf^{-} : [a,\infty) \rightarrow \mathbb{R}$ are $C^1$
  on their
  respective domains and  that the limits
  \begin{align}
    \lim_{r \rightarrow + \infty} \Af^-(r), \quad
    \lim_{r \rightarrow + \infty} \Bf^-(r), \quad
    \lim_{r \rightarrow + \infty} r^2 \frac{d \Af^-(r)}{dr}, \quad
    \lim_{r \rightarrow + \infty} r^2 \frac{d \Bf^-(r)}{dr}
    \label{conditions2}
  \end{align}
  exist and are finite. Define the constants
  \begin{align*}
    a_0 = \Af^+(0), \quad b_0 = \Bf^+(0), \quad
    a_{\infty} = \lim_{r \rightarrow + \infty} \Af^{-}(r), \quad
    b_{\infty} = \lim_{r \rightarrow + \infty} \Bf^{-}(r)
  \end{align*}
  and assume that $b_0, b_{\infty} < 0$. Let 
  $\inhomo^+ \in C^0([0,a])$, $\inhomo^- \in C^0([a,\infty))$, 
$\czero, \cone \in \mathbb{R}$  and
  consider the ODE problem $(\star)$ defined by
  \begin{align}
    &                                          r^2 \frac{d^2 u^+(r)}{dr^2} + r \Af^+(r) \frac{du^+(r)}{dr}
      + \Bf^+(r) u^+(r) = \inhomo^+(r) \quad \quad \mbox{ on } (0,a],
      \label{Eqminus}\\
    &                      r^2 \frac{d^2 u^-(r)}{dr^2} + r \Af^-(r) \frac{du^-(r)}{dr}
      + \Bf^-(r) u^-(r) = \inhomo^-(r) \quad \quad \mbox{ on } [a,\infty), \label{Eqplus}\\
    &                        u^+(a) - u^-(a) = \czero, \qquad\quad
      \left . \frac{d u^+}{dr} \right |_{r=a} - 
      \left . \frac{d u^-}{dr} \right |_{r=a} = \cone.
      \label{trans}
  \end{align}
  Assume that the inhomogeneous terms satisfy
  $\inhomo^+ (r) = r^{\alpha_0} ( \sigma_+ + o(1))$ 
  near $r=0$ and $\inhomo^- (r) = r^{\alpha_{\infty}} ( \sigma_- + o(1))$
  near infinity, with constants $\alpha_0, \alpha_{\infty}$ and $\sigma_{\pm}$.
  Define
  \begin{align*}
    \lambda^0_{\pm} := \frac{a_0 - 1 \pm \sqrt{(a_0-1)^2 -4 b_0}}{2},
    \quad \quad 
    \lambda^\infty_{\pm} := \frac{1 - a_{\infty} \pm \sqrt{(a_{\infty}-1)^2 -4 b_\infty}}{2}.
  \end{align*}
  If the constants satisfy
  \begin{align}
    & \alpha_0 \geq 0, \quad
      \alpha_0 +\lambda_{-}^0 \neq 0, \quad \alpha_{\infty}  \leq 0,
      \quad \alpha_{\infty} - \lambda^{\infty}_- \neq 0,
      \label{Assump}
  \end{align}
  then $(\star)$ has a unique bounded in $(0,\infty)$ solution $\{u^+(r),u^-(r)\}$  and moreover
    \begin{itemize}
    \item  $u^+(r)$ can be extended
        as a $C^0([0,a])$ function
        of order $O(r^{\min\{-\lambda^0_-,\alpha_0\}})$, and if $1+\lambda^0_-\leq 0$ and
        $\alpha_0-1\geq 0$, then $u^+(r)$ can be also extended
        as a $C^1([0,a])$ function and $u^+{}'(r)$ is $O(r^{\min\{-(1+\lambda^0_-),\alpha_0-1\}})$.
    \item $u^-(r)$ is of order $O(r^{-\min\{|\lambda^\infty_-|,|\alpha_\infty|\}})$
      and $u^-{}'(r)$ is $O(r^{-\min\{|\lambda^\infty_--1|,|\alpha_\infty-1|\}})$ near $r=\infty$.
    \end{itemize}
\end{theorem}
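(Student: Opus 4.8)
The strategy is to decouple the problem into three steps: (i) study the homogeneous equations near the two singular endpoints $r=0$ and $r=\infty$ using Lemma \ref{Lemma:behaviour} and Lemma \ref{Asymp}; (ii) construct particular solutions $u^+_p$ on $(0,a]$ and $u^-_p$ on $[a,\infty)$ with controlled behaviour at the respective singular endpoints; and (iii) show that the matching conditions \eqref{trans} fix a unique linear combination of the homogeneous solutions so that the resulting $\{u^+,u^-\}$ is bounded on $(0,\infty)$. The hypotheses $b_0,b_\infty<0$ are exactly what makes $\lambda^0_\pm$ and $\lambda^\infty_\pm$ real and distinct with $\lambda^0_+>0>\lambda^0_-$ and $\lambda^\infty_+>0>\lambda^\infty_-$ (after matching signs between the $r=0$ and $s=-\ln r$ conventions), so near each singular endpoint there is a one-dimensional space of bounded homogeneous solutions.

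\textbf{Step-by-step.} First I would apply item (i) of Lemma \ref{Lemma:behaviour} on $(0,a)$ (since $4b_0<(a_0-1)^2$ follows from $b_0<0$) to obtain two independent homogeneous solutions $x^0_\pm(r)$ with $x^0_\pm(r)=r^{-\lambda^0_\pm}(1+o(1))$; only $x^0_-(r)$ stays bounded near $r=0$ because $-\lambda^0_+<0<-\lambda^0_-$ would need care about signs, but in any case exactly one of the two is bounded (indeed $\lambda^0_+>0$ so $x^0_+\sim r^{-\lambda^0_+}$ blows up, while $\lambda^0_-<0$ so $x^0_-\sim r^{-\lambda^0_-}=r^{|\lambda^0_-|}\to 0$). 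For the exterior, I would change variables $s=-\ln(1/r)$ or directly $r\mapsto s$ with $r=e^{s}$ to put \eqref{Eqplus} into the form \eqref{ODEz} on $(\ln a,\infty)$, check that \eqref{conditions2} gives the integrability hypothesis \eqref{integral} (this is where the condition on $r^2\,d\Af^-/dr$ etc.\ enters), and invoke Lemma \ref{Asymp} to get $z^\infty_\pm$ with exponential asymptotics; translating back, one independent solution $x^\infty_-(r)=r^{-\lambda^\infty_-}(1+o(1))=r^{|\lambda^\infty_-|}\cdot(\text{wrong sign?})$ — again the point is that precisely the one with decaying behaviour survives the boundedness requirement at infinity. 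Next, for the particular solutions: near $r=0$ the inhomogeneity $\inhomo^+\sim\sigma_+ r^{\alpha_0}$ with $\alpha_0\ge 0$ and $\alpha_0+\lambda^0_-\neq 0$, $\alpha_0+\lambda^0_+\neq 0$ (the latter automatic if $\alpha_0\ge 0$ and $\lambda^0_+>0$), so the indicial equation is non-resonant and there is a particular solution $u^+_p=r^{\alpha_0}(\varsigma^0_P+o(1))$; this is the content of the ``$U_p$'' construction referred to in Remark \ref{remark:diff_origin_particular_D1} and used verbatim in the proof of Proposition \ref{prop:newsigma_star_unique}, and it can be built by variation of parameters against $\{x^0_+,x^0_-\}$ together with the asymptotic estimate. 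Symmetrically $u^-_p=r^{\alpha_\infty}(\varsigma+o(1))$ near infinity with $\alpha_\infty\le 0$ and the non-resonance $\alpha_\infty-\lambda^\infty_-\neq 0$. Then the general bounded-near-$0$ interior solution is $u^+=c_-^0 x^0_- + u^+_p$ and the general bounded-near-$\infty$ exterior solution is $u^-=c_-^\infty x^\infty_- + u^-_p$, a two-parameter family; the two matching equations \eqref{trans} form a $2\times 2$ linear system for $(c^0_-,c^\infty_-)$ whose matrix is $\begin{pmatrix} x^0_-(a) & -x^\infty_-(a)\\ x^0_-{}'(a) & -x^\infty_-{}'(a)\end{pmatrix}$, with determinant $-W(a)$ where $W$ is the Wronskian-like quantity $x^0_-(a)x^\infty_-{}'(a)-x^0_-{}'(a)x^\infty_-(a)$; I would show this is nonzero by a maximum-principle / Sturm-type argument (as in Lemmas \ref{LemmaIp}--\ref{LemmaIm}: if it vanished, $x^0_-$ and $x^\infty_-$ would be proportional, giving a global bounded homogeneous solution, which the strong maximum principle with $c\le 0$ — ensured by $\Bf<0$ near the endpoints and controllable in the interior — forbids unless it is identically zero, contradicting $x^0_-\not\equiv 0$). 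This yields existence and uniqueness. Finally the regularity/decay claims: $u^+=c^0_- x^0_-+u^+_p$ with $x^0_-=O(r^{-\lambda^0_-})$ and $u^+_p=O(r^{\alpha_0})$ gives $u^+=O(r^{\min\{-\lambda^0_-,\alpha_0\}})$, and the $C^1$ statement follows by differentiating the asymptotic expansions (valid when $1+\lambda^0_-\le 0$ so that $x^0_-{}'$ is bounded, and $\alpha_0-1\ge 0$ so that $u^+_p{}'$ is bounded); likewise at infinity, noting $-\lambda^\infty_-$ and $\alpha_\infty$ are both $\le 0$ so the orders combine as stated with the $|\cdot|$'s.

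\textbf{Main obstacle.} The routine parts are the change of variables and the indicial-root bookkeeping; the genuinely delicate point is the non-vanishing of the connection determinant at $r=a$ (equivalently, ruling out a nontrivial global bounded homogeneous solution). One must be careful that the strong maximum principle and boundary point lemma require $c(r)=\Bf(r)/r^2\le 0$ (or at least locally bounded $|c|/\lambda$) on the \emph{whole} interior $(0,\infty)$, not merely near the endpoints; in the applications in the main text $\Bf$ does have the right sign globally, but in the general statement I would instead argue directly: a global bounded homogeneous solution $x$ would have to be $\asymp x^0_-$ near $0$ and $\asymp x^\infty_-$ near $\infty$, both one-dimensional spaces, and matching plus the fact that on $(0,a]$ and $[a,\infty)$ separately the bounded homogeneous solution space is one-dimensional (Lemma \ref{Lemma:behaviour}(ii) and Lemma \ref{Asymp} respectively) forces at most one global bounded homogeneous solution up to scale — call it $x_0$ if it exists — and then the inhomogeneous problem is solvable for all $(\czero,\cone)$ iff $x_0$ does \emph{not} exist, which is guaranteed here because the problem is the inhomogeneous counterpart and one can absorb any putative $x_0$ — more cleanly, I would simply observe that under the sign hypotheses $b_0,b_\infty<0$ the relevant ODE operators fall under Lemma \ref{LemmaIp}(iii) and Lemma \ref{LemmaIm}(iii) on suitable sub-intervals near the endpoints, forcing a bounded homogeneous solution that vanishes at an endpoint to vanish identically, which then propagates. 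I expect this step, and the precise tracking of which exponent ($-\lambda^0_-$ vs.\ $\alpha_0$) dominates in each regime so that the stated $O(\cdot)$ bounds come out exactly, to be where most of the real work lies; everything else is an assembly of Lemmas \ref{Asymp}, \ref{Lemma:behaviour}, \ref{LemmaIp}, \ref{LemmaIm} already established in this appendix.
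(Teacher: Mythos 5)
Your proposal is correct and follows essentially the same route as the paper's proof: Frobenius-type asymptotics for the homogeneous equation at both singular endpoints via Lemma \ref{Lemma:behaviour} (the paper inverts $r\mapsto t=1/r$ for the exterior and reapplies Lemma \ref{Lemma:behaviour} rather than passing to $s=\ln r$ and Lemma \ref{Asymp} directly, but these are the same computation, and \eqref{conditions2} plays the same role in either version), variation-of-parameters particular solutions whose growth is controlled by the non-resonance conditions \eqref{Assump}, boundedness killing the growing modes, and a $2\times 2$ matching system at $r=a$ whose determinant is shown nonzero by the sign information from Lemmas \ref{LemmaIp} and \ref{LemmaIm}. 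One caveat: your fallback argument for the non-vanishing of the determinant --- that a bounded global homogeneous solution would vanish identically ``by Lemma \ref{LemmaIp}(iii)/Lemma \ref{LemmaIm}(iii) on sub-intervals near the endpoints, which then propagates'' --- does not close as stated, since those items require the solution to vanish at \emph{both} ends of the sub-interval and there is no reason for it to vanish at $r=a$; the working argument is your primary one (the paper's), namely that items (i)--(ii) force $u^+_-(a)$ and $u^+_-{}'(a)$ to share a sign while $u^-_-(a)$ and $u^-_-{}'(a)$ have opposite signs. You are right that this invocation needs $\Bf/r^2\le 0$ on the whole of each sub-interval, a hypothesis the theorem does not literally state but which the paper's proof also uses implicitly and which holds in all the applications in the main text.
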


\begin{remark}
  If $\inhomo^\pm=0$ and $\czero=\cone=0$ the unique solution is the trivial $u(r)=0$,
    which also extends to the origin.
\end{remark}

\begin{proof}
  We first analyse the homogeneous problem.
  By Lemma \ref{Lemma:behaviour}, item (i), the homogeneous equation
  (\ref{Eqminus}) with $\inhomo^+=0$
  admits two linearly independent solutions $u^+_{+}(r)$ and
  $u^+_-(r)$,  both of class $C^2((0,a])$ (we may include $r=a$ because
  $\Af^+,\Bf^+$ are $C^1$ up to  this boundary),
  with behaviour near $r=0$ given by
  \begin{align}
    u^+_{\pm}(r) = r^{-\lambda^0_{\pm}} \left ( 1 + o(1) \right ), \quad
    \frac{d u^+_{\pm}(r)}{dr} = - r^{-(1+\lambda^0_{\pm})} \left ( \lambda^0_{\pm} + o(1) \right ).
    \label{eq:u_interior}
  \end{align}
  Since $\lambda^0_-< 0$, because $b_0<0$ by assumption,
  $u^+_-(r)$ extends to a $C^0([0,a])$ function  with $u^+_-(0)=0$.
  In the domain $[a,\infty)$ we consider the change of coordinate
  $r = t^{-1}$ which transforms the homogeneous ODE \eqref{Eqplus} with
  $\inhomo^-=0$ into the form
  \begin{align*}
    t^2 \frac{d^2 \widehat{u}^{-}(t)}{dt^2} + t (2 - \widehat{\Af}(t))
    \frac{d \widehat{u}^{-}(t)}{dt} + \widehat{\Bf}(t) \widehat{u}^{-} (t)=0
  \end{align*}
  where for any function $f(r)$ we denote by $\widehat{f}(t) :=
  f(t^{-1}):(0,a^{-1}]\to \mathbb{R}$. Conditions \eqref{conditions2} imply inmediately that
  $\widehat{\Af}, \widehat{\Bf}$ extend to $t=0$
  as $C^1([0,a^{-1}])$ functions, 
  and we may apply item (i) in Lemma \ref{Lemma:behaviour} to conclude that
  there exist two independent solutions $u^{-}_{\pm}(t) \in 
  C^2([a,\infty))$ satisfying 
  \begin{align*}
    \widehat{u}^-_{\pm}(t) = t^{-\lambda^{\infty}_{\pm}} \left ( 1 + o(1) \right ), \quad
    \frac{d \widehat{u}^-_{\pm}(t)}{dt} = - t^{-(1+\lambda^{\infty}_{\pm})} \left ( \lambda^{\infty}_{\pm} + o(1) \right ).
  \end{align*}
  In terms of the original function, this behaviour translates onto
  \begin{align}
    u^-_{\pm}(r) = r^{\lambda^{\infty}_{\pm}} \left ( 1 + o(1) \right ), \quad
    \frac{d u^-_{\pm}(r)}{dr} =  r^{\lambda^{\infty}_{\pm}-1} \left ( \lambda^{\infty}_{\pm} + o(1) \right ).\label{eq:u_exterior}
  \end{align}
  Since  $\lambda^\infty_-< 0$, because $b_\infty<0$ by assumption,
  $u^-_-(r)$ vanishes at $r\to \infty$.
  We let $W^{\pm}(r)$ be the Wronskian of the functions
  $u^{\pm}_{-}(r)$, $u^{\pm}_{+}(r)$, i.e.
  \begin{align*}
    W^{\pm} (r) := u^{\pm}_{-}(r) \frac{d u^\pm_+(r)}{dr}
    - u^{\pm}_{+}(r) \frac{d u^\pm_-(r)}{dr}.
  \end{align*}
  It is inmediate from the previous considerations that
  \begin{align*}
    & W^{+}(r) = r^{-a_0} \left (-\sqrt{(a_0-1)^2 - 4 b_0} 
      + o(1) \right ) \quad \quad \quad && \mbox{near } r= 0, \\
    & W^{-}(r) = r^{-a_{\infty}} \left (\sqrt{(a_{\infty}-1)^2 - 4 b_\infty}
      + o(1) \right ) \quad \quad && \mbox{near } r = + \infty.
  \end{align*}
  We may now include the inhomogeneus term. The general solution of the inhomogeneous problem on each domain in given by the general formula
  \begin{align}
    u^{\pm}(r) = & C^{\pm}_{+} u^{\pm}_+(r) + C^{\pm}_{-} u^{\pm}_-(r) + u^{\pm}_P(r) , \label{general}
  \end{align}
  where  $C^{\pm}_+$, $C^{\pm}_-$ are arbitrary
  constants and a particular solution
  $u^{\pm}_P(r)$ on each domain is given by
  \begin{align}
    u^{\pm}_P(r)=  u^{\pm}_+(r) \int_{r_{\pm}}^r \frac{u^{\pm}_{-}(s) \inhomo^{\pm}(s)}{s^2 W^{\pm}(s)} ds -
    u^{\pm}_{-}(r) \int_{r_{\pm}}^r \frac{u^{\pm}_{+}(s) \inhomo^{\pm}(s)}{s^2 W^{\pm}(s)} ds,
    \label{eq:uparticular_wronskian}
  \end{align}
  where $r_{\pm}$ are arbitrary values subject to
  $r_- \in (0,a]$ and $r_+ \in [a,\infty)$. The behaviour of the integrands  near zero and near infinity are, respectively,
  \begin{align*}
    \frac{u^+_{\mp}(s) \inhomo^+(s)}{s^2 W^+(s)} = s^{\alpha_0 +
    \lambda^0_{\pm} -1} ( \mu^{+} + o(1) ), \quad
    \frac{u^-_{\mp}(s) \inhomo^-(s)}{s^2 W^-(s)} =
    s^{\alpha_{\infty} -
    \lambda^{\infty}_{\pm} -1} ( \mu^{-} + o(1) ), \quad \quad
  \end{align*} 
  for suitable constants $\mu^{+}$, $\mu^{-}$. Since  $\lambda^0_{+}, \lambda^{\infty}_+$ are positive, assumption (\ref{Assump}) implies $\alpha_0 + \lambda^0_{\pm} \neq 0$, $\alpha_{\infty} - \lambda^{\infty}_{\pm} \neq 0$. It is then straighforward to check, using l'H\^{o}pital's rule, that
  \begin{align*}
    & \int_{r^+}^r \frac{u^+_{\mp}(s) \inhomo^+(s)}{s^2 W^+(s)} ds =
      r^{\alpha_0 + \lambda^0_{\pm}} \left ( \frac{\mu^+}{\alpha_0 + \lambda^0_{\pm}} + o(1) \right ) + \Qf^+_{\pm}  \quad && \mbox{near} \quad r=0, \\
    & \int_{r^-}^r \frac{u^-_{\mp}(s) \inhomo^-(s)}{s^2 W^-(s)} ds =
      r^{\alpha_{\infty} - \lambda^\infty_{\pm}} \left ( \frac{\mu^-}{\alpha_{\infty} - \lambda^{\infty}_{\pm}} + o(1) \right ) + \Qf^-_{\pm}
      \quad && \mbox{near} \quad r = \infty, 
  \end{align*}
  where $\Qf^{+}_{\pm}$ and $\Qf^{-}_{\pm}$ are constants.
  Consequently
  \begin{align}
    U^+_P(r) := & u^+_P(r) - \Qf^+_{+} u^+_{+}(r) + \Qf^+_{-} u^+_{-}(r) \label{Upr_def} \\
                &  = r^{\alpha_0}  \left (\frac{ \mu^+( \lambda^0_{-}-\lambda^0_{+})}{(\alpha_0 + \lambda^0_{+})(\alpha_0 + \lambda^0_{-})} + o (1) \right )
                  \quad\quad &&\mbox{near }  r=0,   \label{Upr0}\\
    U^-_P(r) := & u^-_P(r) -\Qf^-_{+} u^-_{+} (r) + \Qf^-_{-} u^-_{-} (r) \label{Umr_def} \\ 
                &=r^{\alpha_{\infty}} \left (\frac{\mu^-(\lambda_+^\infty-\lambda_-^\infty)}{(\alpha_{\infty} - \lambda^{\infty}_{+})(\alpha_{\infty} - \lambda^{\infty}_{-})} + o (1) \right )
                  \quad && \mbox{near }  r=+ \infty. \nonumber 
  \end{align}
  Absorbing the constants $\Qf^{\pm}_{\pm}$ into $C^{\pm}_{\pm}$, the general solution (\ref{general}) has the form
  \begin{align*}
    u^{\pm}(r) = & C^{\pm}_{+} u^{\pm}_+(r) + C^{\pm}_{-} u^{\pm}_-(r) + U^{\pm}_P(r).
  \end{align*}
  Note that  $U^+_P(r)$ is bounded near $r=0$ while
  $U^-_P(r)$ is bounded at infinity. We now impose  that the solution $\{ u^+(r), u^-(r) \}$
  is bounded everywhere. Since $\lambda^0_-, \lambda^{\infty}_- <0$ and
  $\lambda^0_+, \lambda^{\infty}_+ >0$
  this is equivalent to setting 
  $C^+_+= C^-_{+} =0$ and we are left with two constants to determine.
  Thus, the general solution (\ref{general})  reads
  \begin{equation}
    u^{\pm}(r) =  C^{\pm}_{-} u^{\pm}_-(r) + U^{\pm}_P(r).\label{eq:u_sol}
  \end{equation}

  Imposing the matching conditions (\ref{trans}) yields  a system of two equations of the form
  \begin{equation}
    \label{compat}
    \left (
      \begin{array}{cc}
        u^+_-(a) & - u^-_{-}(a) \\
        \frac{du^+_-}{dr} \big |_{r=a} &
                                         - \frac{du^-_-}{dr} \big |_{r=a}
      \end{array}
    \right )
    \left (
      \begin{array}{c}
        C^+_- \\
        C^-_-
      \end{array} \right )
    = \left (
      \begin{array}{c}
        \czero+ U^-_P(a)-U^+_P(a)\\
        \cone+ \frac{dU^-_P}{dr}\big|_{r=a}-\frac{dU^+_P}{dr}\big|_{r=a}
      \end{array} \right ).
  \end{equation}
  We apply now
  Lemma \ref{LemmaIp} to $u^+_-(r)$ and Lemma \ref{LemmaIm}
  to $u^-_-(r)$  to conclude that $u^{+}_-(r)$ and
  $u^-_-(r)$ and their derivatives are all non-zero at $a$ and,
  moreover, $u^+_-(a)$ has the same sign as it derivative at $a$, while
  $u^-_-(a)$ has opposite sign than its derivative. It follows that the
  $2 \times 2$ matrix in \eqref{compat} is invertible, and hence there
  exits a unique pair of constants $\{C^+_-, C^-_{-}\}$ satifying the
  transition conditions \eqref{trans}. This concludes the proof of existence and uniqueness of a
  bounded solution of problem $(\star)$.
  
  We conclude with the 
  the behaviour of the first derivative of the solutions \eqref{eq:u_sol}, i.e.
  \[\frac{d u^{\pm}(r)}{dr} =  C^\pm_{-} \frac{du^{\pm}_-(r)}{dr} + \frac{dU^{\pm}_P(r)}{dr},\]
  at the origin and at infinity correspondingly.
  Firstly, the terms $dU^{\pm}_P/dr$ are obtained by direct differentiation of their
  definitions \eqref{Upr_def} and \eqref{Umr_def}
  and introducing \eqref{eq:u_interior}-\eqref{eq:u_exterior} together with
  the differentiation of \eqref{eq:uparticular_wronskian}, which provides
  \[
    \frac{du^{\pm}_P(r)}{dr}=  \frac{du^{\pm}_+(r)}{dr} \int_{r_{\pm}}^r \frac{u^{\pm}_{-}(s) \inhomo^{\pm}(s)}{s^2 W^{\pm}(s)} ds -
    \frac{du^{\pm}_{-}(r)}{dr} \int_{r_{\pm}}^r \frac{u^{\pm}_{+}(s) \inhomo^{\pm}(s)}{s^2 W^{\pm}(s)} ds.
  \]
  The results are
  \begin{align}
    \frac{dU^+_P(r)}{dr} 
    &=r^{\alpha_0-1} \left (\frac{\alpha_0\mu^+(\lambda^0_--\lambda^0_+)}{(\alpha_0 + \lambda^0_{+})(\alpha_0 + \lambda^0_{-})}+ o (1) \right )
      \quad \quad  &&\mbox{near }  r=0,   \label{Upr0_diff}\\
    \frac{dU^-_P(r)}{dr} 
    &=r^{\alpha_\infty-1} \left (\frac{\alpha_\infty\mu^-(\lambda^\infty_+-\lambda^\infty_-)}{(\alpha_\infty - \lambda^\infty_{+})(\alpha_\infty - \lambda^\infty_{-})}+ o (1) \right )
      \quad \quad  &&\mbox{near }  r=\infty,   \label{Uprinf_diff}
  \end{align}
  Regarding $u^+(r)$, the assumption $1+\lambda^0_-\leq 0$ ensures,
  c.f. \eqref{eq:u_interior}, that
  $du^{+}_-/dr$  has a limit at $r\to 0$ and is, in fact, $O(r^{-(1+\lambda_-^0)})$.
  The expression \eqref{Upr0_diff} implies that
  if $\alpha_0-1  \geq  0$ then $dU^+_P/dr$ also
  has a limit at $r\to 0$ and is $O(r^{(\alpha_0-1)})$.
  The claim for  $u^+(r)$ follows.
  As for $u^-(r)$, since $\alpha_\infty\leq 0$ and
  $\lambda^\infty_-<0$ by assumption, the claim follows analogously
  from \eqref{eq:u_exterior} and \eqref{Uprinf_diff}.
  \fin
\end{proof}

\begin{remark}
\label{remark:diff_origin_particular_D1}
The behaviour of the first derivative of the particular solution $U_p^+(r)$ near the origin
is given by \eqref{Upr0_diff}, and therefore, if $\alpha_0-1\geq 0$ then $U_P^+(r)$
extends to a $C^1([0,a])$ function as
$U{}_P^+(r)=r^{\alpha_0}(U^0_P+ o(1))$ and
$U_P^+{}'(r)=r^{\alpha_0-1}(\alpha_0U^0_P+ o(1))$ with $U^0_P\in\mathbb{R}$.
\end{remark}

\bibliography{references}{}

\begin{thebibliography}{10}

\bibitem{AnderssonBeigSchmidt}
L.~Andersson, R.~Beig and B.~Schmidt (2010) \emph{Rotating elastic bodies in
  {E}instein gravity}.
\newblock \CPAM \textbf{63} 559--589.

\bibitem{Andreasson}
H.~Andr\'easson, M.~Kunze and G.~Rein (2011) \emph{Existence of axially
  symmetric static solutions of the einstein-vlasov system}.
\newblock \CMP \textbf{308} 23--47.

\bibitem{AndreassonKunzeRein}
H.~Andr\'easson, M.~Kunze and G.~Rein (2014) \emph{Rotating, stationary,
  axially symmetric spacetimes with collisionless matter}.
\newblock \CMP \textbf{329} 787--808.

\bibitem{Battye01}
R.~A. Battye and B.~Carter (2001) \emph{Generic junction conditions in
  brane-world scenarios}.
\newblock \PLB \textbf{509} 331.

\bibitem{Berti2005}
E.~Berti, F.~White, A.~Maniopoulou and M.~Bruni (2005) \emph{Rotating neutron
  stars: an invariant comparison of approximate and numerical space{--}time
  models}.
\newblock \MNRAS \textbf{358} 923--938.

\bibitem{Bradley_etal2007}
M.~Bradley, D.~Eriksson, G.~Fodor and I.~R{\'{a}}cz (2007) \emph{Slowly
  rotating fluid balls of {P}etrov type {D}}.
\newblock \PRD \textbf{75} 024013.

\bibitem{Bruni_et_al_1997}
M.~Bruni, S.~Matarrese, S.~Mollerach and S.~Sonego (1997) \emph{Perturbations
  of spacetime: gauge transformations and gauge invariance at second order and
  beyond}.
\newblock \CQG \textbf{14} 2585--2606.

\bibitem{CMMR}
J.~Cabezas, J.~Mart{\'{i}}n, A.~Molina and E.~Ruiz (2007) \emph{An approximate
  global solution of einstein{'}s equations for a rotating finite body}.
\newblock \GRG \textbf{39} 707--736.

\bibitem{Chandra_Poly_Newton_1933}
S.~Chandrasekhar (1933) \emph{The equilibrium of distorted polytropes. i. the
  rotational problem}.
\newblock \MNRAS \textbf{93} 390--406.

\bibitem{Gilbarg}
D.~Gilbarg and N.~S. Trudinger, \emph{Elliptic partial differential equations
  of second order}.
\newblock Grundlehren der mathematischen Wissenschaften (Springer-Verlag,
  Berlin, New York 1983).
\newblock Cataloging based on CIP information.

\bibitem{Hartle1967}
J.~B. Hartle (1967) \emph{Slowly rotating relativistic stars. i. equations of
  structure}.
\newblock \APJ \textbf{150} 1005--1029.

\bibitem{HartleSharp1967}
J.~B. Hartle and D.~H. Sharp (1967) \emph{Variational principle for the
  equilibrium of a relativistic, rotating star}.
\newblock \APJ \textbf{147} 317--333.

\bibitem{HartleThorne1968}
J.~B. Hartle and K.~S. Thorne (1968) \emph{Slowly rotating relativistic stars.
  ii. models for neutron stars and supermassive stars}.
\newblock \APJ \textbf{153} 807--834.

\bibitem{heilig1995}
U.~Heilig (1995) \emph{On the existence of rotating stars in general
  relativity}.
\newblock Comm. Math. Phys. \textbf{166} 457--493.

\bibitem{JangMakino17}
J.~Jang and T.~Makino (2017) \emph{On slowly rotating axisymmetric solutions of
  the {E}uler-{P}oisson equations}.
\newblock {Arc. Rat. Mech. Anal.} \textbf{225} 873--900.

\bibitem{JangMakino19}
J.~Jang and T.~Makino (2019) \emph{On rotating axisymmetric solutions of the
  {E}uler-{P}oisson equations}.
\newblock {J. Diff. Eqs.} \textbf{266} 3942--3972.

\bibitem{Kigurazde}
I.~T. Kigurazde and T.~Chanturia, \emph{Asymptotic Properties of Solutions of
  Nonautonomous Ordinary Differentlal Equations}, vol.~89 of \emph{Mathematics
  and Its Applications} (Kluwer Academic Publishers 1993).

\bibitem{KleinRichter}
C.~Klein and O.~Richter (1999) \emph{Exact relativistic gravitational field of
  a stationary counterrotating dust disk}.
\newblock \PRL \textbf{83} 2884--2887.

\bibitem{Kundt-Truper}
W.~Kundt and M.~Tr\"umper (1966) \emph{Orthogonal decomposition of
  axi-symmetric stationary spacetimes}.
\newblock Institut f\"ur Theoretische Physik \textbf{192} 419--422.

\bibitem{MaMaVe2007}
M.~A.~H. MacCallum, M.~Mars and R.~Vera (2007) \emph{Stationary axisymmetric
  exteriors for perturbations of isolated bodies in {G}eneral {R}elativity, to
  second order}.
\newblock \PRD \textbf{75} 024017.

\bibitem{makino2018}
T.~Makino (2018) \emph{On slowly rotating axisymmetric solutions of the
  einstein-euler equations}.
\newblock \JMP \textbf{59} 102502.

\bibitem{Mars2005}
M.~Mars (2005) \emph{First- and second-order perturbations of hypersurfaces}.
\newblock \CQG \textbf{22} 3325--3348.

\bibitem{marcsenor2m}
M.~Mars, M.~M. Martin-Prats and J.~M.~M. Senovilla (1996) \emph{The 2m $\leq$ r
  property of spherically symmetric static spacetimes}.
\newblock \PLA \textbf{218} 147--150.

\bibitem{Mars2007}
M.~Mars, F.~C. Mena and R.~Vera (2007) \emph{Linear perturbations of matched
  spacetimes: the gauge problem and background symmetries}.
\newblock \CQG \textbf{24} 3673--3689.

\bibitem{paper1}
M.~Mars, B.~Reina and R.~Vera \emph{Gauge fixing and regularity of axially
  symmetric and axistationary second order perturbations around spherical
  backgrounds}.
\newblock (2020) Sent for publication.

\bibitem{mars_seno_uniqueness}
M.~Mars and J.~M.~M. Senovilla (1998) \emph{On the construction of global
  models describing rotating bodies; uniqueness of the exterior gravitational
  field}.
\newblock \MPLA \textbf{13} 1509--1519.

\bibitem{Mukohyama00}
S.~Mukohyama (2000) \emph{Gauge-invariant gravitational perturbations of
  maximally symmetric spacetimes}.
\newblock \PRD \textbf{62} 084015.

\bibitem{NeugebauerMeinel}
G.~Neugebauer and R.~Meinel (1993) \emph{The einsteinian gravitational-field of
  the rigidly rotating-disk of dust}.
\newblock {\APJ} \textbf{414} L97--L99.

\bibitem{OpenVolkoff1939}
J.~R. Oppenheimer and G.~M. Volkoff (1939) \emph{On massive neutron cores}.
\newblock Phys. Rev. \textbf{55} 374--381.

\bibitem{Pfister2011}
H.~Pfister (2011) \emph{A new and quite general existence proof for static and
  spherically symmetric perfect fluid stars in general relativity}.
\newblock \CQG \textbf{28} 075006.

\bibitem{Borja_constant_rho}
B.~Reina (2016) \emph{Slowly rotating homogeneous masses revisited}.
\newblock Monthly Notices of the Royal Astronomical Society \textbf{455}
  4512--4517.

\bibitem{reina-sanchis-vera-font2017}
B.~Reina, N.~Sanchis-Gual, R.~Vera and J.~A. Font (2017) \emph{Completion of
  the universal i-love-q relations in compact stars including the mass}.
\newblock \MNRAS \textbf{470} L54–L58.

\bibitem{ReinaVera2015}
B.~Reina and R.~Vera (2015) \emph{Revisiting {H}artle{'}s model using perturbed
  matching theory to second order: amending the change in mass}.
\newblock \CQG \textbf{32} 155008.

\bibitem{ReinaVeranote}
B.~Reina and R.~Vera (2016) \emph{On the mass of rotating stars in {N}ewtonian
  gravity and {GR}}.
\newblock \CQG \textbf{33} 017001.

\bibitem{ADRendall1991}
A.~D. Rendall and B.~G. Schmidt (1991) \emph{Existence and properties of
  spherically symmetric static fluid bodies with a given equation of state}.
\newblock \CQG \textbf{8} 985--1000.

\bibitem{Schaudt}
U.~Schaudt (1998) \emph{On the dirichlet problem for the stationary and
  axisymmetric einstein equations}.
\newblock \CMP \textbf{190} 509--540.

\bibitem{SchaudtPfister}
U.~Schaudt and H.~Pfister (1996) \emph{The boundary value problem for the
  stationary and axisymmetric {E}instein equations is generically solvable}.
\newblock \PRL \textbf{77} 3284--3287.

\bibitem{Exact_solutions}
H.~Stephani, D.~Kramer, M.~MacCallum, C.~Hoenselaers and E.~Herlt, \emph{Exact
  Solutions of {E}instein{'}s Field Equations} (Cambridge University Press
  2003), second edn.
\newblock Cambridge Books Online.

\bibitem{StraussWu17}
W.~Strauss and Y.~Wu (2017) \emph{Steady states of rotating stars and
  galaxies}.
\newblock {\em SIAM J. Math. Anal.} \textbf{49} 4865--4914.

\bibitem{StraussWu19}
W.~Strauss and Y.~Wu (2019) \emph{Rapidly rotating stars}.
\newblock \CMP \textbf{368} 701--721.

\bibitem{raul:convective}
R.~Vera (2003) \emph{Influence of general convective motions on the exterior of
  isolated rotating bodies in equilibrium}.
\newblock Class. Quant. Grav. \textbf{20} 2785--2792.

\bibitem{vera_uniqueness}
R.~Vera (2005) \emph{On global models for isolated rotating axisymmetric
  charged bodies; uniqueness of the exterior field}.
\newblock \CQG \textbf{22} 4095--4112.

\bibitem{Wald}
R.~M. Wald, \emph{General Relativity}.
\newblock Physics/Astrophysics (University of Chicago Press 1984).

\end{thebibliography}
\bibliographystyle{review_bib}

\end{document}